\documentclass[aps,pra,twocolumn,superscriptaddress,superscriptaddress]{revtex4-2}
\pdfoutput=1
\usepackage{comment}
\usepackage{amsthm}
\usepackage{amssymb}
\usepackage{amsmath}
\usepackage{amsfonts}
\usepackage{mathtools}
\usepackage{physics}
\usepackage{float}
\usepackage{xcolor}
\usepackage{balance}
\usepackage{graphicx}
\usepackage{dcolumn}
\usepackage{bm}
\usepackage[hidelinks]{hyperref}
\usepackage[capitalize]{cleveref}


%

\newcommand{\be}{\begin{equation}}
\newcommand{\ee}{\end{equation}}
\newcommand{\ba}{\begin{aligned}}
\newcommand{\ea}{\end{aligned}}

\newcommand{\R}{\mathbb{R}}
\newcommand{\bc}{\begin{center}}
\newcommand{\ec}{\end{center}}
\newcommand{\beq}{\begin{equation}}
\newcommand{\eeq}{\end{equation}}
\newcommand{\beqq}{\begin{equation*}}
\newcommand{\eeqq}{\end{equation*}}
\newcommand{\beqa}{\begin{align}}
\newcommand{\eeqa}{\end{align}}
\newcommand{\barr}{\begin{array}}
\newcommand{\earr}{\end{array}}
\newcommand{\bi}{\begin{itemize}}
\newcommand{\ei}{\end{itemize}}

\newcommand{\E}{\mathbb{E}}
\newcommand{\C}{\mathbb{C}}

\newtheorem{lem}{Lemma}
\newtheorem{theo}{Theorem}

\newtheorem{defi}{Definition}
\newtheorem{protocol}{Protocol}
\newtheorem{res}{Result}
%


\DeclareMathOperator{\N}{\mathbb{N}}
\DeclareMathOperator{\I}{\mathbb{I}}



\begin{document}

\title{Symplectic coherence: a measure of position-momentum correlations in quantum states}
\author{Varun Upreti}
 \email{varun.upreti@inria.fr}
\author{Ulysse Chabaud}
\email{ulysse.chabaud@inria.fr}
\affiliation{%
 DIENS, \'Ecole Normale Sup\'erieure, PSL University, CNRS, INRIA, 45 rue d’Ulm, Paris, 75005, France
}%

\date{\today}

\begin{abstract}
The interdependence of position and momentum, as highlighted by the Heisenberg uncertainty principle, is a cornerstone of quantum physics. Yet, position-momentum correlations have received little systematic attention. Motivated by recent developments in bosonic quantum physics that underscore their relevance in quantum thermodynamics, metrology, and computing, we establish a general framework to study and quantify position-momentum correlations in quantum states. We introduce \textit{symplectic coherence}, a faithful and easily computable measure defined as the Frobenius norm of the block of the covariance matrix encoding position-momentum correlations, and demonstrate that symplectic coherence is monotone under relevant operations and robust under small perturbations. Furthermore, using a recent mapping by Barthe et al.~\cite{Barthe2025} which relates the covariance matrix of a bosonic state to the density matrix of a finite-dimensional system, we show that position-momentum correlations correspond to beyond-classical correlations in a virtual finite-dimensional quantum state, with symplectic coherence mapping naturally to geometric quantum discord. Taking energy constraints into account, we determine the maximal position-momentum correlations achievable at fixed energy, revealing structural insights about the corresponding optimal states. Finally, we illustrate the operational relevance of symplectic coherence through several examples in quantum information tasks and quantum thermodynamics. In the process, we establish new technical results on matrix norms and quantum covariance matrices, and demonstrate the conceptual significance of viewing covariance matrices as density matrices of virtual quantum states.
\end{abstract}
\maketitle

\section{Introduction}
Despite its century-long history and a recent surge of interest driven by computational applications, quantum physics continues to pose profound conceptual challenges. A central question concerns the features that fundamentally distinguish quantum systems from classical ones. Within the framework of quantum information theory, these distinguishing features—commonly referred to as quantum resources \cite{Howard2017,Chitambar2018,albarelli2018resource,Takagi2019,Amaral2019,thomas2024}—are what enable the long-sought quantum advantage \cite{shor1994algorithms,Aaronson2013,Ronnow2014,Harrow2017,Boixo2018} in computation, communication, metrology, and simulation. From a foundational physics perspective, elucidating these features is equally crucial for advancing our understanding of the fundamental principles that govern our world.

The Heisenberg uncertainty principle \cite{Heisenberg1927} lies at the foundation of quantum mechanics, imposing a bound on the product of position and momentum variances, and thus capturing the inherent trade-off in how “spread out” a wavefunction can be along these two conjugate variables. As such, it highlights the interdependence of position and momentum of a quantum system.

Position–momentum correlations also play a significant role in the physics of bosonic quantum systems. For instance, in \cite{Serafini2007entanglement}, the authors gave numerical evidence that micro-canonical ensembles of quantum states with non-zero position–momentum correlations are systematically more entangled than those with no position–momentum correlations. Recent theoretical and experimental work in bosonic systems has also highlighted the importance of such correlations: they can serve as a potent resource in quantum metrology, leading to precision surpassing what is achievable with uncorrelated Gaussian probes \cite{Porto2025}; moreover, these correlations markedly influence the open-system dynamics of bosonic quantum states, affecting decoherence rates and the purity of these states \cite{dasilva2024}.

From a computational perspective, bosonic systems have recently gained attention as promising platforms for quantum computing, supported by experimental advances such as the deterministic generation of large-scale entangled states \cite{yokoyama2013ultra}, robust quantum error correction schemes \cite{sivak2023}, and the potential of these bosonic quantum computers to possibly outperform discrete-variable quantum computing architectures \cite{chabaud2024complexity,brenner2024factoring,upreti2025bounding}. Recent findings also indicate that universal bosonic computations are classically simulable when the quantum gates involved do not produce position–momentum correlations, regardless of the circuit’s other physical features \cite{upreti2025interplay}.

In the context of quantum information processing, the uncertainty principle alone does not provide a complete picture. While it reveals the interdependence between position and momentum in a quantum system, it leaves open the question of ``how much'' position–momentum correlations are present in the system, and the quantification of this correlation between conjugate observables—specifically, position \(q\) and momentum \(p\)—has received relatively little systematic attention.
 
In light of these developments, this work is the first attempt, to the best of our knowledge, to quantify position-momentum correlations in a quantum state. While for a certain class of bosonic quantum states (Gaussian states), the covariance matrix fully captures position–momentum correlations, the characterization of some other bosonic quantum states (non-Gaussian states) also require higher-order position-momentum correlations. Here we focus on correlations described by the covariance matrix, as these are the easiest to measure in experiments. More concretely, we ask the question: Looking at the covariance matrix of a bosonic quantum system, how to rigorously quantify its position-momentum correlations?

To answer this question, we provide a new measure termed \textit{symplectic coherence}, which possesses a clear operational interpretation in terms of distance to the set of states with no position-momentum correlations. Moreover, using a recent mapping introduced in \cite{Barthe2025}, which allows one to map a covariance matrix to the density matrix of a (virtual) finite-dimensional quantum state, we show that position-momentum correlations in a bosonic quantum state can be understood as a form of beyond-classical correlations known as quantum discord \cite{Ollivier2001} in this virtual state. Through this mapping, symplectic coherence is found to naturally correspond to geometric quantum discord \cite{Brukner2010}.

Further, we perform a resource-theoretic treatment of position-momentum correlations, establishing the class of free states, free and non-free operations, and proving the faithfulness of symplectic coherence, and its monotonicity under free operations.
Taking into account the experimental constraint of finite available energy, we determine the maximum symplectic coherence that a quantum state can achieve under a fixed energy budget. This reveals that, to maximise position–momentum correlations, all available energy should be concentrated into a single mode, with the remaining modes in the vacuum state, followed by appropriate passive linear unitaries. As an application of this result, we further analyse the robustness of symplectic coherence, demonstrating its stability under small perturbations.

We conclude by demonstrating how symplectic coherence provides a measure for assessing diverse quantum information processing tasks and physical phenomena: it enhances the quantum Fisher information in displacement amplitude estimation, governs efficiency in channel discrimination protocols, and influences the behavior of entanglement within thermodynamic ensembles. These examples collectively highlight the role of position–momentum correlations, as captured by symplectic coherence, in various facets of quantum physics.

The rest of the paper is organized as follows. In Section \ref{sec:prelims}, we introduce the required preliminaries, definitions, and notations. Section \ref{sec:pm_measure} introduces \textit{symplectic coherence}, a measure of position-momentum correlations in qunatum states. In Section \ref{sec:discord_and_sc}, we provide an operational interpretation of symplectic coherence and reveal its link to geometric quantum discord. 
In Section \ref{sec:msc}, we determine the maximum symplectic coherence attainable by finite‑energy states, characterize the optimal states that achieve it and demonstrate how it allows us to analyze the robustness of symplectic coherence under perturbations. Section \ref{sec:sc_qi} illustrates several quantum information processing tasks where symplectic coherence plays a role. We conclude in Section \ref{sec:conclusion} with a summary of our results and directions for future research.


\section{Preliminaries}\label{sec:prelims}

We refer the reader to \cite{NielsenChuang} for background on quantum information theory and to \cite{Braunstein2005,ferraro2005,Weedbrook2012} for continuous variable (CV) quantum information material. Hereafter, the sets $\N, \R$ and $\C$ are the set of natural, real, and complex numbers respectively, with a * superscript when $0$ is removed from the set.

A density matrix is a positive semi-definite, trace-one operator acting on a Hilbert space $\mathcal H$, and it represents the state of a quantum system. In discrete-variable (DV) quantum information, the Hilbert space has finite dimension $n$, which we denote by $\mathcal H_n$. Throughout this paper, we use $\varrho$ and $\varsigma$ to denote DV density matrices in finite-dimensional spaces. In contrast, CV quantum information theory, which is used to describe bosons, deals with density matrices in infinite-dimensional Hilbert spaces, and we denote CV quantum states by $\rho$ and $\sigma$ throughout.


\subsection{CV quantum information}\label{subsec:prelims_CVQI}

In CV quantum information, a qumode or simply mode refers to a degree of freedom associated with a specific quantum field of a CV quantum system, such as a single spatial or frequency mode of light, and is the equivalent of a qubit in the CV setting. In this paper, $m \in \N^*$ denotes the number of modes in the system, $\ket0$ refers to the single-mode vacuum state, and $\hat{a}$ and $\hat{a}^\dagger$ refer to the single-mode annihilation and creation operator respectively, satisfying $[\hat{a},\hat{a}^\dagger] = \I$. These are related to the position and momentum quadrature operators as
\begin{equation}
    \hat{q} = \hat{a} + \hat{a}^\dagger, \hspace{5mm}\hat{p} = -i(\hat{a} - \hat{a}^\dagger),
\end{equation} 
with the convention $\hbar = 2$. Furthermore, $\hat{q}$ and $\hat{p}$ satisfy the canonical commutation relation $[\hat{q},\hat{p}] = 2i\I$. The multi-mode particle number operator is given by
\begin{equation}
    \hat N = \sum_{i=1}^m \hat a_i^\dagger \hat a_i = \sum_{i=1}^m \left(\frac{\hat q_i^2 + \hat p_i^2}{4} -\frac12\right),
\end{equation}
where the subscript refers to the mode, while the energy operator is defined as
\begin{equation}
    \hat E = \hat N + \frac{m}{2}\I = \sum_{i=1}^m \left(\frac{\hat q_i^2 + \hat p_i^2}{4}\right).
\end{equation}

Product of unitary operations generated by Hamiltonians that are quadratic in the quadrature operators of the modes are called Gaussian unitary operations, and states produced by applying a Gaussian unitary operation to the vacuum state are Gaussian states. The action of an $m$-mode Gaussian unitary operation $\hat{G}$ on the vector of quadratures 
\begin{equation}\label{eqn:vec_quadratueres}
    \boldsymbol{\Gamma}= (\hat{q}_1,\dots,\hat{q}_m,\hat{p}_1,\dots,\hat{p}_m)^T
\end{equation}
is given by
\begin{equation}\label{eqn:symplectic_transform_quadrature}
    \hat{G}^\dagger \boldsymbol{\Gamma} \hat{G} = S \boldsymbol{\Gamma} + \boldsymbol{d},
\end{equation}
where $S$ is a $2m \times 2m$ symplectic matrix and $\boldsymbol{d} \in \R^{2m}$ is a displacement vector. Single-mode {displacement} operators and single-mode {squeezing} operators are defined as $\hat{D}(\alpha) = e^{\alpha \hat{a}^\dagger - \alpha^* \hat{a}}$ and $\hat{S} (\xi) = e^{\frac12 (\xi \hat{a}^2 - \xi^* \hat{a}^{\dagger 2})}$ respectively, where $\alpha, \xi \in \C$. In this paper, we take $\xi = r e^{i\theta}$ to be real ($\theta = 0$) with $r>0$, such that $\ket r = \hat S (r) \ket 0$ is the single-mode squeezed state squeezed along $p$ direction, and the symplectic matrix associated with $\hat S(r)$ is given by
\begin{equation}
    \begin{bmatrix}
        e^r && 0 \\
        0 && e^{-r}
    \end{bmatrix}.
\end{equation}

{Block-diagonal orthogonal gates} $\hat O$ are $m$-mode Gaussian gates with the associated symplectic matrix in the block-diagonal form
\begin{equation}
    S_O =\begin{bmatrix}
        O && 0 \\
        0 && O
    \end{bmatrix},
\end{equation}
where $O$ is a $m \times m$ orthogonal matrix, and zero displacement vector. {Phase shifters} are single-mode Gaussian gates with associated symplectic matrix 
\begin{equation}
    S_\theta = \begin{bmatrix}
        \cos \theta && \sin \theta \\
        -\sin\theta && \cos \theta
    \end{bmatrix},
\end{equation}
and zero displacement vector. Passive linear unitary gates $\hat U$ are $m$-mode Gaussian gates with associated symplectic matrix
\begin{equation}
    S_U = \begin{bmatrix}
        X && Y \\
        -Y && X
    \end{bmatrix},
\end{equation}
such that $S_U$ is orthogonal, and zero displacement vector. Through the Cartan (KAK) decomposition, any passive linear unitary gate can be decomposed into two block-diagonal orthogonal gates and $m$ single-mode phase shifter gates \cite[Section V.A]{edelman2022cartan}
\begin{equation}
    \hat U = \hat O_2 (\hat{R}(\theta_1)\otimes\dots\otimes\hat{R}(\theta_m)) \hat O_1.
\end{equation}
The Bloch--Messiah decomposition \cite{ferraro2005} states that any symplectic matrix $S$ can be decomposed in terms of symplectic matrix of two passive linear unitary gates $S_{U_1}$ and $S_{U_2}$ and the symplectic matrix of $m$ single-mode squeezing gates
\begin{equation}
    S = S_{U_1} \begin{bmatrix}
        Z && 0 \\
        0 && Z^{-1}
    \end{bmatrix} S_{U_2},
\end{equation}
where $Z = \mathrm{diag}(e^{r_1},\dots,e^{r_m})$, with $r_i \geq 0$, $\forall i$.

For a quantum state $\rho$, its covariance matrix $V^\rho$ is given by
\begin{equation}
    V^\rho = \frac{1}{2}\Tr[\{\Delta \bm \Gamma_\rho, \Delta \bm \Gamma_\rho^T\}\rho],
\end{equation}
where 
\begin{equation}
    \Delta \bm \Gamma_\rho \coloneqq \bm \Gamma - \Tr[\bm \Gamma \rho],
\end{equation}
and $\{.,.\}$ represents the anti-commutator between two operators $\hat A$ and $\hat B$:
\begin{equation}
    \{\hat A, \hat B\} = \hat A \hat B + \hat B \hat A.
\end{equation}
We represent $V^\rho$ in a block-matrix form 
\begin{equation}
    V^\rho = \begin{bmatrix}
        V_x^\rho && V_{xp}^\rho \\
        (V_{xp}^\rho)^T && V_p^\rho
    \end{bmatrix},
\end{equation}
where $V_x^\rho$ and $V_p^\rho$ are $m \times m$ matrices representing position-position and momentum-momentum correlations, respectively, while $V_{xp}^\rho$ is the $m \times m$ matrix representing the cross-correlations between position and momentum quadratures. The covariance matrix of a quantum state satisfies the following properties:
\begin{eqnarray}\label{eq:covar_inequality}
 V^\rho + i\Omega &\succeq& 0, \nonumber \\
 V^\rho &\succ& 0, \nonumber \\
    V_x^\rho \succ 0 &,& V_p^\rho \succ 0, \nonumber \\
    \Tr[V^\rho] &\geq& 2m,
\end{eqnarray}
where 
\begin{equation}
    \Omega = \begin{bmatrix}
        \bm 0_m && \I_m \\
        -\I_m && \bm 0_m
    \end{bmatrix},
\end{equation}
where $\bm 0_m$ and $\I_m$ are $m \times m$ zero and identity matrices respectively. The first two inequalities are standard in quantum information, and can be found for example in \cite{Wilde2019}. $V_x^\rho \succ 0$ and $V_p^\rho \succ 0$ follows because the principal submatrices of a positive definite matrix are always positive definite. Finally, we have
\begin{equation}
    \Tr[\rho \hat N] = \frac{\Tr[V^\rho] + \bm d^T \bm d}{4} - \frac{m}{2},
\end{equation}
where $\bm d$ is the vector of expectation values of single degree quadratures:
\begin{equation}\label{eqn:displ_vector}
    \bm d = (\Tr[\rho \hat q_1],\dots,\Tr[\rho \hat q_m], \Tr[\rho \hat p_1],\dots,\Tr[\rho \hat p_m])^T.
\end{equation}
This directly implies (every covariance matrix $V^\rho$ can be associated to a state with zero first moment vectors by applying displacement gates)
\begin{equation}\label{eq:lowerboundtracecov}
    \Tr[V^\rho]\ge2m,
\end{equation}
as well as:
\begin{equation}\label{eqn:energy_covariance matrix}
    \Tr[\rho \hat E] = \frac{\Tr[V^\rho] + \bm d^T \bm d}{4}.
\end{equation}
 
Under the action of a Gaussian unitary gate $\hat G$ with associated symplectic matrix $S$, the covariance matrix transforms as
\begin{equation}
    V^\rho \rightarrow S V^\rho S^T.
\end{equation}

The Williamson decomposition of a covariance matrix satisfying Eq.~\ref{eq:covar_inequality} gives
\begin{equation}
    V^\rho = S \nu S^T,
\end{equation}
where $S$ is a $2m \times 2m$ symplectic matrix, where
\begin{equation}
    \nu = \mathrm{diag}(\nu_1,\dots,\nu_m,\nu_1,\dots,\nu_m)
\end{equation}
where $\nu_i \geq 1, \forall i$. For a pure Gaussian state $\nu_i =1, \forall i$ and with the Bloch--Messiah decomposition, its covariance matrix can be written as
\begin{equation}\label{eqn:covar_pure_gaussian}
    V^\rho = S_U \begin{bmatrix}
        Z^2 && 0 \\
        0 && Z^{-2} 
    \end{bmatrix} S_U^T,
\end{equation}
where $S_U$ is the symplectic matrix associated to some passive linear unitary $\hat U$, and $Z = \mathrm{diag}(e^{r_1},\dots,e^{r_m})$.

Finally, given any non-Gaussian quantum state, there exists a mixed Gaussian state with the same covariance matrix. This is called \textit{Gaussification}.


\subsection{Quantum discord}
\label{sec:discord}

For a bipartite quantum state $\varrho_{AB}$, quantum discord \cite{Ollivier2001} is a measure of the non-classical correlations between the subsystems $A$ and $B$, and is defined as
\begin{equation}\label{eq:discord}
    D(\varrho_{AB})_{\Pi_A} = \min_{\Pi_A} (\mathcal I(B:A) - \mathcal J(B:A)_{\{\Pi_A\}}),
\end{equation}
where $\mathcal I(B:A)$ is the mutual information between subsystems $A$ and $B$, whereas $\mathcal J(B:A)_{\{\Pi_A\}}$ can be informally defined as a measure of information gained about $B$ by making measurements on $A$ using the set of POVMs $\{\Pi_A\}$. In classical information theory, the notions of $\mathcal I(B:A)$ and $\mathcal J(B:A)_{\{\Pi_A\}}$ are equivalent, but this is not the case in quantum information theory, with even some non-entangled states showing non-zero discord \cite{Ollivier2001}. Quantum discord has since found various applications 
\cite{Streltsov_2015}.

The evaluation of quantum discord from Eq.~\ref{eq:discord} is highly non-trivial except for the simplest cases. This has led to the introduction of geometric quantum discord, defined as \cite{Roga_2016}
\begin{equation}\label{eq:gqd}
    D_G (\varrho_{AB}) = \min_{\varsigma_{AB} \in \mathcal{CQ}} d^2(\varrho_{AB},\varsigma_{AB}),
\end{equation}
where $\mathcal{CQ}$ is the set of ``classical-quantum'' states (states with zero quantum discord with respect to measurements in the subsystem $A$), and $d$ is a distance measure. The common distance measures used are the Hilbert--Schmidt distance, the trace distance and the Hellinger distance.

Note that, while the standard definition of quantum discord includes minimization over all possible measurement bases on $A$, we use hereafter a restricted version of quantum discord where the measurement basis is fixed to be in the computational basis. In appendix \ref{app:rdiscord}, we formally define this restricted version of quantum discord, and derive some additional properties of geometric quantum discord for this definition.


\subsection{Matrix norms}\label{subsec:prelims_subsec_matrix_norms}

A matrix norm $||.||$ is a function that assigns a non-negative real number to a given matrix $A$. Common properties followed by matrix norms are that they
\begin{itemize}
    \item are unitarily invariant, i.e.
    \begin{equation}
        ||UA|| = ||A||,
    \end{equation}
    for a unitary matrix $U$;
    \item follow the triangle inequality, i.e.
    \begin{equation}
        ||A+B|| \leq ||A|| + ||B||.
    \end{equation}
    \item are zero if and only if the matrix $A$ is a zero matrix, i.e.
    \begin{equation}
        ||A|| = 0 \text{ if and only if } A = \bm 0.
    \end{equation}
\end{itemize}
Three matrix norms which we use in this paper are the Frobenius norm ($||.||_F$), the trace norm ($||.||_1$) and the operator norm ($||.||_\infty$), all satisfying these properties.

The \textit{Frobenius norm} of an $m \times m$ matrix $A$ is defined as
\begin{equation}\label{eq:FM}
    ||A||_F = \sqrt{\sum_{i,j=1}^m |A_{ij}|^2} = \sqrt{\Tr[A^\dagger A]} = \sqrt{\sum_{i=1}^m s_i^2 (A)},
\end{equation}
where $s_i$ are singular values of $A$ and can be expressed in terms of the eigenvalues $\lambda_i$ of $A^\dagger A$ as $s_i (A) = \sqrt{\lambda_i(A^\dagger A)}, \forall i$. The Hilbert--Schmidt distance between two matrices $A$ and $B$ is defined as
\begin{equation}
    d_{HS}(A,B) = ||A - B||_F.
\end{equation}
The \textit{trace norm} of $A$ is defined as
\begin{equation}
    ||A||_1 = \Tr\left[\sqrt{A^\dagger A}\right] = \sum_{i=1}^m s_i (A),
\end{equation}
and is related to the Frobenius norm as
\begin{equation}\label{eq:FM_TM}
    ||A||_F \leq||A||_1 \leq \sqrt{m}||A||_F.
\end{equation}
The trace distance between two quantum states with density matrices $\rho$ and $\sigma$ is given in terms of the trace norm as
\begin{equation}
    \frac12 ||\rho - \sigma||_1.
\end{equation}
Finally, the \textit{operator norm} of $A$ is defined as
\begin{equation}
    ||A||_\infty = \max_i s_i(A),
\end{equation}
and is related to the Frobenius norm as
\begin{eqnarray}\label{eq:FM_OM}
    ||A||_\infty&\leq& ||A||_F \leq \sqrt m ||A||_\infty.
\end{eqnarray}


\subsection{Quantum Resource Theories}

Quantum resources are physical properties that govern the performance of specific quantum information tasks and are thus regarded as valuable within these contexts. Quantum resource theories \cite{Howard2017,Chitambar2018,albarelli2018resource,Takagi2019,Amaral2019,thomas2024} offer a rigorous mathematical framework for the characterization, quantification, and manipulation of these resourceful properties.

A central idea in quantum resource theories is the separation of quantum states into two categories: \textit{free states}, which lack the physical feature (resource) of interest, and \textit{resourceful states}, which possess it. When defining a resource measure—a non-negative function quantifying this property—an important requirement is that the measure should be zero if and only if the state is free. This property is known as the \textit{faithfulness} of the measure. Another desirable property is (sub-)additivity under tensor product \cite{Chitambar2018,albarelli2018resource}.

In the study of resource manipulation, quantum operations are also classified as \textit{free operations} and \textit{resourceful operations}. Free operations are those that do not increase the resource content of a quantum state, mapping free states to free states, while resourceful operations may potentially enhance it. In this context, \textit{monotonicity} under free operations is another important criterion for a resource measure: it states that the measure’s value should not increase when free operations are applied to a quantum state.

With the necessary preliminaries detailed, we introduce a measure of position-momentum correlations in the next section.


\section{Measure of position-momentum correlations}\label{sec:pm_measure}

To define a measure of position-momentum correlations, we first define the set of free states, denoted by $\mathcal{C}$:
\begin{defi}[Free states for position-momentum correlations]\label{defi:free_states}
   The set of free states with respect to position-momentum correlations, $\mathcal C$ is defined as
   \begin{equation}
    \mathcal C \coloneqq \{\sigma : V_{xp}^\sigma = 0\}.
    \end{equation}
\end{defi}

\noindent This set consists in the states without position-momentum correlations as captured by their covariance matrix.

\subsection{Symplectic coherence}

Given the covariance matrix of a quantum state $\rho$,
\begin{equation}
    V^\rho = \begin{bmatrix}
        V_x^\rho && V_{xp}^\rho \\
        (V_{xp}^\rho)^T && V_p^\rho
    \end{bmatrix},
\end{equation}
since the position-momentum correlations are encoded in $V_{xp}^\rho$, the measure of position-momentum correlations should be a function $f: \R^{m\times m} \mapsto \R$ of $V_{xp}^\rho$, where $\R^{m\times m}$ is the set of real $m \times m$ matrices. Further $f$ should be faithful, that is,
\begin{equation}
    f(A) = 0 \text{  if and only if } A = \bm 0_m,
\end{equation}
where $\bm 0_m$ is the $m \times m$ zero matrix. One natural way to construct faithful measures of correlation is via matrix norms (see Section~\ref{sec:prelims}). Among these, the Frobenius norm stands out for its simplicity: it is simple to compute (Eq.~\ref{eq:FM}), since it only requires element‐wise squaring and summation, whereas other norms would require singular value decomposition routines to find the singular values. Its simple definition also makes derivations of its properties more straightforward. Finally, since one can directly measure the entries of a covariance matrix in the lab, $||V_{xp}||_F$ is an immediately computable quantity.

Motivated by these advantages, we define a measure of position–momentum correlations—termed \textit{symplectic coherence}—by the squared Frobenius norm of $V_{xp}^\rho$:

\begin{defi}[Symplectic coherence] \label{defi:sc_expr} Given a quantum state $\rho$ with covariance matrix
\begin{equation}
    V^\rho = \begin{bmatrix}
        V_x^\rho && V_{xp}^\rho \\ (V_{xp}^\rho)^T && V_p^\rho
    \end{bmatrix},
\end{equation}
the symplectic coherence of $\rho$, denoted by $\mathfrak c_\rho$, is defined as:
    \begin{equation}\label{eq:sc_expr_org}
  \mathfrak c_\rho \coloneqq || V_{xp}^\rho ||_F^2.
\end{equation}
\end{defi}

Note that symplectic coherence of quantum states defined here is the different from the notion of symplectic coherence of quantum gates introduced in \cite{upreti2025interplay}. Whereas symplectic coherence of quantum gates is defined as their ability to mix position and momentum quadratures, symplectic coherence of quantum states, defined in Eq.~\ref{eq:sc_expr_org} is a quantifier of the amount of position-momentum correlations in them.

Moreover, although we focus on the Frobenius norm, standard inequalities between matrix norms (Section~\ref{subsec:prelims_subsec_matrix_norms}) and properties followed by the commonly studied matrix norms in the literature (unitary invariance and triangle inequality) allow us to make statements about alternative matrix norm choices.

\subsection{Operational interpretation of symplectic coherence}

The following result gives an operation interpretation of symplectic coherence:
\begin{theo}[Operational interpretation of symplectic coherence]\label{theo:sc_expr}
    Given a quantum state $\rho$ with covariance matrix
    \begin{equation}
        V^\rho = \begin{bmatrix}
            V_x && V_{xp} \\
            V_{xp}^T && V_p
        \end{bmatrix},
    \end{equation}
    its symplectic coherence is equal to the minimal Hilbert--Schmidt distance from the set of states with the set of free states $\mathcal C$, namely
    \begin{equation}
       \mathfrak c_\rho = \min_{\sigma \in \mathcal C} \frac12 ||V^\rho - V^\sigma||_F^2 = \min_{\sigma \in \mathcal C} \frac12 \Tr[(V^\rho - V^\sigma)^2].
    \end{equation}
\end{theo}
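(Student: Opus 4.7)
The plan is to split the minimisation into a block-wise Frobenius identity that lower-bounds the objective, and a short construction showing the bound is saturated.

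First I would observe that, for any $\sigma \in \mathcal{C}$, one has $V^\sigma_{xp} = 0$, so that
\begin{equation*}
    V^\rho - V^\sigma = \begin{bmatrix} V_x^\rho - V_x^\sigma & V_{xp}^\rho \\ (V_{xp}^\rho)^T & V_p^\rho - V_p^\sigma \end{bmatrix}.
\end{equation*}
Using that $\|\cdot\|_F^2$ is the sum of squared entries (Eq.~\ref{eq:FM}) and that $\|V_{xp}^\rho\|_F = \|(V_{xp}^\rho)^T\|_F$, the objective splits into
\begin{equation*}
    \tfrac{1}{2}\|V^\rho - V^\sigma\|_F^2 = \|V_{xp}^\rho\|_F^2 + \tfrac{1}{2}\|V_x^\rho - V_x^\sigma\|_F^2 + \tfrac{1}{2}\|V_p^\rho - V_p^\sigma\|_F^2.
\end{equation*}
The last two terms are non-negative and depend only on the diagonal blocks of $V^\sigma$, so the minimum over $\mathcal{C}$ is lower-bounded by $\|V_{xp}^\rho\|_F^2 = \mathfrak{c}_\rho$, with equality requiring $V_x^\sigma = V_x^\rho$ and $V_p^\sigma = V_p^\rho$.

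It then remains to exhibit $\sigma^\star \in \mathcal{C}$ with $V^{\sigma^\star} = \mathrm{diag}(V_x^\rho, V_p^\rho)$. By Gaussification it is enough to verify that $\tilde V := \mathrm{diag}(V_x^\rho, V_p^\rho)$ is a valid covariance matrix, i.e.\ $\tilde V + i\Omega \succeq 0$. Setting $P := \mathrm{diag}(\mathbb{I}_m, -\mathbb{I}_m)$, a short computation gives $P\Omega P = -\Omega$ and $\tilde V = \tfrac{1}{2}(V^\rho + PV^\rho P)$. Conjugating the valid condition $V^\rho - i\Omega \succeq 0$ (which follows from $V^\rho + i\Omega \succeq 0$ by complex conjugation) by the real orthogonal $P$ yields $P(V^\rho - i\Omega)P = PV^\rho P + i\Omega \succeq 0$; summing with $V^\rho + i\Omega \succeq 0$ and dividing by two gives $\tilde V + i\Omega \succeq 0$. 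Any Gaussian state with this covariance then lies in $\mathcal{C}$ and saturates the bound.

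The only slightly delicate point is this construction of $\sigma^\star$, since the convexity-of-valid-covariance-matrices argument above is what makes the minimum attained rather than merely an infimum. An equivalent operational route is to note that the position-basis complex conjugate $\rho^\star$ is a valid density matrix whose covariance satisfies $V_x^{\rho^\star} = V_x^\rho$, $V_p^{\rho^\star} = V_p^\rho$, and $V_{xp}^{\rho^\star} = -V_{xp}^\rho$, so that an equal mixture of $\rho$ and $\rho^\star$ (after a displacement aligning the first moments) directly realises $\tilde V$. Either route reduces the theorem to the block-Frobenius identity above.
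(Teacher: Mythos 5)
Your proof is correct, and it reaches the result by a genuinely different route from the paper in the one place where the argument is delicate. The lower-bound half is essentially equivalent: your entrywise block decomposition $\tfrac12\|V^\rho-V^\sigma\|_F^2=\|V_{xp}^\rho\|_F^2+\tfrac12\|V_x^\rho-V_x^\sigma\|_F^2+\tfrac12\|V_p^\rho-V_p^\sigma\|_F^2$ is a more direct way of obtaining what the paper proves by a contradiction argument with trace expansions (its Lemma on the closest free covariance matrix), and it immediately identifies the equality condition. The real divergence is in showing that $\tilde V=\mathrm{diag}(V_x^\rho,V_p^\rho)$ is an admissible covariance matrix. The paper first treats pure Gaussian states via a Schur-complement computation that relies on the pure-state identity $V_{xp}V_x=V_xV_{xp}^T$, and then extends to general states by invoking Parthasarathy's theorem that any quantum covariance matrix is an equal mixture of two pure Gaussian ones. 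Your symmetry argument --- writing $\tilde V=\tfrac12(V^\rho+PV^\rho P)$ with $P=\mathrm{diag}(\I_m,-\I_m)$, using $P\Omega P=-\Omega$ together with $V^\rho\pm i\Omega\succeq0$, and averaging --- proves $\tilde V+i\Omega\succeq0$ uniformly for all states in three lines, with no case split and no external decomposition theorem. Your alternative operational construction (mixing $\rho$ with its position-basis complex conjugate after aligning first moments) is also sound and gives a physical realisation of the optimal free state, which is in the same spirit as, but cleaner than, the paper's appeal to Gaussification plus Parthasarathy. In short: same skeleton, but your verification of the validity of $\tilde V$ is more elementary and more general than the paper's.
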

\noindent The proof of Theorem \ref{theo:sc_expr} is given in appendix \ref{app:sc_exp} and consists of two steps: first, we find $\sigma \in \mathcal C$ whose covariance matrix is closest in Hilbert--Schmidt distance to $V^\rho$ and then we calculate the distance of $V^\rho$ to the said state to obtain the final expression. In this proof step, we also prove the rather non-trivial result:
\begin{lem}
    Given the covariance matrix of a quantum state
    \begin{equation}
        V^\rho = \begin{bmatrix}
            V_x^\rho && V_{xp}^\rho \\
            (V_{xp}^\rho)^T && V_p^\rho
        \end{bmatrix},
    \end{equation}
    the matrix obtained by removing the position-momentum correlations,
    \begin{equation}
        \tilde V = \begin{bmatrix}
            V_x^\rho && 0\\
            0 && V_p^\rho
        \end{bmatrix},
    \end{equation}
    represents a valid covariance matrix of a quantum state.
\end{lem}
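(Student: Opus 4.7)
The plan is to reduce the statement to verifying the generalized uncertainty inequality $\tilde V + i\Omega \succeq 0$, since the remaining conditions listed in Eq.~\ref{eq:covar_inequality} transfer immediately: $\tilde V$ is block-diagonal with blocks $V_x^\rho, V_p^\rho \succ 0$, so $\tilde V \succ 0$, and $\Tr[\tilde V] = \Tr[V^\rho] \ge 2m$. Once the uncertainty inequality is established, the existence of a Gaussian state whose covariance matrix equals $\tilde V$ (by Gaussification applied to the valid symmetric matrix $\tilde V$) completes the proof.

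The key idea for the uncertainty inequality is to exploit the ``parity'' matrix $T = \mathrm{diag}(\I_m, -\I_m)$, which is the linear transformation on phase space induced by complex conjugation of the wavefunction (equivalently, the flip $\hat p \mapsto -\hat p$). First I would note that $T$ is real and anti-symplectic in the sense that $T\Omega T^T = -\Omega$, as a direct $2\times 2$ block computation shows. Starting from the hypothesis $V^\rho + i\Omega \succeq 0$ and conjugating by the real matrix $T$ (which preserves positive semidefiniteness), one obtains
\begin{equation}
    T V^\rho T^T - i\Omega \;\succeq\; 0.
\end{equation}
Taking the entrywise complex conjugate — which preserves positive semidefiniteness of Hermitian matrices since it merely conjugates eigenvectors while leaving the real eigenvalues unchanged — yields
\begin{equation}
    T V^\rho T^T + i\Omega \;\succeq\; 0.
\end{equation}

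A direct block multiplication gives $T V^\rho T^T = \begin{bmatrix} V_x^\rho & -V_{xp}^\rho \\ -(V_{xp}^\rho)^T & V_p^\rho \end{bmatrix}$, so averaging the two positive semidefinite matrices $V^\rho + i\Omega$ and $T V^\rho T^T + i\Omega$ yields
\begin{equation}
    \tfrac12\bigl(V^\rho + T V^\rho T^T\bigr) + i\Omega \;=\; \tilde V + i\Omega \;\succeq\; 0,
\end{equation}
which is exactly the required uncertainty inequality.

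The only non-routine step is spotting the parity conjugation trick; everything else is a short verification. An equivalent way of phrasing the same argument, which I would mention for intuition, is operational: if $\rho_G$ is the Gaussian state with covariance $V^\rho$ (shifted to zero mean), then $\rho_G^*$ is a valid quantum state whose covariance matrix is $T V^\rho T^T$, and the symmetric mixture $\tfrac12(\rho_G + \rho_G^*)$ is a physical state with covariance $\tilde V$. This avoids quoting Gaussification in the final step and directly exhibits a state realising $\tilde V$.
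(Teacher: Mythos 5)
Your proof is correct, and it takes a genuinely different route from the paper's. The paper proceeds in two stages: for pure Gaussian states it applies the Schur-complement positivity criterion to $V^\rho+i\Omega\succeq0$ and invokes the structural identity $V_{xp}V_x=V_xV_{xp}^T$ special to pure Gaussian covariance matrices to kill the imaginary part, deducing $V_p-V_x^{-1}\succeq V_{xp}^TV_x^{-1}V_{xp}\succ0$ and hence $\tilde V+i\Omega\succeq0$; it then extends to mixed and non-Gaussian states by writing $V^\rho=\tfrac12V^{G_1}+\tfrac12V^{G_2}$ via Parthasarathy's decomposition and using convexity. Your argument replaces all of this with a single observation: the parity matrix $T=\mathrm{diag}(\I_m,-\I_m)$ is anti-symplectic, $T\Omega T^T=-\Omega$, so conjugating $V^\rho+i\Omega\succeq0$ by $T$ and then taking the entrywise complex conjugate gives $TV^\rho T^T+i\Omega\succeq0$, and averaging with the original inequality yields $\tilde V+i\Omega\succeq0$ directly — each step (congruence, conjugation of a Hermitian PSD matrix, convex combination) manifestly preserves positive semidefiniteness, and I verified the block computations $T\Omega T^T=-\Omega$ and $TV^\rho T^T=\begin{bmatrix}V_x^\rho&-V_{xp}^\rho\\-(V_{xp}^\rho)^T&V_p^\rho\end{bmatrix}$. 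What your approach buys is uniformity and brevity: it needs no pure/mixed case split, no pure-Gaussian identity, and no appeal to Parthasarathy's theorem, and the operational reformulation via $\tfrac12(\rho_G+\rho_G^*)$ exhibits a physical state realising $\tilde V$ explicitly (the covariance of a zero-mean mixture being the average of the covariances, as the paper itself notes). What the paper's longer route buys is the stronger intermediate inequality $V_p-V_x^{-1}\succeq V_{xp}^TV_x^{-1}V_{xp}\succ0$ for pure Gaussian states, which is not needed for the lemma but is of some independent interest.
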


\subsection{Properties of symplectic coherence}\label{sec:fmr_sc}

As explained in the preliminaries, desirable properties of a measure of a physical resource are: additivity, faithfulness and monotonicity. In this section, we demonstrate that symplectic coherence satisfies these properties.

From \cref{defi:sc_expr} the symplectic coherence is additive under tensor product, since the covariance matrix of a tensor product of quantum states is the direct sum of the covariance matrices of the individual quantum states, i.e.\
\begin{equation}
    \mathfrak c_{\rho\otimes\sigma}=\mathfrak c_{\rho}+\mathfrak c_{\sigma}.
\end{equation}
Moreover, we obtain the following result:
   \begin{theo}[Faithfulness and monotonicity of symplectic coherence]\label{theo:monotonicity_sc}
        Given a quantum state $\rho$, its symplectic coherence $\mathfrak c_\rho = 0$ if and only if $\rho \in \mathcal C$ (Definition \ref{defi:free_states}). Further, the symplectic coherence is non-increasing under the following operations:
        \begin{itemize}
            \item Block-diagonal orthogonal unitary gates $\hat O$.
            \item Displacement unitary gate.
            \item Tensor product with free states.
            \item Partial traces.
            \item Classical mixing of zero first moment states.
        \end{itemize}
    \end{theo}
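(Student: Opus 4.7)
The plan is to address the six claims in turn, leveraging the structural behaviour of the covariance matrix under each operation together with standard properties of the Frobenius norm recalled in Section~\ref{subsec:prelims_subsec_matrix_norms}.

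Faithfulness is immediate: $\mathfrak c_\rho = \|V_{xp}^\rho\|_F^2 = 0$ iff $V_{xp}^\rho = \bm 0_m$ by the zero-matrix property of matrix norms, which is exactly the condition $\rho \in \mathcal C$ from \cref{defi:free_states}. For a displacement unitary, the covariance matrix is unchanged (only the first-moment vector $\bm d$ shifts), so $\mathfrak c_\rho$ is preserved. For a block-diagonal orthogonal gate $\hat O$ with symplectic matrix $S_O = \mathrm{diag}(O,O)$, a direct block multiplication of $S_O V^\rho S_O^T$ shows that the $xp$-block transforms as $V_{xp}^\rho \mapsto O V_{xp}^\rho O^T$, and unitary invariance of the Frobenius norm (applied on both sides, using that $O$ is orthogonal) gives equality and hence non-increase.

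For tensor product with a free state $\sigma \in \mathcal C$, I invoke the additivity $\mathfrak c_{\rho \otimes \sigma} = \mathfrak c_\rho + \mathfrak c_\sigma$ established just before the theorem; since $\mathfrak c_\sigma = 0$ when $\sigma \in \mathcal C$, the value of $\mathfrak c_\rho$ is preserved. For a partial trace over a subset of modes, the reduced-state covariance matrix is the principal submatrix of $V^\rho$ obtained by deleting the rows and columns associated with the traced-out modes in both the position and momentum blocks. Consequently, $V_{xp}$ of the reduced state is itself a submatrix of $V_{xp}^\rho$, and the squared Frobenius norm of any submatrix is bounded above by that of the full matrix, since by Eq.~\ref{eq:FM} it amounts to dropping non-negative terms from the sum.

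The remaining case, classical mixing of states with zero first moments, is the only step requiring real work. Given weights $\{p_i\}$ and states $\{\rho_i\}$ each with zero first moments, set $\rho = \sum_i p_i \rho_i$. Then $\rho$ also has zero first moments, and the covariance-matrix definition collapses to a convex combination, $V^\rho = \sum_i p_i V^{\rho_i}$ and hence $V_{xp}^\rho = \sum_i p_i V_{xp}^{\rho_i}$. Convexity of $X \mapsto \|X\|_F^2$ (Jensen's inequality) then yields $\mathfrak c_\rho \le \sum_i p_i \mathfrak c_{\rho_i}$, the required monotonicity. The main subtlety to flag is precisely the need for the zero-first-moment hypothesis: when the $\rho_i$ have non-zero first moments, the identity $V^\rho = \sum_i p_i V^{\rho_i}$ fails by extra outer-product terms involving $\langle \bm \Gamma\rangle_{\rho_i}\langle \bm \Gamma\rangle_{\rho_i}^T$, and convexity of $\mathfrak c$ can in general break, which is why the free operation must be restricted as stated.
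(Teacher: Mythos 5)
Your proof is correct and follows essentially the same route as the paper's: faithfulness from the zero-matrix property of the norm, the explicit block transformation $V_{xp}^\rho \mapsto O V_{xp}^\rho O^T$ under $S_O$, invariance under displacements, additivity for tensor products with free states, submatrix monotonicity of the Frobenius norm for partial traces, and linearity of the covariance matrix for zero-first-moment mixtures. The only cosmetic difference is the last step, where you bound $\mathfrak c_\rho \le \sum_i p_i\, \mathfrak c_{\rho_i}$ via convexity of $X \mapsto \|X\|_F^2$ while the paper applies the triangle inequality to get $\mathfrak c_\rho \le \max_i \mathfrak c_{\rho_i}$; both establish the claimed non-increase, and your convexity bound in fact implies the paper's.
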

\noindent While the faithfulness of symplectic coherence directly follows from the property of matrix norms, the proof of monotinicity is given in appendix \ref{app:monotonicity_sc}, and utilizes properties of covariance matrices and Frobenius norms. It is natural to expect that these operations should be free for any measure of position-momentum correlations, as we will explain in what follows: block-diagonal orthogonal transformations and displacement operations should qualify as free operations because matrix norms are invariant under unitary (and hence orthogonal) transformations. Specifically, block-diagonal orthogonal gates transform $ V_{xp}^\rho $ as $O V_{xp}^\rho O^T$ for some orthogonal matrix $O$, while displacement operations leave $V_{xp}^\rho$ unchanged. Given that any quantifier of position-momentum correlations is expected to be a matrix norm of $V_{xp}^\rho$, it follows directly that block-diagonal orthogonal gates and displacement operations do not alter this quantifier, and hence are indeed free operations.

Since making a tensor product of states does not generate additional position-momentum correlations than the ones already present in the individual states, a measure of position-momentum correlations should simply be additive under tensor product, which is what we obtain with symplectic coherence. 

Taking partial traces removes some of the elements of the covariance matrix, so the value of any quantifier of position-momentum correlations should decrease, which is what we observe. Finally, covariance matrix obtained by classical mixing of states with zero first moment of quadratures is simply a linear combination of the covariance matrices. The triangle inequality of matrix norms indicates that gives that no position-momentum correlation quantifier should increase under classical mixing of such zero first moment states.

Since phase shifters can mix position and momenturm quadratures, they can possibly increase position-momentum correlations. In appendix \ref{app:sc_non_free}, we also give examples of cases where post-selecting on heterodyne measurements, homodyne measurements and classical mixing of states with non-zero first moments can increase the symplectic coherence.

An important class of non-free operations that we would like to highlight here are the block-diagonal active Gaussian unitary gates, whose symplectic matrices are of the form
\begin{equation}
    \begin{bmatrix}
        A && 0 \\ 
        0 && (A^T)^{-1}
    \end{bmatrix},
\end{equation}
with $A$ being any real, invertible and non-orthogonal $m \times m$ matrix. These gates do not mix position-momentum quadratures and hence from the point of view of gate-based symplectic coherence defined in \cite{upreti2025interplay}, they do not possess symplectic coherence. However, they can, in fact, increase the symplectic coherence of a quantum state, as formalized in the following Theorem:
\begin{theo}[Position-momentum correlation monotonicity no-go]\label{theo:active_increase_sc}
    There is no faithful measure of position-momentum correlations which is non-increasing under all block-diagonal Gaussian symplectic operations.
\end{theo}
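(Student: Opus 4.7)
The plan is a proof by contradiction. Assume there exists a faithful measure $\mu$ of position-momentum correlations that is non-increasing under every block-diagonal active Gaussian gate $\hat G_A$ (those with symplectic matrix $\begin{bmatrix} A & 0 \\ 0 & (A^T)^{-1} \end{bmatrix}$, $A$ real invertible, $m\ge 2$). Since $(\hat G_A)^{-1} = \hat G_{A^{-1}}$ is itself of the same type, applying the non-increasing hypothesis both to $\hat G_A$ and to $\hat G_{A^{-1}}$ upgrades monotonicity to invariance: $\mu(\rho) = \mu(\hat G_A \rho \hat G_A^\dagger)$ for every state $\rho$ and every invertible $A$. Because such a gate acts on the cross-correlation block as $V_{xp}^\rho \to A V_{xp}^\rho A^{-1}$, this forces $\mu$, viewed as a function of $V_{xp}^\rho$, to be a similarity invariant.

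Next I would produce an explicit quantum state whose similarity orbit accumulates at zero. For $m = 2$, consider the covariance matrix
\begin{equation*}
V = \begin{bmatrix} c\,\I_2 & N \\ N^T & c\,\I_2 \end{bmatrix}, \qquad N = \begin{bmatrix} 0 & 1 \\ 0 & 0 \end{bmatrix},
\end{equation*}
with $c$ large enough that $V + i\Omega \succeq 0$; a direct Schur-complement computation shows that any $c \ge \sqrt{(3+\sqrt 5)/2}$ suffices. By Gaussification, $V$ is the covariance matrix of a valid (mixed Gaussian) state $\rho$ with $V_{xp}^\rho = N \neq 0$, hence $\mu(\rho) > 0$ by faithfulness. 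Applying $\hat G_{A_\epsilon}$ with $A_\epsilon = \mathrm{diag}(1, 1/\epsilon)$, $\epsilon>0$, a direct computation gives $A_\epsilon N A_\epsilon^{-1} = \epsilon N$, and invariance then yields $\mu(\rho_\epsilon) = \mu(\rho) > 0$ for the transformed state $\rho_\epsilon$, whose cross-correlation block equals $\epsilon N$.

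Finally I would close the argument using that any measure of position-momentum correlations is a function of $V_{xp}^\rho$ alone, as motivated in Section~\ref{sec:pm_measure}. If $\mu$ is realized as a matrix norm of $V_{xp}^\rho$, absolute homogeneity gives $\mu(\epsilon N) = \epsilon\, \mu(N)$, so $\epsilon\, \mu(N) = \mu(N)$ for every $\epsilon > 0$; this forces $\mu(N) = 0$ and contradicts faithfulness. More generally, any mild continuity hypothesis on $\mu$ at $V_{xp} = 0$ suffices, since $\epsilon N \to 0$ as $\epsilon \to 0^+$ while $\mu(\epsilon N) = \mu(N) > 0$. The main obstacle is not the algebraic core of the argument, which is clean, but rather pinning down the regularity required of ``faithful measures'': under the natural ansatz that $\mu$ be a matrix norm of $V_{xp}^\rho$, as emphasized in Section~\ref{sec:pm_measure}, the proof is immediate by homogeneity, whereas for more exotic faithful functions a continuity hypothesis at zero is needed to convert the orbit identity $\mu(N) = \mu(\epsilon N)$ into $\mu(N) = 0$.
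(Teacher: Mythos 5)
Your proposal is correct and follows essentially the same route as the paper's proof: both upgrade monotonicity to similarity invariance of the measure on $V_{xp}$ by applying the hypothesis to $\hat G_A$ and its inverse $\hat G_{A^{-1}}$, and both derive the contradiction from the nilpotent block $N=\begin{bmatrix}0&1\\0&0\end{bmatrix}$. The only real difference is the last step: the paper cites a classification result stating that similarity-invariant functions depend only on eigenvalues (so $N$ and $\bm 0$ must receive the same value), whereas you explicitly exhibit the degeneration $A_\epsilon N A_\epsilon^{-1}=\epsilon N\to 0$ and close via absolute homogeneity of matrix norms (or a continuity hypothesis for general faithful functions). Your version is somewhat more self-contained — it also verifies that a valid covariance matrix with $V_{xp}=N$ exists, which the paper asserts without proof — and it makes explicit the regularity assumption that is only implicit in the paper's appeal to the eigenvalue-function classification; both proofs carry the same hidden dependence on that assumption, so this is a fair and honest presentation rather than a gap.
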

\noindent The proof of Theorem \ref{theo:active_increase_sc} is given in appendix \ref{app:active_increase_sc}, and follows from looking at how the block-diagonal Gaussians change the covariance matrices, and using properties of matrices. Therefore, even though these gates do not mix position and mometum quadratures, they can in fact increase the amount of position-momentum correlations in the state.

Inspired by the class of free operations defined by Theorem \ref{theo:monotonicity_sc}, we define orthogonal Stinespring dilations:
\begin{defi}[Orthogonal Stinespring dilation]\label{defi:ortho_Stinespring}
    Given a quantum state $\rho$, an orthogonal Stinespring dilation of $\rho$ is defined as
    \begin{equation}
        \Tr_B[\hat D \hat{O} (\rho \otimes \sigma^{(E)}) \hat O^\dagger \hat D^\dagger],
    \end{equation}
    where $\hat O$ are block-diagonal orthogonal gates, $\hat D$ is the displacement operator, and $\sigma \in \mathcal C$.
\end{defi}
\noindent Examples of orthogonal Stinespring dilation include pure photon loss and thermal loss channels. The following is then a standard resource-theoretic statement \cite{albarelli2018resource}:
\begin{lem}[Impossibility of deterministic conversion with orthogonal Stinespring dilations]
    Given two quantum states $\rho$ and $\sigma$ with $\mathfrak c_\rho < \mathfrak c_\sigma$ it is impossible to deterministically convert $\rho$ to $\sigma$ through orthogonal Stinespring dilations.
\end{lem}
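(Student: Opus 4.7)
The plan is a direct monotonicity argument: I would show that if orthogonal Stinespring dilations could map $\rho$ to $\sigma$, then symplectic coherence would be non-increasing under this conversion, contradicting $\mathfrak c_\rho < \mathfrak c_\sigma$. Concretely, I would proceed by contrapositive: assume $\sigma = \Tr_B[\hat D \hat O(\rho \otimes \tau^{(E)}) \hat O^\dagger \hat D^\dagger]$ for some $\tau^{(E)} \in \mathcal C$, block-diagonal orthogonal $\hat O$, and displacement $\hat D$, and then show that $\mathfrak c_\sigma \le \mathfrak c_\rho$.

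The key observation is that Definition~\ref{defi:ortho_Stinespring} is literally a composition of the four free operations listed in Theorem~\ref{theo:monotonicity_sc}: tensor product with a free state, a block-diagonal orthogonal unitary, a displacement, and a partial trace. So I would simply compose the monotonicity inequalities in the order dictated by the definition. Step one: $\mathfrak c_{\rho \otimes \tau^{(E)}} \le \mathfrak c_\rho$ by monotonicity under tensoring with free states (which here is in fact equality, by additivity and $\mathfrak c_{\tau^{(E)}} = 0$). Step two: $\mathfrak c_{\hat O (\rho \otimes \tau^{(E)}) \hat O^\dagger} \le \mathfrak c_{\rho \otimes \tau^{(E)}}$ by monotonicity under block-diagonal orthogonal gates. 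Step three: applying the displacement $\hat D$ does not change $\mathfrak c$ (monotonicity under displacement, again in fact equality since displacements only shift first moments). Step four: taking the partial trace $\Tr_B$ only decreases $\mathfrak c$. Chaining these yields $\mathfrak c_\sigma \le \mathfrak c_\rho$.

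To finish, I would invoke the assumption $\mathfrak c_\rho < \mathfrak c_\sigma$ to derive a contradiction, so no such dilation exists and deterministic conversion is impossible. I do not anticipate any genuine obstacle here, since all the heavy lifting is already done in Theorem~\ref{theo:monotonicity_sc}; the only thing to be careful about is the ordering of the operations and the observation that the ambient Hilbert spaces on which these operations act can be padded consistently (the environment mode $E$ and the traced-out subsystem $B$ should be identified so that the chain of inequalities is applied on compatible numbers of modes). Since this is flagged in the paper as ``a standard resource-theoretic statement,'' I would keep the proof short and essentially cite Theorem~\ref{theo:monotonicity_sc} at each step rather than redo any calculation.
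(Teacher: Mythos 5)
Your proposal is correct and is exactly the argument the paper intends: the paper gives no explicit proof, asserting only that the lemma is ``a standard resource-theoretic statement'' following from Theorem~\ref{theo:monotonicity_sc}, and your chain of monotonicity inequalities (tensoring with a free state, block-diagonal orthogonal conjugation, displacement, partial trace, in that order) is precisely the standard argument being invoked. No gap.
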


\noindent In the following section, we describe a mapping originally proposed in \cite{Barthe2025}, which represents covariance matrices as density matrices of virtual qubit–qudit systems. We then show how this mapping enables us to establish connections between symplectic coherence and quantum discord.
\section{Symplectic coherence and quantum discord}\label{sec:discord_and_sc}
In this section, we establish a connection between symplectic coherence and quantum discord \cite{Ollivier2001}. We start by connecting the set of free states $\mathcal C$ with the set of classical-quantum states (with zero quantum discord \cite{Piani2008}) of a qubit-qudit system with respect to computational basis measurements of the qubit. To do so, we rely on a mapping introduced in \cite{Barthe2025}, which allows us to identify a covariance matrix of size $2m \times 2m$ with the density matrix of a $2m$-dimensional qubit-qudit state.

\begin{defi}[Covariance matrix to density matrix mapping \cite{Barthe2025}]\label{defi:Covariance_to_density} 
    Given a covariance matrix
    \begin{equation}
        V = \begin{bmatrix}
            V_x && V_{xp} \\
            V_{xp}^T && V_p
        \end{bmatrix},
    \end{equation}
    we define a mapping $\mathcal M: V \mapsto \varrho$ which maps $V$ onto the density matrix $\varrho$ in the $2m$-dimensional qubit-qudit Hilbert space $\mathcal H_2 \otimes \mathcal H_m$ where the state of the qubit $\ket{0}$ or $\ket{1}$ denotes the position or momentum quadratures and the state of the qudit corresponds to the mode number (Figure \ref{fig:covar_to_density}). Including the normalization of the obtained virtual state, this mapping gives
    \begin{equation}\label{eqn:covariance_to_density}
        \mathcal{M}\left(V = \begin{bmatrix}
            V_x && V_{xp} \\
            V_{xp}^T && V_p
        \end{bmatrix}\right) = \varrho = \frac{1}{\Tr[V]} \begin{bmatrix}
            V_x && V_{xp} \\
            V_{xp}^T && V_p
        \end{bmatrix},
    \end{equation}
    such that $V_x$ denotes the block for $\ket{0}\bra{0} \otimes \ket{\bm i}\bra{\bm j}$, $V_{xp}$ for $\ket{0}\bra{1} \otimes \ket{\bm i}\bra{\bm j}$, etc., where $\ket{\bm i}$ are the qudit computational basis states.
\end{defi}
\begin{figure}
    \centering
    \includegraphics[width=\linewidth]{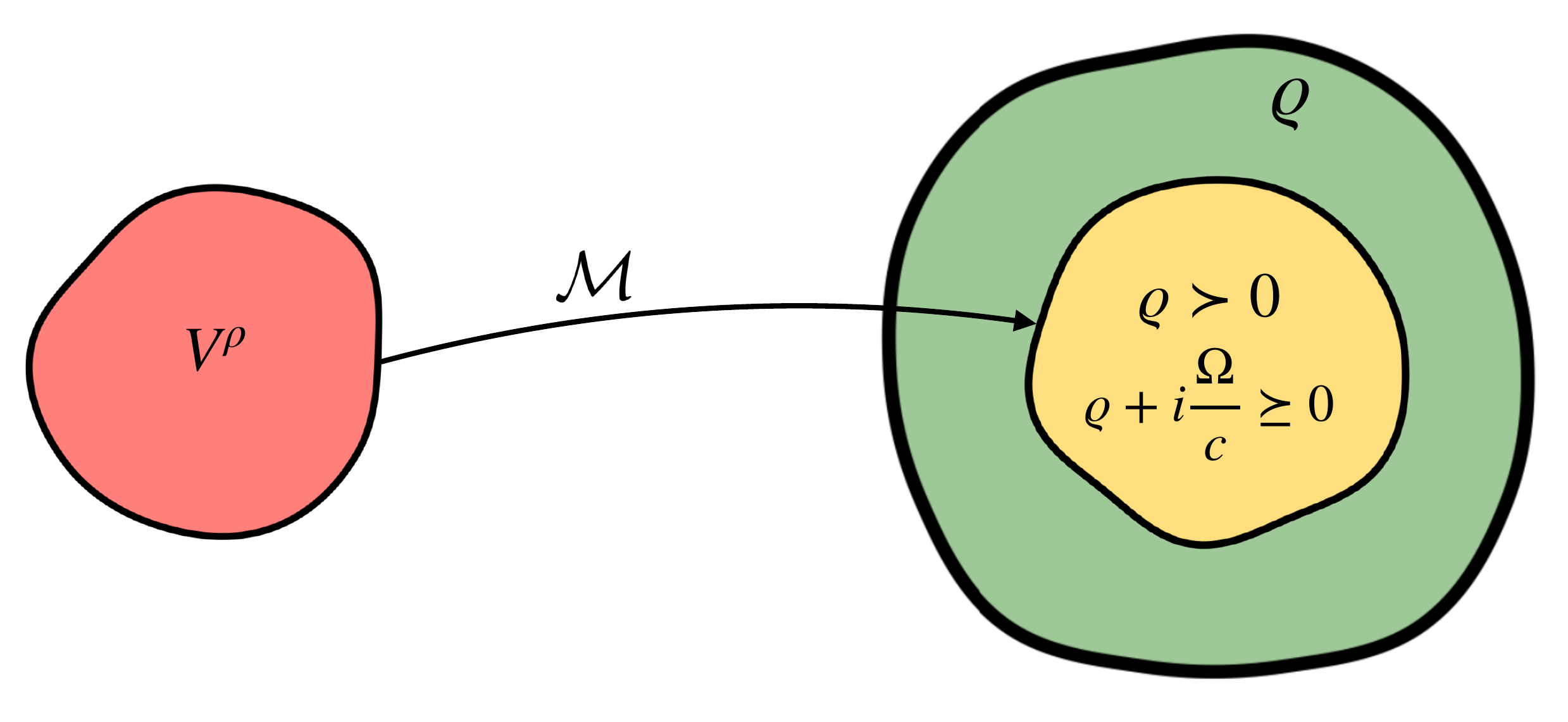}
    \caption{Visual illustration of the covariance matrix to density matrix mapping $\mathcal{M}$ (Definition \ref{defi:Covariance_to_density}) which maps a covariance matrix $V^\rho$ to the density matrix of a virtual qubit-qudit state $\varrho$ such that $\varrho \succ 0$ and $\varrho + i \Omega/c \succeq 0$ for some constant $c \geq 2m$.}
    \label{fig:covar_to_density}
\end{figure}
\noindent Under this mapping, we obtain the following result:
\begin{theo}[Position-momentum correlations and quantum discord]\label{theo:zero_sc_zero_discord}
    Under the mapping $\mathcal{M}$ given in \cref{defi:Covariance_to_density}, a covariance matrix $V^\rho$ with $\rho \in \mathcal C$ maps to a classical-quantum virtual qubit-qudit state $\varrho = \mathcal M(V^\rho)$ with zero quantum discord with respect to computational basis measurements of the qubit register. Furthermore, the symplectic coherence of $\rho$ can be related to geometric quantum discord (\cref{eq:gqd}),with Hilbert-Schmidt distance as the distance measure, with respect to computational basis measurements in the qubit subsystem of $\varrho$ as:
\begin{equation}\label{eqn:sc_gd}
    \mathfrak c_\rho = \frac{\Tr[V^\rho]^2}2 D_G^{HS}(\varrho) = 8\left(\Tr[\rho \hat E] - \bm d^T \bm d\right)^2 D_G^{HS}(\varrho).
\end{equation}
\end{theo}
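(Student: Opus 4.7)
The plan is to prove both parts by unpacking \cref{defi:Covariance_to_density} and then solving a short Frobenius-norm minimisation; essentially no machinery beyond the operational interpretation in \cref{theo:sc_expr} and the positivity properties of covariance matrices from \cref{eq:covar_inequality} is required.

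For the first part, I would observe that a qubit-qudit state is classical-quantum with respect to computational-basis measurements on the qubit register if and only if it is block-diagonal in the qubit basis. Reading off \cref{eqn:covariance_to_density}, when $V_{xp}^\rho=0$ the image $\varrho=\mathcal M(V^\rho)$ is of exactly this form and rewrites as
\begin{equation*}
\varrho = \frac{\Tr[V_x^\rho]}{\Tr[V^\rho]}\ket{0}\bra{0}\otimes\frac{V_x^\rho}{\Tr[V_x^\rho]} + \frac{\Tr[V_p^\rho]}{\Tr[V^\rho]}\ket{1}\bra{1}\otimes\frac{V_p^\rho}{\Tr[V_p^\rho]},
\end{equation*}
where the qudit blocks are valid density matrices because $V_x^\rho,V_p^\rho\succ 0$ by \cref{eq:covar_inequality}. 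This immediately yields $\rho\in\mathcal C \Rightarrow \varrho\in\mathcal{CQ}$.

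For the quantitative relation, I would solve the geometric discord minimisation directly. Any $\varsigma\in\mathcal{CQ}$ is block-diagonal in the qubit subspace, so the Frobenius norm squared of $\varrho-\varsigma$ splits into contributions from the two diagonal and the two off-diagonal qubit blocks; the off-diagonal piece equals $2\|V_{xp}^\rho\|_F^2/\Tr[V^\rho]^2$, independently of $\varsigma$. The diagonal-block contribution is non-negative and is minimised at zero by taking $\tilde\varsigma$ to be the block-diagonal part of $\varrho$ itself, which is a valid classical-quantum state by the same positivity argument as above. Hence $D_G^{HS}(\varrho) = 2\mathfrak c_\rho/\Tr[V^\rho]^2$, giving the first equality in \cref{eqn:sc_gd}; the second equality then follows by substituting $\Tr[V^\rho]$ via the energy--covariance relation \cref{eqn:energy_covariance matrix}.

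There are no serious obstacles. The only delicate point is that the unconstrained Frobenius-projection minimiser (the block-diagonal part of $\varrho$) must be feasible, i.e.\ positive semi-definite and trace-one rather than merely block-diagonal Hermitian — this is guaranteed by the strict positivity of $V_x^\rho,V_p^\rho$. An essentially equivalent shortcut is to invoke \cref{theo:sc_expr} directly: up to the overall rescaling $1/(2\Tr[V^\rho]^2)$, the minimisation defining $D_G^{HS}(\varrho)$ coincides with the minimisation over $\mathcal C$ appearing in \cref{theo:sc_expr}, so \cref{eqn:sc_gd} can be read off without redoing the optimisation.
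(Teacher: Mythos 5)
Your proposal is correct and follows essentially the same route as the paper: the first part is the same block-diagonal rewriting of $\mathcal M(V^\rho)$ into the classical-quantum form, and the second part amounts to the paper's Lemma~\ref{applem:closestCQstate} (that the closest $\mathcal{CQ}$ state is the block-diagonal part of $\varrho$, so $D_G^{HS}(\varrho)=2\|\varrho_{01}\|_F^2$), which you re-derive by splitting the Frobenius norm into block contributions rather than by the paper's contradiction argument. The only cosmetic difference is that the paper's $\mathcal{CQ}$ allows non-normalised positive blocks, so the trace-one feasibility concern you flag is not actually needed.
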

The proof of Theorem \ref{theo:zero_sc_zero_discord} is given in appendix \ref{app:zero_sc_zero_discord}, and consists in proving that the virtual qubit-qudit states obtained from covariance matrices with zero position-momentum correlations through the mapping given in \cref{defi:Covariance_to_density} have zero quantum discord with respect to computational basis measurements of the qubit \cite[Section I]{Streltsov2017}. The connection between symplectic coherence $\mathfrak c_\rho$ and geometric quantum discord $D_G^{HS}(\varrho)$ can further established by expressing $D_G^{HS}(\varrho)$ in terms of the off-block-diagonal matrix of $\varrho$, and using the mapping $\mathcal{M}$ to identify the corresponding block of $V^\rho$ that maps to this off-block-diagonal part.

It must be emphasized, however, that \cref{theo:zero_sc_zero_discord} does not imply that position-momentum correlations are equivalent to quantum discord in CV systems, unlike Gaussian quantum discord \cite{Giorda2010}. Indeed, while Gaussian quantum discord aims to quantify the non-classical correlation between subsystems in Gaussian states, we are considering correlations between the position and momentum quadratures of a single quantum system. Theorem \ref{theo:zero_sc_zero_discord} instead connects position-momentum correlations in CV quantum states to quantum discord in a virtual (qubit-qudit) state over two subsystems.

Importantly, in defining our set $\mathcal{CQ}$ we restrict to classical–quantum states with respect to measurements in the computational basis of the qubit, rather than allowing all possible measurement bases.  This deliberate choice ensures a one‐to‐one correspondence between non-zero off‐diagonal entries $\varrho_{01}$ (which map to $V_{xp}^\rho$ under $\mathcal{M}$) and non-zero quantum discord.  Concretely, if $\varrho_{01}\neq0$, then position–momentum correlations are present and our restricted discord measure will detect them.

This choice also explains why symplectic coherence avoids the problem one encounters with geometric quantum discord with Hilbert--Schmidt distance as the distance measure. 
Indeed, as pointed out in \cite{Piani2012}, geometric quantum discord with Hilbert--Schmidt distance as the distance measure suffers from logical inconsistencies. Specifically, while the quantum discord of a DV quantum state $\varrho$ with respect to measurements in the subsystem of $\varrho$ should not change when taking its tensor product with another state $\varsigma$, the authors exhibit in \cite{Piani2012} an example where the geometric quantum discord with respect to Hilbert--Schmidt distance actually increases. However, Theorem \ref{theo:monotonicity_sc} demonstrates that we do not encounter any such logical inconsistencies with symplectic coherence, as symplectic coherence is additive under tensor product.

Finally, we would like to elucidate the differences between symplectic coherence in CV quantum states and quantum discord in virtual qubit-qudit states, as this will be important later when we discuss maximal symplectic coherence in the following section. While the density matrix of a quantum state may have complex entries and has to satisfy two conditions
\begin{eqnarray}
    \rho &=& \rho^\dagger \nonumber \\
    \rho &\succeq& 0,
\end{eqnarray}
the virtual density matrix obtained from the inverse of the mapping $\mathcal M$ (Definition \ref{defi:Covariance_to_density}) has only real matrix elements and has to satisfy an additional condition:
\begin{eqnarray}
    \rho &=& \rho^T \nonumber \\
    \rho &\succ& 0 \nonumber \\
    \rho + i\Omega/c &\succeq& 0 \text{ for some constant } c \geq 2m.
\end{eqnarray}
The last condition comes from the uncertainty principle. This difference manifests itself in the value of maximal symplectic coherence (see \cref{theo:max_sc}), which is different from what one would expect when looking purely from a quantum discord point of view. Another difference is that not all Gaussian unitary gates can be mapped to valid unitary gates acting on the virtual qubit-qudit system, a point that was also mentioned in the original paper defining the mapping \cite{Barthe2025}. However, note that block diagonal orthogonal operations map to local unitaries acting on the qudit register. For more details, see the discussion following Eq.~\ref{app_eq:block_diagonal_orthogonal} of appendix \ref{app:monotonicity_sc}.


\section{Maximal symplectic coherence}\label{sec:msc}

In this section, we characterise the maximum amount of position-momentum correlations in states of bounded energy. According to Eq.~\ref{eqn:energy_covariance matrix},
\begin{equation}
    \Tr[\rho \hat E] = \frac{\Tr[V^\rho] + \bm d^T \bm d}{4},
\end{equation}
where $\bm d$ is the first  moment vector given by Eq.~\ref{eqn:displ_vector}. Therefore, sticking to states with zero first moments, i.e. states with $\bm d$ being a zero vector, fixing the trace of the covariance matrix is equivalent to fixing the average energy of the state. In what follows, we consider the set of states with covariance matrices $V^\rho$ such that $\Tr[V^\rho] = E$, and investigate the maximal symplectic coherence obtainable by such states. Note that for states with non-zero first moments, fixing the trace of $V^\rho$ still puts an upper bound on the average energy of the system, since $\bm d^ T \bm d \geq 0$ in Eq.~\ref{eqn:energy_covariance matrix}.
We first define a class of pure Gaussian states $\mathcal S^{\max}(E)$:
\begin{defi}\label{defi:max_SC_states}
     A pure Gaussian state $\ket\psi_G$ belongs to the set $\mathcal S^{\max}(E)$ if and only if it can be written in the form
    \begin{equation}
        \ket{\psi}_G = \hat O_2\hat R(\vec \theta)\hat{O}_1\ket{r}\otimes\ket{0}^{m-1},
    \end{equation}
    up to arbitrary displacements at the end, where $\ket r$ is a squeezed state with squeezing parameter $r$ such that
    \begin{equation}\label{eq:focus_squeeze}
        e^{2r} + e^{-2r} = E - 2(m-1),
    \end{equation}
    and where the phase shift angles $\vec \theta$ and the block-diagonal orthogonal unitary gate $\hat O_1$ with the associated symplectic matrix 
    \begin{equation}
        \begin{bmatrix}
            O && 0 \\
            0 && O
        \end{bmatrix}
\end{equation}
with orthogonal matrix $O$ are defined such that for all $j=1\dots m$,
\begin{equation}
    2 |A_{1j}| = \delta_{1j},
\end{equation}
for $A\in\{ O^T C O,O^T S O,O^T(CS)O\}$, where $C\coloneqq \mathrm{diag}(\cos \theta_1,\dots,\cos \theta_m)$ and $S \coloneqq \mathrm{diag}(\sin \theta_1,\dots,   \sin \theta_m)$. The block-diagonal orthogonal unitary gate $\hat O_2$ can be arbitrary.
\end{defi}
\noindent We then have the following result for maximal symplectic coherence:
\begin{theo}[Maximal symplectic coherence of quantum states]\label{theo:max_sc}
    Given an $m$-mode quantum state $\rho$ with covariance matrix $V^\rho$ such that $\Tr[V^\rho] = E$, the symplectic coherence of $\rho$ is upper bounded as
    \begin{eqnarray}\label{eq:max_sc}
        \mathfrak{c}(\rho) &\leq& \frac{(E-2m)^2}{4} + (E-2m) \nonumber \\ &\equiv& \mathfrak c_\mathrm{max}(E,m).
    \end{eqnarray}
    Moreover, the set of pure Gaussian states of maximal symplectic coherence is given by $\mathcal{S}^{\max}(E)$ (Definition \ref{defi:max_SC_states}), while the covariance matrix of set of all states with maximal symplectic coherence is contained in the convex hull of covariance matrices of states in $\mathcal{S}^{\max}(E)$.
\end{theo}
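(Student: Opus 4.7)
The plan is to cast the problem as a constrained optimisation of $\|V_{xp}\|_F^2$ over $2m\times 2m$ matrices $V^\rho$ satisfying the uncertainty condition $V^\rho+i\Omega\succeq 0$ and $\Tr[V^\rho]=E$. The starting point is the Schur complement of $V^\rho+i\Omega$ with respect to $V_x\succ 0$: its real part $V_p-V_x^{-1}-V_{xp}^T V_x^{-1}V_{xp}$ is a PSD matrix, so taking its trace and using $\Tr[V_x]+\Tr[V_p]=E$ yields the ``energy-budget'' inequality
\begin{equation}
\Tr\!\bigl[V_{xp}^T V_x^{-1} V_{xp}\bigr] \;\le\; E-\Tr[V_x]-\Tr[V_x^{-1}].
\end{equation}
Combining this with the elementary inequality $\Tr[V_x BB^T]\le\lambda_{\max}(V_x)\,\Tr[B^T B]$, applied to $B=V_x^{-1/2}V_{xp}$, gives
\begin{equation}
\|V_{xp}\|_F^2 \;\le\; \lambda_{\max}(V_x)\bigl(E-\Tr[V_x]-\Tr[V_x^{-1}]\bigr).
\end{equation}

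The remaining optimisation of the right-hand side is purely spectral. Setting $\mu\equiv\lambda_{\max}(V_x)\ge 1$, the other eigenvalues of $V_x$ lie in $(0,\mu]$, and since $x\mapsto x+1/x$ has its minimum $2$ at $x=1$, the quantity $\Tr[V_x]+\Tr[V_x^{-1}]$ is at least $\mu+1/\mu+2(m-1)$, with equality when the remaining eigenvalues are all equal to $1$. The upper bound becomes $\mu\bigl(E-\mu-1/\mu-2(m-1)\bigr)$, whose maximum over $\mu\ge 1$ is attained at $\mu^\star=(E-2m+2)/2$ (valid since $E\ge 2m$ by \eqref{eq:lowerboundtracecov}) and equals $(\mu^\star)^2-1=(E-2m)^2/4+(E-2m)$, which is the claimed bound.

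To characterise the maximisers, I would trace back each equality condition. Saturating the trace inequality forces the real part of the Schur complement to vanish; together with the PSD requirement on the Schur complement itself, the remaining purely imaginary contribution $i(V_x^{-1}V_{xp}-V_{xp}^T V_x^{-1})$ must also vanish (a nonzero anti-symmetric real matrix $A$ gives $iA$ eigenvalues $\pm\mu_k$, incompatible with PSD unless $A=0$). Consequently $\operatorname{rank}(V^\rho+i\Omega)=m$, so $V^\rho$ is a pure Gaussian covariance matrix and $\rho$ must be the associated pure Gaussian state (any state with the same second moments has at least the Gaussification's entropy, which is zero here). Saturating the $\lambda_{\max}$ step forces $V_{xp}V_{xp}^T$ to be supported on the top eigenvector $v$ of $V_x$, giving $V_{xp}=\alpha\,vv^T$; the spectral condition then fixes $V_x=I+(\mu^\star-1)vv^T$, and the vanishing Schur complement yields $V_p=V_x^{-1}+V_{xp}^T V_x^{-1}V_{xp}$, from which $\alpha^2=(\mu^\star)^2-1$.

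These covariance matrices are exactly those produced by placing a single-mode squeezed state of parameter $r$ with $\cosh(2r)=\mu^\star$, equivalently $e^{2r}+e^{-2r}=E-2(m-1)$, along the direction $v$, applying a $\pi/4$-type phase shift, and finishing with an arbitrary block-diagonal orthogonal unitary; via the Bloch--Messiah and KAK decompositions of Section~\ref{subsec:prelims_CVQI} these are precisely the pure Gaussian states of the class $\mathcal{S}^{\max}(E)$ in Definition~\ref{defi:max_SC_states}. The convex-hull statement then follows trivially, because every maximiser is itself already in $\mathcal{S}^{\max}(E)$. I expect the most delicate step to be translating the abstract rank-one data $(v,\alpha)$ of the optimal covariance into the explicit KAK parametrisation of Definition~\ref{defi:max_SC_states}, verifying that the stated conditions on the first row of $O^T C O$, $O^T S O$ and $O^T(CS)O$ enforce exactly the freedom to choose a unit direction $v$ and an overall sign for $V_{xp}$.
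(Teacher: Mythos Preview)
Your approach is correct and takes a genuinely different route from the paper's. The paper splits the argument in two: for pure Gaussians it parametrises $V^\rho$ via Bloch--Messiah and the KAK decomposition of the passive unitary, writes $V_{xp}=-(CAS-SA^{-1}C)$, and then proves three technical Frobenius-norm lemmas (culminating in $\|CAS-SA^{-1}C\|_F^2\le\|C(A-A^{-1})S\|_F^2\le\tfrac14\|A-A^{-1}\|_F^2$) plus a sum-of-squares estimate; for mixed or non-Gaussian states it invokes Parthasarathy's theorem that every valid covariance matrix is an equal-weight average of two pure-Gaussian ones, and combines the triangle inequality with Jensen's inequality for the concave function $x\mapsto\sqrt{x^2/4+x}$. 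Your Schur-complement argument bypasses both the matrix-norm lemmas and the Parthasarathy detour, handling all covariance matrices uniformly through a two-line spectral optimisation. Your equality analysis is in fact sharper than the paper's: by forcing the Schur complement of $V^\rho+i\Omega$ to vanish, you show that every maximising covariance matrix is already that of a pure Gaussian state in $\mathcal S^{\max}(E)$, not merely in the convex hull.

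One small point to patch: the assertion $\mu=\lambda_{\max}(V_x)\ge 1$ is not justified and can fail (a single-mode state squeezed along $q$ has $V_x=e^{-2r}<1$). This does not affect the bound, because if $\mu<1$ then, using $\Tr[V_x]+\Tr[V_x^{-1}]\ge 2m$, one has $\mu\bigl(E-\Tr[V_x]-\Tr[V_x^{-1}]\bigr)<E-2m\le\mathfrak c_{\max}(E,m)$; alternatively, the $x\leftrightarrow p$ symmetry of your argument lets you take the Schur complement with respect to $V_p$ instead, and at least one of $\lambda_{\max}(V_x)$, $\lambda_{\max}(V_p)$ is $\ge 1$.
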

The proof of Theorem \ref{theo:max_sc} is given in appendix \ref{app:max_sc}, and consists of two steps: firstly, we prove that the maximal symplectic coherence of pure Gaussian states is given by $\mathfrak c_\mathrm{max}(E,m)$ and find the structure of maximally symplectic coherent pure Gaussian states. This step of the proof is rather technical and relies on mathematical results regarding matrix decompositions and matrix norms. Secondly, we prove that the maximal symplectic coherence of mixed state is less than equal to $\mathfrak c_\mathrm{max}(E,m)$ and find the conditions on the the covariance matrix of such maximally symplectic coherent mixed states, by using Jensen's inequality and a result from \cite[Theorem 3]{parthasarathy2011} stating that any mixed Gaussian state can decomposed as an equally weighted mixture of two pure Gaussian states. 

Note that the maximal value achievable by the geometric quantum discord $D_G$ of a qubit-qudit density matrix with respect to computational basis measurements of the qubit is equal to $1/2$ (see Appendix \ref{app:rdiscord}). With Eq.~\ref{eqn:sc_gd} we have
\begin{equation}
    D_G^{HS}(\varrho) = \frac{2}{E^2} \mathfrak c_\rho \leq \frac12,
\end{equation}
leading to $\mathfrak c_\rho \leq E^2/4$. Theorem \ref{theo:max_sc} proves that this is not a tight bound and underlines the lack of complete equivalence between symplectic coherence and geometric quantum discord: not all qubit-qudit density matrices are valid covariance matrices, and in particular those with maximal geometric discord are not.

The structure of maximally \cref{theo:max_sc} not only allows one to engineer states of maximal symplectic coherence in the lab, such as
\begin{equation}
    \ket{MSC} = \hat R_1(\pi/4)\ket{r}\otimes\ket{0}^{m-1},
\end{equation}
with $r$ satisfying Eq.~\ref{eq:focus_squeeze}, it also gives an insight on the structure of the states maximizing position-momentum correlations. The composition of block‑diagonal orthogonal unitary gate $\hat O_1$ with phase‑shift angles behaves as expected: the phase shifters, after conjugation with the block-diagonal unitary gate $\hat O_1$ superposes the state’s information--encoded in the first mode, where all the energy resides--equally in the position and momentum quadrature. What appears to be novel, however, is the realization that one can only achieve maximal symplectic coherence by first concentrating all the energy into the first mode (with the remaining modes in the vacuum state) and then applying an appropriate passive linear unitary. To our knowledge, no previous work has explicitly identified this construction.

Note that when using other matrix norms in the definition of the measure of position-momentum correlations, the inequalities given in Eq.~\ref{eq:FM_TM} and Eq.~\ref{eq:FM_OM} allow us to bound their maximal values as well. However, whether the bound is tight and whether the set of states classified by $\mathcal S^{\max}$ also maximizes these other measures remains an open question.

Finally, a remarkable consequence of \cref{theo:max_sc} is the robustness of symplectic coherence under small perturbations:
\begin{theo}[Symplectic coherence under perturbations]\label{theo:SC_perturbations}
    Given two quantum states $\rho$ and $\sigma$ with symplectic coherences $\mathfrak c_\rho$ and $\mathfrak c_\sigma$ such that $\Tr[\hat E^2 \rho],\Tr[\hat E^2 \sigma] \leq E^2$, and $||\rho - \sigma||_1 \leq \epsilon/m$, then 
    \begin{eqnarray}
  |\mathfrak c_\rho - \mathfrak c_\sigma| &\leq& \mathcal O(E^2\sqrt\epsilon).
    \end{eqnarray}
\end{theo}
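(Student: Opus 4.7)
The plan is to factor $|\mathfrak c_\rho - \mathfrak c_\sigma|$ as a product, control the ``sum'' factor via \cref{theo:max_sc}, and control the ``difference'' factor via an energy-constrained continuity inequality for observables. Using the identity $||A||_F^2 - ||B||_F^2 = \langle A-B, A+B\rangle_F$ for the real Frobenius inner product together with Cauchy--Schwarz,
\begin{equation*}
|\mathfrak c_\rho - \mathfrak c_\sigma| \leq ||V_{xp}^\rho - V_{xp}^\sigma||_F \cdot (||V_{xp}^\rho||_F + ||V_{xp}^\sigma||_F).
\end{equation*}
The hypothesis $\Tr[\hat E^2 \rho] \leq E^2$ combined with Cauchy--Schwarz ($\Tr[\hat E \rho] \leq \sqrt{\Tr[\hat E^2\rho]} \leq E$) gives $\Tr[V^\rho] \leq 4E$; \cref{theo:max_sc} then yields $\mathfrak c_\rho \leq \mathfrak c_{\max}(4E,m) = \mathcal O(E^2)$, so the sum factor is $\mathcal O(E)$.

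The key technical step is the continuity inequality
\begin{equation*}
|\Tr[\hat A(\rho - \sigma)]|^2 \leq 2\,||\rho - \sigma||_1\,(\Tr[\hat A^2 \rho] + \Tr[\hat A^2 \sigma]),
\end{equation*}
valid for any Hermitian $\hat A$ with finite second moments on $\rho, \sigma$. I would prove it via Uhlmann purifications $\ket{\Psi_\rho}, \ket{\Psi_\sigma}$ chosen so that $\braket{\Psi_\rho}{\Psi_\sigma} = \sqrt{F(\rho,\sigma)}$ is real: write $\Tr[\hat A(\rho-\sigma)] = \bra{\Psi_\rho - \Psi_\sigma}(\hat A\otimes \I)\ket{\Psi_\rho} + \bra{\Psi_\sigma}(\hat A\otimes \I)\ket{\Psi_\rho - \Psi_\sigma}$ by an add-subtract identity, apply Cauchy--Schwarz in Hilbert space to each summand, and invoke the Fuchs--van de Graaf bound $|| \ket{\Psi_\rho} - \ket{\Psi_\sigma} ||^2 = 2(1-\sqrt F) \leq ||\rho-\sigma||_1$.

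I would then apply this inequality to each entry $(V_{xp}^\rho - V_{xp}^\sigma)_{ij}$, whose principal contribution is $\Tr[\hat A_{ij}(\rho-\sigma)]$ with $\hat A_{ij} = \tfrac12\{\hat q_i, \hat p_j\}$; the first-moment correction $\Tr[\hat q_i\rho]\Tr[\hat p_j\rho] - \Tr[\hat q_i\sigma]\Tr[\hat p_j\sigma]$ is handled analogously using the same inequality for $\hat q_i, \hat p_j$ together with $|ab-a'b'| \leq |a|\,|b-b'| + |a-a'|\,|b'|$. Summing over $i,j$ requires the operator bound $\sum_{ij}\hat A_{ij}^2 \preceq 12\,\hat E^2 + 3m\I$, which I would establish via the CCR algebra: expanding $\hat E^2 = (\sum_k \hat E_k)^2$ with $\hat E_k = (\hat q_k^2 + \hat p_k^2)/4$ yields $\sum_{i\neq j}\hat q_i^2 \hat p_j^2 \preceq 8\,\hat E^2$ (the remaining $\hat q_i^2\hat q_j^2$ and $\hat p_i^2\hat p_j^2$ cross terms being positive), while a single-mode SU$(1,1)$ Casimir identity gives $\hat A_{ii}^2 \preceq 4\,\hat E_i^2 + 3\,\I$. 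Since $\hat E \succeq (m/2)\I$ forces $m \leq 2E$, this produces $\Tr[(\sum_{ij}\hat A_{ij}^2)(\rho+\sigma)] \leq \mathcal O(E^2)$, and hence $||V_{xp}^\rho - V_{xp}^\sigma||_F^2 \leq \mathcal O(E^2 \epsilon/m)$. Combined with the sum factor: $|\mathfrak c_\rho - \mathfrak c_\sigma| \leq \mathcal O(E^2 \sqrt{\epsilon/m}) \leq \mathcal O(E^2\sqrt\epsilon)$, as claimed.

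The main obstacle I anticipate is the continuity inequality in the second step: a direct Jordan-decomposition Cauchy--Schwarz yields only $|\Tr[\hat A(\rho-\sigma)]|^2 \leq ||\rho - \sigma||_1 \cdot \Tr[\hat A^2|\rho-\sigma|]$, and $\Tr[\hat A^2|\rho-\sigma|]$ cannot in general be bounded by $\Tr[\hat A^2(\rho+\sigma)]$ (explicit finite-dimensional counterexamples exist since $|\rho-\sigma| \not\preceq \rho+\sigma$ as operators). The purification route circumvents this obstruction by transferring the problem to pure states, for which Hilbert space Cauchy--Schwarz applies cleanly.
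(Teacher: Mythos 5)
Your proposal is correct, and it reaches the stated $\mathcal O(E^2\sqrt\epsilon)$ bound by a genuinely different route in the key technical step. The paper's proof imports the covariance-matrix continuity bound $\|V^\rho-V^\sigma\|_\infty\le 40E\sqrt{\|\rho-\sigma\|_1}$ from an external result (\cite[Theorem 62]{annamele2025symplecticrank}), pays a $\sqrt m$ factor to convert to the Frobenius norm, and then relates $|\mathfrak c_\rho-\mathfrak c_\sigma|$ to $\|V^\rho-V^\sigma\|_F$ via the operational characterization $\mathfrak c_\rho=\tfrac12\min_{\tau\in\mathcal C}\|V^\rho-V^\tau\|_F^2$ (Theorem~\ref{theo:sc_expr}) and a triangle-inequality argument. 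You instead prove the needed continuity from first principles: the purification/Fuchs--van de Graaf inequality $|\Tr[\hat A(\rho-\sigma)]|^2\le 2\|\rho-\sigma\|_1(\Tr[\hat A^2\rho]+\Tr[\hat A^2\sigma])$ applied entrywise to $V_{xp}$, summed via the operator bound $\sum_{ij}\hat A_{ij}^2\preceq 12\hat E^2+3m\I$ (I checked the $\mathrm{su}(1,1)$ Casimir step giving $\hat A_{ii}^2\preceq4\hat E_i^2+3\I$ and the cross-term bound $\sum_{i\ne j}\hat q_i^2\hat p_j^2\preceq8\hat E^2$; both are right with the $\hbar=2$ convention). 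Your algebraic factorization $|\,\|A\|_F^2-\|B\|_F^2|\le\|A-B\|_F(\|A\|_F+\|B\|_F)$ replaces the paper's triangle-inequality step and is equivalent in effect; both proofs then invoke Theorem~\ref{theo:max_sc} identically to bound the sum factor by $\mathcal O(E)$. What your route buys: it is self-contained, it only needs to control the $xp$ block rather than the full covariance matrix, and it yields $\|V_{xp}^\rho-V_{xp}^\sigma\|_F\le\mathcal O(E\sqrt{\epsilon/m})$, a factor $\sqrt m$ tighter than the paper's intermediate bound, because you never pass through the operator norm. What the paper's route buys is brevity, at the cost of relying on the cited theorem. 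Your closing remark about why the naive Jordan-decomposition Cauchy--Schwarz fails is accurate and is precisely the reason the purification detour is needed; the only point worth flagging for a full write-up is the domain issue for the unbounded $\hat A_{ij}$ in the Hilbert-space Cauchy--Schwarz, which is handled by the hypothesis $\Tr[\hat E^2\rho],\Tr[\hat E^2\sigma]\le E^2$ together with your operator bound.
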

The proof of Theorem \ref{theo:SC_perturbations} is given in appendix \ref{app:SC_perturbations} together with an explicit bound, and follows from bounding the Hilbert--Schmidt distance between two covariance matrices in terms of their trace distance, combining results from \cite[Theorem 62]{annamele2025symplecticrank} with standard matrix norm inequalities given in \ref{subsec:prelims_subsec_matrix_norms}. 

Theorem \ref{theo:SC_perturbations} implies that symplectic coherence is robust under small perturbations in the quantum state.
Additionally, we show that under the action of a photon loss channel with parameter $\eta$, the covariance matrix of a quantum state $V^\rho$ changes as (Appendix \ref{app:sc_under_loss})
\begin{equation}
    V^\rho \rightarrow \eta V^\rho + (1-\eta)\I_{2m},
\end{equation}
and therefore the symplectic coherence of a quantum state $\rho$ under that photon loss channel evolves as
\begin{equation}
    \mathfrak c_{\Lambda_\eta(\rho)} = \eta^2 \mathfrak c_\rho.
\end{equation}


\section{Symplectic coherence in quantum information processing}\label{sec:sc_qi}

Having developed a framework for quantifying position-momentum correlations, we now consider a variety of quantum information processing tasks where position-momentum correlations, as quantified by symplectic coherence, plays a role. Note that we consider specific examples in which position-momentum correlations are expected to be resourceful, and demonstrate that symplectic coherence appropriately captures these correlations. Once again, we restrict to quantum states with zero first moments for brevity---analogous results may be obtained when including non-zero first moments.

\subsection{Symplectic coherence and entanglement}

We first highlight a result that was numerically observed by the authors in \cite[Figure 2]{Serafini2007entanglement}, i.e.\ that, while looking at micro-canonical ensembles of pure Gaussian states \cite[Section 3.1]{Serafini2007}, states with zero position-momentum correlations are typically less entangled that states with non-zero position-momentum correlations, at fixed energy. We formalise this observation by the following Theorem (Informal version of Theorem \ref{apptheo:typical_entangle}):
\begin{theo}[Position-momentum correlations enhance entanglement, informal version of Theorem \ref{apptheo:typical_entangle}]\label{theo:sc_typical_entangle}
At fixed energy, the entanglement of pure Gaussian states, as measured by the symplectic eigenvalue squared of the reduced covariance matrix of the first  mode, is higher on average for pure Gaussian states with non-zero position-momentum correlations.
\end{theo}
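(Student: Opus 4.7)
The plan is to compute the expected value of the entanglement measure $\nu_1^2 = \det V_1$ (where $V_1$ is the reduced covariance matrix on the first mode and $\nu_1$ its symplectic eigenvalue) over two ensembles of pure Gaussian states at fixed energy $E$---the full micro-canonical ensemble and its subensemble with $V_{xp}^\rho = 0$---and to show the former strictly exceeds the latter. I would parameterize pure Gaussian states via the Bloch--Messiah decomposition $V^\rho = S_U\,\mathrm{diag}(Z^2, Z^{-2})\,S_U^T$ with passive unitary $U = X + iY \in U(m)$ and squeezings $Z = \mathrm{diag}(e^{r_j})$ subject to $\sum_j (e^{2r_j} + e^{-2r_j}) = E$. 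The full ensemble then carries Haar measure on $U(m)$ times any fixed measure on the energy shell of squeezings. Since $V_{xp}^\rho = -XZ^2Y^T + YZ^{-2}X^T = 0$ at generic squeezings forces $S_U$ into block-diagonal orthogonal form (up to discrete $q \leftrightarrow p$ swaps on individual modes that contribute symmetrically), the restricted ensemble effectively corresponds to Haar on $O(m)$.

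Next, I would expand
\begin{align*}
(V_x)_{11} &= \textstyle\sum_j (X_{1j}^2 e^{2r_j} + Y_{1j}^2 e^{-2r_j}), \\
(V_p)_{11} &= \textstyle\sum_j (Y_{1j}^2 e^{2r_j} + X_{1j}^2 e^{-2r_j}), \\
(V_{xp})_{11} &= \textstyle\sum_j X_{1j} Y_{1j}(e^{-2r_j} - e^{2r_j}),
\end{align*}
and write $\det V_1 = (V_x)_{11}(V_p)_{11} - (V_{xp})_{11}^2$. The first row of a Haar-random $U \in U(m)$ is uniform on the sphere $S^{2m-1}\subset \R^{2m}$, so the sphere-moment identities $\E[z_i^2] = 1/(2m)$, $\E[z_i^2 z_j^2] = (1 + 2\delta_{ij})/(2m(2m+2))$, and the vanishing of any expectation with an unpaired index yield, after simplification,
\begin{equation*}
\E_{\mathrm{full}}[\det V_1] = \frac{E^2 + 6m - 2\sum_j \cosh(4r_j)}{4m(m+1)}.
\end{equation*}
For the restricted ensemble, $(V_{xp})_{11} = 0$ and Haar moments on $O(m)$ (first row uniform on $S^{m-1}$) analogously give
\begin{equation*}
\E_{\mathrm{restricted}}[\det V_1] = \frac{ab + 2m}{m(m+2)}, \qquad a = \textstyle\sum_j e^{2r_j}, \ b = \textstyle\sum_j e^{-2r_j}.
\end{equation*}

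Finally, I would compare point-wise in the squeezings. Using $\sum_j(e^{4r_j} + e^{-4r_j}) = a^2 + b^2 - 4\sum_{j<k}\cosh(2(r_j+r_k))$, $ab = m + 2\sum_{j<k}\cosh(2(r_j-r_k))$, and the product-to-sum identity $\cosh(2(r_j+r_k)) - \cosh(2(r_j-r_k)) = 2\sinh(2r_j)\sinh(2r_k)$, the numerator of $\E_{\mathrm{full}}-\E_{\mathrm{restricted}}$ reduces to $2m(1-m) + 4m\sum_{j<k}\sinh(2r_j)\sinh(2r_k) + 4\sum_{j<k}\cosh(2(r_j+r_k))$; taking $r_j \geq 0$ without loss of generality, the lower bound $\cosh \geq 1$ cancels the $2m(1-m) = -2m(m-1)$ term against $4\binom{m}{2} = 2m(m-1)$, and non-negativity of the sinh-sum shows the whole expression is $\geq 0$, with strict positivity whenever some $r_j > 0$. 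Integrating over the shared squeezing distribution preserves the strict inequality. The main obstacle is precisely this point-wise comparison at general $m$: while the two-mode calculation with $r_1 = r$, $r_2 = 0$ collapses to the clean difference $(\cosh(2r)-1)/12$, the general analysis requires careful bookkeeping of symmetric functions of $\{e^{\pm 2r_j}\}$, and one must also verify rigorously that the auxiliary discrete contributions to the restricted ensemble (mode-wise $q\leftrightarrow p$ swaps) do not alter the averaged answer.
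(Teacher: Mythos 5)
Your proposal is correct and follows essentially the same route as the paper's proof: Bloch--Messiah parameterization, Haar averaging of the first row of $U(m)$ (uniform on $S^{2m-1}$) versus $O(m)$ (uniform on $S^{m-1}$) using exactly the sphere moments of the paper's Lemma~\ref{lem:Haar_measure_prop}, and a pointwise comparison in the squeezings, where your two closing inequalities $\sinh(2r_j)\sinh(2r_k)\ge 0$ and $\cosh(2(r_j+r_k))\ge 1$ are, upon substituting $d_j=e^{2r_j}$, precisely the paper's bounds $S_2\ge S_1$ and $S_2\ge 2m(m-1)$. (Your intermediate numerator is off by an overall factor of $2$ --- over the denominator $4m(m+1)(m+2)$ it should read $4m(1-m)+8m\sum_{j<k}\sinh(2r_j)\sinh(2r_k)+8\sum_{j<k}\cosh(2(r_j+r_k))$ --- but this does not affect the sign argument, and both of your expectation formulas as well as the two-mode sanity check are exact.)
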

\noindent A formal version along with the detailed proof of Theorem \ref{theo:sc_typical_entangle} is given in appendix \ref{app:sc_typical_entangle}, and relies on proving that on average, the symplectic eigenvalue squared of the reduced covariance matrix of the first mode, is higher for pure Gaussian states with non-zero symplectic coherence using properties of Haar-random unitary and orthogonal matrices.

Since the covariance matrix of a pure Gaussian state is given by (Eq.~\ref{eqn:covar_pure_gaussian}):
\begin{equation}
    V^\rho = S_U \begin{bmatrix}
        Z^2 && 0 \\
        0 && Z^{-2} 
    \end{bmatrix} S_U^T,
\end{equation}
fixing the energy of pure Gaussian states is equivalent to fixing $\Tr[V^\rho]$ through Eq.~\ref{eqn:energy_covariance matrix} (since we are restricting our study to zero first moment states) or equivalently fixing $Z^2$ ($Z^{-2}$), since $S_U$ is orthogonal. Then, we are Haar-randomly sampling the $2m\times2m$ symplectic orthogonal matrix $S_U$ (isomorphic to an $m\times m$ unitary), as it was done in \cite{Serafini2007entanglement} and taking the average of symplectic eigenvalue squared of the reduced covariance matrix of the first mode over these samples. Note that while in \cite[Figure 2]{Serafini2007entanglement}, the entanglement is quantified by the entanglement entropy $h$, the symplectic eigenvalue squared of the reduced covariance matrix of the first mode $\nu^2$ is also a valid measure of entanglement for pure Gaussian states \cite[Section 5]{Serafini2007}, with the two being related by
    \begin{equation}
    h= \frac{(\nu+1)}{2} \ln\left(\frac{\nu+1}{2}\right) - \frac{(\nu-1)}{2} \ln\left(\frac{\nu-1}{2}\right).
\end{equation}

This result may seem counterintuitive, since at first sight the (squared) symplectic eigenvalue of the reduced covariance matrix
\begin{equation}
    \nu^2 = (V_x)_1 (V_p)_1 - (V_{xp})_1^2,
\end{equation}
should decrease for non-zero $(V_{xp})_1$ (non-zero symplectic coherence), where $(V_x)_1, (V_p)_1$ and $(V_{xp})_1$ are elements of the reduced covariance matrix of the first mode. However, the effect of gates generating non-zero symplectic coherence in pure Gaussian states is such that they also focus the energy on the diagonal elements $(V_x)_1 (V_p)_1$ in such a way that on average, the symplectic eigenvalue squared of these states, and as a consequence the entanglement, increases.

Now, we discuss the role of symplectic coherence in two quantum information processing tasks: quantum metrology and quantum channel discrimination.


\subsection{Symplectic coherence and quantum metrology}\label{sec:sc_metro}

Quantum metrology aims to utilize quantum states to probe quantum systems and measure physical parameters with a precision beyond the limits of classical measurement techniques, and is one of the central research themes in quantum information \cite{Giovannetti2011}. The ultimate precision of estimation of a given parameter $r$ with an input probe quantum state is given by the single-shot quantum Cram\'er--Rao bound:
\begin{equation}
    \Delta r^2 \geq \frac{1}{\mathcal F_Q},
\end{equation}
where $\Delta r$ is the variance of $r$, which is distributed according to some probability distribution that depends on the estimation process, and $\mathcal F_Q$ is the quantum Fisher information---the higher the quantum Fisher information, the higher the ultimate precision of estimation.

Here, we discuss the quantum metrology task of single-mode displacement amplitude single-shot estimation: Consider a scenario where a quantum system is acted upon by the displacement operator
\begin{equation}
    \hat D(r) = \exp\left(ir\left(\hat q/\sqrt 2 + \hat p/\sqrt 2\right)\right),
\end{equation}
and we wish to probe the amplitude parameter $r$ (Figure \ref{fig_illustration_metro_cd}). We show the following result:
\begin{figure}
    \centering
    \includegraphics[width=0.8\linewidth]{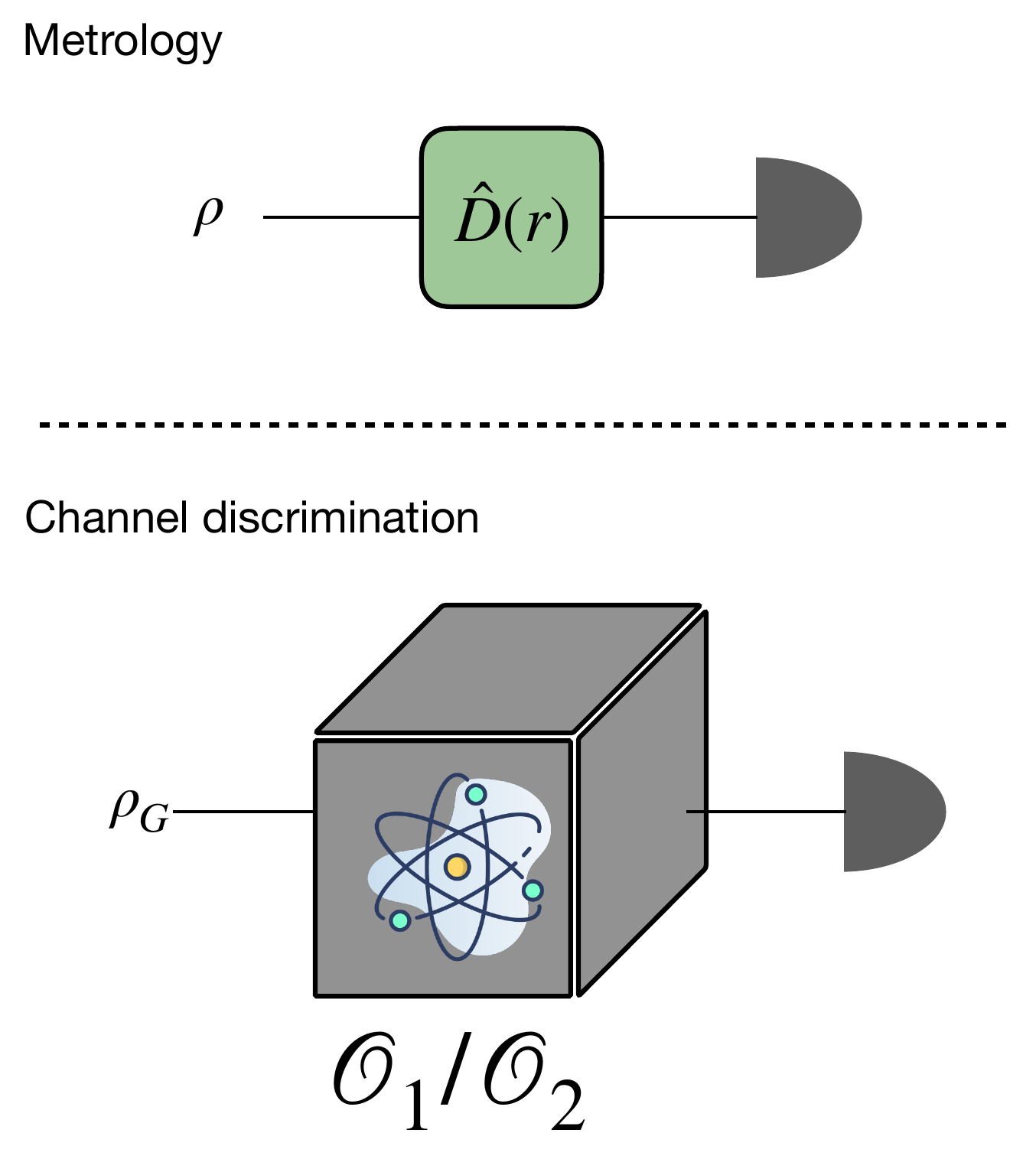}
    \caption{(Top) We are considering the metrological scenario where the input probe state $\rho$ goes through a physical medium implementing a displacment operator $\hat D(r)$ and we wish to estimate $r$ by making measurements at the output. (Bottom) We consider a channel discimination scenario where using an input probe Gaussian state $\rho_G$ and making appropriate measurements at the output, we want to determine which of the give two orthgonal Stinespring dilation (Definition \ref{defi:ortho_Stinespring}) $\mathcal O_1$ or $\mathcal{O}_2$ is the black box acting on the input state implementing.}
    \label{fig_illustration_metro_cd}
\end{figure}
\begin{theo}[Position-momentum correlations enhance metrology]\label{theo:sc_metro}
     Given a pure single-mode input probe state $\rho$ with covariance matrix $V^\rho$ and zero first moments going through a single-mode displacement operator $\hat D(r)$, the Fisher quantum information with respect to $r$ is given by
    \begin{equation}
        \mathcal F_Q \leq 2E + 4V_{xp}^\rho = 2E + 4\sqrt{\mathfrak c_\rho},
    \end{equation}
    where $E = V_x^\rho + V_p^\rho = 4 \Tr[\rho \hat E]$. Further, the inequality becomes an equality for pure input probe states.
\end{theo}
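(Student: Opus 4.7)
The plan is to reduce the statement to computing the variance of the displacement generator in the probe state. Writing $\hat D(r) = \exp(ir\hat G)$ with $\hat G = (\hat q + \hat p)/\sqrt 2$, I will invoke the standard result in quantum estimation theory that, for a unitary-encoding family $\rho_r = \hat D(r)\rho\hat D(r)^\dagger$, the quantum Fisher information satisfies $\mathcal F_Q \leq 4\,\mathrm{Var}_\rho(\hat G)$, with equality whenever $\rho$ is pure. This reduces the proof to evaluating $\mathrm{Var}_\rho(\hat G)$ and re-expressing it in terms of the entries of $V^\rho$.

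Next, I would expand $\hat G^2 = \tfrac12(\hat q^2 + \hat p^2 + \{\hat q,\hat p\})$ and read the second moments off the covariance matrix: the zero first-moment assumption gives $\langle\hat G\rangle_\rho = 0$, $\langle\hat q^2\rangle_\rho = V_x^\rho$, $\langle\hat p^2\rangle_\rho = V_p^\rho$ and $\tfrac12\langle\{\hat q,\hat p\}\rangle_\rho = V_{xp}^\rho$. Assembling these pieces yields $4\,\mathrm{Var}_\rho(\hat G) = 2(V_x^\rho + V_p^\rho) + 4V_{xp}^\rho = 2E + 4V_{xp}^\rho$, which is the claimed bound, and is saturated for pure probes by the equality case of the variance-QFI inequality.

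Finally, in the single-mode case $V_{xp}^\rho$ is a scalar, so $\mathfrak c_\rho = (V_{xp}^\rho)^2$. Writing $V_{xp}^\rho = \sqrt{\mathfrak c_\rho}$ amounts to orienting the probe (for example by a preliminary phase shift, which leaves $E$ and $\mathfrak c_\rho$ unchanged but can flip the sign of $V_{xp}^\rho$) so that the cross-correlation is aligned with the chosen displacement direction. The main subtlety I anticipate is precisely this sign-orientation bookkeeping: one must be careful that the QFI bound refers to the specific displacement direction fixed in the theorem, and that $V_{xp}^\rho$ is taken in the convention that matches the non-negative quantity $\sqrt{\mathfrak c_\rho}$. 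Beyond this, the argument is a short application of the standard QFI-variance identity together with the definition of the covariance matrix, so I do not foresee any genuinely hard step.
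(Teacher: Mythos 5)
Your proposal is correct and follows essentially the same route as the paper: both invoke the standard bound $\mathcal F_Q \leq 4\,\mathrm{Var}_\rho(\hat G)$ for unitary encoding (with equality for pure probes), expand $\hat G = (\hat q+\hat p)/\sqrt 2$, and read the second moments off the covariance matrix to obtain $2(V_x^\rho+V_p^\rho)+4V_{xp}^\rho$. Your sign-orientation remark also matches the paper's handling, which uses a Fourier/phase rotation to ensure $V_{xp}^\rho\geq 0$ so that $V_{xp}^\rho=\sqrt{\mathfrak c_\rho}$.
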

\noindent The proof of Theorem \ref{theo:sc_metro} is given in appendix \ref{app:sc_metro} and is based on the definition of Fisher quantum information of quantum states in terms of the generating operator $\hat H$ of the parameter $r$ \cite[Eqs.~60,61]{Toth_2014} (here, $\hat H = \hat q/\sqrt 2 + \hat p/\sqrt 2$).

Therefore, for states with fixed energy $E$, increasing the symplectic coherence $V_{xp}^\rho$ of the input probe state always increases the quantum Fisher information and as a consequence, the ultimate precision of estimation. Note that $V_{xp}^\rho = \sqrt{\mathfrak c_\rho}$ holds only as long as $V_{xp}^\rho \geq 0$. This is always possible without loss of generality, since given a single-mode quantum state $\rho$ with covariance matrix
\begin{equation}
    V^\rho = \begin{bmatrix}
        V_x^\rho && V_{xp}^\rho \\
        V_{xp}^\rho && V_p^\rho
    \end{bmatrix},
\end{equation}
with $V_{xp}^\rho < 0$, we can always apply the Fourier gate $\hat F = e^{i\frac{\pi}{2}\hat a^\dagger \hat a}$ to $\rho$ to obtain
\begin{equation}
    V^{\hat F \rho \hat F^\dagger} = \begin{bmatrix}
        V_p^\rho && -V_{xp}^\rho \\
        -V_{xp}^\rho && V_x^\rho
    \end{bmatrix},
\end{equation}
whose off-diagonal elements are now positive, without changing the symplectic coherence.

Note that this discussion also reveals that Fourier gate is another example of a gate which mixes position and momentum quadratures and thus has gate-based symplectic coherence \cite{upreti2025interplay}, but does not change the symplectic coherence of a quantum state.

\subsection{Symplectic coherence and quantum channel discrimination}
The task of quantum channel discrimination—distinguishing between a given set of quantum channels—is a fundamental problem in quantum information theory \cite{Childs2000,Weedbrook2012}, with wide-ranging applications such as quantum reading \cite{Pirandola2011} and quantum state detection via quantum illumination \cite{Lloyd2008,Tan2008}.

To illustrate the use of symplectic coherence in quantum channel discrimination, we consider a scenario where the aim is to find out which of two orthogonal Stinespring dilations (Definition \ref{defi:ortho_Stinespring}) is being implemented on an input probe Gaussian state with zero first moments by making appropriate measurements at the output (Figure \ref{fig_illustration_metro_cd}). We are provided either single (single-shot discrimination) or multiple (multi-shot discrimination) copies of the probe state. 

In this section, we focus on zero-mean Gaussian probe states to isolate the contribution of symplectic coherence to channel distinguishability. Non-zero displacements add linear terms but do not affect symplectic coherence, as displacement operations are free. Including them is simple but would obscure the clear link to the discrimination bounds, so we restrict to the mean-zero case for clarity.

\subsubsection{Single-shot discrimination} 

In this section, we take one of the two channels to be a pure photon loss channel with parameter $\eta$ and the other channel to be any orthogonal Stinespring dilation channel (Definition \ref{defi:ortho_Stinespring}). An important example of this channel discrimination task would be a scenario in which we want to discern whether quantum states are passing through a given physical medium with or without experiencing photon loss. For such a channel discrimination task, we have the following result:
\begin{lem}[Symplectic coherence lower bounds optimal channel discrimination success probability]\label{lem:sc_lb_td_new}
    The optimal success probability of successful channel discrimination between a multi-mode photon loss channel $\Lambda_\eta$ and an arbitrary orthogonal Stinespring dilation $\mathcal O_0$ for a given input probe state $\rho$ is lower bounded in terms of its symplectic coherence $\mathfrak c_\rho$:
\begin{equation}
        p_\mathrm{opt} \geq \frac12 + \frac{1}{400}\min\left(1,\frac{ \sqrt{\frac{(1-\eta)^2 (E-2m)^2}{2m} + 2\mathfrak c_\rho(1-\eta)^2}}{E + 1} \right),
    \end{equation}
where $E/4 = \Tr[\rho \hat E]$.
\end{lem}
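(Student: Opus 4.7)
The plan is to convert $p_\mathrm{opt}$ into a trace distance via the Helstrom bound and then lower bound that trace distance through a Frobenius-norm inequality on covariance matrices. First, Helstrom gives $p_\mathrm{opt} = \tfrac12 + \tfrac14\|\Lambda_\eta(\rho) - \mathcal O_0(\rho)\|_1$, so the task reduces to lower bounding this trace distance in terms of $\mathfrak c_\rho$ and $E$.

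Next, I would invoke a bound going in the opposite direction to Theorem~\ref{theo:SC_perturbations}, of the schematic form
\begin{equation*}
\tfrac12\|\rho_1 - \rho_2\|_1 \ge c\,\min\!\left(1,\frac{\|V^{\rho_1} - V^{\rho_2}\|_F}{E+1}\right),
\end{equation*}
for two states of energy bounded by $E/4$, with an absolute constant $c$ (so that $c/2 = 1/400$, i.e.\ $c = 1/200$). Such a bound can be extracted from the continuous-variable Fannes-type inequalities in \cite{annamele2025symplecticrank} together with the matrix norm inequalities from Section~\ref{subsec:prelims_subsec_matrix_norms}; the normalization $E+1$ accounts for the unbounded quadrature spectra.

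Then I would compute and lower bound $\|V^{\Lambda_\eta(\rho)} - V^{\mathcal O_0(\rho)}\|_F$ using the photon loss rule $V^{\Lambda_\eta(\rho)} = \eta V^\rho + (1-\eta)\I_{2m}$ derived in Appendix~\ref{app:sc_under_loss}. For the natural reference choice $\mathcal O_0 = \mathcal I$ (which is a valid orthogonal Stinespring dilation), one obtains $V^{\Lambda_\eta(\rho)} - V^\rho = -(1-\eta)(V^\rho - \I_{2m})$, and expanding in blocks,
\begin{equation*}
\|V^\rho - \I_{2m}\|_F^2 = \|V_x^\rho - \I_m\|_F^2 + \|V_p^\rho - \I_m\|_F^2 + 2\mathfrak c_\rho.
\end{equation*}
Applying Cauchy--Schwarz to the diagonal entries yields $\|V_x^\rho - \I_m\|_F^2 \ge (\Tr V_x^\rho - m)^2/m$, and similarly for $V_p^\rho$; a second Cauchy--Schwarz using the zero-mean identity $(\Tr V_x^\rho - m) + (\Tr V_p^\rho - m) = E - 2m$ gives $(\Tr V_x^\rho - m)^2 + (\Tr V_p^\rho - m)^2 \ge (E-2m)^2/2$, so that
\begin{equation*}
\|V^\rho - \I_{2m}\|_F^2 \ge \frac{(E-2m)^2}{2m} + 2\mathfrak c_\rho,
\end{equation*}
which is exactly the quantity inside the square root of the stated bound. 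Assembling the three steps reproduces the claimed inequality.

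The main obstacle is reconciling the derivation with the \emph{arbitrary} $\mathcal O_0$ phrasing in the statement: naively, the choice $\mathcal O_0 = \Lambda_\eta$ forces $p_\mathrm{opt}=1/2$ and is inconsistent with any strictly positive lower bound. I expect the proof really applies with a canonical reference choice (the identity channel being the cleanest) for which the covariance matrix computation above goes through verbatim. A general treatment would need to exploit the allowed deformation of $V^\rho$ under orthogonal Stinespring dilations---in particular, the off-diagonal structure $V_{xp}^{\mathcal O_0(\rho)} = O_{11}V_{xp}^\rho O_{11}^T$ with $\|O_{11}\|_\infty \le 1$---together with the constraints on the diagonal blocks, and argue that the same Frobenius norm lower bound survives; this is where I would expect the principal technical difficulty to lie.
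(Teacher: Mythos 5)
Your proposal follows essentially the same route as the paper's proof in Appendix~\ref{app:sc_lb_td}: Helstrom bound, then a lower bound on the trace distance in terms of the Frobenius distance between covariance matrices (the paper uses \cite[Theorem 11]{mele2024learning} for Gaussian states, which is exactly the $\tfrac{1}{200}\min(1,\cdot/(E+1))$ form you posit), then a block-wise Frobenius computation. Your explicit calculation for $\mathcal O_0=\mathcal I$ is correct and reproduces the stated bound: the Cauchy--Schwarz chain $\|V_x^\rho-\I_m\|_F^2+\|V_p^\rho-\I_m\|_F^2\ge (E-2m)^2/2m$ together with the $2\mathfrak c_\rho$ off-diagonal contribution and the factor $(1-\eta)^2$ is precisely the quantity under the square root. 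The paper instead proves a symmetric two-state inequality (Lemma~\ref{lem:sc_lb_td}), $\|V^\rho-V^\sigma\|_F^2\ge \tfrac{(E_1-E_2)^2}{2m}+2(\sqrt{\mathfrak c_\rho}-\sqrt{\mathfrak c_\sigma})^2$, via the reverse triangle inequality on the $V_{xp}$ blocks, and then substitutes $E_1=E$, $\mathfrak c_{\mathcal O_0(\rho)}=\mathfrak c_\rho$ for the dilation and $E_2=\eta E+2m(1-\eta)$, $\mathfrak c_{\Lambda_\eta(\rho)}=\eta^2\mathfrak c_\rho$ for the loss channel; this buys a statement that is formally uniform in $\mathcal O_0$ at the price of the assumption you rightly question.

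The ``obstacle'' you flag is a genuine defect of the lemma as stated, not of your argument: the paper's proof asserts $E(\mathcal O_0(\rho))=E$ and $\mathfrak c_{\mathcal O_0(\rho)}=\mathfrak c_\rho$ for an arbitrary orthogonal Stinespring dilation, but these equalities hold only for the unitary part (block-diagonal orthogonal gates and displacements); a general dilation in the sense of Definition~\ref{defi:ortho_Stinespring} includes ancilla attachment and partial trace, and the paper's own examples (pure and thermal loss) strictly change both quantities. Taking $\mathcal O_0=\Lambda_\eta$ gives $p_{\mathrm{opt}}=1/2$ and falsifies the bound whenever the right-hand side exceeds $1/2$. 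So your restriction to a canonical reference such as $\mathcal O_0=\mathcal I$ (or any energy- and coherence-preserving dilation) is not a weakness of your proof but the correct scope of the result. Two minor points: the $1/200$ constant comes from the Gaussian-state bound, so the probe (hence both outputs) must be Gaussian, a hypothesis present in Lemma~\ref{lem:sc_lb_td} but omitted from the statement; and the condition $\Tr[\hat E\,\Lambda_\eta(\rho)]\le E/4$ needed to apply that bound should be verified, as the paper does, from $V^{\Lambda_\eta(\rho)}=\eta V^\rho+(1-\eta)\I_{2m}$.
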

\noindent The proof of Lemma \ref{lem:sc_lb_td_new} is given in \ref{app:sc_lb_td}, and combines the Helstrom bound \cite{Helstrom1969} with a technical result which lower bounds the trace distance between two quantum states $\rho$ and $\sigma$ in terms of the difference between their symplectic coherence:
\begin{lem}[Symplectic coherence lower bounds trace distance between two states]\label{lem:sc_lb_td}
   Let $\rho$ and $\sigma$ be $m$-mode states with zero first moments, covariance matrices $V^\rho, V^\sigma$ with $\Tr[V^\rho] = 4 \Tr[\rho \hat E] = E_1$, $\Tr[V^\sigma] = 4\Tr[\sigma \hat E] = E_2$ for positive constants $E_1$ and $E_2$, and such that $\Tr[\rho \hat E], \Tr[\sigma \hat E] \leq E/4$. If $\rho$ and $\sigma$ are Gaussian states, then we have
    \begin{equation}
       \frac12 ||\rho - \sigma||_1 \geq \!\frac{1}{200} \min \!\!\left(\!1,\frac{\sqrt{\frac{(E_1 - E_2)^2}{2m} + 2(\sqrt{\mathfrak c_\rho} - \sqrt{\mathfrak c_\sigma})^2}}{E+1}\right)\!.
    \end{equation}
    If $\rho$ and $\sigma$ are general (Gaussian or non-Gaussian) states, we have
    \begin{equation}
        \frac12 ||\rho-\sigma||_1 \geq \frac{\frac{(E_1 - E_2)^2}{2m} + 2(\sqrt{\mathfrak c_\rho} - \sqrt{\mathfrak c_\sigma})^2}{3200\tilde E^2m},
    \end{equation}
    where $\tilde E^2$ is such that $\Tr[\hat{E^2}\rho] = \Tr[\hat{E^2}\sigma] \leq \tilde E^2$.
\end{lem}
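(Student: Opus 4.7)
The plan is to reduce everything to a statement about the Frobenius norm of $V^\rho-V^\sigma$, then invoke the energy-constrained continuity bound of \cite[Theorem 62]{annamele2025symplecticrank} to convert that norm into a trace distance. The key observation is that both quantities appearing in the right-hand side---$|E_1-E_2|$ and $|\sqrt{\mathfrak c_\rho}-\sqrt{\mathfrak c_\sigma}|$---are individually controlled by $\|V^\rho-V^\sigma\|_F$, and moreover they live in complementary (diagonal vs.\ off-diagonal) blocks of the covariance matrix, so the two contributions add.

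First I would write
\begin{equation*}
\|V^\rho-V^\sigma\|_F^2 = \|V_x^\rho-V_x^\sigma\|_F^2 + 2\|V_{xp}^\rho-V_{xp}^\sigma\|_F^2 + \|V_p^\rho-V_p^\sigma\|_F^2.
\end{equation*}
For the off-diagonal block I apply the reverse triangle inequality for the Frobenius norm,
\begin{equation*}
\|V_{xp}^\rho-V_{xp}^\sigma\|_F \geq \bigl|\|V_{xp}^\rho\|_F - \|V_{xp}^\sigma\|_F\bigr| = \bigl|\sqrt{\mathfrak c_\rho}-\sqrt{\mathfrak c_\sigma}\bigr|.
\end{equation*}
For the two diagonal blocks I use the Cauchy--Schwarz-type inequality $|\Tr A|^2\leq m\,\|A\|_F^2$ for $m\times m$ matrices, together with $E_1-E_2=\Tr[V_x^\rho-V_x^\sigma]+\Tr[V_p^\rho-V_p^\sigma]$ and $(a+b)^2\leq 2a^2+2b^2$, to obtain $\|V_x^\rho-V_x^\sigma\|_F^2+\|V_p^\rho-V_p^\sigma\|_F^2\geq (E_1-E_2)^2/(2m)$. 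Summing the two contributions yields the core inequality
\begin{equation*}
\|V^\rho-V^\sigma\|_F^2 \;\geq\; \frac{(E_1-E_2)^2}{2m} + 2\bigl(\sqrt{\mathfrak c_\rho}-\sqrt{\mathfrak c_\sigma}\bigr)^2,
\end{equation*}
which is exactly the numerator (squared) appearing in the two bounds to be proved.

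The second step is to invoke the energy-constrained continuity bound of \cite[Theorem 62]{annamele2025symplecticrank}, which upper bounds $\|V^\rho-V^\sigma\|$ in a suitable matrix norm by a function of $\|\rho-\sigma\|_1$ and the energy. For Gaussian states the dependence on trace distance is linear with an $(E+1)$ prefactor, while for arbitrary (possibly non-Gaussian) states it is of square-root type with an $\tilde E\sqrt{m}$ prefactor---this is precisely the origin of the difference in scaling between the two parts of the lemma. Passing from the norm in which that theorem is stated to the Frobenius norm is handled by the standard matrix norm inequalities \eqref{eq:FM_TM} and \eqref{eq:FM_OM} from Section~\ref{subsec:prelims_subsec_matrix_norms}, picking up at most a $\sqrt{2m}$ factor that can be absorbed into the numerical constants. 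Combining with the lower bound above and rearranging then yields the two stated inequalities, with the $\min(1,\cdot)$ in the Gaussian case coming from the fact that the continuity statement of \cite{annamele2025symplecticrank} saturates once $\|\rho-\sigma\|_1$ becomes $O(1)$, beyond which the trivial bound $\tfrac12\|\rho-\sigma\|_1\leq 1$ takes over.

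The main obstacle will be keeping careful track of the numerical constants and the matrix-norm conversions so that the final prefactors $1/200$ and $1/(3200\tilde E^2 m)$ come out correctly: the Gaussian continuity bound must be applied in a form compatible with the Frobenius norm, and the extra quadratic suppression in the non-Gaussian case must be traced back cleanly through the $\sqrt{\|\rho-\sigma\|_1}$-scaling of that theorem. Beyond bookkeeping, the remaining subtlety is that all steps above require the first moments to vanish (so that $\Tr V^\rho=4\Tr[\rho\hat E]=E_i$), and that the energy constraints $\Tr[\rho\hat E],\Tr[\sigma\hat E]\leq E/4$ (Gaussian case) and $\Tr[\hat E^2\rho],\Tr[\hat E^2\sigma]\leq\tilde E^2$ (general case) are exactly what is needed to legitimately invoke the corresponding version of \cite[Theorem 62]{annamele2025symplecticrank}.
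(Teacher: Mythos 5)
Your proposal is correct and follows essentially the same route as the paper: the same block decomposition of $\|V^\rho-V^\sigma\|_F^2$, the reverse triangle inequality on the off-diagonal block, the trace-norm/Frobenius-norm chain on the diagonal blocks to extract $(E_1-E_2)^2/(2m)$, and then an energy-constrained covariance-to-trace-distance continuity bound. The only discrepancy is bibliographic: the Gaussian-case bound $\tfrac12\|\rho-\sigma\|_1\geq\tfrac1{200}\min\left(1,\|V^\rho-V^\sigma\|_F/(E+1)\right)$ is taken from \cite[Theorem 11]{mele2024learning} (already stated in the Frobenius norm, so no norm conversion or absorption of constants is needed there), while \cite[Theorem 62]{annamele2025symplecticrank} supplies only the general-state, operator-norm bound that requires the $\|A\|_\infty\geq\|A\|_F/\sqrt{m}$ conversion.
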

The proof of Lemma \ref{lem:sc_lb_td} is given in appendix \ref{app:sc_lb_td}, and combines results from \cite[Theorem 11]{mele2024learning} and \cite[Theorem 62]{annamele2025symplecticrank} with a bound on the Hilbert--Schmidt distance between two covariance matrices in terms of their trace and their symplectic coherence. 

This result shows that higher symplectic coherence guarantees higher single-shot discrimination success between a photon loss channel (parameterized by $\eta$) and any block-diagonal orthogonal Stinespring dilation.

\subsubsection{Multi-shot discrimination} 

In this section, we consider a different scenario in which we want to discriminate between two block-diagonal orthogonal Stinespring dilations $\mathcal O_1$ and $\mathcal O_2$, using measurement of position-momentum correlations in the first mode, given by the observable $\hat M = \{\hat q_1, \hat p_1\}/2$, and using $N$ copies of an input probe state $\rho$. The discrimination protocol that we use is detailed in appendix \ref{app:eff_ortho_discrimination} and gives the following result:

\begin{lem}[Symplectic coherence underlines the efficiency of multi-shot discrimination protocols]\label{lem:eff_ortho_discrimination}
    Given a Gaussian input probe state $\rho$ with $\Tr[V^\rho] = E$ and symplectic coherence $\mathfrak c_\rho$, two orthogonal Stinespring dilation channels $\mathcal O_1$ and $\mathcal O_2$ satisfying 
$\Tr[\mathcal O_i(\rho) \{\hat q_1, \hat p_1\}/2] = \mu_i$ for $i=1,2$ can be discriminated with success probability $1 - \delta$ by measuring $\hat M = \{\hat q_1, \hat p_1\}/2$ when the number of samples satisfies $N \geq N_{\mathrm{thres}}$, where
\begin{equation}
    N_{\mathrm{thres}} \leq 
    272 \log\!\left(\frac2\delta\right)\frac{
\max (\mu_1^2,\mu_2^2)+f(m,E)}{(\mu_2 - \mu_1)^2},
\end{equation}
where $f(m,E)\coloneqq1 + (E/2-(m-1))^2$. Moreover, since for given energy $E$ and symplectic coherence $\mathfrak c_\rho$, $|\mu_1|,\, |\mu_2| \leq \sqrt{\mathfrak c_\rho}$, the optimal upper bound (its lowest possible value) is given by
\begin{equation}
68 \log\!\left(\frac2\delta\right) 
    \left(1+\frac{f(m,E)}{\mathfrak c_\rho}\right).
\end{equation}
\end{lem}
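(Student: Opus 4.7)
The plan is to reduce the channel-discrimination task to the classical problem of estimating the mean of the observable $\hat M=\{\hat q_1,\hat p_1\}/2$ from iid measurements on $N$ copies of the output state, where the true mean is either $\mu_1$ or $\mu_2$. Concretely, I would partition the $N$ outcomes into $k$ groups, compute the sample mean within each group, and take the median as the estimator $\hat\mu$; the hypothesis is then chosen by comparing $\hat\mu$ with the midpoint $(\mu_1+\mu_2)/2$, and the correct hypothesis is output whenever $|\hat\mu-\mu_i|<|\mu_1-\mu_2|/2$. The analysis then splits into a variance computation on the measurement side and a median-of-means concentration bound on the statistics side.

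The key technical step is to bound the variance $\sigma_i^2:=\mathrm{Var}_{\mathcal O_i(\rho)}[\hat M]$. Since orthogonal Stinespring dilations preserve Gaussianity and (for zero-mean input) the zero-mean property, each $\mathcal O_i(\rho)$ is a zero-mean Gaussian state. I would then apply Wick's theorem to the four-point correlators appearing in $\hat M^2=\tfrac14(\hat q_1\hat p_1+\hat p_1\hat q_1)^2$, using the canonical commutator $[\hat q_1,\hat p_1]=2i$ to write $\langle\hat q_1\hat p_1\rangle=V_{xp,11}^{(i)}+i$ and $\langle\hat p_1\hat q_1\rangle=V_{xp,11}^{(i)}-i$; summing over the three Wick pairings of each of the four operator orderings yields
\begin{equation}
    \mathrm{Var}_{\mathcal O_i(\rho)}[\hat M]=\mu_i^2+V_{x,11}^{(i)}V_{p,11}^{(i)}+1.
\end{equation}
To convert this into a bound in terms of $E$ and $m$, I would combine the single-mode uncertainty relation (which gives $V_{x,jj}^{(i)}+V_{p,jj}^{(i)}\ge 2$ for every mode $j\ge 2$) with $\Tr[V^{\mathcal O_i(\rho)}]\le E$ to obtain $V_{x,11}^{(i)}+V_{p,11}^{(i)}\le E-2(m-1)$, and then use AM--GM to conclude $V_{x,11}^{(i)}V_{p,11}^{(i)}\le(E/2-(m-1))^2$, so that $\sigma_i^2\le\mu_i^2+f(m,E)$.

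For the concentration step, I would plug $\sigma^2\le\max(\mu_1^2,\mu_2^2)+f(m,E)$ into the standard median-of-means tail bound: Chebyshev's inequality ensures that each group mean lies within $\epsilon$ of $\mu_i$ with probability at least $3/4$ when the group size is $\Omega(\sigma^2/\epsilon^2)$, and Hoeffding's inequality applied to the indicator of group failure then gives that the median lies within $\epsilon$ of $\mu_i$ with probability at least $1-e^{-k/8}$. Setting $\epsilon=|\mu_1-\mu_2|/2$, applying a union bound over the two hypotheses (replacing $\delta$ by $\delta/2$) and tracking the explicit constants yields the claimed bound with prefactor $272$. The second (``optimal'') bound then follows by invoking the monotonicity of symplectic coherence under orthogonal Stinespring dilations (\cref{theo:monotonicity_sc}), which gives $|\mu_i|=|V_{xp,11}^{\mathcal O_i(\rho)}|\le\|V_{xp}^{\mathcal O_i(\rho)}\|_F=\sqrt{\mathfrak c_{\mathcal O_i(\rho)}}\le\sqrt{\mathfrak c_\rho}$; the achievable separation is then bounded by $(\mu_2-\mu_1)^2\le 4\mathfrak c_\rho$ with $\max(\mu_1^2,\mu_2^2)\le\mathfrak c_\rho$, and substituting into the general bound yields $68\log(2/\delta)\,(1+f(m,E)/\mathfrak c_\rho)$.

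The main obstacle will be the careful bookkeeping of the Wick-theorem computation together with the justification of $\Tr[V^{\mathcal O_i(\rho)}]\le E$: the latter is not automatic from \cref{defi:ortho_Stinespring}, which allows any $\sigma\in\mathcal C$ as environment, so this step either requires an implicit energy-preservation convention on the admissible dilations or a reformulation of $f(m,E)$ directly in terms of the output covariance matrix trace.
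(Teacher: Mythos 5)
Your proposal is correct and follows essentially the same route as the paper: the same median-of-means protocol thresholded at the midpoint $(\mu_1+\mu_2)/2$, the same Wick/Isserlis variance computation giving $\mathrm{Var}_{\mathcal O_i(\rho)}[\hat M]=1+V_{x,11}^{(i)}V_{p,11}^{(i)}+\mu_i^2$, the same energy-budget argument confining $V_{x,11}^{(i)}+V_{p,11}^{(i)}\le E-2(m-1)$, and the same extremal configuration $\mu_2=-\mu_1=\sqrt{\mathfrak c_\rho}$ for the second bound. Three small remarks. First, the paper averages the two conditional error probabilities over equally likely hypotheses, so each only needs to be $\le\delta$ and no $\delta\to\delta/2$ replacement is required; your union bound would produce $\log(4/\delta)$ rather than the claimed $\log(2/\delta)$. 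Second, your AM--GM bound $V_{x,11}V_{p,11}\le(E/2-(m-1))^2$ yields $\sigma_i^2\le\mu_i^2+f(m,E)$ with coefficient $1$ on $\mu_i^2$, which matches the statement directly, whereas the paper's appendix bounds the symplectic eigenvalue instead and carries a coefficient $2$; your route is the cleaner one here. Third, for the ``optimal'' bound the statement asks for the lowest possible value of the upper bound, so one must actually verify that $\mu_1=-d/2$ minimizes $\max(\mu_1^2,(\mu_1+d)^2)/d^2$ before maximizing $d$ (the paper does this explicitly); your version asserts the right configuration without the minimization argument. Your closing concern about $\Tr[V^{\mathcal O_i(\rho)}]\le E$ is well taken---the paper's proof makes the same implicit assumption when bounding the reduced first-mode trace, and it indeed fails for energy-increasing dilations such as thermal loss with a hot environment.
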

\noindent The proof of Lemma \ref{lem:eff_ortho_discrimination} is given in the \cref{app:eff_ortho_discrimination}, and involves combining median of means estimator \cite{JERRUM1986} with the Wick's theorem \cite{Wick1950} for zero mean Gaussians. Intuitively, the result follows from the fact that a larger difference in expectation values makes the measurement outcomes more distinguishable, thus simplifying channel discrimination. However, it does not imply that simply increasing the probe state's symplectic coherence always reduces the required number of samples—the optimal probe depends on the specific channels. Lemma~\ref{lem:eff_ortho_discrimination} asserts that, for a given probe state $\rho$, its symplectic coherence bounds the number of samples of $\rho$ sufficient to distinguish between the two channels.

To give a concrete example of a scenario where the symplectic coherence of the input state is a resource, we further consider a channel discrimination scenario where the two orthogonal Stinespring dilation channels are multi-mode photon loss channels with loss parameters $\eta_1$ and $\eta_2$ respectively. Then we have the following result:
\begin{lem}[Symplectic coherence for loss channels discrimination]\label{lem:eff_loss_discrimination}
Given a Gaussian input probe state $\rho$ with $\Tr[V^\rho] = E$,  $\Tr[\rho \{\hat q_1, \hat p_1\}/2] = \mu$, and with $\nu$ the symplectic eigenvalue and $E_1$ the trace of the reduced covariance matrix of its first mode, two photon loss channels with transmissivities $\eta_1$ and $\eta_2$ can be distinguised with success probability $1-\delta$ by measuring $\hat M = \{\hat q_1, \hat p_1\}/2$, provided the number of samples of $\rho$ is greater than
\begin{equation}
    N_\mathrm{thres} \coloneqq272 \log\!\left(\frac2\delta\right)\frac{ 
    \max\big( g(\mu,\eta_1,E_1), g(\mu,\eta_2,E_2) \big)}{(\eta_2 - \eta_1)^2},
\end{equation}
where $g(\mu,\eta,E)$ is a function in $\mu,\eta$ and $E$ whose exact expression is given in appendix \ref{app:eff_loss_discrimination}. For general $\eta_1$ and $\eta_2$, this is lower bounded by
\begin{equation}
    N_\mathrm{thres} \ge 
    272 \log\!\left(\frac2\delta\right)\frac{ 
    \max\big( \tilde g(m,E,\eta_1),\, \tilde g(m,E,\eta_2) \big)}{(\eta_2 - \eta_1)^2},
\end{equation}
with
\begin{equation}
    \tilde g(m,E,\eta) \coloneqq 
    \frac{
        \eta^2 + (1-\eta)^2}{
        \mathfrak c_{\max}(m,E)} 
    + 2\eta^2,
\end{equation}
where $\mathfrak c_{\max}(m,E)$ is given by Eq.~\ref{eq:max_sc}. 
The lower bound is saturated by a state of maximal symplectic coherence
\begin{equation}
    \ket{MSC} = 
    \hat R(\pi/4) \ket{r} \otimes \ket{0}^{\otimes (m-1)},
\end{equation}
where $\ket{r}$ is a momentum-squeezed vacuum state ($r>0$), with $r$ satisfying Eq.~\ref{eq:focus_squeeze}.
\end{lem}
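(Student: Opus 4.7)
The plan is to specialize Lemma~\ref{lem:eff_ortho_discrimination} to the pair of photon-loss channels $(\Lambda_{\eta_1},\Lambda_{\eta_2})$. The first observation is that a pure photon-loss channel $\Lambda_{\eta}$ is an orthogonal Stinespring dilation in the sense of Definition~\ref{defi:ortho_Stinespring}: it is implemented by coupling the system to a vacuum ancilla through a beam splitter, which is a passive linear unitary whose symplectic matrix is block-diagonal (the position and momentum quadratures of the two modes transform identically, so $Y=0$ in the notation of Section~\ref{subsec:prelims_CVQI}), and the vacuum belongs to $\mathcal C$. Lemma~\ref{lem:eff_ortho_discrimination} therefore applies verbatim, and it remains to compute its inputs explicitly.

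Using the loss transformation $V^{\Lambda_{\eta}(\rho)}=\eta V^{\rho}+(1-\eta)\I_{2m}$ derived in Appendix~\ref{app:sc_under_loss}, the identity contributes nothing to the $V_{xp}$ block, so $\mu_i=\Tr[\Lambda_{\eta_i}(\rho)\hat M]=\eta_i\mu$ and $(\mu_2-\mu_1)^2=(\eta_2-\eta_1)^2\mu^2$. For the second-moment ingredient needed by the median-of-means step behind Lemma~\ref{lem:eff_ortho_discrimination}, I use the fact that $\Lambda_{\eta_i}(\rho)$ remains Gaussian and that $\hat M=\{\hat q_1,\hat p_1\}/2$ only probes mode~$1$; a Wick expansion gives $\langle\hat M^2\rangle=2V_{xp,11}^2+V_{x,11}V_{p,11}+1$ on the output, which after substituting the loss transformation becomes an explicit polynomial in $\eta_i$, $\mu$, the symplectic eigenvalue $\nu$ of the first-mode reduced covariance of $\rho$, and its trace $E_1$. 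Dividing by $(\mu_2-\mu_1)^2=(\eta_2-\eta_1)^2\mu^2$ produces the first inequality with the function $g(\mu,\eta_i,E_i)$, and taking the maximum over the two channels bounds $N_{\mathrm{thres}}$.

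To derive the state-independent lower bound $\tilde g$, I would minimize $g$ over all probe states with $\Tr[V^\rho]=E$, using three ingredients: $\mu^2\le\mathfrak c_{\rho}\le\mathfrak c_{\max}(m,E)$ from Theorem~\ref{theo:max_sc}, responsible for the $1/\mathfrak c_{\max}(m,E)$ factor in $\tilde g$; the uncertainty constraint $\nu\ge 1$; and $E_1\le E-2(m-1)$, since each of the remaining modes has covariance trace at least $2$. Algebraic manipulation of the resulting expression, combined with these constraints, yields $\tilde g(m,E,\eta)=(\eta^2+(1-\eta)^2)/\mathfrak c_{\max}(m,E)+2\eta^2$. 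Saturation by $\hat R(\pi/4)\ket{r}\otimes\ket{0}^{\otimes(m-1)}$ with $r$ fixed by Eq.~\ref{eq:focus_squeeze} follows by direct computation: its first-mode reduced covariance has $V_x=V_p=\cosh(2r)$ and $V_{xp}=-\sinh(2r)$, giving $\nu=1$, $E_1=2\cosh(2r)=E-2(m-1)$, and $\mu^2=\sinh^2(2r)=\mathfrak c_{\max}(m,E)$, so the three bounds simultaneously become equalities and $g$ collapses to $\tilde g$. The main obstacle I anticipate is the bookkeeping of the Wick expansion and the subsequent optimization---tracking the cross terms in $\langle\hat M^2\rangle$ after propagation through the loss channel and showing that the resulting algebraic combination is jointly minimized at fixed $E$ precisely by the energy-concentration structure identified in Theorem~\ref{theo:max_sc}.
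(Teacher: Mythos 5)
Your proposal is correct and follows essentially the same route as the paper's proof in Appendix~\ref{app:eff_loss_discrimination}: specialize the median-of-means discrimination protocol of Lemma~\ref{lem:eff_ortho_discrimination} to the two loss channels, use $V^{\Lambda_\eta(\rho)}=\eta V^\rho+(1-\eta)\I$ to get $\mu_i=\eta_i\mu$ and the Wick-computed variance $1+(\nu_1^{(i)})^2+2\eta_i^2\mu^2$, then minimize using $\nu\ge1$, the bound on $\mu^2$, and $E_1\le E-2(m-1)$, with saturation checked on $\hat R(\pi/4)\ket{r}\otimes\ket{0}^{\otimes(m-1)}$. The only cosmetic difference is that you bound $\mu^2\le\mathfrak c_\rho\le\mathfrak c_{\max}(m,E)$ via Theorem~\ref{theo:max_sc}, whereas the paper uses the local bound $\mu^2\le E_1^2/4-1$ on the first-mode reduced covariance and then maximizes $E_1$; the two coincide at the optimum.
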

\noindent The proof of Lemma \ref{lem:eff_loss_discrimination} is given in appendix \ref{app:eff_loss_discrimination} and is similar to the proof of Lemma \ref{lem:eff_ortho_discrimination} but the knowledge that the channels we are trying to discriminate are pure photon loss channels allows us to find the structure of the state which gives a tighter number of sufficient samples, which is a state of maximal symplectic coherence $\ket{MSC}$.

As a final illustration of the use of position-momentum correlations for assessing channel discrimination tasks, we consider a single-mode Gaussian channel discrimination scenario in which the output measurement is the rotated quadrature measurement $\hat q_\theta = \hat q \cos \theta - \hat p \sin \theta$, and the two Gaussian channels are $\textsf{PP-MM}$-equivalent, where $\textsf{PP-MM}$ equivalence is defined as follows:
\begin{defi}[$\textsf{PP-MM}$ equivalence of Gaussian channels]\label{defi:PP-MM_equivalence}
    Two $m$-mode Gaussian channels $\mathcal{G}_1$ and $\mathcal{G}_2$ are said to be $\textsf{PP-MM}$ equivalent if given an arbitrary $m$-mode quantum state $\rho$ with covariance matrix $V^\rho$, they change the diagonal block matrices of $V^\rho$ the same (the off-block diagonal matrices may change differently).
\end{defi} 
\noindent In other words, \textsf{PP-MM} equivalent channels affect the position-position and momentum-momentum correlations in the covariance matrix of a quantum state in the same way, but may change the position-momentum correlations differently. Let us illustrate this definition with an example. The action of any Gaussian channel on a covariance matrix $V$ is parameterized as \cite[Eq.~8]{eisert2005}
\begin{equation}
    \mathcal G(\sigma) = X^T \sigma X + Y,
\end{equation}
where $X$ and $Y$ are real $2m \times 2m$ matrices which satisfy the condition
\begin{equation}
    Y + i \Omega - i X^T \Omega X \succeq 0
\end{equation}
Therefore, defining Gaussian channels $\mathcal G_1$ and $\mathcal G_2$  such that $X = \I$ (as it is the case of classical noise \cite[Section 2.3]{eisert2005}) and $Y_1$ and $Y_2$ are positive semi-definite such that 
\begin{equation}
    Y_2 = Y_1 + \begin{bmatrix}
        0 && Y_\mathrm{rem} \\
        Y_\mathrm{rem}^T && 0
    \end{bmatrix},
\end{equation}
then $\mathcal G_1$ and $\mathcal G_2$ are $\textsf{PP-MM}$ equivalent. 

For such channels, we obtain the following result:
\begin{lem}[Position-momentum correlations determines distinguishability for $\textsf{PP-MM}$ equivalent channels]\label{lem:rotated_quadrature_sc}
     Given an input single-mode probe state $\rho$ with covariance matrix $V^\rho$ and two $\textsf{PP-MM}$ equivalent Gaussian channels $\mathcal G_1$ and $\mathcal G_2$ (Definition \ref{defi:PP-MM_equivalence}), the total variation distance between probability distributions generated by rotated quadrature measurement $\hat q_\theta$
    \begin{equation}
        \hat q_\theta = \hat q \cos \theta + \hat p \sin \theta
    \end{equation}
    is upper bounded by
    \begin{equation}
        \min(1,h_\theta(\mathcal G_1(\rho)),h_\theta(\mathcal G_2(\rho))), \nonumber\\
    \end{equation}
    where
    \begin{equation}
        h_{\theta}(\mathcal G(\rho)) \coloneqq \frac{|\sin(2\theta)||\sigma_{xp}^{\mathcal{G}_1(\rho)} - \sigma_{xp}^{\mathcal{G}_2(\rho)}|}{|\sigma_x \cos^2 \theta + \sigma_p \sin^2 \theta + \sigma_{xp}^{\mathcal G(\rho)} \sin 2\theta|},
    \end{equation}
    and $\sigma_{xp}^{\mathcal{G}_1(\rho)}$ is off-diagonal element of the covariance matrix of $\mathcal G_1(\rho)$, and similarly for $\sigma_{xp}^{\mathcal{G}_2(\rho)}$.
\end{lem}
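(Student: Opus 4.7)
The plan is to reduce the claim to bounding the total variation distance between two centered Gaussian marginal distributions, using a Bhattacharyya-coefficient (classical fidelity) inequality. Since $\mathcal{G}_1$ and $\mathcal{G}_2$ are Gaussian channels and, following the convention of this section, the probe $\rho$ is a zero-first-moment Gaussian state, the outputs $\mathcal{G}_i(\rho)$ are also Gaussian, with covariance matrices sharing the diagonal entries $\sigma_x,\sigma_p$ by the $\textsf{PP-MM}$ equivalence and differing only in $\sigma_{xp}^{\mathcal{G}_i(\rho)}$.

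First I would express the marginal of $\hat q_\theta = \hat q\cos\theta + \hat p\sin\theta$ on $\mathcal{G}_i(\rho)$ as a centered Gaussian $\mathcal{N}(0, v_i^\theta)$ with variance
\begin{equation*}
v_i^\theta = \sigma_x \cos^2\theta + \sigma_p\sin^2\theta + \sigma_{xp}^{\mathcal{G}_i(\rho)}\sin(2\theta),
\end{equation*}
which is exactly the denominator of $h_\theta(\mathcal{G}_i(\rho))$, and compute the difference of the two variances as $|v_1^\theta - v_2^\theta| = |\sin(2\theta)|\cdot|\sigma_{xp}^{\mathcal{G}_1(\rho)} - \sigma_{xp}^{\mathcal{G}_2(\rho)}|$, which is the numerator of $h_\theta$.

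The core step is then to establish the elementary inequality
\begin{equation*}
\mathrm{TV}\bigl(\mathcal{N}(0, v_1), \mathcal{N}(0, v_2)\bigr) \le \min\!\left(1,\ \frac{|v_1 - v_2|}{v_1},\ \frac{|v_1 - v_2|}{v_2}\right).
\end{equation*}
I would derive this from the Bhattacharyya coefficient $F = \sqrt{2\sqrt{v_1 v_2}/(v_1+v_2)}$ together with the standard bound $\mathrm{TV} \le \sqrt{1 - F^2}$. Setting $s = \sqrt{v_{\min}/v_{\max}} \in (0, 1]$, a direct computation reduces $\sqrt{1 - F^2}$ to $(1 - s)/\sqrt{1 + s^2}$; then the trivial inequality $1/\sqrt{1+s^2} \le 1 + s$ yields $\sqrt{1-F^2} \le 1 - s^2 = |v_1-v_2|/v_{\max}$, and the companion inequality $s^2 \le (1+s)\sqrt{1+s^2}$ yields the $/v_{\min}$ bound; combining them with the trivial $\mathrm{TV}\le 1$ produces the $\min$ expression. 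Substituting the explicit expressions for $v_1^\theta$ and $v_2^\theta$ recovers the statement.

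The main obstacle lies in obtaining exactly the clean prefactors that appear in $h_\theta$, without spurious multiplicative constants. Weaker tools such as Pinsker's inequality produce bounds of the form $C|v_1-v_2|/v_i$ with $C > 1$, and a derivative-in-variance approach (writing $p_1 - p_2 = \int_{v_1}^{v_2} \partial_v p_v\, dv$ and bounding inside the integral) introduces a logarithmic factor $\log(v_{\max}/v_{\min})$; only the Bhattacharyya route above saturates the constant at~$1$ and delivers the stated form of $h_\theta$.
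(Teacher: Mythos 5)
Your proposal is correct, and it follows the paper's proof exactly up to the key analytic step: like the paper, you reduce the problem to the total variation distance between two centered Gaussians whose variances are $v_i^\theta=\sigma_x\cos^2\theta+\sigma_p\sin^2\theta+\sigma_{xp}^{\mathcal G_i(\rho)}\sin 2\theta$, with $|v_1^\theta-v_2^\theta|=|\sin 2\theta|\,|\sigma_{xp}^{\mathcal G_1(\rho)}-\sigma_{xp}^{\mathcal G_2(\rho)}|$. Where you diverge is in how that TV distance is bounded. The paper imports an external result (\cite[Theorem~1.1]{devroye2023}), which gives $\mathrm{TV}\le \tfrac32\min\bigl(1,|v_1/v_2-1|,|v_2/v_1-1|\bigr)$, and then silently drops the factor $3/2$ in the final line — so as written the paper's own proof only establishes the lemma up to that constant. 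Your Bhattacharyya/Hellinger route is self-contained and delivers the constant $1$ exactly: the computations $F^2=2s/(1+s^2)$ with $s=\sqrt{v_{\min}/v_{\max}}$, $\sqrt{1-F^2}=(1-s)/\sqrt{1+s^2}$, and the two elementary inequalities $1/\sqrt{1+s^2}\le 1+s$ and $s^2\le(1+s)\sqrt{1+s^2}$ all check out, yielding $\mathrm{TV}\le\min\bigl(1,|v_1-v_2|/v_1,|v_1-v_2|/v_2\bigr)=\min\bigl(1,h_\theta(\mathcal G_1(\rho)),h_\theta(\mathcal G_2(\rho))\bigr)$ directly. In short, your argument buys a sharper constant that actually matches the lemma as stated, at the cost of a short explicit Gaussian integral; the paper's buys brevity by citation but leaves a $3/2$ unaccounted for. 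The only point worth making explicit in a write-up is the standing assumption (stated earlier in that section of the paper) that the probe is a zero-mean Gaussian state, so that the outputs and hence the $\hat q_\theta$ marginals are genuinely centered normal distributions.
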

\noindent The proof of Lemma \ref{lem:rotated_quadrature_sc} is given in appendix \ref{app:rotated_quadrature_sc}, and consists in two steps. Firstly, we parameterize the probability distribution generated by the measurement $\hat q_\theta$ on $\rho$. Combining this with the upper bound on total variational distance between two zero-mean Gaussians \cite[Theorem 1.1]{devroye2023}, we get the desired result.

Theorem \ref{lem:rotated_quadrature_sc} implies that the the distinguishability of two $\textsf{PP-MM}$ equivalent channels is completely determined by the difference in their position-momentum correlations (for $\sigma_{xp}^{\mathcal{G}_1(\rho)} = \sigma_{xp}^{\mathcal{G}_2(\rho)} $, the total variation distance is zero), and since $|\sigma_{xp}^{\mathcal{G}_1(\rho)}| = \sqrt{\mathfrak c_{\mathcal{G}_1(\rho)}}$ and similarly for $\mathcal G_2 (\rho)$, the position-momentum correlations of the two states are captured by their symplectic coherence.


\section{Conclusion}\label{sec:conclusion}

In this work, we have provided a rigorous framework for assessing the amount of position-momentum correlations in quantum states. To do this, we have introduced a geometric measure of position-momentum correlations, which we call \textit{symplectic coherence}, that can be defined informally as a quantifier of how much ``quantumness'' of the state is encoded in its position-momentum correlations, in the sense of geometric quantum discord.

As a measure of position-momentum correlations, symplectic coherence is additive under tensor product, faithful, monotone under a large class of quantum operations, and robust under small perturbations. The search for maximal symplectic coherence reveals an important fact: to obtain states with the highest position-momentum correlations (as measured by symplectic coherence) at a fixed energy, one must start with all the energy concentrated in a single mode. Only then can the appropriate passive linear unitary transformations produce the states with maximal position-momentum correlations. To our knowledge, no prior results indicate that such a state should have this particular structure. Finally, we demonstrate the role of symplectic coherence in various quantum thermodynamic phenomena and quantum information tasks, highlighting the practical significance of our framework in addition to its conceptual relevance.

In deriving the different results of the paper, we also obtained several technical results about covariance matrices of quantum states and matrix norms, which may be of independent interest. In summary, this work establishes symplectic coherence as a rigorous and operationally meaningful quantifier of position-momentum correlations, highlighting their fundamental importance in quantum physics and establishing the framework for their use as a quantum resource in future applications.

Our work paves the way for several future research directions. An immediate next step is to investigate additional quantum information tasks where symplectic coherence serves as a resource quantifier. Another promising avenue is to explore whether mapping covariance matrices onto virtual quantum-state density matrices can yield further physical insights, just as it demonstrated the link between symplectic coherence and quantum discord. Additionally, developing a physical understanding of why concentrating all energy into a single mode maximizes position–momentum correlations of quantum states is an intriguing problem. Finally, extending the current framework to include higher‑order position–momentum correlations beyond covariance matrices--essential for non‑Gaussian states--is a compelling research direction.
\section{Acknowledgements}
We acknowledge inspiring discussions with H.\ Ollivier.
We acknowledge funding from the European Union’s Horizon Europe Framework Programme (EIC Pathfinder Challenge project Veriqub) under Grant Agreement No.~101114899.
\bibliography{apssamp.bib}
\onecolumngrid
\newpage
\appendix
\tableofcontents

\newpage

\section{Quantum discord for computational basis measurements}\label{app:rdiscord}
In this paper, we consider quantum discord for bipartite quantum states $\varrho_{AB}$, where:
\begin{itemize}
    \item the subsystem $A$ is a qubit;
    \item instead of minimizing over all possible measurements on $A$, we restrict the measurement of the qubit register to be in the computational basis.
\end{itemize}
The expression for quantum discord in this setting is given by (see \cref{sec:discord})
\begin{equation}
    D(\varrho_{AB})_{\{\ket 0 \bra 0, \ket 1,\bra 1\}} = \mathcal I(B:A) - \mathcal J(B:A)_{\{\ket 0 \bra 0, \ket 1,\bra 1\}}.
\end{equation}
For this definition of quantum discord, the set of classical-quantum states (states with zero quantum discord \cite{Piani2008}), denoted by $\mathcal{CQ}$, is given by \cite[Proposition 3]{Ollivier2001}
\begin{equation}
    \varsigma \in \mathcal{CQ} \iff \varsigma = \ket 0 \bra 0 \otimes \varrho_0 + \ket 1 \bra{1} \otimes \varrho_1,
\end{equation}
where $\varrho_0$ and $\varrho_1$ are positive semi-definite matrices in subsystem $B$ (not necessarily trace one).

For this set of classical-quantum states, we define the geometric quantum discord using the Hilbert--Schmidt distance:
\begin{equation}\label{eq:gqd_restricted_supp}
    D_G^{HS} (\varrho_{AB}) = \min_{\varsigma \in \mathcal {CQ}} \Tr[(\rho_{AB} - \varsigma)^2].
\end{equation}
Then, we obtain the following result:
\begin{lem}\label{applem:closestCQstate}
    Given a bipartite state with a qubit subsystem $A$, and a density matrix given by
    \begin{equation}
        \varrho_{AB} = \begin{bmatrix}
            \varrho_0 && \varrho_{01} \\
            \varrho_{01}^\dagger && \varrho_{1}
        \end{bmatrix},
    \end{equation}
    where $\varrho_{0}, \varrho_{01},\varrho_{01}^\dagger,\varrho_1$ are matrices associated to $\ket 0 \bra 0 \otimes \ket{\bm i}\bra{\bm j}, \ket 0 \bra 1 \otimes \ket{\bm i}\bra{\bm j}, \ket 1 \bra 0 \otimes \ket{\bm i}\bra{\bm j}$ and $\ket 1 \bra 1 \otimes \ket{\bm i}\bra{\bm j}$ respectively, where $\ket{\bm i}$ represents the orthonormal basis of subsystem $B$, its geometric quantum discord, defined in Eq.~\ref{eq:gqd_restricted_supp}, is given by
    \begin{equation}\label{appeq:gqd_exp}
        D_G^{HS} (\varrho_{AB}) = 2||\varrho_{01}||_F^2.
    \end{equation}
\end{lem}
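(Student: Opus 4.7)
The plan is to compute the Hilbert--Schmidt distance between $\varrho_{AB}$ and a generic classical--quantum state directly in block form, and then minimize. Recall that $\varsigma \in \mathcal{CQ}$ if and only if it is block--diagonal in the qubit computational basis, so it can be written as
\begin{equation}
    \varsigma = \begin{bmatrix} \varsigma_0 & 0 \\ 0 & \varsigma_1 \end{bmatrix},
\end{equation}
with $\varsigma_0,\varsigma_1 \succeq 0$ (and $\Tr[\varsigma_0] + \Tr[\varsigma_1] = 1$ for $\varsigma$ to be a state). This is the key structural input that reduces the minimization to one over the two positive semi-definite diagonal blocks.

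First, I would form the difference
\begin{equation}
    \varrho_{AB} - \varsigma = \begin{bmatrix} \varrho_0 - \varsigma_0 & \varrho_{01} \\ \varrho_{01}^\dagger & \varrho_1 - \varsigma_1 \end{bmatrix},
\end{equation}
and square it using standard block multiplication. Taking the trace, the off--diagonal blocks contribute $\Tr[\varrho_{01}\varrho_{01}^\dagger] + \Tr[\varrho_{01}^\dagger \varrho_{01}] = 2\|\varrho_{01}\|_F^2$, which is a constant that does not depend on $\varsigma$. The diagonal blocks contribute $\Tr[(\varrho_0 - \varsigma_0)^2] + \Tr[(\varrho_1 - \varsigma_1)^2]$, both of which are nonnegative. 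Thus
\begin{equation}
    \Tr[(\varrho_{AB} - \varsigma)^2] \geq 2 \|\varrho_{01}\|_F^2,
\end{equation}
with equality if and only if $\varsigma_0 = \varrho_0$ and $\varsigma_1 = \varrho_1$.

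The only remaining step is to verify that this optimal choice actually yields an element of $\mathcal{CQ}$, i.e.\ that $\varsigma^\star \coloneqq \ket0\bra0 \otimes \varrho_0 + \ket1\bra1 \otimes \varrho_1$ is a legitimate classical--quantum state. Positivity of $\varrho_0$ and $\varrho_1$ follows because they are principal (diagonal) blocks of the positive semi-definite operator $\varrho_{AB}$, and $\Tr[\varrho_0] + \Tr[\varrho_1] = \Tr[\varrho_{AB}] = 1$, so $\varsigma^\star$ is a valid density matrix. Substituting into the expression above gives \cref{appeq:gqd_exp}. I do not expect any real obstacle in this argument; the entire content is the block--matrix bookkeeping and the observation that the off--diagonal blocks of $\varrho_{AB}$ contribute a constant to the Hilbert--Schmidt distance that is independent of the classical--quantum reference state, which is only possible because we have fixed the measurement basis on $A$ to the computational basis rather than minimizing over all bases.
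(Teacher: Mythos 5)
Your proof is correct and follows essentially the same route as the paper's: both identify the block-diagonal "pinching" $\varsigma^\star = \ket0\bra0\otimes\varrho_0 + \ket1\bra1\otimes\varrho_1$ as the closest classical--quantum state and observe that the off-block-diagonal blocks contribute the constant $2\|\varrho_{01}\|_F^2$. Your direct block expansion is in fact a slightly cleaner presentation than the paper's proof-by-contradiction, but the underlying decomposition of the Hilbert--Schmidt norm is identical.
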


\begin{proof}
    First, we prove that the state $\varsigma \in \mathcal{CQ}$ closest in Hilbert--Schmidt distance to $\varrho_{AB}$ is given by
    \begin{equation}
        \varsigma = \begin{bmatrix}
            \varrho_0 && 0 \\
            0 && \varrho_1
        \end{bmatrix}.\end{equation}
        We prove this by contradiction. Suppose the state $\tilde \varsigma \in \mathcal{CQ}$ closest in Hilbert--Schmidt distance to $\varrho_{AB}$ is given by
        \begin{equation}
            \tilde \varsigma =\begin{bmatrix}
                \tilde \varsigma_0 && 0 \\
                0 &&  \tilde \varsigma_1
            \end{bmatrix}.
        \end{equation}
        This implies
    \begin{equation}\label{eq:gd_restr_contradiction}
        \Tr[(\varrho_{AB} - \varsigma)^2] - \Tr[(\varrho_{AB}- \tilde\varsigma)^2] \geq 0.
    \end{equation}
    Now,
    \begin{equation}
       \Tr[\varrho_{AB} - \varsigma)^2] = \Tr[\varrho_{AB}^2] + \Tr[\varsigma^2] - 2\Tr[\varrho_{AB}\varsigma] = \Tr[\varrho_{AB}^2] + \Tr[\varsigma^2] - 2\Tr[\varsigma^2] = \Tr[\varrho_{AB}^2] - \Tr[\varsigma^2].
    \end{equation}
    Since
    \begin{equation}
        \Tr[\varrho_{AB}\varsigma] = \Tr[\varrho_0^2 + \varrho_1^2] = \Tr[\varsigma^2].
    \end{equation}
    Similarly,
    since
    \begin{equation}
        \Tr[\varrho_{AB}\tilde\varsigma] = \Tr[\varrho_0\tilde\varsigma_0 + \varrho_1\tilde \varsigma_1] = \Tr[\varsigma\tilde\varsigma].
    \end{equation}
    Therefore
    \begin{equation}
        \Tr[(\varrho_{AB} - \tilde \varsigma)^2] = \Tr[\varrho_{AB}^2] + \Tr[\tilde \varsigma^2] - 2\Tr[\varsigma\tilde \varsigma],
    \end{equation}
    And finally
    \begin{equation}
        \Tr[\varrho_{AB} - \varsigma)^2] - \Tr[(\varrho_{AB}- \tilde\varsigma)^2] = -\Tr[\varsigma^2] - \Tr[\tilde \varsigma^2] + 2\Tr[\varsigma \tilde \varsigma] = - \Tr[(\varsigma - \tilde \varsigma)^2] = -||\varsigma - \tilde \varsigma||_F^2,
    \end{equation}
    so the Eq.~\ref{eq:gd_restr_contradiction} gives
    \begin{equation}
        ||\varsigma - \tilde \varsigma||_F^2 \leq 0,
    \end{equation}
    with an equality if $\tilde \varsigma = \varsigma$. Since the Frobenius norm cannot be negative, this shows that the covariance matrix closest in Hilbert--Schmidt distance to $\varrho_{AB}$ is $\varsigma$. This gives the geometric quantum discord as
    \begin{equation}
        D_G^{HS}(\varrho_{AB}) = \Tr[(\varrho_{AB} - \varsigma)^2] = \Tr[\varrho_{\mathrm{rem}}^2],
    \end{equation}
    where
    \begin{equation}
        \varrho_{\mathrm{rem}} \coloneqq \begin{bmatrix}
            0 && \varrho_{01} \\
            \varrho_{01}^\dagger && 0
        \end{bmatrix}.
    \end{equation}
    Therefore,
    \begin{equation}
        D_G^{HS}(\varrho_{AB}) = \Tr[\varrho_{01}^\dagger \varrho_{01}  + \varrho_{01} \varrho_{01}^\dagger] = 2 \Tr[\varrho_{01}^\dagger \varrho_{01}] = 2||\varrho_{01}||_F^2.
    \end{equation}
\end{proof}
\noindent Our next result is an upper bound on geometric quantum discord:
\begin{lem}
    Given a bipartite state $\varrho_{AB}$ with the subsystem $A$ being a qubit and the density matrix given by
    \begin{equation}
        \varrho_{AB} = \begin{bmatrix}
            \varrho_0 && \varrho_{01} \\
            \varrho_{01}^\dagger && \varrho_{1}
        \end{bmatrix},
    \end{equation}
    its maximal geometric quantum discord is given by
    \begin{equation}
        D_G^{HS}(\varrho_{AB}) = 2 ||\varrho_{01}||_F^2 \leq  \frac12.
    \end{equation}
\end{lem}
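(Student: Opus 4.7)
The plan is to build on the preceding lemma, which already establishes the identity $D_G^{HS}(\varrho_{AB}) = 2 \| \varrho_{01} \|_F^2$. The task thus reduces to bounding $\| \varrho_{01} \|_F^2 \le 1/4$, using only the constraints that $\varrho_{AB} \succeq 0$ and $\Tr[\varrho_{AB}] = 1$.

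First I would exploit the block-PSD structure of $\varrho_{AB}$. A standard factorization for a positive semi-definite $2 \times 2$ block matrix states that $\varrho_{AB} \succeq 0$ with diagonal blocks $\varrho_0, \varrho_1 \succeq 0$ is equivalent to the existence of a contraction $K$ (with $\| K \|_\infty \le 1$) such that $\varrho_{01} = \varrho_0^{1/2} K \varrho_1^{1/2}$. Substituting this into the Frobenius norm and using cyclicity of the trace gives
\begin{equation}
    \| \varrho_{01} \|_F^2 = \Tr \bigl[ \varrho_1^{1/2} K^\dagger \varrho_0 K \varrho_1^{1/2} \bigr] = \Tr \bigl[ \varrho_0 \, (K \varrho_1 K^\dagger) \bigr].
\end{equation}

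Next I would apply a Hölder-type inequality. Since $K \varrho_1 K^\dagger \succeq 0$ with $\Tr[K \varrho_1 K^\dagger] = \Tr[\varrho_1 K^\dagger K] \le \Tr[\varrho_1]$ (because $K^\dagger K \preceq \I$), and $\varrho_0 \succeq 0$, one obtains
\begin{equation}
    \Tr \bigl[ \varrho_0 (K \varrho_1 K^\dagger) \bigr] \le \| \varrho_0 \|_\infty \cdot \Tr[K \varrho_1 K^\dagger] \le \| \varrho_0 \|_\infty \cdot \Tr[\varrho_1].
\end{equation}
Using $\| \varrho_0 \|_\infty \le \Tr[\varrho_0]$ (valid for any PSD matrix), and setting $p \coloneqq \Tr[\varrho_0]$ so that $\Tr[\varrho_1] = 1 - p$, I obtain $\| \varrho_{01} \|_F^2 \le p(1-p) \le 1/4$.

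Combining with the identity from the previous lemma yields $D_G^{HS}(\varrho_{AB}) \le 2 \cdot (1/4) = 1/2$, as claimed. The main technical ingredient is the PSD block factorization through a contraction; granted this, the remaining steps are one-line applications of standard operator inequalities, so I do not anticipate any real obstacle. The bound is tight, saturated for instance by pure states of the form $|\Phi\rangle = (|0\rangle |\psi\rangle + |1\rangle |\phi\rangle)/\sqrt{2}$ with $\langle \psi | \phi \rangle = 0$, which could be noted briefly to illustrate optimality.
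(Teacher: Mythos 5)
Your proof is correct, and it takes a genuinely different route from the paper's. The paper argues elementwise: positivity of $\varrho_{AB}$ is applied only to the $2\times2$ principal submatrices indexed by $(i,\,m+j)$, which yields $|(\varrho_{01})_{ij}|^2 \leq (\varrho_0)_{ii}(\varrho_1)_{jj}$, and summing over $i,j$ gives $\|\varrho_{01}\|_F^2 \leq \Tr[\varrho_0]\Tr[\varrho_1] = a(1-a) \leq 1/4$. You instead invoke the global block factorization $\varrho_{01} = \varrho_0^{1/2} K \varrho_1^{1/2}$ with $\|K\|_\infty \leq 1$, then chain the operator H\"older inequality $\Tr[XY] \leq \|X\|_\infty \Tr[Y]$ with $\Tr[K\varrho_1 K^\dagger] \leq \Tr[\varrho_1]$ and $\|\varrho_0\|_\infty \leq \Tr[\varrho_0]$. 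Both arguments are sound and both terminate in the same scalar inequality $p(1-p)\leq 1/4$. The paper's version is more elementary, needing only the positivity of $2\times2$ minors; yours is basis-independent and actually passes through the slightly sharper intermediate bound $\|\varrho_{01}\|_F^2 \leq \|\varrho_0\|_\infty \Tr[\varrho_1]$ before relaxing it. Your closing remark on tightness via $\ket\Phi = (\ket0\ket\psi + \ket1\ket\phi)/\sqrt2$ is a nice addition the paper omits (note that orthogonality of $\psi$ and $\phi$ is not even needed to get $\|\varrho_{01}\|_F^2 = 1/4$, since $\varrho_{01} = \tfrac12\ket\psi\bra\phi$ has squared Frobenius norm $1/4$ for any normalized pair).
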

\begin{proof}
    Since $\varrho_{AB}$ is a density matrix,
    \begin{equation}
        \varrho_{AB} \succeq 0.
    \end{equation}
    Therefore, using the property of semi-definite matrices, $\forall i,j \in \{1,\dots,m\}$, the principal submatrix formed by eliminating all rows and columns except the $i$th and the $m+j$th one satisfies
    \begin{equation}
        \begin{bmatrix}
            (\varrho_0)_{ii} && (\varrho_{01})_{ij} \\
            (\varrho_{01})_{ij}^* && (\varrho_{11})_{jj}
        \end{bmatrix} \succeq 0.
    \end{equation}
    This implies
    \begin{eqnarray}
        |\varrho_{01}|_{ij}^2 &\leq& (\varrho_0)_{ii} (\varrho_{11})_{jj},
    \end{eqnarray}
    and therefore
    \begin{equation}
        2||\varrho_{01}||_F^2 = 2 \sum_{i,j=1}^m |\varrho_{01}|_{i,j}^2 \leq 2 \sum_{i,j=1}^m  (\varrho_{0})_{ii} (\varrho_{1})_{jj} = 2\left(\sum_{i=1}^m (\varrho_0)_{ii}\right)\left(\sum_{j=1}^m (\varrho_1)_{jj}\right).
    \end{equation}
    Denote
    \begin{equation}
        \sum_{i=1}^m (\varrho_0)_{ii} = a.
    \end{equation}
    Then since $\varrho_{AB}$ has trace 1, 
    \begin{equation}
        \sum_{j=1}^m (\varrho_1)_{jj} = 1-a.
    \end{equation}
    This gives
    \begin{equation}
        \left(\sum_{i=1}^m (\varrho_0)_{ii}\right)\left(\sum_{j=1}^m (\varrho_1)_{jj}\right) = a(1-a).
    \end{equation}
    Now, the maximum value of $a(1-a)$ is equal to $1/4$ for $a=1/2$. Putting all of this together,
    \begin{equation}
        D_G^{HS}(\varrho_{AB}) =2||\varrho_{01}||_F^2 \leq 2 \left(\sum_{i=1}^m (\varrho_0)_{ii}\right)\left(\sum_{j=1}^m (\varrho_1)_{jj}\right) = 2a(1-a) \leq \frac12 .
    \end{equation}
\end{proof}

\section{Properties of symplectic coherence and position-momentum correlations}
\noindent To recap, the symplectic coherence of a quantum state $\rho$ is defined as:
\begin{defi} [Symplectic coherence] Given a quantum state $\rho$ with covariance matrix
\begin{equation}
    V^\rho = \begin{bmatrix}
        V_x^\rho && V_{xp}^\rho \\ (V_{xp}^\rho)^T && V_p^\rho
    \end{bmatrix},
\end{equation}
the symplectic coherence of $\rho$, denoted by $\mathfrak c_\rho$, is defined as:
    \begin{equation}
  \mathfrak c_\rho = || V_{xp}^\rho ||_F^2.
\end{equation}
\end{defi}
\noindent In this section, we prove various properties of symplectic coherence and position-momentum correlations, namely the operational interpretation of symplectic coherence (section \ref{app:sc_exp}), its monotonicity and faithfulness (section \ref{app:monotonicity_sc}), the class of relevant non-free operations for symplectic coherence (section \ref{app:sc_non_free}), a no-go result for monotinicity of faithful measures of position-momentum correlations (section \ref{app:active_increase_sc}), relation of position-momentum correlations to quantum discord (section \ref{app:zero_sc_zero_discord}), maximal symplectic coherence value under energy constraints and the structure of the state achieving this maximal value (section \ref{app:max_sc}), behaviour of symplectic coherence under perturbations of the state (section \ref{app:SC_perturbations}), how it changes under photon loss (section \ref{app:sc_under_loss}) and finally how position-momentum correlations enhance average entanglement of pure Gaussian states (section \ref{app:sc_typical_entangle}.

\subsection{Operational interpretation of symplectic coherence}\label{app:sc_exp}
\noindent We restate Theorem \ref{theo:sc_expr}:
\begin{theo}[Operational interpretation of symplectic coherence]\label{apptheo:sc_exp_operational}
    Given a quantum state $\rho$ with covariance matrix
    \begin{equation}
        V^\rho = \begin{bmatrix}
            V_x && V_{xp} \\
            V_{xp}^T && V_p
        \end{bmatrix},
    \end{equation}
    its symplectic coherence is equal to the minimal Hilbert--Schmidt distance from the set of states with the set of free states $\mathcal C$, namely
    \begin{equation}\label{appeq:sc_exp_operational}
       \mathfrak c_\rho = \min_{\sigma \in \mathcal C} \frac12 ||V^\rho - V^\sigma||_F^2 = \min_{\sigma \in \mathcal C} \frac12 \Tr[(V^\rho - V^\sigma)^2].
    \end{equation}
\end{theo}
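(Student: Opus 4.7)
My plan is to prove the theorem in two stages: first establish the auxiliary lemma that the off-diagonal block of a covariance matrix can be zeroed out to obtain another valid covariance matrix, and second perform an elementary Frobenius-norm minimization that uses this lemma to pin down the closest free state.

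For the lemma, I would introduce the real block matrix $W = \mathrm{diag}(\I_m, -\I_m)$ and exploit the identity $W \Omega W^T = -\Omega$, which follows by direct block multiplication. Since $V^\rho + i\Omega \succeq 0$ entails, by Hermitian conjugation, that $V^\rho - i\Omega \succeq 0$, conjugating the latter by $W$ gives $W V^\rho W^T + i\Omega = W(V^\rho - i\Omega) W^T \succeq 0$. Hence the sign-flipped matrix
\begin{equation}
W V^\rho W^T = \begin{bmatrix} V_x^\rho & -V_{xp}^\rho \\ -(V_{xp}^\rho)^T & V_p^\rho \end{bmatrix}
\end{equation}
satisfies the Heisenberg uncertainty inequality and is itself a valid covariance matrix (the other conditions listed in Eq.~\ref{eq:covar_inequality}, namely positivity of the diagonal blocks and $\Tr \geq 2m$, are preserved trivially). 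The matrix $\tilde V$ is then the convex combination $\tilde V = \tfrac12(V^\rho + W V^\rho W^T)$, and since the set of valid covariance matrices is convex (all relevant inequalities are preserved under convex combinations), $\tilde V$ is a valid covariance matrix. By Gaussification, there exists a Gaussian state $\sigma_0 \in \mathcal C$ with $V^{\sigma_0} = \tilde V$.

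For the main statement, I would then write an arbitrary $\sigma \in \mathcal C$ as $V^\sigma = \begin{bmatrix} A & 0 \\ 0 & B \end{bmatrix}$ with $A,B$ admissible, and compute
\begin{equation}
\tfrac12 \|V^\rho - V^\sigma\|_F^2 = \tfrac12 \|V_x^\rho - A\|_F^2 + \tfrac12 \|V_p^\rho - B\|_F^2 + \|V_{xp}^\rho\|_F^2,
\end{equation}
where the factor in front of the last term comes from the fact that the off-diagonal block and its transpose contribute equally to the Frobenius norm. This is manifestly minimized by $A = V_x^\rho$, $B = V_p^\rho$, a choice that is admissible precisely because of the lemma, yielding the minimum value $\|V_{xp}^\rho\|_F^2 = \mathfrak c_\rho$, which is the claim.

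The main obstacle is the lemma, specifically the justification that zeroing the $V_{xp}$ block cannot violate the Heisenberg uncertainty relation $V + i\Omega \succeq 0$. The key identity unlocking this step is $W\Omega W^T = -\Omega$, which converts the sign flip $V_{xp}^\rho \mapsto -V_{xp}^\rho$ from a naive violation of the symplectic structure into a validity-preserving operation via the conjugate inequality $V^\rho - i\Omega \succeq 0$. Once this observation is in place, the convex-combination argument and the Frobenius-norm bookkeeping are routine.
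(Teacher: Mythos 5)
Your proposal is correct, and the key lemma is proved by a genuinely different — and arguably cleaner — route than the paper's. The paper establishes that $\tilde V = \mathrm{diag}(V_x^\rho, V_p^\rho)$ is a valid covariance matrix by a Schur-complement analysis: for pure Gaussian states it invokes the pure-state-specific identity $V_{xp}V_x = V_x V_{xp}^T$ to show $V_p - V_x^{-1} \succeq V_{xp}^T V_x^{-1} V_{xp} \succ 0$, and then extends to mixed and non-Gaussian states via Parthasarathy's theorem that any covariance matrix is an equal mixture of two pure Gaussian covariance matrices. Your argument instead conjugates by the momentum sign flip $W = \mathrm{diag}(\I_m, -\I_m)$, uses $W\Omega W^T = -\Omega$ together with the conjugate inequality $V^\rho - i\Omega \succeq 0$ (which holds by transposition since $V^\rho$ is symmetric and $\Omega$ antisymmetric), and then takes the convex combination $\tilde V = \tfrac12(V^\rho + WV^\rho W^T)$; since the defining inequalities are preserved under congruence and convex combination, this handles all states — pure, mixed, Gaussian or not — in a single stroke, without the pure-state commutation identity or the two-pure-state decomposition. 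This is the standard "time-reversal averaging" trick and buys uniformity and brevity; the paper's route buys the intermediate inequality $V_p - V_x^{-1} \succeq V_{xp}^T V_x^{-1}V_{xp}$, which is not needed here. For the minimization step, your explicit sum-of-squares decomposition
\begin{equation}
\tfrac12\|V^\rho - V^\sigma\|_F^2 = \tfrac12\|V_x^\rho - A\|_F^2 + \tfrac12\|V_p^\rho - B\|_F^2 + \|V_{xp}^\rho\|_F^2
\end{equation}
is equivalent to, but more direct than, the paper's proof by contradiction (which expands traces to show the block-diagonal truncation is the closest free covariance matrix); both correctly conclude that the minimum is $\|V_{xp}^\rho\|_F^2 = \mathfrak c_\rho$, attained at the admissible choice $A = V_x^\rho$, $B = V_p^\rho$ guaranteed by the lemma.
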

\noindent Before proving the Theorem, we prove two technical lemmas which are required to complete the proof:
\begin{lem}
    Given the covariance matrix of a quantum state
    \begin{equation}
        V = \begin{bmatrix}
            V_x && V_{xp} \\
            V_{xp}^T && V_p
        \end{bmatrix},
    \end{equation}
    the matrix obtained by removing the position-momentum correlations
    \begin{equation}
        \tilde V = \begin{bmatrix}
            V_x && 0\\
            0 && V_p
        \end{bmatrix}
    \end{equation}
    is a valid covariance matrix of a quantum state.
\end{lem}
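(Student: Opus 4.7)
The plan is to show that $\tilde V$ satisfies the defining uncertainty inequality $\tilde V + i\Omega \succeq 0$ of a valid covariance matrix (together with reality, symmetry and positive definiteness), and then invoke Gaussification to conclude that $\tilde V$ is the covariance matrix of an actual quantum state. Reality and symmetry of $\tilde V$ are immediate from those of $V_x$ and $V_p$, and $\tilde V \succ 0$ follows from $V_x \succ 0, V_p \succ 0$, which are already part of Eq.~\ref{eq:covar_inequality}. So the whole content of the lemma is the uncertainty inequality.

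The key trick I would use is the \emph{time-reversal / momentum-flip} map, i.e.\ the linear transformation
\begin{equation}
T \coloneqq \begin{bmatrix} \I_m & 0 \\ 0 & -\I_m \end{bmatrix},
\end{equation}
acting by congruence $V \mapsto T V T^T$. Physically this sends $(\hat q_i,\hat p_i) \to (\hat q_i,-\hat p_i)$ and corresponds to complex conjugation of the state in the position representation, so it takes valid covariance matrices to valid covariance matrices even though $T$ is not symplectic. I would verify this directly: a straightforward computation gives $T\Omega T^T = -\Omega$, hence
\begin{equation}
T V T^T + i\Omega = T\bigl(V - i\Omega\bigr)T^T,
\end{equation}
and since $V$ is real, $V+i\Omega \succeq 0$ implies $V - i\Omega = \overline{V+i\Omega} \succeq 0$, so $TVT^T + i\Omega \succeq 0$.

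The final step is to observe
\begin{equation}
\tilde V \;=\; \tfrac{1}{2}\bigl(V + T V T^T\bigr),
\end{equation}
since $T$ flips the sign of the off-diagonal block $V_{xp}$ while leaving $V_x,V_p$ untouched. Because the set of real symmetric matrices $W$ with $W+i\Omega \succeq 0$ and $W \succ 0$ is convex, $\tilde V$ lies in this set. By Gaussification, there exists a (Gaussian) state with covariance matrix $\tilde V$, finishing the proof.

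I do not expect any serious obstacle: the main subtlety is simply to resist the temptation to look for a symplectic transformation realising the $V_{xp}\mapsto -V_{xp}$ flip (there is none in general), and instead use the anti-symplectic congruence above, for which the sign change $T\Omega T^T=-\Omega$ is exactly compensated by the conjugation $V+i\Omega \to V-i\Omega$. Everything else is convexity of the admissible set of covariance matrices.
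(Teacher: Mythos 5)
Your proof is correct, and it takes a genuinely different and substantially shorter route than the paper's. The paper first treats pure Gaussian states: it invokes Albert's Schur-complement criterion for positive semidefiniteness, uses the pure-state identity $V_{xp}V_x = V_x V_{xp}^T$ to establish $V_p - V_x^{-1} \succeq V_{xp}^T V_x^{-1} V_{xp} \succ 0$, and then verifies $\tilde V + i\Omega \succeq 0$ by a second Schur-complement argument; the general case is then reduced to the pure case via Parthasarathy's theorem that every covariance matrix is an equal-weight mixture of two pure Gaussian covariance matrices. Your argument bypasses all of this: the identity $T\Omega T^T = -\Omega$ for the momentum-flip $T$, together with $V - i\Omega = \overline{V+i\Omega} \succeq 0$ for real $V$, gives $TVT^T + i\Omega = T(V-i\Omega)T^T \succeq 0$ in one line, and $\tilde V = \tfrac12(V + TVT^T)$ plus convexity of the admissible set finishes the proof uniformly for all states, with no pure/mixed case split and no appeal to Parthasarathy's decomposition. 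All the individual steps check out ($T$ is real so congruence preserves positive semidefiniteness, the conjugate/transpose of a PSD matrix is PSD, and the constraint $W + i\Omega \succeq 0$ is manifestly convex), and the final appeal to Gaussification is the same one the paper relies on implicitly. The only thing the paper's longer route buys is the intermediate inequality $V_p - V_x^{-1} \succ 0$ for pure Gaussian states, which is not needed for the lemma itself; your anti-symplectic averaging argument is the cleaner proof.
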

\begin{proof}
We prove the Lemma first for covariance matrices of pure Gaussian states, and then extend it to covariance matrices of mixed Gaussian states or non-Gaussian states.
    From \cite[Theorem 1]{albert1969}, a given hermitian matrix
    \begin{equation}
        \begin{bmatrix}
            A && B \\
            B^\dagger && D
        \end{bmatrix}
    \end{equation}
    with a positive definite matrix $A$ is positive semi-definite if and only if 
    \begin{eqnarray}\label{eq:detpos}
        D - B^\dagger A^{-1} B &\succeq& 0.
    \end{eqnarray}
    Since $V^\rho$ is the covariance matrix of a pure Gaussian state,
    \begin{eqnarray}
    V_x &\succ& 0, \nonumber \\
        V + i\Omega &\succeq& 0,
    \end{eqnarray}
    where
    \begin{equation}
        \Omega = \begin{bmatrix}
            \bm 0_m && \I_m \\ - \I_m && \bm 0_m
        \end{bmatrix}.
    \end{equation}
    Therefore, with Eq.~\ref{eq:detpos},
    \begin{eqnarray}
        V_p - (V_{xp}^T - i \I)V_x^{-1}(V_{xp} + i \I) &\succeq& 0, \nonumber \\
        V_p - V_x^{-1} - V_{xp}^T V_x^{-1} V_{xp} + i (V_x^{-1} V_{xp} - V_{xp}^T V_x^{-1}) &\succeq& 0.
    \end{eqnarray}
    Now, for pure Gaussian states \cite[Eq.~9]{Serafini2007entanglement},
    \begin{equation}
        V_{xp} V_x = V_x V_{xp}^T.
    \end{equation}
    Multiplying from the left and from the right $V_x^{-1}$ on both sides of the equality ($V_x$ is positive definite and hence invertible), we get
    \begin{equation}
        V_x^{-1} V_{xp} = V_{xp}^T V_x^{-1}.
    \end{equation}
    Also, $\forall z \in \C^m$, we have
    \begin{equation}
        z^\dagger V_{xp}^T V_x^{-1} V_{xp} z = (V_{xp} z)^\dagger V_x^{-1} V_{xp} z > 0
    \end{equation}
    since $V_x^{-1}$ is positive definite. Combining all of this with the previous results gives
    \begin{equation}
        V_p - V_x^{-1} \succeq V_{xp}^T V_x^{-1} V_{xp} \succ 0.
    \end{equation}
    Therefore, for
    \begin{equation}
        \tilde V = \begin{bmatrix}
            V_x && 0\\
            0 && V_p
        \end{bmatrix},
    \end{equation}
    we have
    \begin{eqnarray}
    \tilde V &=& \tilde V^T, \nonumber \\
        V_x &\succ& 0, \nonumber \\
        V_p &\succ& 0, \nonumber \\
        V_p - V_x^{-1} &\succeq& 0. \nonumber \\
    \end{eqnarray}
    $V_x \succ 0$ and $V_p \succ 0$ implies $\tilde V \succ 0$, diagonalized by the orthogonal matrix
    \begin{equation}
        \begin{bmatrix} O_1 && 0 \\ 0 && O_2 \end{bmatrix},
    \end{equation}
    where $O_1$ and $O_2$ are orthogonal matrices diagonalizing $V_x$ and $V_p$ respectively. Further for
    \begin{equation}
        \tilde V + i\Omega = \begin{bmatrix}
            V_x && i \I_m \\ -\I_m && V_p
        \end{bmatrix}
    \end{equation}
    with positive definite $V_x$, we have that 
    \begin{equation}
        V_p - (-i\I_m)V_x^{-1}(i\I_m) = V_p - V_x^{-1} \succ 0.
    \end{equation}
    The last two results imply
    \begin{eqnarray}
        \tilde V &\succ& 0 \nonumber \\
        \tilde V + i \Omega &\succeq& 0.
    \end{eqnarray}
    So $\tilde V$ represents a valid covariance matrix of a quantum state when $V^\rho$ is the covariance matrix of a pure Gaussian state.

    \medskip

    Now suppose $V^\rho$ is the covariance matrix of a mixed Gaussian quantum state (or non-Gaussian quantum state), then we know that $\tilde V$ satisfies
    \begin{eqnarray}
        \tilde V &=& \tilde V^T \nonumber \\
        \tilde V &\succ& 0.
    \end{eqnarray}
    To prove that $\tilde V + i \Omega \succeq 0$, we note that the covariance matrix of any mixed Gaussian state can be written as the convex mixture of two covariance matrices of pure Gaussian states \cite[Theorem 3]{parthasarathy2011}:
    \begin{equation}
        V^\rho = \frac12 V^{G_1} + \frac12 V^{G_2}.
    \end{equation}
    This implies
    \begin{equation}
        \tilde V = \frac{1}{2} \tilde V^{G_1} + \frac12 \tilde V^{G_2},
    \end{equation}
    and hence
    \begin{equation}
        \tilde V + i \Omega = \frac12 (\tilde V^{G_1} + i \Omega) + \frac 12 (\tilde V^{G_2} + i \Omega) \succeq 0,
    \end{equation}
    since $\tilde V^{G_1} + i \Omega \succeq 0$  and $\tilde V^{G_2} + i \Omega \succeq 0$ from the previous proof in the case of pure Gaussian states.
\end{proof}
\noindent Note that since taking the global transpose of the state $\rho$ flips the sign of the momentum operator and leaves the position operator unchanged, $\tilde V$ can equivalenty be seen as the covariance matrix of $(\rho + \rho^T)/2$, proving its validity as a covariance matrix \footnote{The authors thank Ludovico Lami for pointing this out.}. One more mathematical result is needed before we prove Theorem \ref{theo:sc_expr}:

\begin{lem} \label{lem:close_cv}
    Given a covariance matrix
    \begin{equation}
        V = \begin{bmatrix}
            V_x && V_{xp} \\
            V_{xp}^T && V_p 
        \end{bmatrix}.
    \end{equation}
    The covariance matrix $V^\sigma$ with $\sigma \in \mathcal C$ which is closest (in Hilbert--Schmidt distance) to $V$ is given by
    \begin{equation}
        V^\sigma = \begin{bmatrix}
            V_x && 0 \\
            0 && V_p 
        \end{bmatrix}.
    \end{equation}
\end{lem}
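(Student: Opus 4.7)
The plan is to minimize $\|V - V^\sigma\|_F^2$ over all covariance matrices $V^\sigma$ of states $\sigma \in \mathcal{C}$ by exploiting the block structure of the Frobenius norm. First I would parametrize an arbitrary candidate $\sigma \in \mathcal{C}$ by its covariance matrix, which by \cref{defi:free_states} has the form
\begin{equation}
    V^\sigma = \begin{bmatrix} V_x^\sigma & 0 \\ 0 & V_p^\sigma \end{bmatrix},
\end{equation}
and then expand the squared Hilbert--Schmidt distance blockwise:
\begin{equation}
    \|V - V^\sigma\|_F^2 = \|V_x - V_x^\sigma\|_F^2 + \|V_p - V_p^\sigma\|_F^2 + 2\|V_{xp}\|_F^2.
\end{equation}
The cross term $2\|V_{xp}\|_F^2$ does not depend on $\sigma$, so the problem reduces to minimizing the two block-diagonal terms independently.

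Next, since the Frobenius norm is non-negative, the unconstrained minimum of each of the first two terms is zero, achieved at $V_x^\sigma = V_x$ and $V_p^\sigma = V_p$. The key step is to check that this choice actually corresponds to a valid covariance matrix of some quantum state $\sigma$, so that this minimum is attainable within the constraint $\sigma \in \mathcal{C}$. This is exactly the content of the preceding lemma, which guarantees that $\tilde V = V_x \oplus V_p$ is a bona fide covariance matrix. Invoking that lemma therefore closes the optimization.

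Combining these two observations, we conclude that the minimizer is $V^\sigma = V_x \oplus V_p$, and the minimum value of the squared Hilbert--Schmidt distance equals $2\|V_{xp}\|_F^2 = 2\mathfrak c_\rho$, which matches the claimed operational interpretation in \cref{apptheo:sc_exp_operational} up to the factor of $1/2$ appearing in \cref{appeq:sc_exp_operational}. The main obstacle, and the only non-trivial ingredient, is the validity of $\tilde V$ as a covariance matrix; without the previous lemma, it would not be obvious that the unconstrained minimizer lies inside the feasible set $\mathcal{C}$, and one could otherwise worry that the uncertainty-type constraints $V^\sigma \succ 0$ and $V^\sigma + i\Omega \succeq 0$ force a non-trivial trade-off with the diagonal blocks.
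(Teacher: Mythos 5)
Your proposal is correct and rests on the same underlying fact as the paper's proof, namely that the block-diagonal and off-block-diagonal parts of $V-V^\sigma$ are orthogonal in the Hilbert--Schmidt inner product, so the off-diagonal contribution $2\|V_{xp}\|_F^2$ is fixed and the diagonal blocks are minimized by exact matching; the paper merely phrases this as a proof by contradiction (showing any other free candidate $\tilde V$ satisfies $\Tr[(V-V^\sigma)^2]-\Tr[(V-\tilde V)^2]=-\|V^\sigma-\tilde V\|_F^2\le 0$), whereas you minimize directly. You also correctly identify that the only non-trivial ingredient is the feasibility of $V_x\oplus V_p$ as a covariance matrix, which is exactly what the preceding lemma in the paper supplies.
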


\begin{proof}
    The proof is similar to that of \cref{applem:closestCQstate}. We prove the result by contradiction. Suppose there is a covariance matrix of a quantum state with zero position-momentum correlations, i.e.,
    \begin{equation}
        \tilde V = \begin{bmatrix}
            \tilde V_x && 0 \\
            0 && \tilde V_p
        \end{bmatrix},
    \end{equation}
    which is closer in Hilbert--Schmidt distance to $V$. This implies
    \begin{equation}\label{eq:lem_4_1}
        \Tr[(V - V^\sigma)^2] - \Tr[(V- \tilde V)^2] \geq 0.
    \end{equation}
    Now,
    \begin{equation}
        \Tr[(V - V^\sigma)^2] = \Tr[V^2] + \Tr[(V^\sigma)^2] - 2\Tr[VV^\sigma] = \Tr[V^2] + \Tr[(V^\sigma)^2] - 2\Tr[(V^\sigma)^2] = \Tr[V^2] - \Tr[(V^\sigma)^2].
    \end{equation}
    Since
    \begin{equation}
        \Tr[VV^\sigma] = \Tr[V_x^2 + V_p^2] = \Tr[(V^\sigma)^2].
    \end{equation}
    Similarly,
    since
    \begin{equation}
        \Tr[V\tilde V] = \Tr[V_x \tilde V_x + V_p \tilde V_p] = \Tr[V^\sigma \tilde V].
    \end{equation}
    Therefore
    \begin{equation}
        \Tr[(V - \tilde V)^2] = \Tr[V^2] + \Tr[\tilde V^2] - 2\Tr[V^\sigma\tilde V],
    \end{equation}
    And finally
    \begin{equation}
        \Tr[(V - V^\sigma)^2] - \Tr[(V- \tilde V)^2] = -\Tr[(V^\sigma)^2] - \Tr[(\tilde V)^2] + 2\Tr[V^\sigma \tilde V] = - \Tr[(V^\sigma - \tilde V)^2] = -||V^\sigma - \tilde V||_F^2,
    \end{equation}
    so the earlier inequality (Eq.~\ref{eq:lem_4_1}) gives
    \begin{equation}
        ||V^\sigma - \tilde V||_F^2 \leq 0
    \end{equation}
    With an equality if $\tilde V = V^\sigma$. Since a Frobenius norm is always non-negative and equal to zero only for the zero matrix, this proves that the covariance matrix closest in Hilbert--Schmidt distance to $V$ is $V^\sigma$.
\end{proof}
\noindent With these results, the proof of Theorem \ref{theo:sc_expr} becomes straightforward.

\begin{proof}[Proof of Theorem \ref{theo:sc_expr}]

\noindent From Lemma \ref{lem:close_cv},
\begin{equation}
    \min_{\sigma \in \mathcal C} \frac12 ||V^\rho - V^\sigma||_F^2 = \min_{\sigma \in \mathcal C} \frac12 \Tr[(V^\rho - V^\sigma)^2] =  \frac12 \Tr[V_\mathrm{rem}^2],
\end{equation}
where
\begin{equation}
    V_\mathrm{rem} = \begin{bmatrix}
        0 && V_{xp} \\
        V_{xp}^T && 0
    \end{bmatrix}.
\end{equation}
Since,
\begin{equation}
    V_\mathrm{rem}^2 = \begin{bmatrix}
        V_{xp} V_{xp}^T && 0 \\
        0  && V_{xp}^T V_{xp}
    \end{bmatrix}.
\end{equation}
This gives
\begin{equation}
    \min_{\sigma \in \mathcal C} \frac12 ||V^\rho - V^\sigma||_F^2 = \min_{\sigma \in \mathcal C} \frac12 \Tr[(V^\rho - V^\sigma)^2] = \frac12 \times 2 \Tr[V_{xp}V_{xp}^T] = ||V_{xp}||_F^2,
\end{equation}
which is the exactly the expression of symplectic coherence (Defintion \ref{defi:sc_expr}).
\end{proof}

\subsection{Faithfulness and monotonicity of symplectic coherence}\label{app:monotonicity_sc}
\noindent We restate Theorem \ref{theo:monotonicity_sc}:
  \begin{theo}[Faithfulness and monotonicity of symplectic coherence]
        Given a quantum state $\rho$, its symplectic coherence $\mathfrak c_\rho = 0$ if and only if $\rho \in \mathcal C$ (Definition \ref{defi:free_states}). Further, the symplectic coherence is non-increasing under the following operations:
        \begin{itemize}
            \item Block-diagonal orthogonal unitary gates $\hat O$.
            \item Displacement unitary gate.
            \item Tensor product with free states.
            \item Partial traces.
            \item Classical mixing of zero first moment states.
        \end{itemize}
    \end{theo}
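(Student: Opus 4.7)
The statement decomposes into a faithfulness claim and five independent monotonicity claims, and the plan is to dispatch each by combining a standard property of the Frobenius norm with the specific way the covariance matrix---and in particular its off-diagonal block $V_{xp}^\rho$---transforms under the given operation. I expect no deep obstacle: the only subtle ingredient is the zero-first-moment hypothesis in the last bullet, which is precisely what makes the covariance matrix of a mixture linear in the components' covariance matrices.

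Faithfulness is immediate from $\mathfrak c_\rho = \|V_{xp}^\rho\|_F^2$ and the non-degeneracy of the Frobenius norm recalled in Section~\ref{subsec:prelims_subsec_matrix_norms}: $\|V_{xp}^\rho\|_F^2 = 0$ iff $V_{xp}^\rho = \bm 0_m$, which is the defining condition of $\mathcal C$. For block-diagonal orthogonal gates with symplectic matrix $\mathrm{diag}(O,O)$, the rule $V^\rho \to S V^\rho S^T$ sends $V_{xp}^\rho \to O V_{xp}^\rho O^T$, and orthogonal invariance of the Frobenius norm makes $\mathfrak c_\rho$ in fact invariant (hence a fortiori non-increasing). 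Displacement operators leave $V^\rho$ unchanged, so the claim is trivial. Tensor product with a free state $\sigma \in \mathcal C$ follows from the additivity identity $\mathfrak c_{\rho\otimes\sigma} = \mathfrak c_\rho + \mathfrak c_\sigma$ noted just before the theorem, combined with $\mathfrak c_\sigma = 0$.

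For partial traces, the key observation is that tracing out a subset of modes yields the principal submatrix of $V^\rho$ obtained by deleting the corresponding rows and columns from each of the four blocks; hence the $V_{xp}$ block of the reduced state is a principal submatrix of $V_{xp}^\rho$, and since $\|\cdot\|_F^2$ is the sum of squared entries, deletion can only decrease it. For classical mixing $\rho = \sum_k p_k \rho_k$ of zero-first-moment states, the mean-subtraction terms in the definition of $V^\rho$ vanish identically, so I would first verify the linearity $V^\rho = \sum_k p_k V^{\rho_k}$ by direct computation of $\langle\{\Gamma_i,\Gamma_j\}\rangle_\rho$; restricting to the off-diagonal block gives $V_{xp}^\rho = \sum_k p_k V_{xp}^{\rho_k}$. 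Convexity of $\|\cdot\|_F^2$ (Jensen's inequality applied to the triangle inequality) then yields
\begin{equation}
    \mathfrak c_\rho = \Bigl\|\sum_k p_k V_{xp}^{\rho_k}\Bigr\|_F^2 \le \sum_k p_k \|V_{xp}^{\rho_k}\|_F^2 = \sum_k p_k \mathfrak c_{\rho_k},
\end{equation}
which is the desired non-increase under mixing. The zero-first-moment hypothesis is essential here; without it, the spread of the means across components contributes an additional rank-one correction to $V^\rho$ that breaks linearity and can in fact increase $\mathfrak c_\rho$, as the counterexamples in appendix~\ref{app:sc_non_free} show.
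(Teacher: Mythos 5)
Your proposal is correct and follows essentially the same route as the paper's proof: faithfulness from non-degeneracy of the Frobenius norm, invariance under $V_{xp}^\rho\mapsto O V_{xp}^\rho O^T$ for block-diagonal orthogonal gates, triviality for displacements, additivity plus $\mathfrak c_\sigma=0$ for tensoring with free states, entry-deletion for partial traces, and linearity of the covariance matrix for zero-first-moment mixtures. The only cosmetic difference is in the last step, where you conclude $\mathfrak c_\rho\le\sum_k p_k\,\mathfrak c_{\rho_k}$ via convexity of $\|\cdot\|_F^2$ while the paper bounds $\|V_{xp}^\rho\|_F\le\max_i\|V_{xp}^{\rho_i}\|_F$ via the triangle inequality; your bound implies theirs, so both establish the claim.
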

\begin{proof}
\textbf{Faithfulness of symplectic coherence:} The faithfulness of symplectic coherence simply follows from the property of Frobenius norm that for a given $m \times m$ matrix $A$:
\begin{equation}\label{eq:faithful_frobenius}
    ||A||_F \iff A = \bm 0_m,
\end{equation}
where $\bm 0_m$ represents the $m \times m$ zero matrix. Therefore if $\mathfrak c_\rho =0$ for an $m$-mode state $\rho$, this implies that its covariance matrix $V^\rho$ is such that 
\begin{equation}
    ||V_{xp}^\rho||_F^2 = 0.
\end{equation}
Eq.~\ref{eq:faithful_frobenius} then implies $V_{xp}^\rho = \bm 0_m$ and hence $\rho \in \mathcal C$.

\medskip

\noindent \textbf{Block-diagonal orthogonal unitary gates:} Given the covariance matrix of the state $\rho$:
\begin{equation}
    V^\rho = \begin{bmatrix}
        V_x^\rho && V_{xp}^\rho \\
        (V_{xp}^\rho)^T && V_p^\rho
    \end{bmatrix},
\end{equation}
under the effect of a block-diagonal orthogonal unitary gate $\hat O$,
\begin{equation}
    V^\rho \mapsto \begin{bmatrix}O && 0 \\ 0 && O\end{bmatrix} \begin{bmatrix}
        V_x^\rho && V_{xp}^\rho \\
        (V_{xp}^\rho)^T && V_p^\rho
    \end{bmatrix} \begin{bmatrix}O^T&& 0 \\ 0 && O^T\end{bmatrix} = \begin{bmatrix}
        O V_x^\rho O^T && O V_{xp}^\rho O^T \\ O (V_{xp}^\rho)^T O^T && O V_p^\rho O^T
    \end{bmatrix}.
\end{equation}
Therefore, 
\begin{equation}
    \mathfrak c_{\hat O \rho \hat O^\dagger} = ||O V_{xp}^\rho O^T||_F^2 = ||V_{xp}^\rho||_F^2 = \mathfrak c_\rho,
\end{equation}
since the Frobenius norm is invariant under orthogonal matrices.

We note that the symplectic matrix of $\hat O$ maps to a local unitary through the mapping $\mathcal M$ (Definition \ref{defi:Covariance_to_density}). Indeed, the symplectic matrix of $\hat O$ is given by
\begin{equation}\label{app_eq:block_diagonal_orthogonal}
    S_O = \begin{bmatrix}
        O && 0 \\ 
        0 && O
    \end{bmatrix},
\end{equation}
and it acts on the covariance matrix $V^\rho$ as
\begin{equation}
    V^\rho\mapsto S_O V^\rho S_O^T.
\end{equation}
Through the mapping $\mathcal M$, the action of $\hat O$ can be seen as the action of a unitary 
\begin{equation}
    \hat U_{\mathrm{loc}} = \begin{bmatrix}
        O && 0 \\ 
        0 && O
    \end{bmatrix}
\end{equation}
on the virtual state $\rho = \frac{1}{\Tr[V^\rho]}V^\rho$, since 
\begin{equation}
    \hat U_{\mathrm{loc}} = \I_2 \otimes \hat O,
\end{equation}
where $\I_2$ is the identity operation on the Hilbert space of the qubit, whereas $\hat O$ is a local unitary gate on the qudit. Local unitary gates do not increase discord, similar to how block-diagonal orthogonal unitary gates do not change symplectic coherence. This further strengthens the conceptual bridge between symplectic coherence and quantum discord discussed in the main text.

\medskip

\noindent \textbf{Displacement gates:} The symplectic matrix corresponding to the displacement gate is the identity  $\I$. Therefore,
\begin{equation}
    V^{\hat D \rho \hat D^\dagger} = V^\rho,
\end{equation}
and hence
\begin{equation}
    \mathfrak c_{\hat D \rho \hat D^\dagger} = \mathfrak c_\rho.
\end{equation}
\textbf{Tensor product with free states:} Given an $m$-mode state $\rho$ and an $n$-mode free state $\sigma_B$,
\begin{equation}
    V_{xp}^{\rho \otimes \sigma_B} = \begin{bmatrix}
        V_{xp}^\rho && 0 \\ 0 && 0
    \end{bmatrix},
\end{equation}
and therefore,
\begin{equation}
    \mathfrak c_{\rho \otimes \sigma_B} = \sum_{i,j = 1}^{m+n} (V_{xp}^{\rho \otimes \sigma _B})_{ij}^2 = \sum_{i,j=1}^m (V_{xp}^\rho)_{ij}^2 = \mathfrak c_\rho.
\end{equation}

\medskip

\noindent \textbf{Partial traces:} Given a bipartite state $\rho_{AB}$, where $A$ is an $m$-mode system and $B$ is an $n$-mode system,
\begin{equation}
    V^{\rho_{AB}} = \begin{bmatrix}
        V_{xp}^{A} && V_{xp}^{AB} \\ V_{xp}^{BA} && V_{xp}^{B}
    \end{bmatrix},
\end{equation}
we have that
\begin{equation}
    V_{xp}^{\Tr_B[\rho_{AB}]} = V_{xp}^A.
\end{equation}
Therefore,
\begin{equation}
    \mathfrak c (\rho_{AB})= \sum_{i,j=1}^m ( V_{xp}^A)_{ij}^2 + \sum_{i=1}^m \sum_{j=1}^n (V_{xp}^{AB})_{ij}^2 + \sum_{i=1}^n \sum_{j=1}^m (V_{xp}^{BA})_{ij}^2 + \sum_{i,j=1}^{n} ( V_{xp}^B)_{ij}^2 \geq \sum_{i,j=1}^{m} ( V_{xp}^A)_{ij}^2 = \mathfrak{c}_{\Tr_B[\rho_{AB}]}.
\end{equation}

\medskip

\noindent \textbf{Classical mixing for zero first moments:} If we are only mixing states with zero first moments, i.e.
\begin{equation}
    \rho = \sum_{i=1}^l p_i \rho_i,
\end{equation}
where $\rho_i$ has zero first moments of quadratures. Then
\begin{equation}
    V^\rho = \sum_{i=1}^l p_i V^{\rho_i},
\end{equation}
and hence
\begin{equation}
    ||V_{xp}^\rho||_F \leq \sum_{i=1}^l p_i ||V_{xp}^{\rho_i}||_F \leq \max_i ||V^{\rho_i}||_F,
\end{equation}
and therefore,
\begin{equation}
    \mathfrak c_\rho = ||V_{xp}^\rho||_F^2 \leq \max_i \mathfrak c_{\rho_i}.
\end{equation}
\end{proof}


\subsection{Some important non-free operations for symplectic coherence}\label{app:sc_non_free}

\noindent In this section, we highlight some experimentally relevant operations which are not free.

\medskip 

\noindent \textbf{Heterodyne measurement:} Post-selected heterodyne measurements are not free for symplectic coherence. To prove this, we give an example where symplectic coherenct increases after measurement. Consider the state
\begin{equation}
    \rho = \frac12 \ket 0 \bra 0 \otimes \ket{ \xi}\bra{\xi} + \frac12 \ket{-r}\bra{-r} \otimes \ket {0} \bra{0},
\end{equation}
where $\ket{ \xi}\bra{\xi}$ is a rotated squeezed vacuum state with $\xi = r e^{i\theta}$ with $r > 0$, whereas $\ket{-r}$ is a single-mode squeezed state, squeezed along the momentum direction. The off-block-diagonal matrix of its covariance matrix is given by
\begin{equation}
    V_{xp}^\rho = \begin{bmatrix}
        0 && 0 \\ 0 && \frac12 \sinh (2r) \sin (2\theta)
    \end{bmatrix},
\end{equation}
and its symplectic coherence is given by
\begin{equation}
    \mathfrak c_\rho = \frac14 \sinh^2(2r) \sin^2 (2\theta).
\end{equation}
Now, if we post select on the vacuum state $\ket{0}\bra{0}$ on the first mode, corresponding to a $0$ outcome for heterodyne measurement, the post measurement state is 
\begin{equation}
    \sigma = \frac{\bra{0}\rho\ket{0}}{\Tr[\bra{0}\rho\ket{0}]}  =\frac{\cosh(r)}{1+\cosh(r)} \ket{\xi}\bra{\xi} + \frac{1}{1+\cosh(r)} \ket 0 \bra 0.
\end{equation}
This follows from the fact that $|\langle0|-r\rangle|^2 = 1/\cosh(r).$ Its off-diagonal covariance matrix element is
\begin{equation}
    V_{xp}^\sigma = 
        \frac{\cosh(r)}{1+\cosh(r)} \sinh (2r) \sin (2\theta),
\end{equation}
and therefore its symplectic coherence is given by
\begin{equation}
    \mathfrak{c}_\sigma =\frac{\cosh^2(r)}{(1+\cosh(r))^2} \sinh^2(2r) \sin^2 (2\theta).
\end{equation}
Since $\frac{\cosh^2(r)}{(1+\cosh(r))^2} > \frac14$ for all $r \neq 0 $. Therefore, for all $\theta \neq n \times \pi/2$, where $n \in \N$, $\mathfrak c$ increases after the measurement.

\medskip

\noindent \textbf{Homodyne measurements:} Post-selected homodyne measurements are not free for symplectic coherence. Consider again the state
\begin{equation}
     \rho = \frac12 \ket 0 \bra 0 \otimes \ket{ \xi}\bra{\xi} + \frac12 \ket{-r}\bra{-r} \otimes \ket {0} \bra{0},
\end{equation}
with symplectic coherence is given by
\begin{equation}
    \mathfrak c_\rho = \frac{1}4 \sinh^2(2r) \sin^2 (2\theta).
\end{equation}
Denoting position eigenstates as $\ket{q}_x$, we have
\begin{equation}
    |\langle-r|q\rangle_x|^2 = \frac{e^{-r}}{\sqrt{2\pi}}\exp\left(-\frac{q^2}{2} e^{-2r}\right).
\end{equation}
Therefore, if we post select on the homodyne measurement $\ket {0}\bra{0}_x$, the post-selected state is
\begin{equation}
    \sigma = \frac{\bra{0}\rho\ket{0}_x}{\Tr[\bra{0}\rho\ket{0}_x]}  = \frac1{1+e^{-r}}\ket{\xi}\bra{\xi} + \frac{e^{-r}}{1+e^{-r}} \ket{0}\bra{0}.
\end{equation}
Therefore its off-diagonal covariance matrix element is given by
\begin{equation}
    V_{xp}^\sigma = \frac{1}{1+e^{-r}} \sinh(2r)\sin(2\theta),
\end{equation}
and its symplectic coherence is given by
\begin{equation}
    \mathfrak c_\sigma = \frac{1}{(1+e^{-r})^2} \sinh^2(2r)\sin^2(2\theta).
\end{equation}
Since $\frac{1}{(1+e^{-r})^2} > \frac14$ for all $r > 0$. Therefore, for all $\theta \neq 2\pi$, symplectic coherence increases after the measurement.

\medskip

\noindent\textbf{Classical mixing for non-zero first moments:} To see this, consider the classical mixing
\begin{equation}
    \rho = \frac12 \rho_1 + \frac12 \rho_2.
\end{equation}
For such $\rho$,
\begin{eqnarray}
    V_{x_i,p_j}^\rho &=& \frac12\Tr[\{\hat x_i \hat p_j\} \rho] - \Tr[\hat x_i \rho] \Tr[\hat p_j \rho] \nonumber \\
    &=& \frac12 \left(\frac{\Tr[\{\hat x_i \hat p_j\} \rho_1]}{2} - \Tr[\hat x_i \rho_1] \Tr[\hat p_j \rho_1]\right) + \left(\frac{\Tr[\{\hat x_i \hat p_j\} \rho_2]}{2} - \Tr[\hat x_i \rho_2] \Tr[\hat p_j \rho_2]\right) \nonumber \\
    & & + \frac{\Tr[\hat x_i \rho_1] \Tr[\hat p_j \rho_1]}{4} + \frac{\Tr[\hat x_i \rho_2] \Tr[\hat p_j \rho_2]}{4}  - \frac{\Tr[\hat x_i \rho_1] \Tr[\hat p_j \rho_2]}{4} - \frac{\Tr[\hat x_i \rho_2] \Tr[\hat p_j \rho_1]}{4} \nonumber\\
    &=& \frac12 V_{x_i,p_j}^{\rho_1} + \frac12 V_{x_i,p_j}^{\rho_2} + \frac14 (\langle x_i\rangle_1 - \langle x_i \rangle_2)(\langle p_j\rangle_1 - \langle p_j \rangle_2).
\end{eqnarray}
Therefore,
\begin{equation}
    V_{xp}^\rho = \frac12 V_{xp}^{\rho_1} + \frac12 V_{xp}^{\rho_2} + \frac14 ((\bm d_x)_1 - (\bm d_x)_2) ((\bm d_p)_1 - (\bm d_p)_2)^T,
\end{equation}
where $\bm d_x^1$ ($\bm d_p^1$) is the first moment vector of position (momentum) quadratures in $\rho_1$, and similarly for $\rho_2$. Therefore, the covariance matrix is not simply a convex combination of covariance matrices of the individual states and therefore we cannot guarantee that the symplectic coherence is non-increasing under classical mixing.

As an example, consider the convex combination of the single-mode vacuum state and the coherent state $\ket \alpha$ with first moment of position and momentum quadrature being $\alpha_R$ and $\alpha_I$ respectively, with $\alpha_R, \alpha_I \neq 0$. Now, $\mathfrak c_{\ket 0 \bra 0} = 0$ and $\mathfrak c_{\ket \alpha \bra \alpha} = 0$, but for
\begin{equation}
    \rho = \frac12 \ket 0 \bra 0 + \frac12 \ket \alpha \bra \alpha,
\end{equation}
we have
\begin{equation}
    V_{xp}^\rho = \frac14 \alpha_R \alpha_I,
\end{equation}
and hence
\begin{equation}
    \mathfrak c_\rho = \frac{1}{16} \alpha_R^2 \alpha_I^2.
\end{equation}
This provides a concrete example of increasing symplectic coherence by classical mixing.
\subsection{A no-go result for the monotonicity of faithful measures of position-momentum correlations}\label{app:active_increase_sc}
\noindent We restate Theorem \ref{theo:active_increase_sc}:
\begin{theo}
    [Position-momentum correlation monotonicity no-go]\label{apptheo:active_increase_sc}
    There is no faithful measure of position-momentum correlations which is non-increasing under all block-diagonal Gaussian symplectic operations.
\end{theo}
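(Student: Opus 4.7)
The plan is to show that non-increase under every block-diagonal Gaussian symplectic operation forces any position-momentum correlation quantifier to be invariant under arbitrary similarity transformations of $V_{xp}^\rho$, and then to exploit the orbit structure of similarity on nilpotent matrices to contradict faithfulness.

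First I would compute the action of a block-diagonal Gaussian symplectic on the covariance matrix. For the symplectic matrix $S_A = \begin{bmatrix} A & 0 \\ 0 & (A^T)^{-1} \end{bmatrix}$ with $A$ real and invertible, the transformation $V^\rho \mapsto S_A V^\rho S_A^T$ acts on the cross block as $V_{xp}^\rho \mapsto A\,V_{xp}^\rho\,A^{-1}$. Any such invertible $A$ is admissible, and so is $A^{-1}$.

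Next, let $f$ be a faithful measure of position-momentum correlations, viewed (as in the paper's discussion preceding \cref{defi:sc_expr}) as a function of $V_{xp}^\rho$. Applying the monotonicity assumption to the operations parametrized by $A$ and by $A^{-1}$ gives
\begin{equation}
f(A V_{xp} A^{-1}) \le f(V_{xp}) \quad \text{and} \quad f(V_{xp}) = f(A^{-1}(A V_{xp} A^{-1})A) \le f(A V_{xp} A^{-1}),
\end{equation}
so $f(A V_{xp} A^{-1}) = f(V_{xp})$ for every invertible real $A$. Thus $f$ must be constant on similarity orbits in $\R^{m\times m}$.

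The main step is then to display a nonzero matrix whose similarity orbit accumulates at $0$. For $m \ge 2$ take $V_{xp} = E_{12}$, the matrix with a single $1$ in position $(1,2)$. With the diagonal choice $A_s = \mathrm{diag}(s,1,\dots,1)$, a direct computation gives $A_s E_{12} A_s^{-1} = s E_{12}$. By similarity invariance, $f(sE_{12}) = f(E_{12})$ for every $s > 0$. Letting $s \to 0$, any measure that is continuous (or even just lower semi-continuous) in its argument — in particular any matrix norm, as motivated in \cref{subsec:prelims_subsec_matrix_norms} — satisfies $f(sE_{12}) \to f(\mathbf{0}_m) = 0$, forcing $f(E_{12}) = 0$ and contradicting faithfulness. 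If one moreover restricts attention to matrix norms, absolute homogeneity yields $s\,f(E_{12}) = f(E_{12})$ for all $s > 0$, so the contradiction is immediate without any limiting argument. For $m=1$ the off-diagonal block is a scalar and monotonicity is automatic, so the statement is vacuous; the construction applies in the physically relevant case $m \ge 2$, and for $m > 2$ one simply pads $E_{12}$ with zero rows/columns and $A_s$ with ones.

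The main obstacle is a technical subtlety rather than an algebraic one: without any regularity hypothesis on $f$ the theorem would fail, since the indicator $f(V_{xp}) = \mathbf{1}[V_{xp}\ne 0]$ is faithful and trivially invariant under invertible similarities. I would therefore make explicit that ``measure'' here is taken to be (lower semi-)continuous, which covers every natural candidate — in particular all matrix norms, the setting emphasized in the paper — and under this mild assumption the argument above delivers the no-go result.
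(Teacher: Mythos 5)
Your proposal is correct and follows the same skeleton as the paper's proof---compute that the block-diagonal symplectic acts on the cross block as $V_{xp}\mapsto A V_{xp}A^{-1}$, apply monotonicity to both $A$ and $A^{-1}$ to force invariance on similarity orbits, and then use the nilpotent witness $E_{12}$---but your final inference is different and, in fact, more careful. The paper concludes by citing a result that a similarity-invariant function depends only on the eigenvalues, so that $E_{12}$ and $\bm 0_m$ must receive the same value; as stated that implication requires a regularity hypothesis (the indicator $\mathbf 1[M\neq \bm 0_m]$ is similarity-invariant, faithful, and not a function of eigenvalues alone), a gap the paper does not flag. You instead exhibit the explicit orbit $A_s E_{12}A_s^{-1}=sE_{12}$ with $A_s=\mathrm{diag}(s,1,\dots,1)$ and close the argument either by continuity ($s\to0$) or, for matrix norms, by absolute homogeneity, making the needed assumption explicit and the contradiction self-contained without the external citation. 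This buys a cleaner and demonstrably correct statement of what is actually being proved; the only cost is that you must add (lower semi-)continuity or homogeneity as an explicit hypothesis on the measure, which is harmless given that the paper's intended class of measures is matrix norms of $V_{xp}$. Your remarks on the $m=1$ and $m>2$ cases are also correct.
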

\begin{proof}
    The symplectic matrix corresponding to a block-diagonal symplectic Gaussian $\hat{G}_{BD}$ is given by
    \begin{equation}
        S_{BD} = \begin{bmatrix}
            A && 0 \\ 0 && (A^T)^{-1}
        \end{bmatrix},
    \end{equation}
where $A$ is any real and invertible matrix. Its action on a covariance matrix $V^\rho$ is given by
    \begin{equation}
    V^\rho \mapsto \begin{bmatrix}A && 0 \\ 0 && (A^T)^{-1}\end{bmatrix} \begin{bmatrix}
        V_x^\rho && V_{xp}^\rho \\
        (V_{xp}^\rho)^T && V_p^\rho
    \end{bmatrix} \begin{bmatrix}A^T&& 0 \\ 0 && A^{-1}\end{bmatrix} = \begin{bmatrix}
        A V_x^\rho A^T && A V_{xp}^\rho A^{-1} \\ (A^T)^{-1} (V_{xp}^\rho)^T A^T && (A^T)^{-1} V_p^\rho A^{-1}
    \end{bmatrix}.
\end{equation}
Therefore, any faithful monotone (let us denote it by $||.||_{\mathcal{FM}}$) of position-momentum correlations which is non-increasing under all block-diagonal Gaussian symplectic operations should satisfy
\begin{eqnarray}
    ||M||_{\mathcal{FM}} &=& 0, \text{ if and only if } M = \bm 0_m, \nonumber \\
    ||AMA^{-1}||_{\mathcal{FM}} &\leq& ||M||_{\mathcal{FM}},
\end{eqnarray}
for any real matrix $M$ and real, invertible, non-orthogonal matrix $A$. This implies
\begin{equation}
    ||M||_{\mathcal{FM}} = ||A^{-1}(AMA^{-1})A||_{\mathcal{FM}} \leq ||AMA^{-1}||_{\mathcal{FM}}.
\end{equation}
The only way both inequalities are satisfied is when
\begin{equation}
    ||AMA^{-1}||_{\mathcal{FM}} = ||M||_{\mathcal{FM}},
\end{equation}
for any real matrix $M$ and real invertible matrix $A$.
This implies that $M\mapsto||M||_{\mathcal{FM}}$ is a function of the eigenvalues of $M$ only \cite{Grossmann2021}. However, in that case, the monotone $||M||_\mathcal{FM}$ will give the amount of position-momentum correlations for a quantum state with off block-diagonal matrix
\begin{equation}\label{appeq:finite_PM_covar}
    V_{xp} = \begin{bmatrix}
        0 && 1 \\ 0 && 0
    \end{bmatrix} 
\end{equation}
and for a quantum state with off block-diagonal matrix 
\begin{equation}
      V_{xp} = \begin{bmatrix}
        0 && 0 \\ 0 && 0
    \end{bmatrix}  
\end{equation}
to be the same and equal to zero, even though the quantum state with $V_{xp}$ given by \cref{appeq:finite_PM_covar} is not a free state (it has non-zero position-momentum correlations). Therefore $||M||_\mathcal{FM}$ will not be faithful and we can conclude that we cannot build a faithful monotone which is non-increasing under all block-diagonal Gaussian operations.
\end{proof}


\subsection{Symplectic coherence and quantum discord}\label{app:zero_sc_zero_discord}
\noindent To recap, we define a mapping $\mathcal M$ that allows us to map covariance matrix of a quantum state as density matrix of a virtual $2m$ dimensional qubit-qudit state:
\begin{defi}[Covariance matrix to density matrix mapping \cite{Barthe2025}]\label{appdefi:Covariance_to_density} 
    Given a covariance matrix
    \begin{equation}
        V = \begin{bmatrix}
            V_x && V_{xp} \\
            V_{xp}^T && V_p
        \end{bmatrix},
    \end{equation}
    we define a mapping $\mathcal M: V \mapsto \varrho$ which maps $V$ onto the density matrix $\varrho$ in the $2m$-dimensional qubit-qudit Hilbert space $\mathcal H_2 \otimes \mathcal H_m$ where the state of the qubit $\ket{0}$ or $\ket{1}$ denotes the position or momentum quadratures and the state of the qudit is the mode number we are dealing with. Including the normalization of the obtained virtual state, this mapping gives
    \begin{equation}\label{appeq:covariance_to_density}
        \mathcal{M}\left(V = \begin{bmatrix}
            V_x && V_{xp} \\
            V_{xp}^T && V_p
        \end{bmatrix}\right) = \varrho = \frac{1}{\Tr[V]} \begin{bmatrix}
            V_x && V_{xp} \\
            V_{xp}^T && V_p
        \end{bmatrix},
    \end{equation}
    such that $V_x$ denotes the block for $\ket{0}\bra{0} \otimes \ket{\bm i}\bra{\bm j}$, $V_{xp}$ for $\ket{0}\bra{1} \otimes \ket{\bm i}\bra{\bm j}$, etc., where $\ket{\bm i}$ are the qudit computational basis states (Figure \ref{fig:covar_to_density}).
\end{defi}
\noindent Then we have the following result:
\begin{theo}[Position-momentum correlations and quantum discord]\label{apptheo:zero_sc_zero_discord}
    Under the mapping $\mathcal{M}$ given in \cref{appdefi:Covariance_to_density}, a covariance matrix $V^\rho$ with $\rho \in \mathcal C$ maps to a classical-quantum virtual qubit-qudit state $\varrho = \mathcal M(V^\rho)$ with zero quantum discord with respect to computational basis measurements of the qubit register. Furthermore, the symplectic coherence of $\rho$ can be related to geometric quantum discord (\cref{eq:gqd}),with Hilbert-Schmidt distance as the distance measure, with respect to computational basis measurements in the qubit subsystem of $\varrho$ as
\begin{equation}
    \mathfrak c_\rho = \frac{\Tr[V^\rho]^2}2 D_G^{HS}(\varrho) = 8\left(\Tr[\rho \hat E] - \bm d^T \bm d\right)^2 D_G^{HS}(\varrho).
\end{equation}
\end{theo}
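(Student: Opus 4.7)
The plan is to establish the theorem in two independent steps, first showing the structural correspondence between free states and classical-quantum states, then turning the geometric statement into an identity via the norm identity derived in Appendix A.

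First, to prove the classical-quantum correspondence, I would take an arbitrary $\rho\in\mathcal C$, for which by \cref{defi:free_states} we have $V_{xp}^\rho=0$. Applying the mapping $\mathcal M$ of \cref{appdefi:Covariance_to_density} then yields
\begin{equation}
    \varrho = \frac{1}{\Tr[V^\rho]}\begin{bmatrix} V_x^\rho & 0 \\ 0 & V_p^\rho \end{bmatrix} = \ket{0}\bra{0}\otimes\frac{V_x^\rho}{\Tr[V^\rho]} + \ket{1}\bra{1}\otimes\frac{V_p^\rho}{\Tr[V^\rho]}.
\end{equation}
Since $V_x^\rho,V_p^\rho\succ0$ by the covariance matrix inequalities \cref{eq:covar_inequality}, both blocks are positive semi-definite, so $\varrho$ is precisely of the form characterizing the set $\mathcal{CQ}$ recalled in Appendix \ref{app:rdiscord}. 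This shows $\varrho=\mathcal M(V^\rho)$ has vanishing quantum discord with respect to computational basis measurements of the qubit register.

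Second, to relate $\mathfrak c_\rho$ to $D_G^{HS}(\varrho)$, I would invoke \cref{applem:closestCQstate}, which gives $D_G^{HS}(\varrho)=2\|\varrho_{01}\|_F^2$ for the restricted geometric discord. The mapping $\mathcal M$ identifies the off-diagonal block $\varrho_{01}$ with $V_{xp}^\rho/\Tr[V^\rho]$, so
\begin{equation}
    D_G^{HS}(\varrho) = \frac{2\|V_{xp}^\rho\|_F^2}{\Tr[V^\rho]^2} = \frac{2\mathfrak c_\rho}{\Tr[V^\rho]^2},
\end{equation}
by \cref{defi:sc_expr}. Rearranging yields the first equality of \cref{eqn:sc_gd}. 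The second equality then follows from \cref{eqn:energy_covariance matrix}, which expresses $\Tr[V^\rho]$ in terms of the average energy and the first-moment vector.

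No genuine obstacle is expected here: the first part is a direct unpacking of definitions, and the second is a substitution of the Frobenius identity established in Appendix \ref{app:rdiscord}. The most subtle point is simply that the restriction to computational-basis measurements of the qubit register (rather than minimizing over all qubit bases) is what makes $D_G^{HS}$ coincide cleanly with $2\|\varrho_{01}\|_F^2$; this is precisely the reason the authors introduced the restricted definition of discord in Appendix \ref{app:rdiscord}, and it is what ultimately allows the one-to-one identification of $V_{xp}^\rho$ with the discord-generating block of $\varrho$.
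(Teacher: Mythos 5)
Your proposal is correct and follows essentially the same route as the paper's own proof: a direct unpacking of the mapping $\mathcal M$ to exhibit the classical-quantum form $\ket{0}\bra{0}\otimes(\cdot)+\ket{1}\bra{1}\otimes(\cdot)$ when $V_{xp}^\rho=0$, followed by the identity $D_G^{HS}(\varrho)=2\|\varrho_{01}\|_F^2$ from \cref{applem:closestCQstate} together with $\varrho_{01}=V_{xp}^\rho/\Tr[V^\rho]$ and \cref{eqn:energy_covariance matrix}. No gaps; your closing remark about the restriction to computational-basis measurements being what makes the identification clean matches the paper's own discussion.
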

\begin{proof}
    If $\rho \in \mathcal C$,
    \begin{equation}
        V_{xp}^\rho = (V_{xp}^\rho)^T = 0.
    \end{equation}
    Then, $\varrho = \mathcal{M} (V^\rho)$ can be written as
    \begin{eqnarray}\label{eq:zero_discord}
        \rho &=& \frac{1}{\Tr[V^\rho]} \ket{0}\bra{0}\otimes \sum_{\bm {ij}} (V_x^\rho)_{\bm {ij}}\ket{\bm i}\bra{\bm j} + \frac{1}{\Tr[V^\rho]} \ket{1}\bra{1}\otimes \sum_{\bm {ij}} (V_p^\rho)_{\bm {ij}}\ket{\bm i}\bra{\bm j} \nonumber \\
        &=& p_x \ket 0 \bra 0 \otimes \varrho_x + p_p \ket 1 \bra 1 \otimes \varrho_p,
        \end{eqnarray}
    where $\ket {\bm i}$ denotes the computational basis of the qudit register, and where we have defined
    \begin{eqnarray}
        &&p_x \coloneqq \frac{\Tr[V_x^\rho]}{\Tr[V^\rho] }, \hspace{5mm} p_p \coloneqq \frac{\Tr[V_p^\rho]}{\Tr[V^\rho] }, \nonumber \\
        && \varrho_x = \frac{1}{\Tr[V_x^\rho]} \sum_{\bm {ij}} (V_x^\rho)_{\bm {ij}}\ket{\bm i}\bra{\bm j}, \nonumber \\
        &&\varrho_p \coloneqq \frac{1}{\Tr[V_p^\rho]} \sum_{\bm {ij}} (V_p^\rho)_{\bm {ij}}\ket{\bm i}\bra{\bm j}.
    \end{eqnarray}
    Eq.~\ref{eq:zero_discord} is exactly the condition for zero quantum discord with respect to computational basis measurements of the qubit register \cite[Section I]{Streltsov2017}.

    To prove the second part, recall that the geometric quantum discord of a qubit-qudit quantum state $\varrho$ with Hilbert-Schmidt distance as the distance measure with respect to computational basis measurements in the qubit can be expressed in terms of the off-block diagonal matrix of 
\begin{equation}
    \varrho = \begin{bmatrix}
        \varrho_{00} && \varrho_{01} \\ \varrho_{01}^T && \varrho_{11}
    \end{bmatrix}
\end{equation}
as (\cref{appeq:gqd_exp})
\begin{equation}
    D_G^{HS}(\varrho) = 2||\varrho_{01}||_1^2.
\end{equation}
Furthermore, using the mapping $\mathcal{M}$, for $\varrho = \mathcal M(V^\rho)$, $\varrho_{01} = \frac{V_{xp}}{\Tr[V^\rho]}$. Therefore, we finally get that
\begin{equation}
    D_G^{HS}(\varrho) = 2\frac{||V_{xp}||_F^2}{\Tr[V^\rho]^2} = 2 \frac{\mathfrak c_\rho}{\Tr[V^\rho]^2}.
\end{equation}
Finally, remembering that
\begin{equation}
    \Tr[\rho \hat E] = \frac{\Tr[V^\rho] + \bm d^T \bm d}{4}
\end{equation}
we get that
\begin{equation}
    \mathfrak c_\rho = \frac{\Tr[V^\rho]^2}2 D_G^{HS}(\varrho) = 8\left(\Tr[\rho \hat E] - \bm d^T \bm d\right)^2 D_G^{HS}(\varrho).
\end{equation}
    \end{proof}

    Note that we may also prove a converse result: let us look at the class of qubit-qudit states forming the image of the mapping $\mathcal M$, the class of real symmetric matrices $\varrho$ 
    \begin{equation}
        \varrho = \begin{bmatrix}
            \varrho_0 && \varrho_{01} \\
            \varrho_{01}^T && \varrho_1
        \end{bmatrix}
    \end{equation}
    satisfying
    \begin{eqnarray}
        \varrho &\succ& 0, \nonumber \\
        \varrho + \frac{i\Omega}{c} &\succeq& 0, \text{ for some constant } c \geq 2m. \nonumber \\
        \varrho_0 &\succ& 0, \nonumber \\
        \varrho_1 &\succ& 0.
    \end{eqnarray}
    Then, zero discord with respect to computational basis measurements of the qubit would mean that $\rho = p_0 \ket 0 \bra0 \otimes \sum_{\bm i,\bm j = 1}^m (\varrho_0)_{i,j} \ket{\bm i}\bra{\bm j} + p_1 \ket 1 \bra1 \otimes \sum_{\bm i,\bm j = 1}^m (\varrho_1)_{i,j} \ket{\bm i}\bra{\bm j} $ and hence can be written in the form
    \begin{equation}
        \varrho = \begin{bmatrix}
            p_0 \varrho_0 && 0 \\
            0 && p_1 \varrho_1
        \end{bmatrix}.
    \end{equation}
    Given $c$ such that $\varrho + \frac{i\Omega}{c} \succeq 0$, we use  the inverse mapping $\mathcal M^{-1}$ to define a covariance matrix $V = c\varrho$ (However, note that instead of $c$, multiplying $\varrho$ by any other constant $d \geq c$ also gives a valid covariance matrix). This will map to a covariance matrix with zero symplectic coherence, since $V_{xp} = V_{xp}^T=0$.

\subsection{Maximal symplectic coherence of quantum states}\label{app:max_sc}
\noindent We restate Theorem \ref{theo:max_sc}:
\begin{theo}[Maximal symplectic coherence of quantum states]
         Given an $m$-mode quantum state $\rho$ with covariance matrix $V^\rho$ such that $\Tr[V^\rho] = E$, the symplectic coherence of $\rho$ is upper bounded as
    \begin{eqnarray}\label{appeq:max_sc}
        \mathfrak{c}(\rho) &\leq& \frac{(E-2m)^2}{4} + (E-2m) \nonumber \\ &\equiv& \mathfrak c_\mathrm{max}(E,m).
    \end{eqnarray}
     Moreover, the set of pure Gaussian states of maximal symplectic coherence is given by $\mathcal{S}^{\max}(E)$ (Definition \ref{defi:max_SC_states}), while the covariance matrix of set of all states with maximal symplectic coherence is contained in the convex hull of covariance matrices of states in $\mathcal{S}^{\max}(E)$.
\end{theo}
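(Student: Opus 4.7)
My plan is to establish the bound first for pure Gaussian states and then extend to all states via convexity. Gaussification lets me replace any state by a Gaussian with the same covariance matrix, hence the same symplectic coherence and the same $\Tr[V^\rho]$; so it suffices to treat Gaussians. For a mixed Gaussian, I invoke \cite[Theorem 3]{parthasarathy2011} to write $V^\rho = \tfrac12(V^{G_1} + V^{G_2})$ with $V^{G_1}, V^{G_2}$ pure Gaussian covariance matrices. In the Williamson-based construction, each symplectic eigenvalue $\nu_i \geq 1$ of $V^\rho$ splits as $\nu_i = \tfrac12(u_i+v_i)$ with $u_i v_i = 1$, so $\Tr[V^{G_1}] = \Tr[V^{G_2}] = \Tr[V^\rho] = E$. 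Jensen's inequality for the convex function $A \mapsto ||A||_F^2$ then gives $\mathfrak c_\rho \leq \tfrac12(\mathfrak c_{G_1} + \mathfrak c_{G_2}) \leq \max_i \mathfrak c_{G_i} \leq \mathfrak c_\mathrm{max}(E,m)$, once the pure case is established. Tracking equality across each step forces $V_{xp}^{G_1}=V_{xp}^{G_2}$ and $G_1, G_2 \in \mathcal S^{\max}(E)$, yielding the convex-hull statement.

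\textbf{Pure Gaussian case via Bloch--Messiah and Cartan.} For a pure Gaussian, Eq.~\eqref{eqn:covar_pure_gaussian} gives $V^\rho = S_U \mathrm{diag}(Z^2, Z^{-2}) S_U^T$ with $S_U$ orthogonal symplectic. Applying the Cartan decomposition $S_U = S_{O_2} S_{R(\vec\theta)} S_{O_1}$ and expanding block by block yields $V_{xp}^\rho = O_2(S \tilde Y C - C \tilde X S) O_2^T$, where $C = \mathrm{diag}(\cos\theta_i)$, $S = \mathrm{diag}(\sin\theta_i)$, $\tilde X = O_1 Z^2 O_1^T$, and $\tilde Y = O_1 Z^{-2} O_1^T$. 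Orthogonal invariance of $||\cdot||_F$ eliminates $O_2$, reducing the task to maximizing $||M||_F^2$, with $M \coloneqq S\tilde Y C - C\tilde X S$, over $\{r_i, \vec\theta, O_1\}$ subject to $\sum_i (e^{2r_i}+e^{-2r_i}) = E$.

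\textbf{Concentration of energy and equality cases.} Writing each entry of $M$ as a linear combination of $\sin(\theta_i \pm \theta_j)$ weighted by entries of $O_1 D_\pm O_1^T$, with $D_\pm \coloneqq \mathrm{diag}(e^{2r_k} \pm e^{-2r_k})$, I will combine the hyperbolic identity $(e^{2r}+e^{-2r})^2 - (e^{2r}-e^{-2r})^2 = 4$ with Cauchy--Schwarz-type matrix norm inequalities to show that, under the trace constraint, the maximum of $||M||_F^2$ is attained when a single $r_k$ is nonzero. At this single-mode configuration, with $e^{2r}+e^{-2r} = E - 2(m-1)$, direct computation gives $\sinh^2(2r) = \tfrac{(E-2m)^2}{4} + (E-2m) = \mathfrak c_\mathrm{max}(E,m)$, and a final optimization over $\vec\theta$ and $O_1$ pins down the equality cases: these are precisely the conditions under which the off-$(1,1)$ entries of $M$ vanish and the $(1,1)$ entry saturates the single-mode bound, reproducing exactly the constraints $2|A_{1j}| = \delta_{1j}$ for $A \in \{O^T C O,\, O^T S O,\, O^T (CS) O\}$ of Definition~\ref{defi:max_SC_states}. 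The main obstacle is this concentration argument: $||M||_F^2$ couples $r_k$ to $\vec\theta$ and $O_1$ through Hadamard products and the energy constraint is nonlinear in $r_k$, so identifying matrix-norm inequalities that are tight and simultaneously pin down the Definition~\ref{defi:max_SC_states} equality conditions is the main technical challenge.
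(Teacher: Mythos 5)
Your overall architecture (pure Gaussian case via Bloch--Messiah plus Cartan, then extension to mixed and non-Gaussian states via Gaussification and the Parthasarathy two-pure-state decomposition) matches the paper's, and your block computation of $V_{xp}^\rho$ and the final single-mode evaluation $\sinh^2(2r)=\mathfrak c_{\max}(E,m)$ are correct. However, there are two genuine gaps. First, in the mixed-state reduction you assert that the decomposition $V^\rho=\tfrac12(V^{G_1}+V^{G_2})$ automatically satisfies $\Tr[V^{G_1}]=\Tr[V^{G_2}]=E$. This does not follow: the Williamson symplectic $S$ is generally not orthogonal, so $\Tr[S D_i S^T]$ depends on $S^TS\neq\I$ and the two pure components can carry different energies $E_1\neq E_2$, with only $\tfrac12 E_1+\tfrac12 E_2=E$ guaranteed. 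Your route then breaks down in a way that cannot be patched by convexity of $\|\cdot\|_F^2$: since $\mathfrak c_{\max}(x,m)$ is \emph{convex} in $x$, averaging the per-component bounds gives $\tfrac12\mathfrak c_{\max}(E_1,m)+\tfrac12\mathfrak c_{\max}(E_2,m)\geq\mathfrak c_{\max}(E,m)$, i.e.\ an inequality in the wrong direction. The paper avoids this by working with the unsquared norm, applying the triangle inequality $\|V_{xp}^\rho\|_F\leq\tfrac12\|V_{xp}^{G_1}\|_F+\tfrac12\|V_{xp}^{G_2}\|_F$ and then Jensen for the \emph{concave} function $f(x)=\sqrt{x^2/4+x}$; equality in that Jensen step is also what forces $E_1=E_2=E$ in the characterization of the optimal covariance matrices.

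Second, the core of the pure-Gaussian bound --- showing that the off-diagonal block norm is at most $\tfrac14\|Z^2-Z^{-2}\|_F^2$ and that the trace constraint then forces all squeezing into a single mode --- is only announced, and you yourself flag it as the unresolved ``main technical challenge.'' The paper supplies this via three specific lemmas: $\|A^{1/2}MA^{-1/2}\|_F^2\geq\|M\|_F^2$ for $A\succ0$ and symmetric $M$; $\|CAS-CA^{-1}S\|_F^2\geq\|CAS-SA^{-1}C\|_F^2$; and $\|C(A-A^{-1})S\|_F^2\leq\tfrac14\|A-A^{-1}\|_F^2$ (proved by splitting $M^{1/2}C^2M^{1/2}+M^{1/2}S^2M^{1/2}=M$), followed by the scalar step $\sum_i(d_i+d_i^{-1}-2)^2\leq\bigl(\sum_i(d_i+d_i^{-1}-2)\bigr)^2$, which is where the concentration of energy into one mode actually arises. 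Without these inequalities (or equivalents), neither the bound nor the equality conditions reproducing Definition~\ref{defi:max_SC_states} are established.
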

\noindent We separate the proof in three parts:
\begin{itemize}
    \item Maximal symplectic coherence of pure Gaussian states (section \ref{appsubsec:max_sc_pure}).
      \item Structure of pure Gaussian states with maximal symplectic coherence (section \ref{appsubsec:max_structure}).
    \item Maximal symplectic coherence of mixed Gaussian states or non-Gaussian states and structure of covariance matrices of maximally symplectic coherent states (section \ref{appsubsec:max_sc_mixed}).
\end{itemize}
\subsubsection{Maximal symplectic coherence of pure Gaussian states}\label{appsubsec:max_sc_pure}

\noindent In this section, we prove the following result:

\begin{lem}[Maximal symplectic coherence of pure Gaussian states]\label{lem:max_sc_pure}
    Given an $m$-mode pure Gaussian state $\rho$ with covariance matrix $V^\rho$ such that $\Tr[V^\rho] = E$, the symplectic coherence of $\rho$ is upper bounded as
    \begin{equation}
        \mathfrak{c}(\rho) \leq \frac{(E-2m)^2}{4} + (E-2m)\equiv \mathfrak c_\mathrm{max}(E,m).
    \end{equation}
\end{lem}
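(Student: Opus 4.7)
The plan is to reduce the problem to a canonical form via the Bloch--Messiah and Cartan decompositions, establish an algebraic identity whose ``correction'' to $\tfrac14\Tr(M^2+M^{-2})$ is manifestly non-negative, and finally optimize over the squeezing spectrum. First, by Bloch--Messiah (Eq.~\ref{eqn:covar_pure_gaussian}) write $V^\rho = S_U\,\mathrm{diag}(Z^2,Z^{-2})\,S_U^T$ with $S_U$ the symplectic matrix of a passive linear unitary, so that $\sum_k(\mu_k+\mu_k^{-1}) = \Tr V^\rho = E$ for $\mu_k := e^{2r_k}$. By the Cartan decomposition $S_U = S_{O_2} S_\Phi S_{O_1}$ and the invariance of $\|V_{xp}\|_F^2$ under block-diagonal orthogonal conjugation (Theorem~\ref{theo:monotonicity_sc}), the outer factor $S_{O_2}$ can be absorbed; setting $M:=O_1 Z^2 O_1^T$, a symmetric positive-definite matrix with spectrum $\{\mu_k\}$, a short computation gives $V_{xp}^\rho = S M^{-1} C - C M S$, where $C = \mathrm{diag}(\cos\theta_i)$, $S = \mathrm{diag}(\sin\theta_i)$.

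The central (and technically most involved) step is to prove that for every such $M$ and every choice of angles, $\|V_{xp}^\rho\|_F^2 \leq \tfrac14\sum_k(\mu_k - \mu_k^{-1})^2$. Parameterize $C^2 = I/2 + \Delta$, $S^2 = I/2 - \Delta$, and $T := CS$, all diagonal and satisfying $\delta_i^2 + t_i^2 = 1/4$ pointwise. Expanding $\|V_{xp}^\rho\|_F^2 = \Tr[(V_{xp}^\rho)^T V_{xp}^\rho]$ and using cyclicity of the trace together with the symmetry of $M$, the linear-in-$\Delta$ cross-terms cancel and one obtains
\[
\|V_{xp}^\rho\|_F^2 = \tfrac14\Tr(M^2+M^{-2}) - \Tr[(\Delta M)^2] - \Tr[(\Delta M^{-1})^2] - 2\Tr[T M^{-1} T M].
\]
Passing to the eigenbasis of $M$ and writing $\bar\Delta, \bar T$ for the transformed symmetric matrices, the three correction terms become $\sum_{k,p}\bar\Delta_{kp}^2(\mu_k\mu_p + 1/(\mu_k\mu_p)) + \sum_{k,p}\bar T_{kp}^2(\mu_k/\mu_p + \mu_p/\mu_k)$. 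Each weight is at least $2$ by AM--GM, so the correction is bounded below by $2(\|\bar\Delta\|_F^2 + \|\bar T\|_F^2) = 2(\|\Delta\|_F^2 + \|T\|_F^2) = 2\sum_i(\delta_i^2 + t_i^2) = m/2$, using orthogonal invariance of the Frobenius norm. This yields the desired uniform bound $\|V_{xp}^\rho\|_F^2 \leq \tfrac14\sum_k(\mu_k - \mu_k^{-1})^2$, holding for every $M$ and all phase shifts.

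Finally, I would optimize over the spectrum. Set $y_k := \mu_k + 1/\mu_k \geq 2$, so $(\mu_k - 1/\mu_k)^2 = y_k^2 - 4$. Since $\sum_k y_k = E$, the task reduces to maximizing the convex function $\sum_k y_k^2$ on the polytope $\{y_k \geq 2,\ \sum_k y_k = E\}$; the maximum is attained at a vertex, where all excess energy concentrates into a single coordinate, giving $y_1 = E - 2(m-1)$ and $y_k = 2$ for $k \geq 2$. Substituting,
\[
\mathfrak c_\rho \leq \tfrac14\bigl[(E-2m+2)^2 - 4\bigr] = \tfrac{(E-2m)^2}{4} + (E-2m) = \mathfrak c_{\max}(E,m).
\]

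I expect the main obstacle to be the algebraic step above. A naive attack using $\|V_{xp}^\rho\|_F^2 = \tfrac12(\Tr[(V^\rho)^2] - \Tr[(V_x^\rho)^2] - \Tr[(V_p^\rho)^2])$ and a Cauchy--Schwarz lower bound $\Tr[(V_x^\rho)^2] + \Tr[(V_p^\rho)^2] \geq E^2/(2m)$ is provably too weak for $m \geq 2$, because the spectral extremizer (all excess squeezing in one mode) and the block Cauchy--Schwarz extremizer (isotropic $V_x, V_p$) pull in opposite directions. The key is to work instead with the reduced $m$-dimensional matrix $M$, where the constraint $C^2 + S^2 = I$ translates into the clean pointwise identity $\Delta^2 + T^2 = I/4$ which, combined with AM--GM, delivers exactly the $m/2$ correction needed.
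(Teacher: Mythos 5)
Your proof is correct, and while it follows the same overall skeleton as the paper's (Bloch--Messiah plus Cartan to reduce to $V_{xp}=SM^{-1}C-CMS$, the intermediate bound $\|V_{xp}\|_F^2\le\tfrac14\sum_k(\mu_k-\mu_k^{-1})^2$, then optimization over the squeezing spectrum), you establish the central inequality by a genuinely different argument. The paper proves it as a chain of two separate matrix-norm lemmas: first $\|CAS-SA^{-1}C\|_F^2\le\|CAS-CA^{-1}S\|_F^2$ (via the monotonicity $\|A^{1/2}NA^{-1/2}\|_F\ge\|N\|_F$ for symmetric $N$), and then $\|C(A-A^{-1})S\|_F^2\le\tfrac14\|A-A^{-1}\|_F^2$ (via the splitting $E+F=A-A^{-1}$ with $E=M^{1/2}C^2M^{1/2}$ and $\Tr[(E-F)^2]\ge0$). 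You instead derive a single exact identity,
\begin{equation*}
\|V_{xp}\|_F^2 = \tfrac14\Tr(M^2+M^{-2}) - \Tr[(\Delta M)^2] - \Tr[(\Delta M^{-1})^2] - 2\Tr[TM^{-1}TM],
\end{equation*}
and bound the deficit from below by $m/2$ using the pointwise constraint $\delta_i^2+t_i^2=1/4$ together with AM--GM in the eigenbasis of $M$; I have checked the cancellation of the linear-in-$\Delta$ terms and the weight computation, and both are right. Your route is more economical (one identity instead of two lemmas) and has the added benefit of making the equality conditions transparent---the off-diagonal entries $\bar\Delta_{kp}$ and $\bar T_{kp}$ must vanish whenever the corresponding AM--GM weight exceeds $2$---which is exactly the information the paper extracts separately when characterizing the maximizers. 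Your final optimization (convexity of $\sum_k y_k^2$ on the polytope $\{y_k\ge2,\ \sum_k y_k=E\}$, maximum at a vertex) is equivalent to the paper's use of $\sum_i x_i^2\le(\sum_i x_i)^2$ for non-negative $x_i$, and both yield $\tfrac14(E-2m)^2+(E-2m)$.
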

\noindent Before proving Lemma \ref{lem:max_sc_pure}, we would like to prove a few technical results that would be useful in its proof.
\begin{lem}\label{lem:max_SC_technical_1}
    Given a real, positive definite matrix A and a real-symmetric matrix M,
    \begin{equation}\label{appeq:lemma_9_eq}
        ||A^{1/2}MA^{-1/2}||_F^2 \geq ||M||_F^2.
    \end{equation}
\end{lem}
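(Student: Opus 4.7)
My plan is to reduce the inequality to an elementary eigenvalue-weighted inequality via simultaneous manipulation of the trace and the spectral decomposition of $A$, and then dispatch it with AM--GM.

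First, I would rewrite the Frobenius norm squared via the trace: $||A^{1/2} M A^{-1/2}||_F^2 = \Tr\bigl[(A^{1/2} M A^{-1/2})^T (A^{1/2} M A^{-1/2})\bigr]$. Because $A$ is positive definite (so $A^{1/2}$ is symmetric) and $M$ is symmetric, expanding and using cyclicity of the trace gives
\begin{equation}
    ||A^{1/2} M A^{-1/2}||_F^2 = \Tr\bigl[A^{-1} M A M\bigr],
\end{equation}
while $||M||_F^2 = \Tr[M^2]$. So the inequality reduces to $\Tr[A^{-1} M A M] \geq \Tr[M^2]$.

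Next, I would diagonalize $A = U D U^T$ with $D = \mathrm{diag}(d_1,\dots,d_m)$ and $d_i > 0$, and set $N \coloneqq U^T M U$, which remains real symmetric with $||N||_F = ||M||_F$ (orthogonal invariance). A direct computation then gives
\begin{equation}
    \Tr[A^{-1} M A M] = \Tr[D^{-1} N D N] = \sum_{i,j} \frac{d_j}{d_i}\, N_{ij}^2,
\end{equation}
and $\Tr[M^2] = \sum_{i,j} N_{ij}^2$.

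Finally, I would pair the $(i,j)$ and $(j,i)$ terms for $i \neq j$. Using $N_{ij} = N_{ji}$, their combined contribution is $\bigl(\tfrac{d_j}{d_i} + \tfrac{d_i}{d_j}\bigr) N_{ij}^2 \geq 2 N_{ij}^2$ by AM--GM, while the diagonal terms $i=j$ contribute $N_{ii}^2$ on both sides. Summing yields $\sum_{i,j} \tfrac{d_j}{d_i} N_{ij}^2 \geq \sum_{i,j} N_{ij}^2$, which is exactly \eqref{appeq:lemma_9_eq}. I do not anticipate a genuine obstacle here; the only care needed is to confirm that the intermediate identity $\Tr[A^{-1}MAM]$ uses the symmetry of both $A$ and $M$, and that AM--GM is applied after pairing transposed entries, which relies crucially on $M$ being symmetric (and would fail in general without that hypothesis).
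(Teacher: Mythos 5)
Your proposal is correct and follows essentially the same route as the paper: diagonalize $A$, conjugate $M$ by the resulting orthogonal matrix, reduce the Frobenius norm to the weighted sum $\sum_{i,j}\frac{d_i}{d_j}\tilde M_{ij}^2$, symmetrize over transposed index pairs, and apply $x+1/x\geq 2$. The only cosmetic difference is that you pass through the trace identity $\Tr[A^{-1}MAM]$ before diagonalizing, whereas the paper works directly with $\|D^{1/2}\tilde M D^{-1/2}\|_F^2$; the computation is the same.
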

\begin{proof}
Since A is positive definite, we can define
\begin{equation}
    A^{1/2} = O D^{1/2} O^T
\end{equation}
for some orthogonal matrix $O$ and a diagonal matrix $D$ with diagonal entries $d_i$ such that $d_i > 0 \forall i \in \{1,\dots,m\}$. Similarly,
\begin{equation}
    A^{-1/2} = O D^{-1/2} O^T.
\end{equation}
Therefore,
\begin{equation}
    ||A^{1/2}MA^{-1/2}||_F^2 = ||O D^{1/2} O^T M O D^{-1/2} O^T||_F^2 = ||D^{1/2} \tilde M D^{-1/2}||_F^2,
\end{equation}
where we have used the fact that Frobenius norm is invariant under orthogonal matrix, and where we have defined
\begin{equation}
    \tilde M \coloneqq O^T M O.
\end{equation}
Now,
\begin{eqnarray}
    ||D^{1/2} \tilde M D^{-1/2}||_F^2 = \sum_{i,j=1}^m (D^{1/2} \tilde M D^{-1/2})_{i,j}^2 = \sum_{i,j=1}^m \frac{d_i}{d_j} \tilde M_{i,j}^2 = \frac12 \sum_{i,j=1}^m \left(\frac{d_i}{d_j} + \frac{d_j}{d_i}\right) \tilde M_{i,j}^2.
\end{eqnarray}
The last equality follows from the fact that $\tilde M$ is a symmetric matrix. Using the fact that $x + 1/x \geq 2, \forall x > 0$ with equality if and only if $x = 1$, choosing $x = \frac{d_i}{d_j}$, we have, $\forall i,j$,
\begin{equation}
    ||D^{1/2} \tilde M D^{-1/2}||_F^2 \geq \frac{1}{2} \times 2 \sum_{i,j=1}^m \tilde M_{i,j}^2 = ||\tilde M||_F^2 = ||O^T M O ||_F^2 = ||M||_F^2.
\end{equation}
\end{proof}
\begin{lem}\label{lem:max_SC_technical_2}
    Given a $m \times m$ real positive definite matrix $A$, $C = \mathrm{diag}(\cos \theta_1,\dots,\cos \theta_m)$ and $S = \mathrm{diag}(\sin \theta_1,\dots,\sin \theta_m)$,
    \begin{equation}
        ||CAS - CA^{-1}S||_F^2 \geq ||CAS - SA^{-1}C||_F^2.
    \end{equation}
\end{lem}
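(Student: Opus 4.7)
The plan is to recast the difference
\begin{equation}
\Delta \coloneqq ||CAS - CA^{-1}S||_F^2 - ||CAS - SA^{-1}C||_F^2
\end{equation}
as a quadratic-form comparison against the identity on the Hadamard product $A\circ A^{-1}$, and then close the argument with Fiedler's inequality $A\circ A^{-1} \succeq I$. First I would set $X=CAS$ and $Y=CA^{-1}S$, note that $SA^{-1}C = Y^T$ since $A^{-1}$ is symmetric, and expand both squared norms to obtain $\Delta = 2\langle X, Y^T\rangle_F - 2\langle X, Y\rangle_F$. Using the diagonality of $C$ and $S$ together with the symmetry of $A^{\pm 1}$, each Frobenius inner product reduces to a sum weighted by $A_{ij}(A^{-1})_{ij}$; after symmetrizing over $(i,j)$ and applying $(\cos\theta_i\sin\theta_j - \sin\theta_i\cos\theta_j)^2 = \sin^2(\theta_j - \theta_i)$, the expression collapses to
\begin{equation}
\Delta = -\sum_{i,j} A_{ij}(A^{-1})_{ij}\sin^2(\theta_j - \theta_i).
\end{equation}

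Next, I would apply $\sin^2\alpha = \tfrac{1}{2}(1 - \cos 2\alpha)$ with the addition formula for $\cos 2(\theta_j - \theta_i)$, together with $\sum_{i,j} A_{ij}(A^{-1})_{ij} = \Tr(AA^{-1}) = m$ (again using the symmetry of $A^{-1}$), to rewrite
\begin{equation}
\Delta = \tfrac{1}{2}\bigl(\mathbf{c}^T B \mathbf{c} + \mathbf{s}^T B \mathbf{s}\bigr) - \tfrac{m}{2},
\end{equation}
where $B \coloneqq A\circ A^{-1}$, $\mathbf{c}_i \coloneqq \cos 2\theta_i$, and $\mathbf{s}_i \coloneqq \sin 2\theta_i$. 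Fiedler's inequality then gives $B \succeq I$, so $\mathbf{c}^T B \mathbf{c} + \mathbf{s}^T B \mathbf{s} \geq ||\mathbf{c}||^2 + ||\mathbf{s}||^2 = \sum_i(\cos^2 2\theta_i + \sin^2 2\theta_i) = m$, and hence $\Delta \geq 0$. A self-contained one-line proof of $B \succeq I$ follows from the spectral decomposition $A = \sum_k \lambda_k v_k v_k^T$: since $\sum_{k,l}(v_k\odot v_l)(v_k\odot v_l)^T = I$, one obtains $B - I = \sum_{k,l}\tfrac{(\lambda_k-\lambda_l)^2}{2\lambda_k\lambda_l}(v_k\odot v_l)(v_k\odot v_l)^T \succeq 0$.

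The main obstacle is identifying the correct reformulation in the second step. Positive semidefiniteness of $A\circ A^{-1}$ on its own is not enough to bound $\sum_{i,j} A_{ij}(A^{-1})_{ij}\sin^2(\theta_j - \theta_i)$, because the off-diagonal entries of $A\circ A^{-1}$ can be negative and combine with the $\sin^2$ factors with indefinite sign. What makes the argument go through is that, after trigonometric expansion, the constant piece of $\sin^2\alpha$ contributes exactly $\Tr(AA^{-1}) = m$, matching precisely the lower bound that Fiedler's inequality supplies when evaluated on the unit-norm row vectors $(\cos 2\theta_i, \sin 2\theta_i)_i$; engineering this cancellation between the constant term and the Fiedler bound is what converts a delicate sign-indefinite inequality into a direct semidefinite argument.
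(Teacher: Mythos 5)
Your proof is correct, but it takes a genuinely different route from the paper's. The paper expands the difference to $2\lt\Tr[ACSA^{-1}CS]-\Tr[AC^2A^{-1}S^2]\rt$, substitutes $S^2=\I-C^2$, and then applies its auxiliary conjugation lemma $\|A^{1/2}MA^{-1/2}\|_F^2\ge\|M\|_F^2$ twice (to $M=CS$ and $M=C^2$) so that the three resulting terms cancel exactly. You instead push the same expansion all the way to the exact closed form
\begin{equation}
\|CAS-CA^{-1}S\|_F^2-\|CAS-SA^{-1}C\|_F^2=-\sum_{i,j}A_{ij}(A^{-1})_{ij}\sin^2(\theta_j-\theta_i)=\tfrac12\lt\mathbf{c}^TB\mathbf{c}+\mathbf{s}^TB\mathbf{s}\rt-\tfrac m2,
\end{equation}
with $B=A\circ A^{-1}$, and close with Fiedler's inequality $B\succeq \I$; I checked the index bookkeeping, the use of $\sum_{i,j}A_{ij}(A^{-1})_{ij}=\Tr[AA^{-1}]=m$, and your spectral one-liner for Fiedler, and all are sound. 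At bottom both arguments rest on the same scalar fact $\lambda/\mu+\mu/\lambda\ge2$ (it powers the paper's conjugation lemma via $d_i/d_j+d_j/d_i\ge2$ and your Fiedler decomposition via $(\lambda_k-\lambda_l)^2/\lambda_k\lambda_l\ge0$), but your version buys an exact identity for the gap rather than a chain of inequalities: equality is visibly equivalent to $\mathbf{c}$ and $\mathbf{s}$ being null vectors of $B-\I$, which would streamline the equality analysis the paper later performs when characterizing the maximizers. Your closing remark is also apt: since the off-diagonal entries of $A\circ A^{-1}$ can be negative, the $\sin^2$ form alone is sign-indefinite, and the cancellation of the constant $m/2$ against the Fiedler lower bound on the unit-norm vectors $(\cos2\theta_i)_i,(\sin2\theta_i)_i$ is the step that makes the argument close.
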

\begin{proof}
We expand:
    \begin{equation}
        ||CAS - CA^{-1}S||_F^2 - ||CAS - SA^{-1}C||_F^2 = 2\Tr[ACSA^{-1}CS] - \Tr[AC^2A^{-1}S^2] - \Tr[AS^2A^{-1}C^2].
    \end{equation}
    We have
    \begin{equation}
        \Tr[AS^2A^{-1}C^2] = \Tr[(C^2 A^{-1} S^2 A)^T] =\Tr[C^2 A^{-1} S^2 A] = \Tr[AC^2A^{-1}S^2]
    \end{equation}
    since the trace of a matrix and its transpose is the same. Therefore,
    \begin{equation}
        ||CAS - CA^{-1}S||_F^2 - ||CAS - SA^{-1}C||_F^2 = 2(\Tr[ACSA^{-1}CS] - \Tr[AC^2A^{-1}S^2]).
    \end{equation}
    And
    \begin{eqnarray}
        \Tr[ACSA^{-1}CS] - \Tr[AC^2A^{-1}S^2] &=& \Tr[ACSA^{-1}CS] + \Tr[AC^2A^{-1}C^2] - \Tr[C^2] \nonumber \\ &=& ||A^{1/2}CSA^{-1/2}||_F^2 + ||A^{1/2}C^2 A^{-1/2}||_F^2 -\Tr[C^2] \nonumber \\
        &\geq& ||CS||_F^2 + ||C^2||_F^2 - \Tr[C^2] \nonumber \\
        &=& \Tr[C^2S^2] + \Tr[C^4] - \Tr[C^2] \nonumber \\
        &=&0,
    \end{eqnarray}
    where the third line follows from Lemma \ref{lem:max_SC_technical_1} and where we used $CS=SC$ and $C^2+S^2=\I_m$.
\end{proof}
\begin{lem}\label{lem:max_SC_technical_3}
    Given a $m \times m$ positive definite matrix $A$ with eigenvalues greater than equal to one, $C = \mathrm{diag}(\cos \theta_1,\dots,\cos \theta_m)$ and $S = \mathrm{diag}(\sin \theta_1,\dots,\sin \theta_m)$,
    \begin{equation}
        ||CAS - CA^{-1}S||_F^2 \leq \frac14 ||(A - A^{-1})||_F^2.
    \end{equation}
\end{lem}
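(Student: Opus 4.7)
The plan is to observe that $CAS - CA^{-1}S = C(A - A^{-1})S$, and to exploit the key structural fact that $M \coloneqq A - A^{-1}$ is symmetric and positive semi-definite, since the hypothesis $\lambda_i(A) \geq 1$ implies $\lambda_i(A) - 1/\lambda_i(A) \geq 0$ for every eigenvalue. Writing the squared Frobenius norm as a trace and using cyclicity together with the symmetry of $C$, $S$ and $M$, one obtains
\begin{equation}
\|CMS\|_F^2 = \Tr[C^2\, M\, S^2\, M].
\end{equation}

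Next, I would diagonalise $M = Q \Lambda Q^T$ with $\Lambda = \mathrm{diag}(\mu_1,\dots,\mu_m)$ and $\mu_i \geq 0$, and introduce the auxiliary matrices $\tilde C \coloneqq Q^T C^2 Q$ and $\tilde S \coloneqq Q^T S^2 Q$. These satisfy two clean properties: $\tilde C + \tilde S = I$, because $C^2 + S^2 = I$; and both are positive semi-definite with $\tilde C \preceq I$, so in particular $0 \leq \tilde C_{ii} \leq 1$ for every $i$. Expanding the trace in the eigenbasis of $M$ and substituting $\tilde S_{ij} = \delta_{ij} - \tilde C_{ij}$ (using symmetry of $\tilde S$) gives
\begin{equation}
\Tr[C^2 M S^2 M] = \sum_{i,j} \mu_i\mu_j\, \tilde C_{ij}\tilde S_{ij} = \sum_i \mu_i^2\, \tilde C_{ii}(1 - \tilde C_{ii}) \;-\; \sum_{i \neq j} \mu_i \mu_j\, \tilde C_{ij}^2.
\end{equation}

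The positive semi-definiteness of $M$ is the crux of the argument: since $\mu_i \mu_j \geq 0$, the off-diagonal cross term is non-positive and can be dropped. The elementary scalar inequality $x(1-x) \leq 1/4$ on $[0,1]$ then bounds the diagonal sum by $\tfrac{1}{4} \sum_i \mu_i^2 = \tfrac{1}{4} \|M\|_F^2$, which is the claim. The only delicate point in the plan is this sign control: without the hypothesis that the eigenvalues of $A$ are at least one (equivalently, $M \succeq 0$), the $\mu_i$ would have mixed signs, the cross term could become positive, and the scalar bound $\tilde C_{ii}(1 - \tilde C_{ii}) \leq 1/4$ alone would not suffice. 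Thus the spectral assumption on $A$ is used exactly once, but it is essential.
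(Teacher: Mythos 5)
Your proof is correct. It reaches the same conclusion as the paper but by a genuinely different, more hands-on route. The paper's proof also sets $M\coloneqq A-A^{-1}\succeq 0$ and reduces to bounding $\Tr[C^2MS^2M]$, but it then works at the operator level: it introduces $E\coloneqq M^{1/2}C^2M^{1/2}$ and $F\coloneqq M^{1/2}S^2M^{1/2}$, notes $E+F=M$, and extracts $\Tr[EF]\le\frac14\Tr[M^2]$ from the single inequality $\Tr[(E-F)^2]=\|E-F\|_F^2\ge 0$. Your version diagonalises $M$ and argues entrywise, replacing the operator square root by the positivity of the products $\mu_i\mu_j$ and replacing the polarization inequality by the scalar bound $x(1-x)\le\frac14$ on the diagonal together with dropping a manifestly non-positive off-diagonal sum. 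The two arguments are two faces of the same identity — writing $4\,\tilde C_{ij}\tilde S_{ij}=(\tilde C_{ij}+\tilde S_{ij})^2-(\tilde C_{ij}-\tilde S_{ij})^2$ entrywise recovers the paper's $\Tr[(E-F)^2]\ge 0$ step — but yours is more elementary (no $M^{1/2}$) and slightly more informative, since it exhibits the slack as two separately identified non-negative quantities, $\sum_i\mu_i^2\bigl(\tfrac12-\tilde C_{ii}\bigr)^2$ and $\sum_{i\neq j}\mu_i\mu_j\tilde C_{ij}^2$. Both proofs use the hypothesis $\lambda_i(A)\ge 1$ exactly once and essentially: the paper to define $M^{1/2}$, you to control the sign of the cross terms. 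Your closing remark on the necessity of this hypothesis for the argument is accurate.
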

\begin{proof}
    \begin{eqnarray}
        ||CAS - CA^{-1}S||_F^2 &\coloneqq& ||CMS||_F^2 = \Tr[C^2 M S^2 M] = \Tr[EF],
    \end{eqnarray}
where we have defined
\begin{eqnarray} 
    M &\coloneqq& A - A^{-1}, \nonumber \\
    E &\coloneqq& M^{1/2} C^2 M^{1/2}, \nonumber \\
    F &\coloneqq& M^{1/2}S^2M^{1/2},
\end{eqnarray}
Note that since $A$ with eigenvalues greater than equal to one, is positive definite, $M$ is positive semi-definite and hence $M^{1/2}$ exists. This gives
\begin{equation}
    E + F = M^{1/2}(C^2 + S^2)M^{1/2} = M.
\end{equation}
Therefore,
\begin{eqnarray}\label{eq:lem_9_1}
    \Tr[(E+F)^2] &=& \Tr[M^2] = \Tr[E^2 + F^2] + 2\Tr[EF] \nonumber \\
    2\Tr[EF] &=& \Tr[M^2] - \Tr[E^2 + F^2].
\end{eqnarray}
Now, $E-F$ is a symmetric matrix, therefore
\begin{equation}\label{appeq:diff_CS_conjugated_M}
    \Tr[(E-F)^2] = \Tr[(E-F)^T (E-F)] = ||E-F||_F^2 \geq 0,
\end{equation}
which implies
\begin{eqnarray}
    \Tr[(E-F)^2] &=& \Tr[E^2 + F^2] - 2\Tr[EF] \geq 0 \nonumber \\
    2\Tr[EF] &\leq& \Tr[E^2 + F^2] \nonumber \\
    -2\Tr[EF] &\geq& - \Tr[E^2 + F^2].
\end{eqnarray}
Putting this in Eq.~\ref{eq:lem_9_1},
\begin{eqnarray}
    2\Tr[EF] &\leq& \Tr[M^2] -2\Tr[EF] \nonumber\\
    \Tr[EF] &\leq& \frac14 \Tr[M^2] \nonumber \\
    ||CMS||_F^2 &\leq& \frac{1}{4} ||M||_F^2 \nonumber \\
    ||C(A-A^{-1})S||_F^2 &\leq& \frac14 ||A-A^{-1}||_F^2.
\end{eqnarray}
\end{proof}
\noindent We are now ready to prove Lemma \ref{lem:max_sc_pure}.

\begin{proof}[Proof of Lemma \ref{lem:max_sc_pure}] As mentioned in the preliminaries (section \ref{sec:prelims}), the covariance matrix of a pure $m$-mode Gaussian state can be written as
\begin{equation}
    V^G = S_U Z S_U^T,
\end{equation}
where 
\begin{equation}
    Z = \begin{bmatrix}
        D && 0 \\
        0 && D^{-1}
    \end{bmatrix},
\end{equation}
where $D$ is an $m \times m$ diagonal matrix with entries $d_i$ such that $d_i \geq 1 \forall i \in \{1,\dots,m\}$, whereas $S_U$ is an $2m \times 2m$ symplectic orthogonal matrix isomorphic to an $m\times m$ unitary matrix. By the Cartan (KAK) decomposition, the unitary group can be decomposed as \cite[Section V]{edelman2022cartan}
\begin{equation}
    U(m) = O(m) \circ e^{i\vec\theta_m} \circ  O(m)
\end{equation}
And since the symplectic matrix of a passive linear unitary gate $S_U$ is isomorphic to a unitary matrix, therefore $S_U$ can be decomposed as
\begin{equation}
    S_U = S_{O_1} S_{\vec \theta} S_{O_2},
\end{equation}
where $S_{O_1}$ and $S_{O_2}$ are (possibly different) symplectic matrices corresponding to block-diagonal orthogonal unitary gates ($S_{O_1}$ and $S_{O_2}$ are isomorphic to $m\times m$ orthogonal matrices), and $S_{\vec \theta}$ is the phase shift gate given by
\begin{equation}
    S_{\vec \theta} = \begin{bmatrix}
        C && S \\
        -S && C
    \end{bmatrix},
\end{equation}
with $C = \mathrm{diag}(\cos \theta_1,\dots, \cos \theta_m)$ and $S = \mathrm{diag}(\sin \theta_1,\dots, \sin \theta_m)$. Since block-diagonal orthogonal matrices are free operations for symplectic coherence, we remove $S_{O_1}$, and we need to bound the pure Gaussian state, lets call it $\ket G$ corresponding to the covariance matrix $S_{\vec \theta} S_{O_2} Z S_{O_2}^T S_{\vec \theta}^T$.
Now
\begin{equation}
    S_{\vec \theta} S_{O_2} Z S_{O_2}^T S_{\vec \theta}^T = \begin{bmatrix}
        CAC + SA^{-1}S && -CAS + S A^{-1} C \\
        - SAC + CA^{-1}S && SAS + CA^{-1}C
    \end{bmatrix},
\end{equation}
where $A = O D O^T$ and $A^{-1} = O D^{-1} O^T$, $O$ is a $m\times m$ orthgonal matrix defined such that
\begin{equation}
    S_{O_2} = \begin{bmatrix}
        O && 0 \\
        0 && O
    \end{bmatrix}.
\end{equation}
Therefore,
\begin{equation}
    \mathfrak{c}_{\ket G} = ||-CAS+ SA^{-1}C||_F^2 = ||CAS - S A^{-1} C||_F^2.
\end{equation}
From Lemmas \ref{lem:max_SC_technical_2} and \ref{lem:max_SC_technical_3},
\begin{equation}
    ||CAS - S A^{-1} C||_F^2 \leq ||CAS - CA^{-1}S||_F^2 \leq \frac14 ||A - A^{-1}||_F^2.
\end{equation}
Now,
\begin{eqnarray}\label{appeq:max_sc_pure_final_step}
    \frac14 ||A - A^{-1}||_F^2 &=& \frac{1}{4}||D-D^{-1}||_F^2 = \frac14 \sum_{i=1}^m \left(d_i - \frac{1}{d_i}\right)^2 \nonumber \\
    &=& \frac{1}{4} \sum_{i=1}^m \left(\left(d_i + \frac{1}{d_i}\right)^2 - 4\right) \\
    &=& \frac{1}{4} \sum_{i=1}^m \left(d_i + \frac{1}{d_i} - 2\right)\left(d_i + \frac{1}{d_i} - 2 + 4\right) \nonumber \\
    &=& \frac{1}{4} \sum_{i=1}^m \left(d_i + \frac{1}{d_i} - 2\right)^2 + \sum_{i=1}^m \left(d_i + \frac{1}{d_i} - 2\right) \nonumber \\
    &\leq& \frac14 \left(\sum_{i=1}^m d_i + \frac{1}{d_i} - 2\right)^2 + \sum_{i=1}^m \left(d_i + \frac{1}{d_i} - 2\right),
\end{eqnarray}
where in the last step we used $d_i + \frac{1}{d_i} - 2\ge0$.
Now, given that,
\begin{equation}
    \Tr[S_{\vec \theta} S_{O_2} Z S_{O_2}^T S_{\vec \theta}^T] = \Tr[Z] = \sum_{i=1}^m \left(d_i +  \frac{1}{d_i}\right) = E
\end{equation}
since $S_{O_2}$ and $S_{\vec \theta}$ are orthogonal matrices. Therefore,
\begin{equation}
    \frac14 ||A - A^{-1}||_F^2 \leq \frac14 (E-2m)^2 + (E-2m),
\end{equation}
and thus
\begin{equation}
    \mathfrak{c}_{\ket G} \leq \mathfrak c_{\max} = \frac14 (E-2m)^2 + (E-2m).
\end{equation}
\end{proof}
\subsubsection{Structure of states with maximal symplectic coherence}\label{appsubsec:max_structure}
\noindent We first recap the set of pure Gaussian states $\mathcal{S}^{\max}(E)$:
\begin{defi}\label{appdefi:max_SC_states}
    A pure Gaussian state $\ket\psi_G$ belongs to the set $\mathcal S^{\max}(E)$ if and only if it can be written in the form
    \begin{equation}
        \ket{\psi}_G = \hat O_2\hat R(\vec \theta)\hat{O}_1\ket{r}\otimes\ket{0}^{m-1},
    \end{equation}
    up to arbitrary displacements at the end. $\ket r$ is the squeezed state with the squeezing parameter such that
    \begin{equation}\label{appeq:focus_squeeze}
        e^{2r} + e^{-2r} = E - 2(m-1),
    \end{equation}
    the phase shift angles $\vec \theta$ and the block-diagonal orthogonal unitary gate $\hat O_1$ with the associated symplectic matrix 
    \begin{equation}
        \begin{bmatrix}
            O && 0 \\
            0 && O
        \end{bmatrix}
\end{equation}
are defined such that
\begin{equation}
    2 |A_{1j}| = \delta_{1j}
\end{equation}
for $A = O^T C O,O^T S O,O^T(CS)O$, where $C = \mathrm{diag}(\cos \theta_1,\dots,\cos \theta_m)$ and $S = \mathrm{diag}(\sin \theta_1,\dots, \sin \theta_m)$. The block-diagonal orthogonal unitary gate $\hat O_2$ can be arbitrary.
\end{defi}
\noindent Then we have the following Lemma:
\begin{lem}[Compression Lemma]
For the set of quantum states $\rho$ with $\Tr[V^\rho] = E$, the set of pure Gaussian states with maximal symplectic coherence is given by $\mathcal{S}^{\max} (E)$ .
\end{lem}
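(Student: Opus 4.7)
The plan is to establish containment in both directions. The ``easy'' direction that every $\ket\psi_G \in \mathcal{S}^{\max}(E)$ attains $\mathfrak{c}_{\ket\psi_G} = \mathfrak{c}_\mathrm{max}(E,m)$ is verified by direct computation of the covariance matrix. The substantive direction is the converse, which I would prove by retracing the three-step bound in the proof of Lemma \ref{lem:max_sc_pure} and characterizing when each step is an equality.

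The bound in that proof chains three estimates: step (i) Lemma \ref{lem:max_SC_technical_2}, $||CAS - SA^{-1}C||_F^2 \leq ||CAS - CA^{-1}S||_F^2$; step (ii) Lemma \ref{lem:max_SC_technical_3}, $||CAS - CA^{-1}S||_F^2 \leq \frac{1}{4}||A - A^{-1}||_F^2$; and step (iii) the Jensen-type bound $\frac{1}{4}\sum_i(d_i + d_i^{-1} - 2)^2 \leq \frac{1}{4}\bigl(\sum_i(d_i + d_i^{-1} - 2)\bigr)^2$. Since each summand satisfies $d_i + d_i^{-1} - 2 \geq 0$ with equality iff $d_i = 1$, step (iii) is tight iff at most one $d_i$ differs from $1$. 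After absorbing a permutation into the block-diagonal orthogonal $\hat O_1$, I may take $d_1 = e^{2r}$ and $d_i = 1$ for $i \geq 2$, which is the ``compression'' statement itself; the trace constraint $\Tr[V^\rho] = \sum_i(d_i + d_i^{-1}) = E$ then forces $e^{2r} + e^{-2r} = E - 2(m-1)$, matching Eq.~\ref{appeq:focus_squeeze}.

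With $D = \mathrm{diag}(e^{2r}, 1, \ldots, 1)$ and $A = O D O^T$, the matrix $M := A - A^{-1} = (e^{2r} - e^{-2r})\,\bm{o}_1 \bm{o}_1^T$ has rank one, where $\bm{o}_1$ is the first column of $O$. Equality in step (ii), re-examining Eq.~\ref{appeq:diff_CS_conjugated_M}, requires $M^{1/2} C^2 M^{1/2} = M^{1/2} S^2 M^{1/2}$, which by the rank-one structure collapses to the scalar identity $\bm{o}_1^T C^2 \bm{o}_1 = \bm{o}_1^T S^2 \bm{o}_1 = 1/2$ (using $C^2 + S^2 = \I_m$). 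Equality in step (i) requires, via Lemma \ref{lem:max_SC_technical_1} applied to $M \in \{C^2, CS\}$, that $(O^T M O)_{1j} = 0$ for all $j \neq 1$, since only $d_1$ differs from the remaining diagonal entries of $D$. Taken together, these conditions on $(O, C, S)$ yield precisely the structural constraints of Definition \ref{appdefi:max_SC_states}. The outer block-diagonal orthogonal $\hat O_2$ and any final displacement remain arbitrary, since they do not alter $\mathfrak{c}_\rho$ by Theorem \ref{theo:monotonicity_sc}.

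The main obstacle is bookkeeping: three separate matrix inequalities contribute three independent equality conditions, all to be consolidated into a single intrinsic description of the Cartan factors $(O, \vec\theta)$. One also needs to check that the combined constraints are realizable, which is witnessed by the explicit state $\hat R_1(\pi/4)\ket{r}\otimes\ket{0}^{\otimes(m-1)}$ already exhibited in the main text. Each individual equality case---that a sum of non-negative terms saturates its squared-sum majorant iff a single term survives, that conjugation by $A^{1/2}$ preserves $||\cdot||_F$ iff the matrix is block-diagonal with respect to the eigenspaces of $A$, and the rank-one scalar identity---is elementary once isolated.
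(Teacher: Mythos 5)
Your proposal is correct and follows essentially the same route as the paper's proof: it retraces the three inequalities from the maximality bound (the Jensen-type step forcing $d_i=1$ for $i\geq2$, the equality case of Lemma~\ref{lem:max_SC_technical_3} via Eq.~\ref{appeq:diff_CS_conjugated_M}, and the equality case of Lemma~\ref{lem:max_SC_technical_1} applied to $C^2$ and $CS$) and extracts the same conditions $(O^TC^2O)_{1j}=(O^TS^2O)_{1j}=\tfrac12\delta_{1j}$ and $(O^TCSO)_{1j}=0$ for $j\neq1$. Your use of the rank-one structure $M=(e^{2r}-e^{-2r})\,\bm o_1\bm o_1^T$ to collapse the step-(ii) condition to a scalar identity is a mild streamlining of the paper's element-wise computation, but the argument is otherwise identical.
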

\begin{proof}
    This result is obtained by analyzing the inequalities (Eq.~\ref{appeq:max_sc_pure_final_step}, Eq.~\ref{appeq:diff_CS_conjugated_M}, Eq.~\ref{appeq:lemma_9_eq}) used in proving \cref{lem:max_sc_pure} and determining the necessary and sufficient conditions for which they become an equality. Firstly, the inequality in Eq.~\ref{appeq:max_sc_pure_final_step}
    \begin{equation}\label{appeq:diagonal_focusing}
        \frac{1}{4} \sum_{i=1}^m \left(d_i + \frac{1}{d_i} - 2\right)^2 \leq \frac14 \left(\sum_{i=1}^m d_i + \frac{1}{d_i} - 2\right)^2.
    \end{equation}
    For $d_i \geq 1$, this is an equality if and only if all $d_i + 1/d_i - 2 = 0$ for all $i$ except one (for convenience, let it be non-zero for the first one). This would imply
    \begin{eqnarray}
        d_i + \frac{1}{d_i} - 2 &=& 0 \nonumber \\
         d_i + \frac{1}{d_i} &=&2 \nonumber \\
         d_i&=& 1, \forall i \in \{2,\dots,m\}.
    \end{eqnarray}
    This corresponds to starting with vacuum states in modes $2$ to $m$. Now,
    \begin{eqnarray}
        \sum_{i=1}^m d_i + \frac{1}{d_i} &=& E \nonumber \\
        d_1 + \frac{1}{d_1} &=& E - 2(m-1).
    \end{eqnarray}
    Since $d_1$ relates to the squeezing parameter as 
    \begin{equation}
        d_1 = e^{2r},
    \end{equation}
    with $r\geq0$, the squeezing parameter on the first mode should be such that
    \begin{equation}
        e^{2r} + e^{-2r} = E - 2(m-1).
    \end{equation}
    Another inequality that we use in Lemma \ref{lem:max_SC_technical_3} is (Eq.~\ref{appeq:diff_CS_conjugated_M})
    \begin{equation}
        ||M^{1/2} C^2 M^{1/2} - M^{1/2} S^2 M^{1/2}||_F^2 \geq 0.
    \end{equation}
    for a positive semi-definite matrix $M = A - A^{-1}$. By the property of the Frobenius norm, this becomes an equality if and only if $M^{1/2} C^2 M^{1/2} - M^{1/2} S^2 M^{1/2}$ is the zero matrix, i.e.,
    \begin{equation}\label{appeq:c^2=s^2}
        M^{1/2} C^2 M^{1/2} = M^{1/2} S^2 M^{1/2}.
    \end{equation}
    Given that $A = O D O^T$ for an $m \times m$ orthogonal matrix $O$ and $D = \mathrm{diag}(d_1,d_2,\dots,d_m)$ and following our discussion after Eq.~\ref{appeq:diagonal_focusing} , we know that for maximal symplectic coherence, we need $d_i = 0, \forall i \neq 1$ and $d_1 \geq 1$. We get
    \begin{equation}
        M = O B O^T,
    \end{equation}
    where $B = \mathrm{diag}(b_1,0,\dots,0)$ with $b_1 = \sqrt{d_1} - 1/\sqrt{d_1} \geq 0$. Eq.~\ref{appeq:c^2=s^2} can therefore be written as
    \begin{eqnarray}
        O B O^T C^2 O B O^T &=& O B O^T S^2 O B O^T \nonumber \\
        B O^T C^2 O B&=& B O^T C^2 O B,
    \end{eqnarray}
    where the second line follows from multiplying both sides from the left by $O^T$ and from the right by $O$. Element-wise analysis of the equality implies
    \begin{equation}\label{appeq:c^2=s^2_2}
        (B O^T C^2 O B)_{ij} = (B O^T S^2 O B)_{ij}.
    \end{equation}
    $\forall i,j$. Now,
    \begin{equation}
        (B O^T C^2 O B)_{ij} = \sum_{k,l=1}^m B_{ik} (O^T C^2 O)_{kl} B_{lj} = \sum_{k,l=1}^m b_1^2\delta_{1i}\delta_{1k} \delta_{1l} \delta_{1j} (O^T C^2 O)_{kl} = b_1^2\delta_{1i}\delta_{1j} (O^T C^2 O)_{11}.
    \end{equation}
    Similarly,
    \begin{equation}
        (B O^T S^2 O B)_{ij}  = b_1^2 \delta_{1i}\delta_{1j} (O^T S^2 O)_{11}.
    \end{equation}
    Therefore, for $i$ and $j$ are both not equal to one, Eq.~\ref{appeq:c^2=s^2_2} is already satisfied since both left and right hand side are zero, whereas for $i=j=1$, we require
    \begin{equation}\label{appeq:c^2=s^2_3}
        (O^T C^2 O)_{11} = (O^T S^2 O)_{11}.
    \end{equation}
    We will get back to this equality later. Finally, we want to  see when the inequality Eq.~\ref{appeq:lemma_9_eq} with $M=CS$,
    \begin{equation}
        ||A^{1/2}CSA^{-1/2}||_F^2 = ||D^{1/2} O^T CS O D^{-1/2}||_F^2 \geq ||O^T CS O||_F^2 = ||CS||_F^2
    \end{equation}
    becomes an equality.
    \begin{eqnarray}
        ||D^{1/2} O^T CS O D^{-1/2}||_F^2 &=& \frac12 \sum_{i,j=1}^m \left(\frac{d_i}{d_j} + \frac{d_j}{d_i}\right) (O^T CS O)_{ij}^2 \nonumber \\
        &=& (O^T CS O)_{11}^2 + \frac12\sum_{i,j\neq1}\left(\frac{d_i}{d_j} + \frac{d_j}{d_i}\right) (O^T CS O)_{ij}^2 \nonumber \\
        &&+ \frac12 \sum_{j\neq 1}\left(\frac{d_1}{d_j} + \frac{d_j}{d_1}\right) (O^T CS O)_{1j}^2 + \frac12 \sum_{i\neq 1}\left(\frac{d_i}{d_1} + \frac{d_1}{d_i}\right) (O^T CS O)_{i1}^2 \nonumber \\
        &=& (O^T CS O)_{11}^2 + \sum_{i,j\neq1}(O^T CS O)_{ij}^2 + \left(d_1 + \frac{1}{d_1}\right)\sum_{j\neq1}(O^T CS O)_{1j}^2,
    \end{eqnarray}
    where the last line follows from the fact that maximal symplectic coherence is possible if and only if we have $d_i = 1 \forall i \in \{2,\dots,m\}$ and the fact that $O^T CS O$ is a symmetric matrix.
    We want this to be equal to 
    \begin{equation}
        ||O^T CS O||_F^2 = \sum_{i,j=1}^m (O^T CS O)_{ij}^2 = (O^T CS O)_{11}^2 + \sum_{i,j\neq1}(O^T CS O)_{ij}^2 + \sum_{i=1,j\neq1}2(O^T CS O)_{1j}^2.
    \end{equation}
    Now since $d_1 > 1$ (unless if $E = 2m$ in which case it is trivially the vacuum state), $d_1 + 1/d_1 > 2$, therefore the inequality becomes an equality if and only $(O^T CS O)_{1j} = 0$ for all $j \neq 1$. Therefore, the condition on $O$ and hence $\hat{O}_1$ can be given as
    \begin{equation}\label{appeq:zeros_CS}
        (O^T CS O)_{1j} = 0,
    \end{equation}
    $\forall j\neq1$. Note that a similar analysis for $||A^{1/2}C^2A^{-1/2}||_F^2$ would require
    \begin{equation}\label{appeq:zeros_C^2}
        (O^T C^2 O)_{1j} = 0,
    \end{equation}
    $\forall j \neq 1$. We rewrite Eq.~\ref{appeq:c^2=s^2_3}
    \begin{equation}
        (O^T C^2 O)_{11} = (O^T S^2 O)_{11}.
    \end{equation}
    We also know that
    \begin{equation}
        (O^T (C^2 + S^2) O)_{ij} = (O^T C^2 O)_{ij} + (O^T S^2 O)_{ij} = (O^T O)_{ij} =\delta_{ij}.
    \end{equation}
    Using Eq.~\ref{appeq:c^2=s^2_3} and Eq.~\ref{appeq:zeros_C^2}, this implies
    \begin{eqnarray}\label{appeq:C^2_S^2_condition}
        (O^T C^2 O)_{1j} &=& (O^T S^2 O)_{1j} = \begin{cases}
            \frac12, \text{ if } j =1, \\
            0, \text{ if } j \neq1.
        \end{cases}
    \end{eqnarray}
   Now,
    \begin{equation}
        (O^T C^4 O)_{11} = \sum_{j=1}^m (O^T C^2 O)_{1j} (O^T C^2 O)_{j1} = \sum_{j=1}^m (O^T C^2 O)_{1j}^2 = \frac14,
    \end{equation}
    where the last equality follows from Eq.~\ref{appeq:C^2_S^2_condition}. Similarly, 
    \begin{equation}
        (O^T S^4 O)_{11} = \frac14.
    \end{equation}
    Therefore,
    \begin{equation}
        (O^T (C^2 + S^2)^2 O)_{11} = (O^T C^4 O)_{11} + (O^T S^4 O)_{11} + 2 (O^T C^2 S^2 O)_{11} = 1.
    \end{equation}
    This gives
    \begin{equation}
        (O^T C^2 S^2 O)_{11} = \frac14.
    \end{equation}
    Also,
    \begin{equation}
        (O^T C^2 S^2 O)_{11} = \sum_{j=1}^m (O^T CS O)_{1j} (O^T CS O)_{j1} = \sum_{j=1}^m (O^T CS O)_{1j}^2 = (O^T CS O)_{11}^2.
    \end{equation}
    Where the last equality follows from Eq.~\ref{appeq:zeros_CS}. This gives
    \begin{equation}
        (O^T CS O)_{11} = \pm \frac12.
    \end{equation}
    Combining this with Eq.~\ref{appeq:zeros_CS},
    \begin{eqnarray}\label{appeq:CS_condition}
        (O^T CS O)_{1j} &=& \begin{cases}
            \pm \frac12, \text{ if } j =1, \\
            0, \text{ if } j \neq1.
        \end{cases}
        \end{eqnarray}
    Note that the conditions on $(O^TC^2 O)_{1j}$, $(O^TS^2 O)_{1j}$ (Eq.~\ref{appeq:C^2_S^2_condition}) and on $(O^TCS O)_{1j}$ (Eq.~\ref{appeq:CS_condition}) also holds when you swap the indices $1$ and $j$ since these are all symmetric matrices.
\end{proof} 

\subsubsection{Maximal symplectic coherence of mixed Gaussian states}\label{appsubsec:max_sc_mixed}

\noindent In this section, we prove the following result:

\begin{theo}[Maximal symplectic coherence of mixed Gaussian states]\label{theo:max_sc_mixed}
    Given the covariance matrix $V^\sigma$ of a (possibly) mixed or non-Gaussian $m$-mode state $\sigma$ with the property that $\Tr[V^\sigma] = E$, it holds that
    \begin{equation}
        \mathfrak c_\sigma \leq c_\mathrm{max} = \frac{(E-2m)^2}{4} + (E-2m)
    \end{equation}
    with equality if and only if
    \begin{equation}
        V^{\sigma} = \frac12 V^{G_1} + \frac12  V^{G_2}
    \end{equation}
    such that $V^{G_1}$ and $V^{G_2}$ are covariance matrices corresponding to pure Gaussian states $\ket{G_1}$ and $\ket{G_2}$ respectively, such that $\ket{G_1},\ket{G_2} \in \mathcal S^{\max}(E)$ (Definition \ref{appdefi:max_SC_states}) and such that $V_{xp}^{G_1} = V_{xp}^{G_2}$.
\end{theo}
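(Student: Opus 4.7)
I would split the argument into a reduction step to the Gaussian case, a decomposition step via Parthasarathy, and a careful equality analysis.

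First, since symplectic coherence depends only on the covariance matrix, the Gaussification property recalled in Section~\ref{subsec:prelims_CVQI} lets me reduce to mixed Gaussian states: every valid covariance matrix is realised by some mixed Gaussian state, so it suffices to establish both the bound and the equality condition at the level of covariance matrices of (possibly mixed) Gaussian states.

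Next, I would apply \cite[Theorem 3]{parthasarathy2011} to write
\begin{equation*}
V^\sigma = \frac{1}{2}V^{G_1} + \frac{1}{2}V^{G_2},
\end{equation*}
where $V^{G_1}$ and $V^{G_2}$ are covariance matrices of pure Gaussian states $\ket{G_1}, \ket{G_2}$. Setting $E_i = \Tr[V^{G_i}]$, we have $E_1 + E_2 = 2E$. Since $V_{xp}^\sigma = \frac{1}{2}(V_{xp}^{G_1} + V_{xp}^{G_2})$, the Frobenius triangle inequality combined with \cref{lem:max_sc_pure} yields
\begin{equation*}
\sqrt{\mathfrak c_\sigma} \le \frac{1}{2}\bigl(\sqrt{\mathfrak c_{G_1}} + \sqrt{\mathfrak c_{G_2}}\bigr) \le \frac{1}{2}\bigl(\sqrt{\mathfrak c_\mathrm{max}(E_1,m)} + \sqrt{\mathfrak c_\mathrm{max}(E_2,m)}\bigr).
\end{equation*}

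To close the bound, I would establish that $E \mapsto \sqrt{\mathfrak c_\mathrm{max}(E,m)}$ is strictly concave on $(2m,\infty)$. Writing $u = E - 2m$, one has $\mathfrak c_\mathrm{max} = u(u+4)/4$, so $\sqrt{\mathfrak c_\mathrm{max}} = \frac{1}{2}\sqrt{u^2+4u}$, and a short computation gives a second derivative equal to $-2/(u^2+4u)^{3/2} < 0$. Jensen's inequality then gives $\frac{1}{2}\bigl(\sqrt{\mathfrak c_\mathrm{max}(E_1,m)} + \sqrt{\mathfrak c_\mathrm{max}(E_2,m)}\bigr) \le \sqrt{\mathfrak c_\mathrm{max}(E,m)}$, hence $\mathfrak c_\sigma \le \mathfrak c_\mathrm{max}(E,m)$. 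For the equality analysis I would trace back through each step: equality in the pure-state bound forces $\ket{G_i} \in \mathcal S^{\max}(E_i)$; strict concavity in Jensen forces $E_1 = E_2 = E$; and the Frobenius triangle equality, together with the now-equal norms $\|V_{xp}^{G_1}\|_F = \|V_{xp}^{G_2}\|_F = \sqrt{\mathfrak c_\mathrm{max}(E,m)}$, forces $V_{xp}^{G_1} = V_{xp}^{G_2}$. Conversely, any $V^\sigma$ admitting such a decomposition satisfies $V_{xp}^\sigma = V_{xp}^{G_1}$, giving $\mathfrak c_\sigma = \mathfrak c_{G_1} = \mathfrak c_\mathrm{max}(E,m)$ directly.

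The main technical obstacle is identifying the right convexity principle: working at the level of $\mathfrak c$ itself (the squared Frobenius norm) fails, because $\mathfrak c_\mathrm{max}$ is \emph{convex} in $E$ and Jensen's inequality runs the wrong way; the key insight is to pass to $\sqrt{\mathfrak c}$, where the strict concavity of $\sqrt{\mathfrak c_\mathrm{max}}$ makes the inequality chain close. The degenerate endpoint $E = 2m$ (vacuum, where the bound is trivial) and the possibility that one $V_{xp}^{G_i}$ vanishes also require some care in the equality analysis, but are easily handled.
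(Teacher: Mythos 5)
Your proposal is correct and follows essentially the same route as the paper's proof: the Parthasarathy two-pure-state decomposition, the Frobenius triangle inequality combined with the pure-state bound, Jensen's inequality applied to the strictly concave function $x \mapsto \sqrt{x^2/4 + x}$ of the shifted energy, and the same chain of equality conditions (proportionality from the triangle/Cauchy--Schwarz equality, $E_1 = E_2 = E$ from strict concavity, and $V_{xp}^{G_1} = V_{xp}^{G_2}$ from equal maximal norms). Your explicit Gaussification preamble and the remarks on the degenerate cases $E = 2m$ and vanishing $V_{xp}^{G_i}$ are minor additions that the paper leaves implicit.
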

\begin{proof}
    From \cite[Theorem 3]{parthasarathy2011}, any valid covariance matrix can be written as an equally weighted mixture of covariance matrices of two pure Gaussian states, i.e.,
    \begin{equation}\label{appeq:mixed_covariance_decomp}
        V^\sigma = \frac12 V^{G_1} + \frac{1}{2} V^{G_2}. 
    \end{equation}
    Note that this also implies
    \begin{equation}\label{appeq:mixed_energy_condition}
        \Tr[V^\sigma] = E = \frac{1}{2} \Tr[V^{G_1}] +\frac12 \Tr[V^{G_2}] = \frac12 E_1 + \frac12 E_2.
    \end{equation}
    Therefore,
    \begin{eqnarray}\label{appeq:mixed_inequality_1}
        ||V_{xp}^\sigma||_F &=& \left|\left|\frac12 V_{xp}^{G_1} + \frac12 V_{xp}^{G_2}\right|\right|_F \leq \frac12 ||V_{xp}^{G_1}||_F + \frac12 ||V_{xp}^{G_2}||_F \nonumber\\ &\leq& \frac12 \sqrt{\frac{(E_1 - 2m)^2}{4} + (E_1-2m)} + \frac12 \sqrt{\frac{(E_2 - 2m)^2}{4} + (E_2-2m)}. \nonumber \\
    \end{eqnarray}
    The first inequality used is the triangle inequality whereas the second inequality follows from Lemma \ref{lem:max_sc_pure}, i.e., $||V^{G_i}||_F \leq \sqrt{\frac{(E_i - 2m)^2}{4} + (E_i-2m)}$ for $\Tr[V^{G_i}] = E_i$. Call 
\begin{equation}
    x_1 \coloneqq E_1 - 2m, \hspace{5mm} x_2 \coloneqq E_2 - 2m,
\end{equation}
such that 
\begin{equation}
    \frac12 x_1 + \frac12 x_2 = E-2m,
\end{equation}
we have
\begin{eqnarray}
    ||V_{xp}^\rho||_F &\leq& \frac12 \sqrt{\frac{x_1^2}{4} + x_1} + \frac12 \sqrt{\frac{x_2^2}{4} + x_2} \equiv \frac12 f(x_1) + \frac12 f(x_2).
\end{eqnarray}
Now,
\begin{equation}\label{appeq:mixed_f}
    f(x) \coloneqq \sqrt{\frac{x^2}{4} + x}
\end{equation}
is a strictly concave function of $x$, since 
\begin{equation}
    f''(x) = - \frac{2}{(x^2 + 4x)^{3/2}} < 0,
\end{equation}
for $x \in (0,\infty)$. Therefore, using Jensen's inequality for the concave function $f(x)$, we get
\begin{eqnarray}
    ||V_{xp}^\rho||_F &\leq& \frac12 f(x_1) + \frac12 f(x_2) \leq  f\left(\frac12 x_1 + \frac12 x_2\right) \nonumber \\
    &=& \sqrt{\frac{(\frac12 x_1 + \frac12 x_2)^2}{4} + \frac12 x_1 + \frac12 x_2} = \sqrt{\frac{(E-2m)^2}{4} + (E-2m)}.
\end{eqnarray}
Finally,
\begin{equation}
    \mathfrak c_\sigma =  ||V_{xp}^\rho||_F^2 \leq \frac{(E-2m)^2}{4} + (E-2m).
\end{equation}
To see when this inequality is saturated, note that
\begin{eqnarray}
    ||A + B||_F^2 = \sum_{i,j=1}^m (a_{ij} + b_{ij})^2 = \sum_{i,j=1}^m a_{ij}^2 + \sum_{i,j=1}^m b_{ij}^2 + 2 \sum_{i,j=1}^m a_{ij}b_{ij}.
\end{eqnarray}
Now, by Cauchy-Schwartz inequality,
\begin{equation}
    \sum_{i,j=1}^m a_{ij}b_{ij} \leq \sqrt{\sum_{i,j=1}^m a_{ij}^2} \sqrt{\sum_{i,j=1}^m b_{ij}^2}
\end{equation}
with equality if and only if $b_{ij} = c a_{ij}$ for some non-negative constant $c$  $\forall i,j$ \cite{axler2015linear}. This would give 
\begin{equation}
    B = cA,
\end{equation}
in which case
\begin{equation}
    ||A + B||_F^2 = \sum_{i,j=1}^m a_{ij}^2 + \sum_{i,j=1}^m b_{ij}^2 + 2 \sqrt{\sum_{i,j=1}^m a_{ij}^2} \sqrt{\sum_{i,j=1}^m b_{ij}^2}= ||A||_F^2 + ||B||_F^2 + 2 ||A||_F ||B||_F = (||A||_F + ||B||_F)^2.
\end{equation}
and 
\begin{equation}
    ||A+B||_F = ||A||_F + ||B||_F.
\end{equation}
Applying this analysis to Eq.~\ref{appeq:mixed_inequality_1},
\begin{equation}
    \left|\left|\frac12 V_{xp}^{G_1} + \frac12 V_{xp}^{G_2}\right|\right|_F = \frac12 ||V_{xp}^{G_1}||_F + \frac12 ||V_{xp}^{G_2}||_F
\end{equation}
if and only if 
\begin{equation}\label{appeq:mixed_equality_one}
    ||V_{xp}^{G_2}||_F = c ||V_{xp}^{G_1}||_F.
\end{equation}
We will fix the constant $c$ later. Now, since $f(x)$ (Eq.~\ref{appeq:mixed_f}) is strictly concave, by Jensen's inequality,
\begin{equation}
    \frac12 f(x_1) + \frac12f(x_2) = f\left(\frac12 x_1 + \frac12 x_2\right)
\end{equation}
if and only if 
\begin{eqnarray}
    x_1 &=& x_2, \nonumber \\
    E_1 - 2m&=& E_2 -2m, \nonumber\\
    E_1&=&E_2.
\end{eqnarray}
This, combined with Eq.~\ref{appeq:mixed_energy_condition} implies
\begin{equation}
    E_1 = E_2 = E.
\end{equation}
Finally, we want both $V_{xp}^{G_1}$ and $V_{xp}^{G_2}$ to attain maximal values (Eq.~\ref{appeq:mixed_inequality_1}), that is
\begin{eqnarray}
    ||V_{xp}^{G_1}||_F &=& \sqrt{\frac{(E_1 - 2m)^2}{4} + (E_1 - 2m)} = \sqrt{\frac{(E - 2m)^2}{4} + (E - 2m)}, \nonumber \\
    ||V_{xp}^{G_2}||_F &=& \sqrt{\frac{(E_2 - 2m)^2}{4} + (E_2 - 2m)} = \sqrt{\frac{(E - 2m)^2}{4} + (E - 2m)}.
\end{eqnarray}
Since Eq.~\ref{appeq:mixed_equality_one} implies that
\begin{equation}
    ||V_{xp}^{G_2}||_F = c||V_{xp}^{G_1}||_F,
\end{equation}
both $V_{xp}^{G_1}$ and $V_{xp}^{G_2}$ can only attain maximal values simultaneously if $c=1$, i.e.
\begin{equation}
    V_{xp}^{G_1} = V_{xp}^{G_2}.
\end{equation}
To conclude, the symplectic coherence of a mixed state $\sigma$ such that the trace of its covariance matrix $V^\sigma$ is $\Tr[V^\sigma] = E$ is maximal if and only if the decompositon of its covariance matrix into a convex mixture of covariance matrices of pure Gaussian states $V^{G_1}$ and $V^{G_2}$ (Eq.~\ref{appeq:mixed_covariance_decomp}) is such that
\begin{eqnarray}
\Tr[V^{G_1}]&=&\Tr[V^{G_2}] = E,\nonumber \\
    V_{xp}^{G_1} &=& V_{xp}^{G_2}, \nonumber \\
    ||V_{xp}^{G_1}||_F^2 &=& ||V_{xp}^{G_2}||_F^2 = \frac{(E-2m)^2}{4} + (E-2m).
\end{eqnarray}
Therefore, for $\mathfrak c_\sigma$ to attain maximal possible value, the covariance matrix of $\sigma$ should be decomposable as a equally weighted convex hull of covariance matrices of two pure Gaussian states $\ket{G_1}$ and $\ket{G_2}$ such that $\ket{G_1},\ket{G_2}\in \mathcal{S}^{\max}(E)$, with the additional condition 
\begin{equation}
    V_{xp}^{G_1} = V_{xp}^{G_2}.
\end{equation}
\end{proof}
Note that if $\ket{G_1} = \ket{G_2}$, this condition is trivially true. Hence all pure Gaussian states in $\mathcal{S}^{\max}$ have maximal symplectic coherence.
\subsection{Behavior of symplectic coherence under perturbations of the state}\label{app:SC_perturbations}
\noindent We restate a more explicit version of Theorem \ref{theo:SC_perturbations} in the main text:
\begin{theo}[Symplectic coherence under perturbations]
  Given two quantum states $\rho$ and $\sigma$ with symplectic coherence $\mathfrak c_\rho$ and $\mathfrak c_\sigma$, and second moments of energies satisfying $\Tr[\hat E^2 \rho],\Tr[\hat E^2 \sigma] \leq E^2$, and $||\rho - \sigma||_1 \leq \epsilon/m$, then 
    \begin{eqnarray}
  |\mathfrak c_\rho - \mathfrak c_\sigma| &\leq& 800E^2 \epsilon + 40\sqrt 2 E \max(\sqrt{\mathfrak c_\rho},\sqrt{\mathfrak c_\sigma}) \sqrt{\epsilon} \nonumber \\
  &\leq & 800E^2 \epsilon + 40\sqrt 2 E \sqrt{(2E-m)^2 + 2(2E-m)} \sqrt{\epsilon}.
    \end{eqnarray}
\end{theo}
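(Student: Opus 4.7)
The plan is to control $|\mathfrak c_\rho - \mathfrak c_\sigma|=|\,\|V_{xp}^\rho\|_F^2-\|V_{xp}^\sigma\|_F^2\,|$ in terms of the Hilbert--Schmidt distance between $V^\rho$ and $V^\sigma$, and then translate this into the trace distance of the underlying density matrices using a perturbation bound on covariance matrices from \cite{annamele2025symplecticrank}. Setting $a:=\sqrt{\mathfrak c_\rho}=\|V_{xp}^\rho\|_F$ and $b:=\sqrt{\mathfrak c_\sigma}=\|V_{xp}^\sigma\|_F$, and assuming without loss of generality that $a\geq b$, I would first apply the algebraic identity
\begin{equation}
    a^2-b^2=(a-b)^2+2b(a-b)\leq (a-b)^2+2\max(a,b)\,|a-b|,
\end{equation}
which already splits the bound into the quadratic and linear terms that appear in the statement.

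Next I would bound $|a-b|$ by the reverse triangle inequality for the Frobenius norm, $|a-b|\leq \|V_{xp}^\rho-V_{xp}^\sigma\|_F$. The key observation is that, because of the block structure of the covariance matrix,
\begin{equation}
    \|V^\rho-V^\sigma\|_F^2=\|V_x^\rho-V_x^\sigma\|_F^2+2\|V_{xp}^\rho-V_{xp}^\sigma\|_F^2+\|V_p^\rho-V_p^\sigma\|_F^2,
\end{equation}
so that $\|V_{xp}^\rho-V_{xp}^\sigma\|_F\leq \tfrac{1}{\sqrt 2}\|V^\rho-V^\sigma\|_F$. This is where the $\sqrt 2$ in the second term, and the factor $1/2$ in front of $800=\tfrac12\cdot 1600$, originate.

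The technical input is a bound on $\|V^\rho-V^\sigma\|_F$ in terms of $\|\rho-\sigma\|_1$. I would invoke \cite[Theorem~62]{annamele2025symplecticrank}, which yields an operator-norm bound of the form $\|V^\rho-V^\sigma\|_\infty\leq 40E\sqrt{\|\rho-\sigma\|_1}$ under the energy constraint $\Tr[\hat E^2\rho],\Tr[\hat E^2\sigma]\leq E^2$, combined with the standard inequality $\|A\|_F\leq \sqrt m\,\|A\|_\infty$ from Section~\ref{subsec:prelims_subsec_matrix_norms} applied to the $m\times m$ subblocks. Together with the hypothesis $\|\rho-\sigma\|_1\leq \epsilon/m$, this gives the working bound
\begin{equation}
    \|V^\rho-V^\sigma\|_F\leq 40 E\sqrt{m\|\rho-\sigma\|_1}\leq 40 E\sqrt\epsilon.
\end{equation}
Plugging this into the identity of the first step yields
\begin{equation}
    |\mathfrak c_\rho-\mathfrak c_\sigma|\leq \tfrac12\big(40 E\sqrt\epsilon\big)^2+\sqrt 2\,\max(\sqrt{\mathfrak c_\rho},\sqrt{\mathfrak c_\sigma})\,\big(40 E\sqrt\epsilon\big)=800 E^2\epsilon+40\sqrt 2\, E\,\max(\sqrt{\mathfrak c_\rho},\sqrt{\mathfrak c_\sigma})\sqrt\epsilon,
\end{equation}
which is the first announced inequality. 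For the second, I would use the energy constraint $\Tr[\hat E\rho]\leq\sqrt{\Tr[\hat E^2\rho]}\leq E$ (and similarly for $\sigma$) together with $\Tr[V^\rho]=4\Tr[\hat E\rho]-\bm d^T\bm d\leq 4E$, and then apply Theorem~\ref{theo:max_sc} with $\Tr[V^\rho]\leq 4E$ to obtain $\mathfrak c_\rho,\mathfrak c_\sigma\leq (2E-m)^2+2(2E-m)$.

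The main obstacle is the intermediate perturbation bound on covariance matrices: it is what imports the energy dependence $E$ and the $\sqrt\epsilon$ (rather than $\epsilon$) scaling, and it must be invoked in a form compatible with our energy convention. The rest of the argument is a careful but routine combination of matrix-norm identities and the reverse triangle inequality, so this step concentrates the real analytic content of the proof.
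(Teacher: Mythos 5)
Your proposal is correct and arrives at exactly the paper's constants, but the middle step is handled by a genuinely different (and more elementary) argument. The paper relates $|\mathfrak c_\rho - \mathfrak c_\sigma|$ to $\|V^\rho - V^\sigma\|_F$ through the operational interpretation of Theorem~\ref{theo:sc_expr}: writing $\mathfrak c_\rho = \tfrac12\|V^\rho - \tilde V^\rho\|_F^2$ with $\tilde V^\rho$ the closest free covariance matrix, it swaps in the suboptimal competitor $\tilde V^\sigma$ and expands the triangle inequality, yielding $\mathfrak c_\rho - \mathfrak c_\sigma \le \tfrac12\|V^\rho - V^\sigma\|_F^2 + \sqrt2\,\sqrt{\mathfrak c_\sigma}\,\|V^\rho - V^\sigma\|_F$. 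You instead work directly on the off-diagonal block: the identity $a^2-b^2=(a-b)^2+2b(a-b)$, the reverse triangle inequality for $\|V_{xp}\|_F$, and the block-structure fact $\|V_{xp}^\rho - V_{xp}^\sigma\|_F \le \tfrac{1}{\sqrt2}\|V^\rho-V^\sigma\|_F$ give the identical intermediate bound without ever invoking the geometric characterization of $\mathfrak c$ (and hence without needing the nontrivial lemma that deleting the $V_{xp}$ block yields a valid covariance matrix). The remaining ingredients coincide with the paper's: the same appeal to \cite[Theorem 62]{annamele2025symplecticrank} together with $\|A\|_F\le\sqrt{m}\|A\|_\infty$ to get $\|V^\rho-V^\sigma\|_F\le 40E\sqrt\epsilon$, and the same chain $\Tr[\hat E\rho]\le E$, $\Tr[V^\rho]\le 4E$, Theorem~\ref{theo:max_sc} for the second inequality. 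One small caveat: your aside about applying $\|A\|_F\le\sqrt m\|A\|_\infty$ "to the $m\times m$ subblocks" would actually produce a factor $2\sqrt m$ rather than $\sqrt m$ (four blocks, each with operator norm at most that of the full matrix); the paper applies the inequality directly to the full $2m\times 2m$ difference, which is itself loose by a $\sqrt2$ — so neither phrasing is fully rigorous about the dimension factor, but the final stated bound matches the paper's and the discrepancy is absorbed by the non-optimized constants.
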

\begin{proof}
We first give an upper bound on Hilbert--Schmidt distance between the covariance matrices of two states in terms of their trace distance:
\begin{lem}\label{lem:HS_distance_bound}
    Given two quantum states $\rho$ and $\sigma$ with covariance matrices $V^\rho$ and $V^\sigma$, and such that $\Tr[\hat E^2 \rho],\Tr[\hat E^2 \sigma] \leq E^2$, we have
    \begin{equation}
        ||V^\rho - V^\sigma||_F \leq \sqrt{m}||V^\rho - V^\sigma||_\infty \leq 40E \sqrt{m||\rho - \sigma||_1}.
    \end{equation}
\end{lem}
\begin{proof}
    The right hand side follows from \cite[Theorem 62]{annamele2025symplecticrank}, while the left hand side follows from the standard matrix norm inequality $||A||_F \leq \sqrt{m}||A||_\infty$ for any $m \times m$ matrix A.
\end{proof}
\noindent Therefore, if $||\rho - \sigma||_1 \leq \epsilon/m$, then
\begin{equation}\label{appeq:perturbation_frobenius}
    ||V^\rho - V^\sigma||_F \leq 40 E \sqrt{\epsilon}.
\end{equation}
Now if $\tilde V^\rho$ is the covariance matrix with zero position-momentum correlations closest in Hilbert--Schmidt distance to $V^\rho$, then (Theorem \ref{appeq:sc_exp_operational})
\begin{equation}
    \mathfrak c_\rho = \frac12 ||V^\rho - \tilde V^\rho||_F^2.
\end{equation}
Similarly,
\begin{equation}
    \mathfrak c_\sigma = \frac12 ||V^\sigma - \tilde V^\sigma||_F^2.
\end{equation}
Since $\tilde V^\rho$ is the covariance matrix with zero position-momentum correlations closest to $V^\rho$, we have
\begin{eqnarray}
   \mathfrak c_\rho=\frac12||V^\rho - \tilde V^\rho||_F^2 &\leq& \frac12||V^\rho - \tilde V^\sigma||_F^2 \nonumber \\
    &\leq& \frac12 (||V_\rho - V_\sigma||_F + ||V_\sigma - \tilde V_\sigma||_F)^2 \\ \nonumber
    &=& \frac{1}{2} ||V_\rho - V_\sigma||_F^2 + \frac{1}{2} ||V_\sigma - \tilde V_\sigma||_F^2 + ||V_\rho - V_\sigma||_F ||V_\sigma - \tilde V_\sigma||_F  \nonumber \\
    &=& \mathfrak c_\sigma + \frac12 ||V_\rho - V_\sigma||_F^2 + \sqrt{2} \sqrt{\mathfrak c_\sigma}||V_\rho - V_\sigma||_F,
\end{eqnarray}
so that, using Eq.~\ref{appeq:perturbation_frobenius},
\begin{equation}
        \mathfrak c_\rho - \mathfrak c_\sigma \leq 800E^2 \epsilon + 40\sqrt 2 E \max(\sqrt{\mathfrak c_\rho},\sqrt{\mathfrak c_\sigma}) \sqrt{\epsilon}.
\end{equation}
Similarly,
\begin{eqnarray}
    \mathfrak c_\sigma - \mathfrak c_\rho &\leq& 800E^2 \epsilon + 40\sqrt 2 E \max(\sqrt{\mathfrak c_\rho},\sqrt{\mathfrak c_\sigma}) \sqrt{\epsilon}.
\end{eqnarray}
Combining these two results,
\begin{equation}
    |\mathfrak c_\rho - \mathfrak c_\sigma| \leq 800E^2 \epsilon + 40\sqrt 2 E \max(\sqrt{\mathfrak c_\rho},\sqrt{\mathfrak c_\sigma}) \sqrt{\epsilon}.
\end{equation}
Now, since
\begin{equation}
    \Tr[\rho \hat E]^2 \leq \Tr[\rho \hat E^2] \leq E^2,
\end{equation}
and
\begin{equation}
    \Tr[\hat E \rho] = \frac{\Tr[V^\rho]}{4} + \frac{\bm d \bm d^T}{4} \leq E,
\end{equation}
we have
\begin{equation}
    \Tr[V^\rho] \leq 4E,
\end{equation}
since $\bm d \bm d^T \geq 0$. Similarly,
\begin{equation}
    \Tr[V^\sigma] \leq 4E.
\end{equation}
Therefore, from Theorem \ref{theo:max_sc},
\begin{equation}
    \mathfrak c_\rho, \mathfrak c_\sigma \leq \mathfrak c_{\mathrm{max}}(4E,m),
\end{equation}
where $\mathfrak c_{\mathrm{max}}(4E,m) = \frac{(4E-2m)^2}{4} + (4E-2m)$ is given by Eq.~\ref{appeq:max_sc}. This gives
\begin{equation}
    |\mathfrak c_\rho - \mathfrak c_\sigma| \leq 800E^2 \epsilon + 40\sqrt 2 E \sqrt{(2E-m)^2 + 2(2E-m)} \sqrt{\epsilon}
\end{equation}
\end{proof}

\subsection{Symplectic coherence under photon loss}\label{app:sc_under_loss}
\noindent In this sction, we show the following result on the evolution of the covariance matrix of a quantum state under photon loss:
\begin{lem}\label{lem:loss_energy_sc}
   Under pure photon loss $\Lambda_\eta(.)$ the covariance matrix of a quantum state $V^\rho$ evolves as
   \begin{equation}
       V^{\Lambda_\eta(\rho)} = \eta V^\rho + (1-\eta)\I_{2m}.
   \end{equation}
   Therefore, the symplectic coherence of a quantum state $\mathfrak c_\rho$ transforms as 
   \begin{equation}
       \mathfrak c_{\Lambda_\eta(\rho)} = \eta^2 \mathfrak c_\rho.
   \end{equation}
\end{lem}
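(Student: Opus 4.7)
\bigskip

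The plan is to model the pure photon loss channel $\Lambda_\eta$ via its standard Stinespring dilation as a beam splitter coupling each mode to an ancillary vacuum mode, followed by tracing out the environment. Concretely, for each mode $i$ one introduces an auxiliary mode initialized in vacuum with quadratures $\hat q_i^{\mathrm{aux}},\hat p_i^{\mathrm{aux}}$, applies the beam splitter unitary of transmissivity $\eta$, and discards the auxiliary subsystem. Since the auxiliary mode is in vacuum, its covariance matrix is $\I_{2m}$ (with the convention $\hbar=2$ used in Section~\ref{subsec:prelims_CVQI}) and it is uncorrelated with the system.

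The first step is to compute the symplectic action of the beam splitter on the system quadrature vector $\boldsymbol{\Gamma}$ and auxiliary quadrature vector $\boldsymbol{\Gamma}^{\mathrm{aux}}$. A direct calculation using $\hat a_i \mapsto \sqrt{\eta}\,\hat a_i + \sqrt{1-\eta}\,\hat a_i^{\mathrm{aux}}$ gives the transformation
\begin{equation}
    \boldsymbol{\Gamma} \;\longmapsto\; \sqrt{\eta}\,\boldsymbol{\Gamma} + \sqrt{1-\eta}\,\boldsymbol{\Gamma}^{\mathrm{aux}}.
\end{equation}
The second step is to plug this into the definition of the covariance matrix and use the independence of system and vacuum environment to discard cross terms. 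This yields
\begin{equation}
    V^{\Lambda_\eta(\rho)} \;=\; \eta\, V^\rho \;+\; (1-\eta)\, V^{\mathrm{vac}} \;=\; \eta\, V^\rho + (1-\eta)\I_{2m},
\end{equation}
which is the first claim. Alternatively, this can be obtained from the general Gaussian-channel parametrisation $V \mapsto X^T V X + Y$ with $X=\sqrt{\eta}\,\I_{2m}$ and $Y=(1-\eta)\I_{2m}$, a form which is standard in the literature on bosonic Gaussian channels.

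The second claim then follows by reading off the off-diagonal block: since $\I_{2m}$ has vanishing $xp$-block,
\begin{equation}
    V_{xp}^{\Lambda_\eta(\rho)} \;=\; \eta\, V_{xp}^\rho,
\end{equation}
and applying Definition~\ref{defi:sc_expr} together with the homogeneity of the Frobenius norm gives $\mathfrak c_{\Lambda_\eta(\rho)} = \eta^2\,\mathfrak c_\rho$. There is no real obstacle here; the only point requiring mild care is the correct normalisation of the vacuum covariance matrix under the $\hbar=2$ convention, which fixes the additive term as $(1-\eta)\I_{2m}$ rather than $(1-\eta)\I_{2m}/2$ or similar. The whole argument is essentially a one-line consequence of the Stinespring dilation, and no nontrivial matrix inequality or structural result is invoked.
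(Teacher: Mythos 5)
Your proposal is correct and follows essentially the same route as the paper's proof: the beam-splitter Stinespring dilation with a vacuum ancilla, the Heisenberg-picture transformation $\hat q_i \mapsto \sqrt{\eta}\,\hat q_i + \sqrt{1-\eta}\,\hat q_i^{B}$, vanishing cross-correlations with the vacuum, and then reading off the $xp$-block to get $\mathfrak c_{\Lambda_\eta(\rho)} = \eta^2 \mathfrak c_\rho$. The paper merely carries out the element-wise computation of $(V_x)_{ij}$, $(V_p)_{ij}$, $(V_{xp})_{ij}$ explicitly where you summarize it, so there is no substantive difference.
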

\begin{proof}
    Denoting the covariance matrix of a bipartite quantum state $\sigma_{AB}$ as
    \begin{equation}
        \begin{bmatrix}
            V_x^{\sigma_{AB}} && V_{xp}^{\sigma_{AB}} \\
            (V_{xp}^{\sigma_{AB}})^T && V_p^{\sigma_{AB}}
        \end{bmatrix},
    \end{equation}
    with
    \begin{equation}
        V_x^{\sigma_{AB}} = \begin{bmatrix}
            V_x^{A} && V_{x}^{AB} \\
            V_{x}^{AB} && V_{x}^{B}
        \end{bmatrix},
    \end{equation}
    and similarly for $V_p$ and $V_{xp}$. Then, the covariance matrix of $\sigma_A = \Tr_B[\sigma_{AB}]$ is given by
    \begin{equation}
        \begin{bmatrix}
            V_x^{{A}} && V_{xp}^{A} \\
            (V_{xp}^{A})^T && V_p^{A}
        \end{bmatrix}.
        \end{equation}
    Now, the state obtained by the application of a multimode photon loss channel of a state $\rho$ is given by
    \begin{equation}
        \Lambda_\eta(\rho) = \Tr_B[U_{BS}(\rho \otimes \ket 0 \bra 0^{\otimes m}) U_{BS}^\dagger ],
    \end{equation}
    that is, the first mode of the state is coupled to the first environmental mode (subsystem $B$) through a beam-splitter of efficiency $\eta$, the second mode of the state is coupled to the second environmental mode and so on.
    
    \noindent Firstly, to analyze $\Tr[V^{\Lambda_\eta(\rho)}]$, we find $\forall i \in \{1,\dots,m\}$,
    \begin{eqnarray}
        (V_x^{\Lambda_\eta(\rho)})_{i,i} &=& \Tr[U_{BS}^\dagger \hat q_i^2 U_{BS} \rho \otimes \ket 0 \bra 0^{m-1}] - \Tr[U_{BS}^\dagger \hat q_i U_{BS} \rho \otimes \ket 0 \bra 0^{m-1}]^2 \nonumber \\
        &=& \Tr[(\sqrt\eta \hat q_i + \sqrt{1-\eta} \hat q_i^B)^2 \rho \otimes \ket 0 \bra 0^{\otimes m}] - \Tr[(\sqrt\eta \hat q_i + \sqrt{1-\eta} \hat q_i^B)\rho]^2 \nonumber \\
        &=& \eta(\Tr[\hat q_i^2 \rho]- \Tr[\hat q_i \rho]^2) + (1-\eta) \Tr[(\hat q_i^B)^2 \ket 0\bra{0}^{\otimes m}] \nonumber \\
        &=& \eta (V_x^\rho)_{i,i} + (1-\eta).
    \end{eqnarray}
    For $i,j \in \{1,\dots,m\}$ and $i \neq j$,
        \begin{eqnarray}
        (V_{x}^{\Lambda_\eta(\rho)})_{i,j} &=& \frac12\Tr[U_{BS}^\dagger \{\hat q_i,\hat q_j\} U_{BS} \rho \otimes \ket 0 \bra 0^{\otimes m} ] - \Tr[U_{BS}^\dagger \hat q_i U_{BS} \rho \otimes \ket 0 \bra 0^{\otimes m}]  \Tr[U_{BS}^\dagger \hat q_j U_{BS} \rho \otimes \ket 0 \bra 0^{\otimes m}] \nonumber \\
        &=& \frac12\Tr[\{\sqrt \eta\hat  q_i + \sqrt{1-\eta} \hat q_i^B,\sqrt \eta \hat q_j + \sqrt{1-\eta} \hat q_j^B\} \rho \otimes \ket 0 \bra 0^{\otimes m} ] \nonumber \\ && - \Tr[(\sqrt \eta\hat  q_i + \sqrt{1-\eta} \hat q_i^B)\rho \otimes \ket 0 \bra 0^{\otimes m}]  \Tr[(\hat q_j + \sqrt{1-\eta} \hat q_j^B) \rho \otimes \ket 0 \bra 0^{\otimes m}] \nonumber \\
        &=& \eta \left(\frac12 \Tr[\{\hat q_i,\hat q_j\} \rho] - \Tr[\hat q_i \rho] \Tr[\hat q_j \rho]\right) = \eta (V_{x}^{\rho})_{ij}.
    \end{eqnarray}
    A similar result holds for $V_p^{\Lambda_\eta(\rho)}$. Finally, to analyze the symplectic coherence of $\Lambda_\eta(\rho)$, we consider the elements of $V_{xp}^{\Lambda_\eta(\rho)}$:
    \begin{eqnarray}
        (V_{xp}^{\Lambda_\eta(\rho)})_{i,j} &=& \frac12\Tr[U_{BS}^\dagger \{\hat q_i,\hat p_j\} U_{BS} \rho \otimes \ket 0 \bra 0^{\otimes m} ] - \Tr[U_{BS}^\dagger \hat q_i U_{BS} \rho \otimes \ket 0 \bra 0^{\otimes m}]  \Tr[U_{BS}^\dagger \hat p_j U_{BS} \rho \otimes \ket 0 \bra 0^{\otimes m}] \nonumber \\
        &=& \frac12\Tr[\{\sqrt \eta\hat  q_i + \sqrt{1-\eta} \hat q_i^B,\sqrt \eta \hat p_j + \sqrt{1-\eta} \hat p_j^B\} \rho \otimes \ket 0 \bra 0^{\otimes m} ] \nonumber \\ && - \Tr[(\sqrt \eta\hat  q_i + \sqrt{1-\eta} \hat q_i^B)\rho \otimes \ket 0 \bra 0^{\otimes m}]  \Tr[(\hat p_j + \sqrt{1-\eta} \hat p_j^B) \rho \otimes \ket 0 \bra 0^{\otimes m}] \nonumber \\
        &=& \eta \left(\frac12 \Tr[\{\hat q_i,\hat p_j\} \rho] - \Tr[\hat q_i \rho] \Tr[\hat p_j \rho]\right) = \eta (V_{xp}^{\rho})_{ij}.
    \end{eqnarray}
    Combining these results for $V_x^{\Lambda_\eta(\rho)}, V_p^{\Lambda_\eta(\rho)}$ and $V_{xp}^{\Lambda_\eta(\rho)}$, we get
    \begin{equation}
        V^{\rho} = \eta V^\rho + (1-\eta) \I_{2m},
    \end{equation}
    and therefore,
    \begin{equation}
        \mathfrak c_{\Lambda_\eta(\rho)} = \sum_{i,j=1}^m (V_{xp}^{\Lambda_\eta(\rho)})_{i,j}^2 = \eta^2 \sum_{i,j=1}^m (V_{xp}^\rho)_{i,j}^2 = \eta^2 \mathfrak c_\rho.
    \end{equation}
\end{proof}

\subsection{Position-momentum correlations enhances the average entanglement of pure Gaussian states}\label{app:sc_typical_entangle}
\noindent Before stating the result, we give the define micro-canonical ensembles of pure Gaussian states, following \cite[Section 3.1]{Serafini2007}:
\begin{defi}[Micro-canonical ensembles of pure Gaussian states \cite{Serafini2007}]\label{appdefi:micro_Gaussian}
    Given that the covariance matrix of pure Gaussian states can be written as (see \cref{subsec:prelims_CVQI})
    \begin{equation}\label{eq:covariance_pure_supp}
        V = S_U \begin{bmatrix}
            D && 0 \\
            0 && D^{-1}
        \end{bmatrix} S_U^T,
    \end{equation}
    where $S_U$ is the symplectic matrix corresponding to a passive linear unitary matrix, we define the micro-canonical measure for pure Gaussian states by sampling from the Haar measure over the unitary group $U(m)$ and writing the $2m \times 2m$ symplectic orthogonal matrix isomorphic to it. This corresponds to the symplectic matrix of the passive linear unitary matrix $S_U$. The elements of the diagonal matrix are sampled from a distribution $\mathcal D$, corresponding to uniformly sampled points according to the following constraints:
    \begin{eqnarray}\label{eq:energy_const_micro}
        \sum_{i=1}^m d_i + \frac{1}{d_i} &=& E \nonumber \\
        d_i&\geq&1, \forall i \in \{1,\dots,m\}.
    \end{eqnarray}
\end{defi}
\noindent Note that sampling from the distribution $\mathcal D$ can be achieved by sampling points uniformly on the $m$-dimensional unit sphere and making the appropriate change of variables. As explained in \cite[Section 7]{Serafini2007entanglement} For states with zero position-momentum correlations, we write 
\begin{equation}\label{appeq:zero_symp_coherence_ortho}
    S_U = \begin{bmatrix}
        O && 0 \\ 0 && O
    \end{bmatrix},
\end{equation}
with $O$ is an $m \times m$ orthogonal matrix sampled from the haar measure over the orthogonal group (denoted by $\mu_H(O)$). Whereas for states with non-zero position-momentum correlations, we write
\begin{equation}\label{appeq:zero_symp_coherence_unitary}
    S_U = \begin{bmatrix}
        X && Y \\ -Y && X
    \end{bmatrix},
\end{equation}
which is isomorphic to an $m \times m$ unitary matrix $U = X+iY$ sampled from the haar measure over the unitary group (denoted by $\mu_H(U)$). Then we have the following Theorem:
\begin{theo}[Position-momentum correlations enhance entanglement]\label{apptheo:typical_entangle}
    For micro-canonical ensemble of pure Gaussian states of energy $E$ (Definition \ref{appdefi:micro_Gaussian}), the typical entanglement of first mode with respect to the other modes, defined as expectation value over the distribution $\mathcal D$ and $\mu_H(O)$ (for states with zero position-momentum correlations)/ $\mu_H(U)$ (for states with non-zero position-momentum correlations), of $\nu_1^2$, where $\nu_1$ is the reduced covariance matrix of the first mode, is higher for states with non-zero position-momentum correlations than states with zero position-momentum correlations.
\end{theo}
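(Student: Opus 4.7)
First I would apply the Bloch--Messiah decomposition (\cref{eq:covariance_pure_supp}) to write $V = S_U\,\mathrm{diag}(D, D^{-1})\, S_U^T$ and extract the $(1,1)$ entries of its three blocks. In the zero-symplectic-coherence case, where $S_U$ is block-diagonal orthogonal (\cref{appeq:zero_symp_coherence_ortho}), $(V_{xp})_{11}$ vanishes and $\nu_1^2 = \bigl(\sum_k O_{1k}^2 d_k\bigr)\bigl(\sum_l O_{1l}^2/d_l\bigr)$. In the nonzero-symplectic-coherence case, where $S_U$ is parametrized by $U = X + iY$ (\cref{appeq:zero_symp_coherence_unitary}), a direct expansion gives $(V_x)_{11} = \sum_k (X_{1k}^2 d_k + Y_{1k}^2/d_k)$, $(V_p)_{11} = \sum_k (X_{1k}^2/d_k + Y_{1k}^2 d_k)$, and $(V_{xp})_{11} = \sum_k X_{1k} Y_{1k}(1/d_k - d_k)$. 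The plan is then to take Haar averages over both ensembles in closed form and to reduce the desired comparison to an elementary algebraic inequality in the squeezing parameters.

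Next I would take Haar averages using standard fourth moments on spheres: $\mathbb{E}[O_{1k}^4] = 3/(m(m+2))$ and $\mathbb{E}[O_{1k}^2 O_{1l}^2] = 1/(m(m+2))$ for $k \neq l$, together with the $U(m)$ analogues $\mathbb{E}[|U_{1k}|^4] = 2/(m(m+1))$ and $\mathbb{E}[|U_{1k}|^2 |U_{1l}|^2] = 1/(m(m+1))$. The key simplification in the unitary case is the $U(1)^m$ symmetry of the first row of a Haar-random unitary: writing $U_{1k} = r_k e^{i\theta_k}$ with $\theta_k$ i.i.d.\ uniform on $[0,2\pi)$ and independent of the magnitudes, all cross contributions $\mathbb{E}[X_{1k} Y_{1k} X_{1l} Y_{1l}]$ with $k \neq l$ vanish. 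Setting $A \coloneqq \sum_k d_k$, $B \coloneqq \sum_k 1/d_k$ (so $A + B = E$), and $S \coloneqq \sum_k (d_k^2 + 1/d_k^2)$, a short calculation then yields
\begin{equation}
    \mathbb{E}_O[\nu_1^2] = \frac{2m + AB}{m(m+2)}, \qquad \mathbb{E}_U[\nu_1^2] = \frac{E^2 + 6m - S}{4m(m+1)}.
\end{equation}

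Finally I would prove $\mathbb{E}_U[\nu_1^2] \geq \mathbb{E}_O[\nu_1^2]$ pointwise in $d$, which then survives the average over $\mathcal D$. Clearing positive denominators, expanding $E^2 = (A+B)^2$, and using the elementary identities $A^2 + B^2 - S = 2\sum_{k<l}(d_k d_l + 1/(d_k d_l))$ and $AB - m = \sum_{k<l}(d_k/d_l + d_l/d_k)$, the inequality reduces to $(m+2)\sum_{k<l} P_{kl} - m \sum_{k<l} Q_{kl} \geq 2m(m-1)$, where $P_{kl} \coloneqq d_k d_l + 1/(d_k d_l)$ and $Q_{kl} \coloneqq d_k/d_l + d_l/d_k$. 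Since $d_k, d_l \geq 1$, AM--GM gives $P_{kl} \geq 2$, and the factorization $P_{kl} - Q_{kl} = (d_k - 1/d_k)(d_l - 1/d_l) \geq 0$ gives $P_{kl} \geq Q_{kl}$. Hence $(m+2)P_{kl} - m Q_{kl} = 2 P_{kl} + m(P_{kl} - Q_{kl}) \geq 4$ for every pair, and summing over the $m(m-1)/2$ pairs yields exactly the required bound.

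The main obstacle is the fourth-moment bookkeeping in the unitary case: the product $(V_x)_{11}(V_p)_{11}$ and the square $(V_{xp})_{11}^2$ each expand into numerous $X$-$Y$ cross terms, and one must use the $U(1)^m$ phase symmetry carefully to see that the off-diagonal contributions to $\mathbb{E}_U[(V_{xp})_{11}^2]$ vanish before the substantial cancellations leave the clean closed form above. Once this is done, the reduction to the pairwise inequality $(m+2)P_{kl} - m Q_{kl} \geq 4$ is transparent, and the theorem follows by averaging over the micro-canonical measure $\mathcal D$.
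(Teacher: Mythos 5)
Your proposal is correct and follows essentially the same route as the paper: the same parametrization of the two ensembles, the same fourth-moment Haar averages (your phase-symmetry argument for the vanishing of $\mathbb{E}[X_{1k}Y_{1k}X_{1l}Y_{1l}]$, $k\neq l$, is a clean equivalent of the paper's sphere-moment computation), and a reduction to the same two elementary facts $d_kd_l+1/(d_kd_l)\ge 2$ and $(d_k-1/d_k)(d_l-1/d_l)\ge 0$. Your closed forms $\mathbb{E}_O[\nu_1^2]=(2m+AB)/(m(m+2))$ and $\mathbb{E}_U[\nu_1^2]=(E^2+6m-S)/(4m(m+1))$ agree with the paper's expressions, and your pairwise inequality $(m+2)P_{kl}-mQ_{kl}\ge 4$ is just a pointwise repackaging of the paper's $S_2\ge S_1$ and $S_2\ge 2m(m-1)$.
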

\noindent Before proving the Theorem, we want to give the following result about the elements of Haar-random unitary and orthogonal matrices:
\begin{lem}\label{lem:Haar_measure_prop}
    Given an $m \times m$ orthogonal matrix $O$ sampled from the Haar measure over the orthogonal group, $\mu_H(O)$, we have
    \begin{equation}
    \E_{\mu_H(O)} [O_{1i}^2 O_{1j}^2] = \begin{cases}
        \frac{3}{m(m+2)}, i = j, \\
    \frac{1}{m(m+2)}, i \neq j.
    \end{cases} 
\end{equation}
Similarly, given an $m\times m$ unitary matrix $U = X+iY$ sampled from the Haar measure over the unitary group, $\mu_H(U)$, we have
\begin{eqnarray}
    \E_{\mu_H(U)}[X_{1i}^2Y_{1j}^2] &=& \frac{1}{4m(m+1)}, \nonumber \\
    \E_{\mu_H(U)}[X_{1i}^2X_{1j}^2]= \E_{\mu_H(U)}[Y_{1i}^2Y_{1j}^2]&=& \begin{cases}
        \frac{3}{4m(m+1)}, \text{ if }i = j, \\
        \frac{1}{4m(m+1)}, \text{otherwise},
    \end{cases}  \nonumber \\
    \E_{\mu_H(U)}[X_{1i}Y_{1i}X_{1j}Y_{1j}] &=& \begin{cases}
        \frac{1}{4m(m+1)},\text{ if }i=j, \\
        0, \text{otherwise}.
    \end{cases}
\end{eqnarray}
\end{lem}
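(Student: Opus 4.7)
The plan is to reduce both statements in \cref{lem:Haar_measure_prop} to known moments of the uniform distributions on the real and complex unit spheres, thereby bypassing the full Weingarten formula machinery.

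For the orthogonal case, I would use the fact that the first row of a Haar-random $O\in O(m)$ is uniformly distributed on $S^{m-1}$, so that $(O_{11}^2,\ldots,O_{1m}^2)$ follows the symmetric Dirichlet distribution with parameters $(1/2,\ldots,1/2)$. Both required expectations $\E[O_{1i}^4]$ and $\E[O_{1i}^2 O_{1j}^2]$ (for $i\neq j$) then follow directly from the Dirichlet moment formula
\begin{equation}
    \E\left[\prod_{i=1}^m t_i^{k_i}\right]=\frac{\Gamma(m/2)}{\Gamma(m/2+\sum_i k_i)}\prod_{i=1}^m\frac{\Gamma(1/2+k_i)}{\Gamma(1/2)},
\end{equation}
yielding $3/[m(m+2)]$ for $i=j$ (take $k_i=2$) and $1/[m(m+2)]$ for $i\neq j$ (take $k_i=k_j=1$).

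For the unitary case, I would use the analogous polar decomposition of the uniform measure on the complex sphere $S^{2m-1}\subset\C^m$. Writing the first row of $U$ as $U_{1i}=|U_{1i}|\e^{i\phi_i}$, the squared moduli $(|U_{11}|^2,\ldots,|U_{1m}|^2)$ follow the symmetric Dirichlet$(1,\ldots,1)$ distribution---the uniform law on the simplex---while the phases $\phi_1,\ldots,\phi_m$ are, conditional on the moduli, independent and uniform on $[0,2\pi)$. This conditional-independence step is the crux of the argument, and I would establish it using right-invariance of the Haar measure under diagonal unitaries $D=\mathrm{diag}(\e^{i\alpha_1},\ldots,\e^{i\alpha_m})$: the map $U\mapsto UD$ preserves the Haar measure, multiplies each $U_{1i}$ by $\e^{i\alpha_i}$, and leaves the moduli fixed, forcing the conditional law of $(\phi_i)$ to be invariant under independent translations and hence uniform and independent.

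With this decomposition each expectation in the claim factorises as a Dirichlet moment times a trigonometric moment. For instance $\E[X_{1i}^2 Y_{1j}^2]=\E[|U_{1i}|^2|U_{1j}|^2]\,\E[\cos^2\phi_i]\,\E[\sin^2\phi_j]$ for $i\neq j$, while the $i=j$ variant reads $\E[|U_{1i}|^4]\,\E[\cos^2\phi\sin^2\phi]$; both collapse to $1/[4m(m+1)]$ because $\E[|U_{1i}|^4]=2\E[|U_{1i}|^2|U_{1j}|^2]=2/[m(m+1)]$ whereas $\E[\cos^2\phi\sin^2\phi]=1/8=\tfrac12\E[\cos^2\phi]\E[\sin^2\phi]$. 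The remaining cases are handled identically, invoking $\E[\cos^4\phi]=3/8$ for $\E[X_{1i}^4]$, $\E[Y_{1i}^4]$, and the vanishing of $\E[\cos\phi\sin\phi]$, which immediately kills the off-diagonal ($i\neq j$) contribution to $\E[X_{1i}Y_{1i}X_{1j}Y_{1j}]$. The only mildly subtle point is the justification of the conditional uniformity and independence of the phases in the unitary case; with that granted, the remainder is elementary gamma- and trigonometric-function arithmetic.
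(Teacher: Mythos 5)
Your proof is correct, and the numerical checks all come out right. The orthogonal half is essentially the paper's argument in different clothing: the paper also reduces to moments of the uniform measure on $S^{m-1}$, but evaluates them by representing the uniform point as a normalized Gaussian vector and invoking the closed-form sphere integral $\int_{\mathbb{S}^{m-1}}\prod_i\omega_i^{k_i}\,\Omega(d\omega)$, whereas you package the same information as Dirichlet$(1/2,\dots,1/2)$ moments of the squared entries --- these are equivalent computations. The unitary half is where you genuinely diverge. The paper observes that the first row of a Haar unitary is uniform on the complex sphere and, under the identification $\C^m\cong\R^{2m}$, the vector $(X_{11},\dots,X_{1m},Y_{11},\dots,Y_{1m})$ is uniform on the real $(2m)$-sphere; it then simply reapplies the real-sphere moment formula with $m\mapsto 2m$, which makes all six unitary moments one-line consequences of the orthogonal case (including the vanishing of $\E[X_{1i}Y_{1i}X_{1j}Y_{1j}]$ for $i\neq j$ via an odd exponent). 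You instead use a modulus--phase factorization, with Dirichlet$(1,\dots,1)$ moduli and conditionally independent uniform phases, the latter justified correctly by right-invariance under diagonal unitaries. Your route costs you one extra structural lemma (the conditional independence of the phases, which you rightly flag as the crux) but buys a more transparent factorization of each moment into a simplex moment times a trigonometric moment, and would generalize more readily to moments involving odd powers of individual phases; the paper's real-ification is shorter for exactly the even-degree moments needed here. Either way the arithmetic agrees with the stated values.
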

\begin{proof}
    We are focusing on the elements of the first row of the orthogonal matrix $O$. Since the orthogonality of $O$ implies
    \begin{equation}
        \sum_{i=1}^m O_{1i}^2 = 1,
    \end{equation}
    therefore, the vector $O_1$ is sampled from the points on the unit sphere of dimension $m$. Since the Haar measure over the orthogonal group is left and right invariant under the multiplication by any orthogonal matrix $V$ \cite{Bourbaki2004}, this implies that the distribution over $O_1 V$ is equal to that of $O_1$. This orthogonal invariance implies that $O_1$ is sampled from the uniform distribution on the $m$-dimensional unit sphere.

    One way to model the uniform distribution on the $m$-dimensional unit sphere is to take $m$ points $(p_1,\dots,p_m)$ which are distributed according to an $m$-variate normal distribution with zero mean and covariance matrix equal to identity. Then, the set of points
    \begin{equation}
        \left(\frac{p_1}{\sqrt{p_1^2 + \dots + p_m^2}},\dots,\frac{p_m}{\sqrt{p_1^2 + \dots + p_m^2}}\right)
    \end{equation}
    are uniformly distributed on the unit sphere \cite{Muller1959}. Further, for $m$-variate normal distribution with zero mean and covariance matrix, the expectation value of
    \begin{equation}
        \Pi_{i=1}^m\left( {p_i}^{k_i}/\left(\sqrt{p_1^2 + \dots + p_m^2}\right)^{k_i}\right),
    \end{equation}
    with $k_i \in \N$, $\forall i \in \{ 1,\dots,m\}$, is given by
    \begin{equation}
        \frac{1}{(2\pi)^{m/2}}\int \Pi_{i=1}^m\left( {p_i}^{k_i}/\left(\sqrt{p_1^2 + \dots + p_m^2}\right)^{k_i}\right)e^{-(p_1^2 + \dots + p_m^2)/2} dp_1\dots dp_m
    \end{equation}
    In the polar co-ordinate representation, this integral is equal to
    \begin{equation}\label{appeq:typical_entangle_normal_expectation}
        \frac{1}{(2\pi)^{m/2}}\int_{0}^\infty r^{m-1}e^{-r^2/2} dr \int_{\mathbb{S}^{m-1}}(\Pi_{i=1}^m \omega_i^{k_i}) \Omega(d\omega),
    \end{equation}
    where $r = \sqrt{p_1^2 + \dots + p_m^2}$, $\omega_i = p_i/\sqrt{p_1^2 + \dots + p_m^2} \forall i \in \{1,\dots,m\}$, $\mathbb{S}^{m-1}$ is unit sphere in $\R^m$ and $\Omega$ is the surface measure on $\mathbb{S}^{m-1}$. From \cite{Folland2001},
    \begin{equation}
        \int_{\mathbb{S}^{m-1}}(\Pi_{i=1}^m \omega_i^{k_i}) \Omega(d\omega) = \begin{cases}\label{appeq:polar_coor_integral}
            0, \text{ if any of the } k_i \text{ is odd} \\
            \frac{2 \Gamma(\beta_1)\dots\Gamma(\beta_m)}{\Gamma(\beta_1 + \dots + \beta_m)}, \text{ if all }k_i\text{ are even and } \beta_i = \frac12 (k_i+1)
        \end{cases}
    \end{equation}
    Then, by taking
    \begin{equation}
        O_{1i} = \frac{p_i}{\sqrt{p_1^2 + \dots + p_m^2}} \forall i
    \end{equation}
    where $p_i$'s are distributed according to an $m$-mode normal distribution with mean $0$ and covariance matrix equal to identity and using \cref{appeq:typical_entangle_normal_expectation} and \cref{appeq:polar_coor_integral}, we obtain
    \begin{equation}
    \E_{\mu_H(O)} [O_{1i}^2 O_{1j}^2] = \begin{cases}
        \frac{3}{m(m+2)}, i = j, \\
    \frac{1}{m(m+2)}, i \neq j.
    \end{cases} 
\end{equation}
\noindent Now for the first row of the Haar-randomly sampled unitary $U = X + iY$, denoted by $U_1$, we have
\begin{equation}
    \sum_{i=1}^m X_{1i}^2 + Y_{1i}^2 = 1.
\end{equation}
Therefore, the distribution of the complex vector $U_1$ is the distribution on the surface of $m$-dimensional complex unit sphere. The unitary invariance \cite{Mele2024Haar} further implies that the points are uniformly distributed on the complex unit sphere. Using the bijective mapping
\begin{equation}
    \begin{bmatrix}
        U_{11} \\ \vdots\\ U_{1m}
    \end{bmatrix} \rightarrow \begin{bmatrix}
        X_{11} \\ \vdots \\ X_{1m} \\ Y_{11} \\ \vdots \\ Y_{1m}
    \end{bmatrix},
\end{equation}
the uniform distribution of $U_1$ on the $m$-dimensional complex unit sphere implies that the distribution of $X_{11},\dots,X_{1m},Y_{11},\dots,Y_{1m}$ as a uniform distribution over points on the $2m$-dimensional real unit sphere. Therefore, we can use the results from the previous discussion on Haar-random orthogonal matrices to obtain the required results on the Haar-random unitary matrix:
\begin{eqnarray}
        \E_{\mu_H(U)}[X_{1i}^2Y_{1j}^2] &=& \frac{1}{2m(2m+2)} = \frac{1}{4m(m+1)}, \nonumber \\
    \E_{\mu_H(U)}[X_{1i}^2X_{1j}^2]= \E_{\mu_H(U)}[Y_{1i}^2Y_{1j}^2]&=& \begin{cases}
       \frac{3}{2m(2m+2)} = \frac{3}{4m(m+1)}, \text{ if }i = j, \\
       \frac{1}{2m(2m+2)} = \frac{1}{4m(m+1)}, \text{otherwise},
    \end{cases}  \nonumber \\
    \E_{\mu_H(U)}[X_{1i}Y_{1i}X_{1j}Y_{1j}] &=& \begin{cases}
       \frac{1}{2m(2m+2)} = \frac{1}{4m(m+1)},\text{ if }i=j, \\
        0, \text{otherwise}.
    \end{cases}
\end{eqnarray}
\end{proof}
\noindent Now we are ready to prove the result.
\begin{proof}
\noindent \textbf{States with zero position-momentum correlations:} From the previous discussion (\cref{appeq:zero_symp_coherence_ortho}), covariance matrices of states with zero position-momentum correlations are of the form
\begin{equation}
    V = \begin{bmatrix}
        O && 0 \\
        0 && O
    \end{bmatrix} \begin{bmatrix}
        D && 0 \\ 0 && D^{-1}
    \end{bmatrix} \begin{bmatrix}
        O^T && 0 \\
        0 && O^T
    \end{bmatrix},
\end{equation}
where $O$ is an $m \times m$ orthogonal matrix sampled Haar-randomly. This guarantees each randomly sampled covariance matrix to be of zero symplectic coherence, with covariance matrix
\begin{equation}
    V = \begin{bmatrix}
        O D O^T && 0 \\
        0 && O D^{-1} O^T
    \end{bmatrix}.
\end{equation}
The symplectic eigenvalue of the first mode is then given by
\begin{eqnarray}
    \nu_{1O}^2 = (O D O^T)_{11} (OD^{-1}O^T)_{11} = \sum_{i,j=1}^m \frac{d_i}{d_j} O_{1i}^2 O_{1j}^2
\end{eqnarray}
Taking the expectation value over Haar measure on the orthogonal group and using the results from Lemma \ref{lem:Haar_measure_prop}:
    \begin{equation}
    \E_{\mu_H(O)} [O_{1i}^2 O_{1j}^2] = \begin{cases}
        \frac{3}{m(m+2)}, i = j, \\
    \frac{1}{m(m+2)}, i \neq j.
    \end{cases} 
\end{equation}
we get
\begin{equation}
  \E_{O \leftarrow \mu_H(O)}[ \nu_{1O}^2] = \frac{3}{m+2} + \frac{1}{m(m+2)} \sum_{i\neq j} \frac{d_i}{d_j} =  \frac{3}{m+2} + \frac{1}{2m(m+2)} \sum_{i\neq j} \frac{d_i}{d_j} + \frac{d_j}{d_i}.
\end{equation}
Finally, taking the average over $D\leftarrow\mathcal D$,
\begin{equation}
    (\nu_{1O}^{\mathrm{exp}})^2 =  \E_{O \leftarrow \mu_H(O),D\leftarrow\mathcal D} [\nu_{1O}^2] = \frac{3}{m+2} + \frac{1}{2m(m+2)} S_1,
\end{equation}
where 
\begin{equation}\label{eq:entangle_S1}
    S_1 \coloneqq \E_{D\leftarrow\mathcal D}\left[ \sum_{i\neq j} \frac{d_i}{d_j} + \frac{d_j}{d_i}\right]
\end{equation}
\textbf{States with non-zero position-momentum correlations:} From our previous discussion (\cref{appeq:zero_symp_coherence_unitary}), covariance matrices of states with non-zero position-momentum correlations can be written as
\begin{equation}
    V = \begin{bmatrix}
        X && Y \\
        -Y && X
    \end{bmatrix}  \begin{bmatrix}
        D && 0 \\ 0 && D^{-1}
    \end{bmatrix} \begin{bmatrix}
        X^T && -Y^T \\
        Y^T && X^T
    \end{bmatrix}.
\end{equation}
The final covariance matrix in this case looks like
\begin{equation}
        V = \begin{bmatrix}
        X D X^T + Y D^{-1}Y^T && -X D Y^T + Y D^{-1} X^T \\
        - Y D X^T + X D^{-1} Y^T && Y D Y^T + X D^{-1}X^T
    \end{bmatrix}.
\end{equation}
Therefore, the symplectic eigenvalue of the first mode is
\begin{eqnarray}
    \nu_{1U}^2 &=& (X D X^T + Y D^{-1}Y^T)_{11} (Y D Y^T + X D^{-1}X^T)_{11} - (-X D Y^T + Y D^{-1} X^T)_{11}^2 \nonumber \\
    &=& \sum_{i,j=1}^md_i d_j X_{1i}^2 Y_{1j}^2 + \frac{d_i}{d_j} X_{1i}^2 X_{1j}^2 + \frac{d_j}{d_i} Y_{1i}^2 Y_{1j}^2 + \frac{1}{d_i d_j} X_{1i}^2 Y_{1j}^2 - \sum_{i,j=1}^m \left(d_i - \frac{1}{d_i}\right) \left(d_j - \frac{1}{d_j}\right) X_{1i}Y_{1i}X_{1j}Y_{1j}.\\
\end{eqnarray}
Taking the expectation value over the Haar measure on the unitary group and using Lemma \ref{lem:Haar_measure_prop}:
\begin{eqnarray}
        \E_{\mu_H(U)}[X_{1i}^2Y_{1j}^2] &=& \frac{1}{2m(2m+2)} = \frac{1}{4m(m+1)}, \nonumber \\
    \E_{\mu_H(U)}[X_{1i}^2X_{1j}^2]= \E_{\mu_H(U)}[Y_{1i}^2Y_{1j}^2]&=& \begin{cases}
       \frac{3}{2m(2m+2)} = \frac{3}{4m(m+1)}, \text{ if }i = j, \\
       \frac{1}{2m(2m+2)} = \frac{1}{4m(m+1)}, \text{otherwise},
    \end{cases}  \nonumber \\
    \E_{\mu_H(U)}[X_{1i}Y_{1i}X_{1j}Y_{1j}] &=& \begin{cases}
       \frac{1}{2m(2m+2)} = \frac{1}{4m(m+1)},\text{ if }i=j, \\
        0, \text{otherwise},
    \end{cases}
\end{eqnarray}
we get
\begin{eqnarray}
    \E_{U \leftarrow \mu_H(U)}[\nu_{1U}^2] &=& \frac{3}{2(m+1)} + \frac{1}{4m(m+1)} \sum_{i\neq j}\frac{d_i}{d_j} + \frac{d_j}{d_i} + \frac{1}{4m(m+1)}  \sum_{i,j=1}^m d_i d_j + \frac{1}{d_id_j} -  \frac{1}{4m(m+1)} \sum_{i=1}^m \left(d_i^2 + \frac{1}{d_i^2} - 2\right) \nonumber \\
    &=& \frac{2}{m+1} + \frac{1}{4m(m+1)} \sum_{i\neq j}\left(\frac{d_i}{d_j} + \frac{d_j}{d_i}\right) + \frac{1}{4m(m+1)} \sum_{i\neq j}d_i d_j + \frac{1}{d_id_j}. 
\end{eqnarray}
Finally, taking the expectation value over the distribution on $D$,
\begin{equation}
    (\nu_{1U}^{\mathrm{exp}})^2 = \E_{U \leftarrow \mu_H(U),D\leftarrow\mathcal D}[\nu_{1U}^2] = \frac{2}{m+1} + \frac{S_1}{4m(m+1)} + \frac{S_2}{4m(m+1)}
\end{equation}
where $S_1$ is the same as before (both states with zero and non-zero position correlations have the same distribution over $D$) and $S_2$ is denoted by
\begin{equation}
    S_2 \coloneqq \E_{D\leftarrow\mathcal D}\left[\sum_{i\neq j}d_i d_j + \frac{1}{d_id_j}\right].
\end{equation}
Now, since $\forall i,j$ and for all instances of $D$,
\begin{equation}
    d_i d_j +\frac{1}{d_i d_j} - \frac{d_i}{d_j} - \frac{d_j}{d_i} = \left(d_i -\frac{1}{d_i}\right) \left(d_j -\frac{1}{d_j}\right) \geq 0,
\end{equation}
since $d_i \geq 1, \forall i$ for all instances of $D$. Therefore when we take expectation values on both sides,
\begin{equation}
    S_2 \geq S_1.
\end{equation}
Similarly, since
\begin{equation}
    d_i d_j + \frac{1}{d_i d_j} \geq 2,
\end{equation}
$\forall i,j$ and for all instances of $D$, since $d_i \geq 1, \forall i$ for all instances of $D$. Therefore, taking the expectation values on both sides, we get
\begin{equation}
    S_2 \ge 2m(m-1).
\end{equation}
Therefore,
\begin{eqnarray}
    (\nu_{1U}^{\mathrm{exp}})^2 - (\nu_{1O}^{\mathrm{exp}})^2 &=& \frac{2}{m+1} - \frac{3}{m+2} + \left(\frac{1}{4m(m+1)} - \frac{1}{2m(m+2)}\right) S_1 + \frac{S_2}{4m(m+1)} \nonumber \\
    &=& - \frac{m-1}{(m+1)(m+2)} - \frac{S_1}{4(m+1)(m+2)}  + \frac{S_2}{4m(m+1)} \nonumber \\
    &=& \frac{1}{m+1} \left(- \frac{m-1}{m+2} + \frac{(m+2)S_2 - m S_1}{4m(m+2)}\right).
\end{eqnarray}
Using
\begin{eqnarray}
    S_2 - S_1 &\geq& 0, \nonumber \\
    S_2 &\geq& 2m(m-1),
\end{eqnarray}
we get
\begin{eqnarray}
    (\nu_{1U}^{\mathrm{exp}})^2 - (\nu_{1O}^{\mathrm{exp}})^2 &\geq& \frac{1}{m+1} \left(-\frac{m-1}{m+2} + \frac{4m(m-1)}{4m(m+2)}\right) = 0,
\end{eqnarray}
and hence,
\begin{equation}
    (\nu_{1U}^{\mathrm{exp}})^2 \geq (\nu_{1O}^{\mathrm{exp}})^2.
\end{equation}
\end{proof}

\section{Symplectic coherence in quantum information processing}

\subsection{Symplectic coherence enhances quantum metrology}\label{app:sc_metro}
\noindent We restate Theorem \ref{theo:sc_metro}:
\begin{theo}[Position-momentum correlations enhance quantum metrology]
     Given a pure single-mode input probe state $\rho$ with covariance matrix $V^\rho$ and zero first moments going through a single-mode displacement operator $\hat D(r)$, the Quantum Fisher information with respect to $r$ satisfies
    \begin{equation}
        \mathcal F_Q \leq 2E + 4V_{xp}^\rho,
    \end{equation}
    where $E = V_x^\rho + V_p^\rho = 4 \Tr[\rho \hat E]$. Further, the inequality becomes an equality for pure input probe states.
\end{theo}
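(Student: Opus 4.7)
The plan is to reduce the statement to the standard variance formula for the quantum Fisher information associated with a unitary encoding. Writing $\hat D(r) = e^{-ir\hat H}$ with Hermitian generator $\hat H = -(\hat q + \hat p)/\sqrt 2$ (the global sign is immaterial since only the variance of $\hat H$ enters), the result of Toth--Petz and Braunstein--Caves gives
\begin{equation}
    \mathcal F_Q \leq 4\,\mathrm{Var}_\rho(\hat H),
\end{equation}
with equality for pure $\rho$. Thus the theorem will follow once we rewrite $\mathrm{Var}_\rho(\hat H)$ in terms of the entries of $V^\rho$.

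For the second step I would expand directly. With zero first moments,
\begin{equation}
    \mathrm{Var}_\rho(\hat H) = \langle \hat H^2\rangle_\rho = \tfrac12\langle \hat q^2\rangle_\rho + \tfrac12\langle \hat p^2\rangle_\rho + \tfrac12\langle \{\hat q,\hat p\}\rangle_\rho.
\end{equation}
Using the definition of the covariance matrix together with the zero-first-moment assumption, one identifies $\langle \hat q^2\rangle_\rho = V_x^\rho$, $\langle \hat p^2\rangle_\rho = V_p^\rho$, and $\tfrac12\langle \{\hat q,\hat p\}\rangle_\rho = V_{xp}^\rho$. Substituting gives
\begin{equation}
    \mathrm{Var}_\rho(\hat H) = \tfrac12 V_x^\rho + \tfrac12 V_p^\rho + V_{xp}^\rho,
\end{equation}
and therefore $\mathcal F_Q \leq 2(V_x^\rho + V_p^\rho) + 4 V_{xp}^\rho = 2E + 4 V_{xp}^\rho$, with equality for pure states by the Toth--Petz bound. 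Finally, recalling that for the single-mode case the symplectic coherence reduces to $\mathfrak c_\rho = (V_{xp}^\rho)^2$, one may write the bound as $2E + 4\sqrt{\mathfrak c_\rho}$ whenever $V_{xp}^\rho \geq 0$ (the Fourier-rotation argument in the main text handles the opposite sign without loss of generality).

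I do not anticipate a genuine obstacle: the content is essentially bookkeeping between the Toth--Petz variance formula and the covariance-matrix conventions used in the paper (the $\hbar = 2$ convention and the symmetrized anticommutator in the definition of $V^\rho$). The only mildly delicate point is verifying the factor of $2$ in the off-diagonal term, which arises because $V_{xp}^\rho$ is already the symmetrized expectation value, so its coefficient in $\mathrm{Var}_\rho(\hat H)$ is $1$ rather than $1/2$.
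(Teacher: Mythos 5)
Your proposal is correct and follows essentially the same route as the paper's proof: both invoke the variance bound $\mathcal F_Q \leq 4\,\mathrm{Var}_\rho(\hat H)$ for a unitary encoding (with equality for pure states) and then expand $\mathrm{Var}_\rho(\hat H)$ for $\hat H \propto (\hat q+\hat p)/\sqrt 2$ in terms of $V_x^\rho$, $V_p^\rho$, and the symmetrized cross term $V_{xp}^\rho$. Your bookkeeping of the factors (in particular the coefficient of $V_{xp}^\rho$ arising from the anticommutator in the covariance-matrix convention) matches the paper's computation.
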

\begin{proof}
    Given an input probe state $\rho$ and a unitary 
    \begin{equation}
        \hat U(r) = e^{ir\hat H}
    \end{equation}
    the Fisher quantum information for a quantum state with respect to $r$ is given by \cite[Eqs.~60,61]{Toth_2014}
    \begin{eqnarray}
        \mathcal F_Q &\leq& 4(\Tr[\rho \hat H^2] - \Tr[\rho \hat H]^2) \nonumber \\
        &\leq& 2\left(\Tr[\rho \hat q^2] - \Tr[\rho \hat q]^2 + \Tr[\rho \hat p^2] - \Tr[\rho \hat p]^2 + 2 (\Tr\left[\rho \{\hat q, \hat p\}/2\right] - \Tr[\rho \hat q] \Tr[\rho \hat p]) \right) \nonumber \\
        &\leq& 2(V_x^\rho + V_p^\rho + 2 V_{xp}^\rho) = 2E + 4V_{xp}^\rho,
    \end{eqnarray}
    with equality for pure input probe states. Following the discussion in Section \ref{sec:sc_metro} of the main text, we can always ensure $V_{xp}^\rho$ is positive, in which case $V_{xp}^\rho = \sqrt{\mathfrak c_\rho}$.
\end{proof}
Following the discussion in Section of the main text, we can always 
\subsection{Symplectic coherence lower bounds optimal success probability of certain channel discrimination tasks}\label{app:sc_lb_td}
\noindent We restate Lemma \ref{lem:sc_lb_td_new}:
\begin{res}[Symplectic coherence lower bounds optimal channel discrimination success probability]\label{applem:discr}
The optimal success probability of successful channel discrimination between a multi-mode photon loss channel $\Lambda_\eta$ and an arbitrary orthogonal Stinespring dilation $\mathcal O_0$ for a given input probe state $\rho$ with zero first moments is lower bounded in terms of its symplectic coherence $\mathfrak c_\rho$:
\begin{equation}
        p_\mathrm{opt} \geq \frac12 + \frac{1}{400}\min\left(1,\frac{ \sqrt{\frac{(1-\eta)^2 (E-2m)^2}{2m} + 2\mathfrak c_\rho(1-\eta)^2}}{E+ 1} \right),
    \end{equation}
where $E = 4\Tr[\rho \hat E] = \Tr[V^\rho]$.
\end{res}
Before proving the result, we would like to present a technical result relating the trace distance between two quantum states to their energy and symplectic coherence.
\begin{lem}
    Given two states $\rho$ and $\sigma$ with zero first moments, covariance matrices $V^\rho, V^\sigma$ with $\Tr[V^\rho] = 4 \Tr[\rho \hat E] = E_1$, $\Tr[V^\sigma] = 4\Tr[\sigma \hat E] = E_2$ and such that $\Tr[\rho \hat E], \Tr[\sigma \hat E] \leq E/4$. If $\rho$ and $\sigma$ are Gaussian states, then we have
    \begin{equation}
       \frac12 ||\rho - \sigma||_1 \geq \frac{1}{200} \min \left(1,\frac{\sqrt{\frac{(E_1 - E_2)^2}{2m} + 2(\sqrt{\mathfrak c_\rho} - \sqrt{\mathfrak c_\sigma})^2}}{E+1}\right).
    \end{equation}
    If $\rho$ and $\sigma$ are any general (Gaussian or non-Gaussian) states, we have
    \begin{equation}
        \frac12 ||\rho - \sigma||_1 \geq \frac{\frac{(E_1 - E_2)^2}{2m} + 2(\sqrt{\mathfrak c_\rho} - \sqrt{\mathfrak c_\sigma})^2}{3200\tilde E^2m},
    \end{equation}
    where $\tilde E^2$ is such that $\Tr[\hat{E^2}\rho] = \Tr[\hat{E^2}\sigma] \leq \tilde E^2$.
\end{lem}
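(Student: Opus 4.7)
My plan is to reduce the problem to a Hilbert--Schmidt bound on the difference of covariance matrices, because then standard continuity-type results directly connect to the trace distance. Concretely, I would first establish the inequality
\begin{equation}
    \|V^\rho - V^\sigma\|_F^2 \;\geq\; \frac{(E_1-E_2)^2}{2m} \;+\; 2\bigl(\sqrt{\mathfrak c_\rho}-\sqrt{\mathfrak c_\sigma}\bigr)^2,
\end{equation}
and then invoke two off-the-shelf relations between covariance matrices and trace distance: a tight one valid for Gaussian states (giving the $1/200$ bound), and a weaker but universal one for arbitrary states (giving the $1/(3200\tilde E^2 m)$ bound).

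To prove the key Hilbert--Schmidt lower bound, I would split the block structure of $V^\rho-V^\sigma$ and treat diagonal and off-diagonal blocks separately. For the diagonal blocks, writing $a=\Tr[V_x^\rho-V_x^\sigma]$, $b=\Tr[V_p^\rho-V_p^\sigma]$, the constraint $\Tr[V^\rho]-\Tr[V^\sigma]=E_1-E_2$ gives $a+b=E_1-E_2$, and Cauchy--Schwarz applied to each $m\times m$ block gives $\|V_x^\rho-V_x^\sigma\|_F^2\geq a^2/m$ and similarly for the momentum block, so that their sum is at least $(a^2+b^2)/m\geq (E_1-E_2)^2/(2m)$. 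For the off-diagonal block, the identity $\|V^\rho-V^\sigma\|_F^2$ contains the term $2\|V_{xp}^\rho-V_{xp}^\sigma\|_F^2$, which by the reverse triangle inequality is at least $2(\|V_{xp}^\rho\|_F-\|V_{xp}^\sigma\|_F)^2=2(\sqrt{\mathfrak c_\rho}-\sqrt{\mathfrak c_\sigma})^2$. Combining the two pieces yields the claimed lower bound on $\|V^\rho-V^\sigma\|_F^2$.

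For the Gaussian statement, I would invoke \cite[Theorem 11]{mele2024learning}, which directly lower bounds $\tfrac12\|\rho-\sigma\|_1$ by $\tfrac{1}{200}\min\bigl(1,\|V^\rho-V^\sigma\|_F/(E+1)\bigr)$ (or by a comparable norm of the covariance difference, which can be converted to the Frobenius norm using the standard inequalities of \cref{subsec:prelims_subsec_matrix_norms}); plugging in the lower bound on $\|V^\rho-V^\sigma\|_F$ derived above gives the desired inequality. For the general (non-Gaussian) statement, I would instead use the upper bound \cite[Theorem 62]{annamele2025symplecticrank} in the form of Lemma \ref{lem:HS_distance_bound}, namely $\|V^\rho-V^\sigma\|_F\leq 40\tilde E\sqrt{m\|\rho-\sigma\|_1}$; squaring and solving for $\|\rho-\sigma\|_1$ gives $\tfrac12\|\rho-\sigma\|_1\geq \|V^\rho-V^\sigma\|_F^2/(3200\,\tilde E^2 m)$, which combined with the Hilbert--Schmidt lower bound is the desired inequality with the correct constant.

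The only genuinely delicate step is establishing the two contributions (energy difference and symplectic coherence difference) additively in the Hilbert--Schmidt norm, since naively one might worry about cross terms between the diagonal and off-diagonal blocks. I expect this to be clean because $\|V^\rho-V^\sigma\|_F^2$ decomposes block-by-block with a cross-term that is automatically handled by the block structure $(V_{xp}^\rho-V_{xp}^\sigma)$ appearing symmetrically in the two off-diagonal blocks, producing a clean factor of $2$ rather than interference. Once this decomposition is secured, the rest of the argument is assembling known norm inequalities, and the constants match by direct computation: the factor $40$ of Lemma \ref{lem:HS_distance_bound} squared produces the $1600$, and the extra factor of $2$ from converting $\|\rho-\sigma\|_1$ to $\tfrac12\|\rho-\sigma\|_1$ yields the stated $3200$.
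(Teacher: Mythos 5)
Your proposal is correct and follows essentially the same route as the paper: the same block decomposition of $\|V^\rho-V^\sigma\|_F^2$, the bound $\|V_x^\rho-V_x^\sigma\|_F^2\geq\Tr[V_x^\rho-V_x^\sigma]^2/m$ (which the paper reaches via the trace norm rather than by applying Cauchy--Schwarz directly to the diagonal entries, a negligible difference), the reverse triangle inequality on the off-diagonal block, and the same two external theorems with the same constant bookkeeping. No gaps.
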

\begin{proof}
    First, we prove
    \begin{equation}
        ||V^\rho -  V^\sigma||_F \geq \sqrt{\frac{(E_1 - E_2)^2}{2m} + 2(\sqrt{\mathfrak c_\rho} - \sqrt{\mathfrak c_\sigma})^2}.
    \end{equation}
    Let
\begin{equation}
        V^{\rho} = \begin{bmatrix}
            V_x^\rho && V_{xp}^\rho \\
            (V_{xp}^{\rho})^T && V_p^\rho
        \end{bmatrix},
    \end{equation}
    and
      \begin{equation}
        V^{\sigma} = \begin{bmatrix}
            V_x^\sigma && V_{xp}^\sigma \\
            (V_{xp}^{\sigma})^T && V_p^\sigma
        \end{bmatrix}.
    \end{equation}
    We have
    \begin{equation}
        V^{\rho} - V^{\sigma} = \begin{bmatrix}
            V_x^\rho - V_{x}^\sigma && V_{xp}^\rho - V_{xp}^\sigma \\
            (V_{xp}^{\rho})^T - (V_{xp}^{\sigma})^T && V_p^\rho - V_p^{\sigma}
        \end{bmatrix},
    \end{equation}
    therefore,
    \begin{eqnarray}
        ||V^{\rho} - V^{\sigma}||_F^2 &=& \Tr[(V^{\rho} - V^{\sigma})^2] \nonumber \\
        &=& \Tr[(V_x^{\rho} - V_x^\sigma)^2] + \Tr[(V_p^{\rho} - V_p^\sigma)^2] + 2\Tr[(V_{xp}^\rho - V_{xp}^\sigma)(V_{xp}^\rho - V_{xp}^\sigma)^T] \nonumber \\
        &=& ||V_x^{\rho} - V_x^\sigma||_F^2 + ||V_p^{\rho} - V_p^\sigma||_F^2 + 2||V_{xp}^\rho - V_{xp}^\sigma||_F^2.
    \end{eqnarray}
    Now,
    \begin{eqnarray}\label{eq:chain}
        ||V_x^{\rho} - V_x^\sigma||_F^2 + ||V_p^{\rho} - V_p^\sigma||_F^2 &\geq& \frac{||V_x^{\rho} - V_x^\sigma||_1^2 + ||V_p^{\rho} - V_p^\sigma||_1^2}{m} \nonumber \\
        &\geq& \frac{\Tr[V_x^\rho - V_x^\sigma]^2 + \Tr[V_p^\rho - V_p^\sigma]^2}{m}  \nonumber \\
        &\geq& \frac{(\Tr[V_x^\rho - V_x^\sigma] + \Tr[V_p^\rho - V_p^\sigma])^2}{2m} =\frac{(E_1 - E_2)^2}{2m}.
    \end{eqnarray}
    Here the first inequality $||A||_F^2 \geq ||A||_1^2/m$ follows from Eq.~\ref{eq:FM_TM}, the second inequality $||A||_1^2 \geq \Tr[A]^2$ follows from
    \begin{equation}
        \big|\Tr[A]\big| = \left|\sum_{i=1}^m \lambda_i(A)\right| \leq \sum_{i=1}^m |\lambda_i(A)| \leq \sum_{i=1}^m s_i(A) = ||A||_1,
    \end{equation}
    and the final inequality $\Tr[V_x^\rho - V_x^\sigma]^2 + \Tr[V_p^\rho - V_p^\sigma]^2 \geq (\Tr[V_x^\rho - V_x^\sigma] + \Tr[V_p^\rho - V_p^\sigma])^2/2$ follows from the Cauchy--Schwarz inequality.
    
    In addition, by the inverse triangle inequality
    \begin{equation}
        ||V_{xp}^\rho - V_{xp}^\sigma||_F \geq \left| ||V_{xp}^\rho||_F - ||V_{xp}^\sigma||_F \right| = \left|\sqrt{\mathfrak{c}_\rho} - \sqrt{\mathfrak{c}_\sigma}\right|,
    \end{equation}
    and thus
    \begin{equation}\label{eq:sc_lb_frob}
        ||V^\rho -  V^\sigma||_F \geq \sqrt{\frac{(E_1 - E_2)^2}{2m} + 2(\sqrt{\mathfrak c_\rho} - \sqrt{\mathfrak c_\sigma})^2}.
    \end{equation}
By \cite[Theorem 11]{mele2024learning}, for Gaussian states with $\Tr[\rho \hat E] \leq E/4$,
\begin{equation}
    \frac12 ||\rho - \sigma||_1 \geq \frac{1}{200}\min \left(1,\frac{||V^\rho - V^\sigma||_F}{E+1}\right).
\end{equation}
From Eq.~\ref{eq:sc_lb_frob}, we get
\begin{equation}
   \frac12 ||\rho - \sigma||_1 = \frac{1}{200} \min \left(1,\frac{\sqrt{\frac{(E_1 - E_2)^2}{2m} + 2(\sqrt{\mathfrak c_\rho} - \sqrt{\mathfrak c_\sigma})^2}}{E+1}\right).
\end{equation}
Further from \cite[Theorem 62]{annamele2025symplecticrank}, for two arbitrary quantum states $\rho$ and $\sigma$, we have
\begin{equation}
    \frac12 ||\rho - \sigma||_1\geq \frac{||V^\rho - V^\sigma||_\infty^2}{3200\tilde E^2},
\end{equation}
provided $\Tr[\hat E^2\rho], \Tr[\hat E^2 \sigma] \leq \tilde E^2$. Given $||V^\rho - V^\sigma||_\infty \geq \frac{||V^\rho - V^\sigma||_F}{\sqrt{m}}$ and Eq.~\ref{eq:sc_lb_frob}, we get
\begin{equation}
   \frac12 ||\rho - \sigma||_1 \geq \frac{\frac{(E_1 - E_2)^2}{2m} + 2(\sqrt{\mathfrak c_\rho} - \sqrt{\mathfrak c_\sigma})^2}{3200 \tilde E^2m}.
\end{equation}
\end{proof}
\noindent Now we are ready to prove the result.
\begin{proof}[Proof of \cref{applem:discr}]
     The optimal success probability of successfully distinguishing two states $\mathcal O_0(\rho)$ and $\Lambda_\eta(\rho)$ is given by the Helstrom bound \cite{Helstrom1969}:
    \begin{equation}
        p_\mathrm{opt} = \frac{1}{2} + \frac14||\mathcal O_0 (\rho) - \Lambda_\eta(\rho)||_1.
    \end{equation}
   Under uniform multimode photon loss, the covariance matrix of a quantum state $V^\rho$ transforms as (see \cref{app:sc_under_loss})
    \begin{equation}
        V^\rho \rightarrow \eta V^\rho + (1-\eta)\I_{2m},
    \end{equation}
    hence
    \begin{equation}
        \Tr[V^{\Lambda_\eta(\rho)}] = \eta E + (1-\eta)2m \leq E.
    \end{equation}
    As a consequence,
    \begin{equation}
        \Tr[\hat E \Lambda_\eta(\rho)] = \frac{\Tr[V^{\Lambda_\eta(\rho)}]}{4} \leq \frac{E}{4}. 
    \end{equation}
    From Lemma \ref{lem:sc_lb_td} for Gaussian states,
    \begin{equation}
        p_\mathrm{opt} \geq \frac12 + \frac{1}{400}\min\left(1,\frac{\sqrt{\frac{(E_{\mathcal{O}_0(\rho)} - E_{\Lambda_\eta(\rho)})^2}{2m} + 2(\sqrt{\mathfrak c_{\mathcal O_0(\rho)}} - \sqrt{\mathfrak c_{\Lambda_\eta(\rho)}})^2}}{E+1}\right).
    \end{equation}
    Now, for block-diagonal orthogonal operations
    \begin{equation}
        E(\mathcal O_0(\rho)) = E, \hspace{5mm}\mathfrak c_{\mathcal O_0(\rho)} = \mathfrak c_\rho.
    \end{equation}
    Further, using $V^\rho \rightarrow \eta V^\rho + (1-\eta)\I_{2m}$,
    \begin{equation}
        E(\Lambda_\eta(\rho)) = \eta E + 2m(1-\eta), \hspace{5mm} \mathfrak c_ {\Lambda_\eta(\rho)} = \eta^2 \mathfrak c_\rho.
    \end{equation}
    Therefore,
    \begin{equation}
        (E(\mathcal O_0(\rho)) -  E(\Lambda_\eta(\rho)))^2 = (1-\eta)^2(E-2m)^2,
    \end{equation}
    and
    \begin{equation}
        \left(\sqrt {c_{\mathcal O_0(\rho)}} - \sqrt {c_{\Lambda_\eta(\rho)}}\right)^2 = \mathfrak c_\rho (1-\eta)^2.
    \end{equation}
    Hence, we obtain
    \begin{equation}
        p_\mathrm{opt} \geq \frac12 + \frac{1}{400}\min\left(1,\frac{ \sqrt{\frac{(1-\eta)^2 (E-2m)^2}{2m} + 2\mathfrak c_\rho(1-\eta)^2}}{E+1} \right).
    \end{equation}
\end{proof}

\subsection{Symplectic coherence underlines the efficiency of certain channel discrimination tasks}\label{app:eff_ortho_discrimination}
\noindent For completeness, we define Orthogonal stinespring dilation here:
\begin{defi}[Orthogonal Stinespring dilation]\label{appdefi:ortho_Stinespring}
    Given a quantum state $\rho$, an orthogonal Stinespring dilation of $\rho$ is defined as
    \begin{equation}
        \Tr_B[\hat D \hat{O} (\rho \otimes \sigma^{(E)}) \hat O^\dagger \hat D^\dagger],
    \end{equation}
    where $\hat O$ are block-diagonal orthogonal gates, $\hat D$ is the displacement operator, and $\sigma \in \mathcal C$.
\end{defi}
\noindent Then we have the following Lemma:
\begin{lem} [Symplectic coherence underlines the efficiency of multi-shot discrimination protocols]
    Given a Gaussian input state $\rho$ with $\Tr[V^\rho] = E$ and symplectic coherence $\mathfrak c_\rho$, two orthogonal Stinespring dilation channels $\mathcal O_1$ and $\mathcal O_2$ satisfying 
$\Tr[\mathcal O_i(\rho) \{\hat q_1, \hat p_1\}/2] = \mu_i$ for $i=1,2$ can be discriminated using the measurement $\hat M = \{\hat q_1, \hat p_1\}/2$ 
with success probability $1 - \delta$ as long as the number of samples satisfies
\begin{equation}
    N \geq N_{\mathrm{thres}},
\end{equation}
where
\begin{equation}
    N_{\mathrm{thres}} \leq 
    \frac{272 \log(2/\delta)}{(\mu_2 - \mu_1)^2} 
    \max \big( f(m,E) + \mu_1^2,\, f(m,E) + \mu_2^2 \big),
\end{equation}
where
\begin{equation}
    f(m,E) \coloneqq 1 + \frac14{(E-2(m-1))^2}.
\end{equation}
Moreover, since for given energy $E$ and symplectic coherence $\mathfrak c_\rho$, $|\mu_1|,\, |\mu_2| \leq \sqrt{\mathfrak c_\rho}$, the optimal upper bound (its lowest possible value) is given by
\begin{equation}
272 \log (2/\delta) 
    \left( \frac{f(m,E)}{4 \mathfrak c_\rho} + \frac14 \right).
\end{equation}
\end{lem}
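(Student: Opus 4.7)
The plan is to design a hypothesis test that estimates the two channel means $\mu_i = \Tr[\mathcal O_i(\rho)\hat M]$ from $N$ independent samples of the measurement $\hat M = \{\hat q_1, \hat p_1\}/2$ on copies of $\mathcal O_i(\rho)$, decides for $\mathcal O_1$ when the median of means estimator $\hat\mu$ falls on the $\mu_1$ side of the midpoint $(\mu_1+\mu_2)/2$, and declares $\mathcal O_2$ otherwise. The decision succeeds with probability $1-\delta$ once, under each hypothesis, $|\hat\mu-\mu_i|<|\mu_2-\mu_1|/2$ holds with probability at least $1-\delta/2$. The median of means inequality~\cite{JERRUM1986} guarantees this as soon as $N \ge 4c\log(2/\delta)\,\sigma_i^2/(\mu_2-\mu_1)^2$, where $\sigma_i^2=\mathrm{Var}_{\mathcal O_i(\rho)}[\hat M]$ and $c=68$ is the corresponding absolute constant; the combined prefactor is the $272$ appearing in the statement, after a union bound over the two hypotheses.

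The technical heart of the proof is the evaluation of $\sigma_i^2$. Since $\rho$ is Gaussian and orthogonal Stinespring dilations are Gaussian operations, $\mathcal O_i(\rho)$ is Gaussian, and I would apply Wick's theorem~\cite{Wick1950} to the non-normal-ordered fourth moment $\Tr[\mathcal O_i(\rho)\,\hat q_1\hat p_1\hat q_1\hat p_1]$ using the two-point functions $\langle\hat q_1\hat q_1\rangle=(V_x)_{11}$, $\langle\hat p_1\hat p_1\rangle=(V_p)_{11}$, $\langle\hat q_1\hat p_1\rangle=(V_{xp})_{11}+i$, $\langle\hat p_1\hat q_1\rangle=(V_{xp})_{11}-i$ (all for $\mathcal O_i(\rho)$ restricted to mode 1). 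Combined with the commutator identity $\hat M^2 = \hat q_1\hat p_1\hat q_1\hat p_1-2i\hat q_1\hat p_1-1$, the imaginary parts should cancel and I expect to obtain
\begin{equation}
\sigma_i^2 \;=\; \mu_i^2 + (V_x^{\mathcal O_i(\rho)})_{11}(V_p^{\mathcal O_i(\rho)})_{11} + 1.
\end{equation}
The product $(V_x)_{11}(V_p)_{11}$ is then bounded by AM-GM by $\left(\frac{(V_x)_{11}+(V_p)_{11}}{2}\right)^2$; invoking the single-mode uncertainty bound $(V_x)_{ii}+(V_p)_{ii}\ge2$ on each of the remaining $m-1$ modes together with $\Tr[V^{\mathcal O_i(\rho)}]\le E$ yields $(V_x)_{11}+(V_p)_{11}\le E-2(m-1)$, hence $(V_x)_{11}(V_p)_{11}\le f(m,E)-1$ and $\sigma_i^2\le\mu_i^2+f(m,E)$. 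Substituting into the median of means threshold and taking the worse hypothesis yields the first claimed bound.

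For the optimal variant, the monotonicity of symplectic coherence under orthogonal Stinespring dilations (Theorem~\ref{theo:monotonicity_sc}) gives $|\mu_i|=|(V_{xp}^{\mathcal O_i(\rho)})_{11}|\le\|V_{xp}^{\mathcal O_i(\rho)}\|_F=\sqrt{\mathfrak c_{\mathcal O_i(\rho)}}\le\sqrt{\mathfrak c_\rho}$. Minimizing $(\max(\mu_1^2,\mu_2^2)+f(m,E))/(\mu_2-\mu_1)^2$ subject to $|\mu_i|\le\sqrt{\mathfrak c_\rho}$ is achieved at the symmetric extremes $\mu_2=-\mu_1=\sqrt{\mathfrak c_\rho}$, giving $\max(\mu_1^2,\mu_2^2)=\mathfrak c_\rho$ and $(\mu_2-\mu_1)^2=4\mathfrak c_\rho$, so the ratio reduces to $\tfrac{1}{4}+f(m,E)/(4\mathfrak c_\rho)$ and multiplying by $272$ collapses to $68\log(2/\delta)(1+f(m,E)/\mathfrak c_\rho)$. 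The main obstacle I anticipate is the careful Wick expansion for non-commuting quadratures (tracking all the imaginary cross-terms so that $\sigma_i^2$ comes out manifestly real and nonnegative), together with justifying the output-energy bound $\Tr[V^{\mathcal O_i(\rho)}]\le E$ uniformly over the allowed dilations; the remaining concentration and optimization steps are elementary.
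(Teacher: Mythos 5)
Your proposal follows essentially the same route as the paper: a median-of-means estimator thresholded at the midpoint $(\mu_1+\mu_2)/2$ with a union bound over the two hypotheses (giving the constant $272=4\times 68$), Wick's theorem to evaluate $\mathrm{Var}_{\mathcal O_i(\rho)}[\hat M]=1+(V_x^{\mathcal O_i(\rho)})_{11}(V_p^{\mathcal O_i(\rho)})_{11}+\mu_i^2$ (the paper computes this in the Wigner representation via Isserlis' theorem while you work at the operator level with complex two-point functions — both give the same result), the monotonicity bound $|\mu_i|\le\sqrt{\mathfrak c_\rho}$, and the symmetric optimum $\mu_2=-\mu_1=\sqrt{\mathfrak c_\rho}$. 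The only substantive deviation is that you bound $(V_x)_{11}(V_p)_{11}\le (E-2(m-1))^2/4$ directly by AM--GM, which yields $\sigma_i^2\le f(m,E)+\mu_i^2$ exactly as stated, whereas the paper instead bounds the symplectic eigenvalue $\nu_1^2$ and obtains $f(m,E)+2\mu_i^2$ before silently dropping the factor of $2$ in the optimization step — your variant is the tighter and more consistent one — and both arguments rest on the same unproven assumption $\Tr[V^{\mathcal O_i(\rho)}]\le E$ for a general orthogonal Stinespring dilation (false if the environment carries energy), which you rightly flag but the paper does not.
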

Before proving the result, we describe the protocols allowing us to discriminate between $\mathcal O_1$ and $\mathcal O_2$ using the output measurement operator $\hat M$ and $N$ copies of the probe state $\rho$:
\noindent To perform the channel discrimination using $N$ samples of the input probe state $\rho$, we define the following protocol.
\begin{protocol}[Channel discrimination protocol]\label{Prot:channel_discrimination}
\hspace{3mm}
\begin{itemize}
    \item For the given input state $\rho$ and orthogonal Stinespring dilations $\mathcal O_1$ and $\mathcal{O}_2$, compute analytically $\Tr[\mathcal O_1(\rho) \{\hat q_1,\hat p_1\}/2] = \mu_1$ and  $\Tr[\mathcal O_2(\rho) \{\hat q_1,\hat p_1\}/2] = \mu_2$. Define $d\coloneqq \mu_2 - \mu_1$ (labelling the channels such that $\mu_2 >  \mu_1$) and $T\coloneqq \frac{\mu_2 + \mu_1}{2} = \mu_1 + d/2 = \mu_2 - d/2$.
    \item For each of the $N$ input copies of $\rho$, make the measurement $\hat M = \{\hat{q}_1,\hat p_1\}/2$ after $\rho$ has passed through the channel to obtain $X_i, \forall i \in \{1,\dots,m\}$.
    \item Calculate $X$ as the median of means of the data set $\{X_1,\dots,X_N\}$ \cite{JERRUM1986,Vershynin_2018}.
    \item If $X > T$, predict channel 2, otherwise predict channel 1.
\end{itemize}
\end{protocol}
\noindent Then, we prove the efficiency of Protocol \ref{Prot:channel_discrimination} through Lemma \ref{lem:eff_ortho_discrimination}.
\begin{proof}[Proof of \cref{lem:eff_ortho_discrimination}]
    The probability of error using Protocol \ref{Prot:channel_discrimination} is given by
    \begin{eqnarray}
        p_{\mathrm{err}} &=& \frac{1}{2} \Pr[X_1 \geq T] + \frac12 \Pr[X_2 \leq T] \nonumber \\
        &=& \frac{1}{2} \Pr[X_1 \geq \mu_1 + d/2] + \frac12 \Pr[X_2 \leq \mu_2 - d/2],
    \end{eqnarray}
    where $X_1$ and $X_2$ are median of means estimator values when the channel implements channel 1 or channel 2, respectively. Now,
    \begin{equation}
        p_\mathrm{err}\leq  \frac12 \Pr[|X_1 - \mu_1| \geq d/2] + \frac12 \Pr[|X_2 - \mu_2| \geq d/2].
    \end{equation}
    and the median of means estimator satisfies \cite{JERRUM1986}, 
    \begin{equation}
        \Pr[|X_1 - \mu_1| \geq d/2] \leq \delta
    \end{equation}
    as long as the number of samples is greater than
    \begin{equation}
        N_1\coloneqq68 \log\left(\frac{2}{\delta}\right)\frac{\mathrm{Var}(\hat M)}{(d/2)^2} = 272 \log\left(\frac{2}{\delta}\right) \frac{\mathrm{Var}_{\mathcal O_1(\rho)}(\hat M)}{d^2}
    \end{equation}
    Now,
    \begin{eqnarray}
        \mathrm{Var}_{\mathcal O_1(\rho)}(\hat M) &=& \Tr[\hat M^2 \mathcal{O}_1(\rho)] - \Tr[M \mathcal{O}_1(\rho)]^2 = \Tr[\hat M^2 \mathcal{O}_1(\rho)] - \mu_1^2.
    \end{eqnarray}
    Now $\Tr[\hat M^2 \mathcal{O}_1(\rho)]$ can be written in the Wigner representation as \cite[Eq.~39]{Polkovnikov2013}
    \begin{eqnarray}
        \Tr[\hat M^2 \mathcal{O}_1(\rho)] &=& \int d^m\bm q d^m \bm p (q_1^2 p_1^2+1)\frac{1}{(2\pi)^m \sqrt{\mathrm{det}(V^{\mathcal O_1(\rho)})}}\exp\left(-[q_1,\dots, q_m ,p_1, \dots ,p_m](V^{\mathcal O_1(\rho)})^{-1}[q_1,\dots, q_m ,p_1, \dots ,p_m]^T\right) \nonumber \\ && \hspace{-5mm} =1 + \int d^m\bm q d^m \bm p (q_1^2)\frac{1}{(2\pi)^m \sqrt{\mathrm{det}(V^{\mathcal O_1(\rho)})}}\exp\left(-[q_1,\dots, q_m ,p_1, \dots ,p_m](V^{\mathcal O_1(\rho)})^{-1}[q_1,\dots, q_m ,p_1, \dots ,p_m]^T\right) \nonumber \\ && \hspace{-2mm} \times  \int d^m\bm q d^m \bm p (p_1^2)\frac{1}{(2\pi)^m \sqrt{\mathrm{det}(V^{\mathcal O_1(\rho)})}}\exp\left(-[q_1,\dots, q_m ,p_1, \dots ,p_m](V^{\mathcal O_1(\rho)})^{-1}[q_1,\dots, q_m ,p_1, \dots ,p_m]^T\right) \nonumber \\ && \hspace{-5mm} +2 \left(\int d^m\bm q d^m \bm p (q_1 p_1)\frac{1}{(2\pi)^m \sqrt{\mathrm{det}(V^{\mathcal O_1(\rho)})}}\exp\left(-[q_1,\dots, q_m ,p_1, \dots ,p_m](V^{\mathcal O_1(\rho)})^{-1}[q_1,\dots, q_m ,p_1, \dots ,p_m]^T\right)\right) \nonumber \\ && \hspace{-5mm} = 1 + \Tr[\hat q_1^2 \rho] \Tr[\hat p_1^2 \rho] + 2\Tr[(\{\hat q_1,\hat p_1\}/2) \rho]^2.
    \end{eqnarray}
    The first line follows from the fact that the Wigner representation of $\hat M^2$ is $\hat q^2 \hat p^2 +1$. The second line follows from Isserli's Theorem or Wick's Theorem and finally, the last line follows from the fact that the Wigner space representation of $qp$ corresponds to the operator $\{\hat q, \hat p\}/2$ and that of $\hat{q}^2$ ($\hat{p}^2$) is given by $q^2$ ($p^2$). All these expressions for Wigner representations of operators polynomial in position and momentum operators can be derived using \cite[Eq.~39]{Polkovnikov2013}. Looking at the reduced covariance matrix of the first mode
    \begin{eqnarray}
        \Tr[\hat q_1^2 \rho] \Tr[\hat p_1^2 \rho] - \Tr[(\{\hat q_1,\hat p_1\}/2) \rho]^2 = (\nu_1^{(1)})^2,
    \end{eqnarray}
    where $\nu_1^{(1)}$ is the symplectic eigenvalue of the reduced covariance matrix of the first mode of $\mathcal O_1(\rho)$. This gives
    \begin{equation}
        \Tr[\hat M^2 \mathcal{O}_1(\rho)] = 1+ (\nu_1^{(1)})^2 + 3\Tr[(\{\hat q_1,\hat p_1\}/2) \rho]^2 =1 + (\nu_1^{(1)})^2 + 3\mu_1^2.
    \end{equation}
    This gives
    \begin{equation}
        \mathrm{Var}_{\mathcal O_1(\rho)}(\hat M) = 1 + (\nu_1^{(1)})^2 + 2\mu_1^2,
    \end{equation}
    and
    \begin{equation}
        N_1 = 272 \log\left(\frac{2}{\delta}\right) \frac{1 + (\nu_1^{(1)})^2 + 2\mu_1^2}{(\mu_2 - \mu_1)^2}.
    \end{equation}
    Similarly, using the median of means estimator $X_2$ we have
    \begin{equation}
         \Pr[|X_2 - \mu_2| \geq d/2] \leq \delta,
    \end{equation}
    as long as number of samples is greater than
    \begin{equation}
        N_2\coloneqq272 \log\left(\frac{2}{\delta}\right) \frac{1 + (\nu_1^{(2)})^2 + 2\mu_2^2}{(\mu_2 - \mu_1)^2}.
    \end{equation}
    Therefore,
    \begin{equation}
        p_\mathrm{err} \leq \frac12 \Pr[|X_1 - \mu_1| \geq d/2] + \frac12 \Pr[|X_2 - \mu_2| \geq d/2] \leq \frac{\delta}2+\frac{\delta}2 = \delta,
    \end{equation}
    as long as the number of samples satisfies
    \begin{equation}
        N \geq N_{\mathrm{thres}} = 272 \log\left(\frac{2}{\delta}\right)\max\left(\frac{1 + (\nu_1^{(1)})^2 + 2\mu_2^2}{(\mu_2 - \mu_1)^2},\frac{1 + (\nu_1^{(2)})^2 + 2\mu_2^2}{(\mu_2 - \mu_1)^2}\right).
    \end{equation}
Further, the reduced covariance matrix of the first mode can be written as
\begin{equation}
   V_{\mathrm{red}}^1 =  \nu_1 \begin{bmatrix}
        \cos \theta && \sin \theta \\ -\sin \theta && \cos \theta
    \end{bmatrix}\begin{bmatrix}
        e^{2r} && 0 \\ 0 && e^{-2r} 
    \end{bmatrix}\begin{bmatrix}
        \cos \theta && - \sin \theta \\  \sin \theta && \cos \theta
    \end{bmatrix},
\end{equation}
with $r\geq 0$. We also have
\begin{equation}
    \Tr[V_{\mathrm{red}}^1] = \nu_1(e^{2r} + e^{-2r}) \leq E- 2(m-1),
\end{equation}
where the last inequality comes from the fact that each of the rest of the individual modes have the trace of their reduced covariance matrix at least two (see Eq.\ \ref{eq:lowerboundtracecov}). Finally, since $e^{2r} + e^{-2r} \geq 2$ for all $r$, this gives 
\begin{equation}
    \nu_1 \leq E/2 \leq (E - 2(m-1))/2,
\end{equation}
and thus
\begin{equation}
    N_{\mathrm{thres}} \leq 272 \log\left(\frac{2}{\delta}\right)\max\left(\frac{1 + (E-2(m-1))^2/4 + 2\mu_1^2}{(\mu_2 - \mu_1)^2},\frac{1 + (E - 2(m-1))^2/4 + 2\mu_2^2}{(\mu_2 - \mu_1)^2}\right).
\end{equation}
We now simplify this upper bound on the number of sufficient samples, given that
\begin{equation}
    \mu_2^2 = (V_{xp}^{\mathcal O_2(\rho)})_{11}^2 \leq \sum_{i,j=1}^m (V_{xp}^{\mathcal O_2(\rho)})_{ij}^2 = || V^{\mathcal O_2(\rho)}||_F^2 \leq \mathfrak c_{\rho},
\end{equation}
and hence $|\mu_2| \leq \sqrt{\mathfrak c_\rho}$ and similarly, $|\mu_2| \leq \sqrt{\mathfrak c_\rho}$. We are left with finding the minimal value of the function
\begin{equation}\label{eq:Thm7_1}
    \frac{f(m,E) + \max(\mu_1^2,\mu_2^2)}{(\mu_2-\mu_1)^2}
\end{equation}
given that $\mu_2 \geq \mu_1$ and $|\mu_2|,|\mu_1|\leq \sqrt{\mathfrak c_\rho}$. Writing
\begin{equation}
    d = \mu_2 - \mu_1,
\end{equation}
Eq.~\ref{eq:Thm7_1} can be written as
\begin{equation}
    \frac{f(m,E) + \max(\mu_1^2,(d+\mu_1))^2)}{d^2}.
\end{equation}
The minimal value of $\max(\mu_1^2,(d+\mu_1))^2)$ is obtained for $\mu_1 = -d/2$ (to see this, note that for $\mu_1 < -d/2$, we have $\mu_1^2 > d^2/4 > (\mu_1+d)^2$, while for $\mu_1 > -d/2$, we have $(\mu_1+d)^2 > d^2/4 > \mu_1^2$) and is equal to 
\begin{equation}
     \frac{f(m,E)}{d^2} + \frac{1}{4}.
\end{equation}
Here, $d \leq 2\sqrt{\mathfrak c_\rho}$ with the upper bound saturated by $\mu_2 = -\mu_1 = \sqrt{\mathfrak c_\rho} $, therefore the upper bound on the sufficient number of samples may be refined as
\begin{equation}
      N_{\mathrm{thres}} \leq272 \log (2/\delta) \left(\frac{1 + (E - 2(m-1))^2/4 }{4\mathfrak c_\rho} + \frac14\right).
\end{equation}
\end{proof}
\subsection{Symplectic coherence for loss channels discrimination}\label{app:eff_loss_discrimination}
\noindent We restate Lemma \ref{lem:eff_loss_discrimination}:
\begin{lem}[Symplectic coherence for loss channels discrimination]
    Given a Gaussian input probe state $\rho$ with $\Tr[V^\rho] = E$, $\Tr[\rho \{\hat q_1, \hat p_1\}/2] = \mu$, and with $\nu$ the symplectic eigenvalue and $E_1$ the trace of the reduced covariance matrix of its first mode, two photon loss channels with transmissivities $\eta_1$ and $\eta_2$ can be distinguised with success probability $1-\delta$ by measuring $\hat M = \{\hat q_1, \hat p_1\}/2$, provided the number of samples of $\rho$ is greater than
\begin{equation}
    N_\mathrm{thres} = \frac{272 \log(2/\delta)}{(\eta_2 - \eta_1)^2} 
    \max\big( g(\nu,\mu,E,\eta_1),\, g(\nu,\mu,E,\eta_2) \big),
\end{equation}
where
\begin{equation}
    g(\nu,\mu,E,\eta) \coloneqq 
    \frac{
        \eta^2 \nu^2 + (1-\eta)^2 + \eta(1-\eta) E + 2\eta^2 \mu^2
    }{
        \mu^2
    }.
\end{equation}
For general $\eta_1$ and $\eta_2$, this is lower bounded by
\begin{equation}
    N_\mathrm{thres} \ge 
    \frac{272 \log(2/\delta)}{(\eta_2 - \eta_1)^2} 
    \max\big( \tilde g(m,E,\eta_1),\, \tilde g(m,E,\eta_2) \big),
\end{equation}
with
\begin{equation}
    \tilde g(m,E,\eta) \coloneqq 
    \frac{
        \eta^2 + (1-\eta)^2
    }{
        \mathfrak c_{\max}(m,E)
    } 
    + 2\eta^2,
\end{equation}
where $\mathfrak c_{\max}(m,E)$ is given by Eq.~\ref{appeq:max_sc}. 
This bound is saturated by a state of maximal symplectic coherence
\begin{equation}
    \ket{MSC} = 
    \hat R(\pi/4) \ket{r} \otimes \ket{0}^{\otimes (m-1)},
\end{equation}
where $\ket{r}$ is a momentum-squeezed state ($r>0$), with $r$ satisfying Eq.~\ref{appeq:focus_squeeze}.
\end{lem}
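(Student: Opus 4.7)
The plan is to specialise the strategy of Lemma~\ref{lem:eff_ortho_discrimination} to the case of two photon loss channels, for which the covariance-matrix transformation is fully known. First I would invoke Lemma~\ref{lem:loss_energy_sc} to write $V^{\Lambda_{\eta_i}(\rho)} = \eta_i V^\rho + (1-\eta_i)\I_{2m}$. Taking the $(1,1)$ entry of the off-diagonal block gives $\mu_i \coloneqq \Tr[\Lambda_{\eta_i}(\rho)\hat M] = \eta_i (V_{xp}^\rho)_{11} = \eta_i \mu$, so the denominator $(\mu_2-\mu_1)^2$ in the sufficient sample count from the proof of Lemma~\ref{lem:eff_ortho_discrimination} collapses to $(\eta_2-\eta_1)^2 \mu^2$.

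Next I would compute the variance of $\hat M$ on each $\Lambda_{\eta_i}(\rho)$. Reusing the Wick-theorem identity established in the proof of Lemma~\ref{lem:eff_ortho_discrimination}, namely $\mathrm{Var}_{\sigma}(\hat M) = 1 + (\nu_1^\sigma)^2 + 2\Tr[\sigma\hat M]^2$ with $\nu_1^\sigma$ the symplectic eigenvalue of the first-mode reduced covariance matrix, and applying the loss transformation block-wise to the reduced matrix, a direct calculation yields
\begin{equation}
(\nu_1^{\Lambda_{\eta_i}(\rho)})^2 = \eta_i^2 \nu^2 + \eta_i(1-\eta_i) E_1 + (1-\eta_i)^2.
\end{equation}
Substituting this and $\mu_i = \eta_i\mu$ into the general sample complexity bound produces the explicit function $g(\mu,\eta,E_1)$ advertised in the statement and gives the first displayed bound.

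For the second, state-optimized bound, I would then minimise $N_{\mathrm{thres}}$ over all Gaussian probes of fixed energy $\Tr[V^\rho]=E$. The numerator of $g$ scales with $\nu^2$, $E_1$ and $\mu^2$, while the denominator scales with $\mu^2$; since $\mu^2 = (V_{xp}^\rho)_{11}^2 \leq \mathfrak c_\rho \leq \mathfrak c_{\max}(m,E)$ by \cref{theo:max_sc} and the lower bound on $E_1$ from Eq.~\ref{eq:lowerboundtracecov}, the ratios $\nu^2/\mu^2$ and $1/\mu^2$ are both smallest when $\mu^2$ saturates $\mathfrak c_{\max}(m,E)$ and $\nu=1$. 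A direct check shows that the state $\ket{MSC} = \hat R(\pi/4)\ket r \otimes \ket 0^{\otimes(m-1)}$, with $r$ obeying Eq.~\ref{appeq:focus_squeeze}, achieves precisely $\mu^2 = \mathfrak c_{\max}(m,E)$ (its first-mode off-diagonal covariance entry is $-\sinh(2r)$, and $\sinh^2(2r) = \mathfrak c_{\max}(m,E)$ by Eq.~\ref{appeq:focus_squeeze}) while being pure on the first mode so that $\nu=1$, thereby saturating the lower bound $\tilde g(m,E,\eta)$.

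The main obstacle is the joint optimisation in the third step: $\nu$, $\mu$ and $E_1$ are not independent parameters of the probe, and one must verify that the specific concentration-and-rotation structure of $\ket{MSC}$ simultaneously attains maximal $\mu^2$ and $\nu=1$. The key unlocking observation is the ``compression'' phenomenon of \cref{theo:max_sc}: placing all the energy into a single mode and applying a $\pi/4$ phase shift both maximises the position–momentum off-diagonal entry and leaves the first mode pure, which is exactly what drives the ratio $((\nu^{(i)})^2+1)/\mu^2$ to its minimum. This both yields the stated lower bound on $N_{\mathrm{thres}}$ and operationally explains why a maximally symplectic coherent probe is optimal for distinguishing photon loss channels with this choice of measurement.
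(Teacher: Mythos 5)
Your proposal is correct and follows essentially the same route as the paper: the loss-channel covariance transformation gives $\mu_i=\eta_i\mu$, the Wick-theorem variance identity from the multi-shot lemma gives $\mathrm{Var}=1+(\nu_1^{(i)})^2+2\eta_i^2\mu^2$ with $(\nu_1^{(i)})^2=\eta_i^2\nu^2+(1-\eta_i)^2+\eta_i(1-\eta_i)E_1$, and the optimisation collapses to the maximally symplectic coherent state. The only cosmetic difference is that you bound $\mu^2$ by $\mathfrak c_{\max}(m,E)$ directly via Theorem~\ref{theo:max_sc}, whereas the paper first uses $\mu^2\le E_1^2/4-1$ for the reduced first mode and then maximises $E_1=E-2(m-1)$, which yields the same value.
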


\begin{proof}
Under the effect of a photon loss channel $\Lambda_\eta(.)$, the covariance matrix transforms as (\cref{app:sc_under_loss})
\begin{equation}
    V^{\Lambda_\eta(\rho)} = \eta V^\rho + (1-\eta)\I.
\end{equation}
Therefore,
\begin{equation}
    \Tr[\Lambda_{\eta}(\rho)\hat M] = \eta \mu,
\end{equation}
and we denote 
\begin{equation}
    d \coloneqq \mu (\eta_2 - \eta_1),
\end{equation}
assuming $\eta_2 > \eta_1$ without loss of generality. We again use protocol \ref{Prot:channel_discrimination} for discrimination. Following the proof steps of Lemma \ref{lem:eff_ortho_discrimination}, the probability of error using Protocol \ref{Prot:channel_discrimination} satisfies
    \begin{equation}
        p_\mathrm{err} \leq \delta,
    \end{equation}
    as long as the number of samples is greater than
    \begin{equation}
        N \geq 272 \log\left(\frac{2}{\delta}\right)\max\left(\frac{1 + (\nu_1^{(1)})^2 + 2\eta_1^2 \mu^2}{\mu^2(\eta_2 - \eta_1)^2},\frac{1 + (\nu_1^{(2)})^2 + 2\eta_2^2\mu^2}{\mu^2(\eta_2 - \eta_1)^2}\right).
    \end{equation}
Now, if the reduced first mode covariance matrix of the first mode of the input probe state $\rho$ is given by
\begin{equation}
    \begin{bmatrix}
        \sigma_x && \mu \\
        \mu && \sigma_p
    \end{bmatrix},
\end{equation}
then the symplectic eigenvalue of the reduced covariance matrix of the first mode of $\Lambda_{\eta_1}(\rho)$ is given by
\begin{equation}
    (\nu_1^{(1)})^2 = (\eta_1 \sigma_x + (1-\eta_1))(\eta_1 \sigma_p + (1-\eta_1)) - \eta_1^2 \mu^2 = \eta_1^2 (\sigma_x \sigma_p - \mu^2) + (1-\eta_1)^2 + \eta_1 (1-\eta_1)(\sigma_x + \sigma_p) = \eta_1^2 \nu^2 + (1-\eta_1)^2 + \eta_1 (1-\eta_1)E_1.
\end{equation}
where we used $V^{\Lambda_\eta(\rho)} = \eta V^\rho + (1-\eta)\I$ and we denoted $E_1\coloneqq \sigma_x + \sigma_p$. With a similar analysis for $\nu_1^{(2)}$, we obtain
\begin{eqnarray}
      N \geq N_{\mathrm{thres}}\coloneqq 272 \log\left(\frac{2}{\delta}\right)\max\bigg(&&\frac{1 + \eta_1^2 \nu^2 + (1-\eta_1)^2 + \eta_1 (1-\eta_1)E_1 + 2\eta_1^2 \mu^2}{\mu^2(\eta_2 - \eta_1)^2}, \nonumber \\&&\frac{1 + \eta_2^2 \nu^2 + (1-\eta_2)^2 + \eta_2 (1-\eta_2)E_1 + 2\eta_2^2\mu^2}{\mu^2(\eta_2 - \eta_1)^2}\bigg).
\end{eqnarray}
Since $\nu \geq 1$ and for fixed $E_1$, both the functions inside the max function are decreasing in $\mu$. Since $\mu^2 \leq E_1^2/4 -1$, we have
\begin{eqnarray}
    N_{\mathrm{thres}}\geq \frac{272}{(\eta_2 - \eta_1)^2} \log\left(\frac{2}{\delta}\right)\max\bigg(\frac{1 + \eta_1^2 \nu^2 + (1-\eta_1)^2 + \eta_1 (1-\eta_1)E_1 }{E_1^2/4 -1} + 2\eta_1^2,&& \nonumber \\
    &&\hspace{-20mm}\frac{1 + \eta_2^2 \nu^2 + (1-\eta_2)^2 + \eta_2 (1-\eta_2)E_1 + 2\eta_2^2\mu^2}{E_1^2/4 -1} + 2\eta_2^2\bigg). \nonumber \\
\end{eqnarray}
Since now both the functions inside the max function are decreasing in $E_1$ we would want to set $E_1$ to its maximal possible value, i.e. $E_1 = E - 2(m-1)$ (the reduced covariance matrices of the rest of the modes should have traces at least $2$ by Eq.\ \ref{eq:lowerboundtracecov}). The state which saturates all these lower bounds $\nu = 1$, $\mu =E_1^2/4-1$ and $E_1 = E - 2(m-1)$ and hence gives the optimal number of sufficient samples is
\begin{equation}
    \ket \psi = \hat R(\pi/4) \ket{r} \otimes \ket{0}^{\otimes (m-1)},
\end{equation}
with the $\ket r$ being a squeezed state (squeezed in the $p$ direction, that is, with $r > 0$), with the squeezing parameter $r$ such that
\begin{equation}
    e^r + e^{-r} = (E-2(m-1)).
\end{equation}
With this input probe state, the number of sufficient samples is given by
\begin{eqnarray}
     \frac{272 \log(2/\delta)}{(\eta_2 - \eta_1)^2} \max \bigg(\frac{\eta_1^2 + (1-\eta_1)^2}{(E-2m)^2/4 + (E-2m)} + 2\eta_1^2 +&&  \frac{\eta_1(1-\eta_1)(E-2m)}{(E-2m)^2/4 + (E-2m)^2}, \nonumber \\ &&\hspace{-30mm}\frac{\eta_2^2 + (1-\eta_2)^2}{(E-2m)^2/4 + (E-2m)} + 2\eta_2^2 +  \frac{\eta_2(1-\eta_2)(E-2m)}{(E-2m)^2/4 + (E-2m)}\bigg).
\end{eqnarray}
\end{proof}
\subsection{Position-momentum correlations determine the distinguishability of certain Gaussian channels}\label{app:rotated_quadrature_sc}
\noindent To recap, we define \textsf{PP-MM} equivalent Gaussian channels:
\begin{defi}[$\textsf{PP-MM}$ equivalence between Gaussian channels]\label{appdefi:PP-MM_equivalence}
    Two $m$-mode Gaussian channels $\mathcal{G}_1$ and $\mathcal{G}_2$ are said to be $\textsf{PP-MM}$ equivalent if given an arbitrary $m$-mode quantum state $\rho$ with covariance matrix $V^\rho$, they change the diagonal block matrices of $V^\rho$ the same, but the off-block diagonal matrices are changed differently.
\end{defi} 

\noindent Then we have the following Lemma:
\begin{lem}[Position-momentum correlations determines distinguishability for $\textsf{PP-MM}$ equivalent channels]\label{apptheo:rotated_quadrature_sc}
     Given an input single-mode probe state $\rho$ with covariance matrix $V^\rho$ and two $\textsf{PP-MM}$ equivalent Gaussian channels $\mathcal G_1$ and $\mathcal G_2$ (Definition \ref{appdefi:PP-MM_equivalence}), the total variation distance between probability distributions generated by rotated quadrature measurement $\hat q_\theta$
    \begin{equation}
        \hat q_\theta = \hat q \cos \theta + \hat p \sin \theta
    \end{equation}
    is upper bounded by
    \begin{equation}
        \min(1,h_\theta(\mathcal G_1(\rho)),h_\theta(\mathcal G_2(\rho))), \nonumber\\
    \end{equation}
    where
    \begin{equation}
        h_{\theta}(\mathcal G(\rho)) \coloneqq \frac{|\sin(2\theta)||\sigma_{xp}^{\mathcal{G}_1(\rho)} - \sigma_{xp}^{\mathcal{G}_2(\rho)}|}{|\sigma_x \cos^2 \theta + \sigma_p \sin^2 \theta + \sigma_{xp}^{\mathcal G(\rho)} \sin 2\theta|},
    \end{equation}
    and $\sigma_{xp}^{\mathcal{G}_1(\rho)}$ is off-diagonal element of the covariance matrix of $\mathcal G_1(\rho)$, and similarly for $\sigma_{xp}^{\mathcal{G}_1(\rho)}$.
\end{lem}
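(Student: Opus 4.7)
The strategy will be to reduce the lemma to a standard bound on the total variation distance between two univariate Gaussian distributions. First, I will use the fact that (per the convention established at the start of Section \ref{sec:sc_qi}) we restrict to Gaussian probe states with zero first moments, so the output states $\mathcal{G}_1(\rho)$ and $\mathcal{G}_2(\rho)$ are also zero-mean Gaussian. The outcome distribution of measuring the rotated quadrature $\hat q_\theta = \hat q\cos\theta + \hat p\sin\theta$ on such a state is then a centered Gaussian random variable whose variance is read directly from the single-mode covariance matrix as
\begin{equation}
\tau_i^2 \;=\; \sigma_x\cos^2\theta + \sigma_p\sin^2\theta + \sigma_{xp}^{\mathcal{G}_i(\rho)}\sin(2\theta),
\end{equation}
for $i=1,2$, using the bilinearity of the covariance.

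Next, I will exploit the $\textsf{PP-MM}$-equivalence of the two channels (Definition \ref{appdefi:PP-MM_equivalence}): by construction, the diagonal blocks $\sigma_x$ and $\sigma_p$ of the covariance matrices of $\mathcal{G}_1(\rho)$ and $\mathcal{G}_2(\rho)$ coincide, so the two quadrature variances differ only by
\begin{equation}
\tau_1^2 - \tau_2^2 \;=\; \sin(2\theta)\bigl(\sigma_{xp}^{\mathcal{G}_1(\rho)}-\sigma_{xp}^{\mathcal{G}_2(\rho)}\bigr),
\end{equation}
which already reproduces the numerator of $h_\theta$ up to the absolute value.

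The third step will be to invoke \cite[Theorem 1.1]{devroye2023}, which bounds the total variation distance between two zero-mean univariate Gaussians $\mathcal{N}(0,\tau_1^2)$ and $\mathcal{N}(0,\tau_2^2)$ by $|\tau_1^2-\tau_2^2|/\tau_i^2$ (up to a constant absorbed in the statement) for either normalization $i\in\{1,2\}$. Since the trivial bound $1$ always applies and either choice of normalization is valid, taking the minimum yields exactly
\begin{equation}
\min\!\bigl(1,\,h_\theta(\mathcal{G}_1(\rho)),\,h_\theta(\mathcal{G}_2(\rho))\bigr).
\end{equation}

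The obstacles here are primarily bookkeeping rather than conceptual. I will need to verify that $\tau_i^2 > 0$, which follows from the positive definiteness of the single-mode covariance matrices of $\mathcal{G}_i(\rho)$ and makes the absolute value in the denominator of $h_\theta$ redundant, and to check that the constant in \cite[Theorem 1.1]{devroye2023} is compatible with the coefficient-free form stated in the lemma (if not, the statement should be interpreted up to this universal factor). No deeper technical difficulty is anticipated, since the whole argument amounts to translating a bound about two univariate Gaussians through the explicit parameterization of the quadrature variance.
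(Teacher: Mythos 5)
Your proposal is correct and follows essentially the same route as the paper's proof: parameterize the rotated-quadrature outcome as a zero-mean Gaussian whose variance is $\sigma_x\cos^2\theta+\sigma_p\sin^2\theta+\sigma_{xp}^{\mathcal G_i(\rho)}\sin 2\theta$, use $\textsf{PP-MM}$ equivalence to isolate the $\sigma_{xp}$ difference, and apply \cite[Theorem 1.1]{devroye2023}. Your caveat about the universal constant is well taken — the paper's own proof carries the factor $3/2$ from that theorem but silently drops it in the final statement, so the lemma should indeed be read up to that factor.
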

\begin{proof}
    The rotated quadrature measurement $\hat q_\theta$ on $\mathcal G(\rho)$ can be seen as rotating the quantum state $\mathcal G(\rho)$ by $\theta$ and performing a measurement of the position quadrature.

    If the covariance matrix of $\mathcal G(\rho)$ is given by
    \begin{equation}
        V^{\mathcal G(\rho)} = \begin{bmatrix}
            \sigma_x && \sigma_{xp}^{\mathcal{G}(\rho)} \\
            \sigma_{xp}^{\mathcal{G}(\rho)} && \sigma_p
        \end{bmatrix},
    \end{equation}
    then the covariance matrix of $\hat R(\theta) \mathcal G(\rho) \hat R^\dagger (\theta)$ is given by
    \begin{eqnarray}
        &&\begin{bmatrix}
            \cos \theta && \sin \theta \\
            -\sin \theta && \cos \theta
        \end{bmatrix}
        \begin{bmatrix}
            \sigma_x && \sigma_{xp}^{\mathcal{G}(\rho)} \\
            \sigma_{xp}^{\mathcal{G}(\rho)} && \sigma_p
        \end{bmatrix}
           \begin{bmatrix}
            \cos \theta && -\sin \theta \\
            \sin \theta && \cos \theta
        \end{bmatrix} \nonumber \\ &&= \begin{bmatrix}
            \sigma_x \cos^2 \theta + \sigma_p \sin^2 \theta + \sigma_{xp}^{\mathcal G(\rho)} \sin 2\theta && -\sigma_x \sin \theta \cos \theta + \sigma_p \cos \theta \sin \theta + \sigma_{xp} (\cos^2 \theta - \sin^2 \theta) \\ 
            -\sigma_x \sin \theta \cos \theta + \sigma_p \cos \theta \sin \theta + \sigma_{xp} (\cos^2 \theta - \sin^2 \theta) && \sigma_x \sin^2 \theta + \sigma_p \cos^2 \theta - \sigma_{xp}^{\mathcal G(\rho)} \sin 2\theta
        \end{bmatrix}. \nonumber \\
    \end{eqnarray}
    Therefore, the measurement of $\hat q$ on $\hat R(\theta) \mathcal G_1(\rho) \hat R^\dagger (\theta)$ samples a random variable from a normal distribution with zero mean and variance 
    \begin{equation}
        \sigma_1 = \sigma_x \cos^2 \theta + \sigma_p \sin^2 \theta + \sigma_{xp}^{\mathcal G_1(\rho)} \sin 2\theta,
    \end{equation}
    and similarly, the measurement of $\hat q$ on $\hat R(\theta) \mathcal G_2(\rho) \hat R^\dagger (\theta)$  samples a random variable from a normal distribution with zero mean and variance
    \begin{equation}
        \sigma_2 = \sigma_x \cos^2 \theta + \sigma_p \sin^2 \theta + \sigma_{xp}^{\mathcal G_2(\rho)} \sin 2\theta.
    \end{equation}
    From \cite[Theorem 1.1]{devroye2023}, the total variation distance between two single-variate normal distributions $\mathcal N_1$ and $\mathcal N_2$ of variances $\sigma_1$ and $\sigma_2$ satisfies
    \begin{equation}
        \mathrm{TVD}(\mathcal N_1,\mathcal N_2) \leq \frac32 \min(\min\left(1,|\sigma_1/\sigma_2 - 1|\right),\min\left(1,|\sigma_2/\sigma_1 - 1|\right)).
    \end{equation}
    Now,
    \begin{equation}
        \frac{\sigma_1}{\sigma_2} - 1 = \frac{\sin 2\theta (\sigma_{xp}^{\mathcal G_1(\rho)} - \sigma_{xp}^{\mathcal G_2(\rho)})}{ \sigma_x \cos^2 \theta + \sigma_p \sin^2 \theta + \sigma_{xp}^{\mathcal G_2(\rho)} \sin 2\theta},
    \end{equation}
    and similarly for $\sigma_2/\sigma_1 - 1$. Therefore, the total variation distance generated by rotated quadrature measurements $\hat q_\theta$ on $\mathcal G_1(\rho)$ and $\mathcal G_2(\rho)$ is given by
    \begin{equation}
        \mathrm{TVD}_{\hat q_\theta} \leq \min(\min(1,h_\theta(\mathcal G_1(\rho))),\min(1,h_\theta(\mathcal G_2(\rho))), \nonumber\\
    \end{equation}
    where
    \begin{equation}
        h_{\theta}(\mathcal G(\rho)) \coloneqq \frac{|\sin(2\theta)||\sigma_{xp}^{\mathcal{G}_1(\rho)} - \sigma_{xp}^{\mathcal{G}_2(\rho)}|}{|\sigma_x \cos^2 \theta + \sigma_p \sin^2 \theta + \sigma_{xp}^{\mathcal G(\rho)} \sin 2\theta|}.
    \end{equation}
\end{proof}

\end{document}